\newif\ifsubmission
\submissionfalse

\newif\ifnotes
\notesfalse

\newif\ifllncs
\llncsfalse

\newif\ifblind
\blindfalse

\documentclass[11pt]{article}
\usepackage{graphicx} 
\usepackage{fullpage}

\usepackage{amsmath}
\usepackage{amssymb}
\usepackage{xcolor}
\usepackage{tikz}
\usetikzlibrary{matrix,arrows.meta,positioning,calc,decorations.pathreplacing,backgrounds}

\usepackage{mathdots}

\usepackage{microtype}      
\usepackage{libertine}
\usepackage{libertinust1math}
\usepackage{bbm}
\usepackage{xcolor}
\usepackage{multirow,multicol}

\usepackage{caption}
\usepackage{amsthm}
\usepackage{thmtools, thm-restate}
\usepackage{hyperref}
\usepackage{rotating}
\usepackage{cleveref}

\newcommand{\Z}{\mathbb{Z}}
\newcommand{\R}{\mathbb{R}}

\newcommand{\F}{\mathbb{F}}
\newcommand{\C}{\mathbb{C}}
\newcommand{\poly}{\,\mathrm{poly}}
\newcommand{\polylog}{\,\mathrm{polylog}}
\newcommand{\ord}{\mathrm{ord}}

\renewcommand{\epsilon}{\varepsilon}
\renewcommand{\phi}{\varphi}
\def\uv{\mathbf{u}}
\def\vv{\mathbf{v}}
\def\bv{\mathbf{b}}
\def\DB{\mathsf{DB}}
\def\zo{\{0, 1\}}
\def\sh{\mathsf{sh}}
\def\Coeff{\mathsf{Coeff}}
\DeclareMathOperator*{\E}{\mathbb{E}}
\DeclareMathOperator{\Aut}{Aut}

\allowdisplaybreaks

\ifnotes
\newcommand{\authnote}[2]{[{\color{magenta}\textbf{#1:}}~{\color{blue} #2}]}
\newcommand{\vnote}[1]{\authnote{Vinod}{#1}}
\newcommand{\snote}[1]{\authnote{Seyoon}{#1}}
\newcommand{\anote}[1]{\authnote{Aparna}{#1}}
\else
\newcommand{\vnote}[1]{}
\newcommand{\snote}[1]{}
\newcommand{\anote}[1]{}
\fi

\newtheorem{theorem}{Theorem}
\numberwithin{theorem}{section}

\newtheorem{conjecture}{Conjecture}
\numberwithin{conjecture}{section}

\newtheorem{lemma}[theorem]{Lemma}
\numberwithin{lemma}{section}

\newtheorem{corollary}[theorem]{Corollary}
\numberwithin{corollary}{section}

\newtheorem{definition}[theorem]{Definition}
\numberwithin{definition}{section}

\numberwithin{proposition}{section}

\newtheorem*{remark}{Remark}

\numberwithin{problem}{section}

\newtheorem{claim}{Claim}
\numberwithin{claim}{section}

\newtheorem{fact}[theorem]{Fact}
\numberwithin{fact}{section}

\newtheorem{heuristic}[conjecture]{Heuristic}
\numberwithin{heuristic}{section}

\hypersetup{colorlinks=true,linkcolor=blue,urlcolor=blue,citecolor=blue}

\title{\Large Exponentially Fewer-Server PIR from Sparser $S$-Decoding Polynomials}

\ifblind
    \author{}
    \date{}
\else
    \author{Aparna Gupte\\MIT\\\url{agupte@mit.edu} \and Seyoon Ragavan\\MIT\\\url{sragavan@mit.edu}}
\fi

\begin{document}
\maketitle
\thispagestyle{empty}
\begin{abstract}
    We show that under a plausible number-theoretic conjecture, for any constant $s$ there exists an $s$-server private information retrieval (PIR) protocol that on an $n$-bit database requires communication $\exp(O((\log n)^{1/s} (\log \log n)^{1-1/s}))$. Previous constructions attaining the same communication required $2^{O(s)}$ servers. Our number-theoretic conjecture is implied by existing conjectures, namely the generalized repunit conjecture and Schinzel's hypothesis H (either one of these conjectures would suffice alone).
    
    Our result builds on the ``matching vector family + $S$-decoding polynomials'' framework pioneered by Efremenko (STOC 2009) and recently refined by Ghasemi, Kopparty, and Sudan (STOC 2025). The main ingredient is a framework for constructing $S$-decoding polynomials with only $k+1$ nonzero coefficients modulo special products of $k$ primes, resolving an open problem posed by Ghasemi and Kopparty (ITCS 2026). By the lower bound shown by Ghasemi and Kopparty, this is the minimum achievable sparsity. We also empirically validate our construction and make our result unconditional for all $s \leq 15$.
    
    We also apply our techniques to regimes where $s$ grows with $n$, showing under a stronger variant of our number-theoretic conjecture that the communication complexity of $s$-server matching-vector PIR can be superpolynomially reduced from the previous state of the art for any $s \leq \exp(o(\sqrt{\log \log n/\log \log \log n}))$.
    
    The main result for $s = O(1)$ and its proof were discovered in a GPT-5.5 Pro conversation prompted by the authors.
\end{abstract}

\ifsubmission
\newpage
\clearpage            
\pagenumbering{arabic}
\else
\tableofcontents
\fi

\section{Introduction}

Private information retrieval (PIR), first introduced by Chor, Kushilevitz, Goldreich, and Sudan~\cite{CGKS98}, is a protocol comprising a user who holds an index $i \in [n]$ and $s$ non-colluding servers who hold a database $\DB \in \zo^n$.
The user sends a question to each server, then each server replies with an answer.
Correctness requires that after receiving the servers' answers, the user can recover the $i$th entry $\DB_i \in \zo$.
The privacy requirement is that each server should individually learn nothing about the index $i$; formally, the marginal distribution of any one server's question should be completely independent of the index $i$.
Subject to these constraints, a key motivating question in the field is to find protocols with as little total communication as possible.
The trivial baseline communication is $n$, where one of the servers sends the entire database to the user.

In the forthcoming discussion, \underline{we will regard $s$ as constant} and $n$ as growing; our work also addresses the regime where $s$ grows with $n$, but the picture there is considerably more complicated for reasons we will discuss in Section~\ref{sec:superconstantoverview}; a summary of our results for growing $s$ can be found in Figure~\ref{fig:phasediagram}. 
With that aside: a straightforward PIR construction based on Reed-Muller codes requires communication $O(n^{1/(s-1)})$.
Using these ideas together with recursive composition led to protocols requiring communication $O(n^{1/(2s-1)})$~\cite{A97} and later $n^{O(\log \log s/(s \log s))}$~\cite{BIKR02}.
Even with these improvements, the qualitative narrative remained that PIR with $O(1)$ servers seemed to require polynomial i.e., $n^{\Omega(1)}$ communication.

The story completely changed with Yekhanin's breakthrough work~\cite{Y08,DBLP:journals/eccc/Raghavendra07} which showed---assuming the infinitude of Mersenne primes---a 3-server PIR protocol achieving communication $n^{O(1/\log \log n)}$.
Soon after, Efremenko~\cite{Efr09} showed unconditionally that with 3 servers, one can attain significantly smaller communication $\exp(\tilde{O}(\sqrt{\log n}))$.
These works laid out a powerful two-ingredient template for constructing surprisingly low-communication PIR:
\begin{itemize}
    \item Let $S \subseteq \Z_m$ with $0 \in S$.
    An \underline{$S$-matching vector family} is a collection of vectors $\uv_1, \ldots, \uv_n, \vv_1, \ldots, \vv_n \in \Z_m^d$ such that $\langle \uv_i, \vv_j \rangle \bmod{m} \in S$ always, and moreover $\langle \uv_i, \vv_j \rangle \equiv 0 \pmod{m} \Leftrightarrow i = j$.
    A lower dimension $d$ ultimately leads to a lower-communication PIR.

    For $m$ a constant prime, a simple rank argument shows that this forces $d \geq n^{\Omega(1)}$.
    The surprise is that for $m$ a product of $k > 1$ distinct primes, one can achieve this property with $d = \exp(\tilde{O}((\log n)^{1/k}))$.
    Moreover, the corresponding set $S := S^*_m$ (defined in Definition~\ref{def:canonicalset}) has size $2^k$.
    This construction of matching vector families is due to Grolmusz~\cite{DBLP:journals/combinatorica/Grolmusz00} and Barrington, Beigel, and Rudich~\cite{Barrington1994}, and has remained the state of the art since then.

    \item Let $\F$ be a finite field and $\gamma \in \F$ a primitive $m$th root of unity. An \underline{$S$-decoding polynomial} is a polynomial $P: \F \to \F$ such that $P(1) = 1$ and $P(\gamma^j) = 0$ for all $j \in S \backslash \{0\}$.
    A sparser $S$-decoding polynomial (i.e., one with fewer nonzero coefficients) leads to a fewer-server PIR.

    One can always achieve a baseline sparsity of $|S^*_m|$ by interpolating a polynomial of degree $|S|-1$ with the desired properties.
    Combining this with the matching vector family for $m$ a product of two distinct primes would give a 4-server PIR with communication $\exp(\tilde{O}(\sqrt{\log n}))$.
    
    The observation by Efremenko is that one can sometimes do better than this baseline: for $m = 7 \times 73 = 511$ and $\F = \F_{512}$, there exists an $S^*_m$-decoding polynomial of sparsity 3.
    Later, this phenomenon was generalized by \cite{Chee2011}, who showed assuming a number-theoretic conjecture that there exist choices of $m$ that are products of an even number $k$ of distinct primes and admit $S^*_m$-decoding polynomials of sparsity $3^{k/2}$.
\end{itemize}
The history of matching-vector PIR took another decisive turn with the work of Dvir and Gopi~\cite{DG16}, which showed that even two servers suffice to achieve communication $2^{\tilde{O}(\sqrt{\log n})}$.
Their construction used matching vector families together with ideas based on derivatives akin to~\cite{WY05}, but did not leverage $S$-decoding polynomials.
Recently, a beautiful work of Ghasemi, Kopparty, and Sudan~\cite{GKS25} (with a later alternative presentation by~\cite{DBLP:conf/tcc/AlonBL25}) showed that the derivative techniques of~\cite{DG16,WY05} could be combined with the sparse $S$-decoding polynomial approach of~\cite{Efr09,Chee2011} to give the following unified picture of all these matching vector constructions:
\begin{theorem}[\cite{GKS25}, see Theorem~\ref{thm:gks} for detailed statement]\label{thm:gksinformal}
    Assume $k = O(1)$.
    Let $m$ be a product of $k$ distinct primes and let $\F$ be a finite field with a primitive $m$th root of unity.
    Assume there exists an $S^*_m$-decoding polynomial over $\F$ of sparsity $\leq s$.
    Then there exists an $s$-server PIR scheme with communication $\exp(O((\log n)^{1/(k+1)} (\log \log n)^{1-1/(k+1)}))$.
\end{theorem}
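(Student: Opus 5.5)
We plan to follow the Ghasemi--Kopparty--Sudan template, which fuses the Grolmusz matching vector family with the derivative-based decoding of Dvir--Gopi and Woodruff--Yekhanin. The PIR has the following shape. Encode $\DB$ as the polynomial $F(x) = \sum_{i} \DB_i\, x^{\uv_i}$ over $\F[x_1,\ldots,x_d]/(x_j^m - 1)$, where $\uv_1,\dots,\uv_n,\vv_1,\dots,\vv_n \in \Z_m^d$ is an $S^*_m$-matching vector family of dimension $d$. Write the sparse decoding polynomial as $P(X) = \sum_{t=1}^{s} c_t X^{e_t}$. To query index $i$, the user picks a uniformly random $\mathbf{z} \in \Z_m^d$ and sends server $t$ the point $\mathbf{z} + e_t \vv_i$, or equivalently $\gamma^{\mathbf{z} + e_t\vv_i}$. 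Each individual query is uniform, which gives privacy. Server $t$ answers with $F$ evaluated at its point. In the derivative version it also sends the partial derivatives of $F$ (or of a suitable Hasse-derivative lift) at that point. This extra information is what lets the dimension exponent improve from $1/k$ to $1/(k+1)$.

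For decoding, consider $G(y) = \sum_t c_t F(\gamma^{\mathbf{z}+e_t \vv_i})$. Since $\langle \uv_j, \vv_i\rangle \in S^*_m$, the sum becomes
\[
\sum_j \DB_j\, \gamma^{\langle \uv_j,\mathbf{z}\rangle} \sum_t c_t \gamma^{e_t\langle \uv_j,\vv_i\rangle} = \sum_j \DB_j\, \gamma^{\langle \uv_j,\mathbf{z}\rangle} P(\gamma^{\langle \uv_j,\vv_i\rangle}).
\]
The decoding property of $P$ kills every term with $j \ne i$, and the $j=i$ term is $\DB_i\gamma^{\langle \uv_i,\mathbf{z}\rangle}$, from which the user recovers $\DB_i$.

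To gain the extra exponent, we reduce the matching vector dimension by one prime's worth. We use the Grolmusz family modulo $m' = m\cdot p_{k+1}$ with respect to a set $S$, and we handle the contribution of the additional coordinate through derivatives rather than through the decoding polynomial. Concretely, we follow the GKS idea: the derivative answers determine the values of a low-degree univariate restriction along the line direction. The user can then cancel the extra coordinate with a polynomial identity (as in Dvir--Gopi), while $P$ cancels the $S^*_m$ part. This keeps the server count at $s$, at the cost of multiplying the answer size by $d+1$.

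The parameters then work out as follows. The Grolmusz/BBR family for a modulus with $k+1$ prime factors has $d = \exp(O((\log n)^{1/(k+1)}(\log\log n)^{1-1/(k+1)}))$. The query length is $O(d\log m)$ and the answer length is $O(d\log|\F|)$, and since $k = O(1)$, both $m$ and $|\F|$ are constants. This gives the stated communication.

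The main obstacle is making the derivative step compatible with an arbitrary sparse $P$ instead of the full interpolation. One has to check that the vanishing of $P$ on $\gamma^{S^*_m\setminus\{0\}}$, applied coordinatewise across the CRT decomposition, together with the first-order information from derivatives, cancels every $j\neq i$ term. A further subtlety is the field characteristic: derivatives must not degenerate. The first thing we would try is to take $\F$ of characteristic equal to the extra prime, or to use Hasse derivatives. Since the paper's detailed version (Theorem~\ref{thm:gks}) is imported from \cite{GKS25}, we would ultimately invoke it directly and verify only that the hypotheses match.
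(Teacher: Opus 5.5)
Your plan is correct and matches the paper: the paper also just invokes the detailed GKS statement (Theorem~\ref{thm:gks}) and plugs in the Grolmusz/BBR matching vector family for a modulus with $k+1$ prime factors, with $m$ and $|\F|$ constant so the communication is $O(d)$. The one thing to pin down is that the extra prime is not a free choice (and $\F$ is given, not chosen): it must be $q=\mathrm{char}(\F)$, because the first-order derivatives in the answers only see exponents modulo the characteristic, so the family is taken over $\Z_{qm}$, which is a product of $k+1$ distinct primes since $m \mid |\F|-1$ forces $q\nmid m$.
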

In particular, the above theorem was the first to show that there is a 3-server PIR protocol with communication $\exp(\tilde{O}((\log n)^{1/3}))$.
Recall that the above theorem always admits a trivial instantiation with $s = 2^k$.
On the other hand, plugging in the $S$-decoding polynomials found by~\cite{Chee2011} implies that---assuming the same number-theoretic conjecture---the above theorem can be instantiated with $s = 3^{k/2}$ if $k$ is even and $s = 2 \cdot 3^{(k-1)/2}$ otherwise. 
This server dependence on $k$ is polynomially better than trivial.

A natural approach to getting even better communication-server tradeoffs for PIR is therefore to find sparser $S^*_m$-decoding polynomials.
Recently, Ghasemi and Kopparty~\cite{DBLP:conf/innovations/GhasemiK26} showed that such a polynomial must necessarily have sparsity $\geq k+1$, leaving an exponential gap between the best upper and lower bounds.
Ghasemi and Kopparty also explicitly raised finding constructions matching this lower bound as an open question.
A similar open question was raised by Beimel and Lasri~\cite{bl26} in the language of secret share conversion (which we discuss more in Section~\ref{sec:relatedwork}).

\subsection{Our Contributions}
The main result of this work is to resolve both formulations of this open question~\cite{DBLP:conf/innovations/GhasemiK26,bl26}; we show under a number-theoretic conjecture that $S^*_m$-decoding polynomials of sparsity $k+1$ exist for suitable primes $p_1, \ldots, p_k$ and $m = p_1\ldots p_k$.
In doing so, we completely close the previously exponential gap between the upper and lower bounds on the sparsity of an $S^*_m$-decoding polynomial.

\begin{theorem}[Sparse decoding polynomials from Conjecture~\ref{conjecture:onetuple-intro}]\label{thm:intromain}
    Assuming Conjecture~\ref{conjecture:onetuple-intro}, for any constant $k$ there exists $m$ that is a product of $k$ distinct primes, a finite field $\F$ with $m \mid |\F| - 1$, and an $S^*_m$-decoding polynomial of sparsity $k+1$.
\end{theorem}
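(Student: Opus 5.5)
The plan is to get a sparsity-$(k+1)$ decoding polynomial from a \emph{linearized} (additive) polynomial, which turns the vanishing conditions into a linear-algebra condition. Write $m=p_1\cdots p_k$, let $\gamma$ be a primitive $m$th root of unity in $\F$, and use the CRT to write $\gamma=\gamma_1\cdots\gamma_k$ up to reindexing, where each $\gamma_i$ is a primitive $p_i$th root of unity. An element $j\in S^*_m$ is determined by the set $T=\{i: j\equiv 1 \pmod{p_i}\}$. Then $\gamma^{j}=\prod_{i\in T}\gamma_i$, so the conditions in Definition~\ref{def:canonicalset} become
\[
P(1)=1,\qquad P\Bigl(\textstyle\prod_{i\in T}\gamma_i\Bigr)=0 \quad\text{for all nonempty } T\subseteq[k].
\]

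\textbf{Step 1 (reduction to a subspace condition).} Take $\F=\F_{q^r}$ with $m\mid q^r-1$, and look for $P(z)=L(z)=\sum_{t=0}^{k}c_t z^{q^t}$. If some $c_t$ are zero the sparsity only drops, which is fine. Exponents $q^t$ may be reduced mod $m$ when $P$ is only evaluated on $m$th roots of unity, so this is a legitimate $\F\to\F$ polynomial for the purpose of Definition~\ref{def:canonicalset}. $L$ is $\F_q$-linear. For any $k$-dimensional $\F_q$-subspace $V\subseteq\F$, the subspace polynomial $L_V(z)=\prod_{v\in V}(z-v)$ is linearized of $q$-degree $k$, so it has at most $k+1$ terms, and its kernel is exactly $V$. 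Hence it suffices to find $q,r$, distinct primes $p_i\mid q^r-1$, and primitive $p_i$th roots $\gamma_i$ such that:
\begin{itemize}
\item[(a)] all $2^k-1$ products $\prod_{i\in T}\gamma_i$ ($T\neq\emptyset$) lie in a $k$-dimensional $\F_q$-subspace $V$;
\item[(b)] $1\notin V$.
\end{itemize}
Then $P=L_V/L_V(1)$ is the required polynomial. The cleanest target is $q=2$: there (a) says the nonzero elements of $V$ are exactly these $2^k-1$ products, i.e.\ $\{\prod_{i\in T}\gamma_i\}_{T\neq\emptyset}\cup\{0\}$ is closed under addition. Efremenko's $m=511$ example is the case $k=2$, where $\gamma_1+\gamma_2=\gamma_1\gamma_2$.

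\textbf{Step 2 (algebraic construction of the $\gamma_i$).} Next I would build the $\gamma_i$ in reverse. First pick field elements $\gamma_i$, as rational functions of a few free elements, satisfying the additive closure relations (a). For instance, for $q$ general, choose $\gamma_i$ so that each $\gamma_i^{-1}-1$, or each $(1-\gamma_i)$, ranges over a fixed $k$-dimensional space built iteratively. Then compute the multiplicative orders that these elements are forced to have. The relations pin the $\gamma_i$ down to roots of explicit polynomials. Their orders are then divisors of numbers of the form $(q^{r}-1)/(q^{r'}-1)$ or values of explicit integer polynomials. The remaining requirement is that these orders are $k$ \emph{distinct primes}. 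For example, each $\gamma_i$ might be forced to have order the generalized repunit $(b_i^{\ell}-1)/(b_i-1)$, or the value of a fixed polynomial family at a common integer.

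\textbf{Step 3 (invoking the conjecture and the main obstacle).} The existence of parameters making all these orders simultaneously prime is exactly the number-theoretic input: Conjecture~\ref{conjecture:onetuple-intro}, implied by either the generalized repunit conjecture or Schinzel's hypothesis H applied to the finite list of integer polynomials arising in Step 2. Given such parameters, the $p_i$ are distinct primes dividing $|\F|-1$, $\gamma$ is a primitive $m$th root of unity, and Step 1 yields sparsity $k+1$. I would also check (b) by a degree or trace argument, e.g.\ showing that $1\in V$ would force some $\gamma_i$ into a proper subfield, contradicting its prime order. The main obstacle is Step 2: I must find an explicit family of relations for general $k$ that keeps \emph{all} $2^k-1$ products inside a $k$-dimensional space while leaving the orders coprime and prime-able. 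If the pure $q=2$ closure condition proves too rigid, my first fallback is larger $q$, where (a) only requires containment in $V$ rather than equality, which leaves many more degrees of freedom.
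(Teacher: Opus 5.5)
Your Step~1 reduction is correct, but Step~2, which you flag as the main obstacle, is where the whole theorem lives, and you give no construction there. More seriously, the linearized ansatz cannot be completed from the hypothesis you are allowed to use, because it forces all the $\gamma_i$ into one field.

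Suppose $L$ is $\F_q$-linearized of $q$-degree $e\le k$, vanishes at $\prod_{i\in T}\gamma_i$ for all nonempty $T\subseteq[k]$, and has $L(1)\neq 0$. For any $j$, set $M(z):=L(\gamma_j z)-\gamma_j^{q^e}L(z)$. Then $M$ is linearized of $q$-degree $\le e-1$, vanishes at $\prod_{i\in T}\gamma_i$ for all nonempty $T\subseteq[k]\setminus\{j\}$, and has $M(1)=-\gamma_j^{q^e}L(1)\neq 0$. Iterating, for any pair $i\neq j$ you get a nonzero linearized polynomial of $q$-degree $\le 2$. Its kernel $W\subseteq\F$ is an $\F_q$-space of dimension $\le 2$ containing $\gamma_i,\gamma_j,\gamma_i\gamma_j$ but not $1$.
\begin{itemize}
\item If $\gamma_i,\gamma_j$ are $\F_q$-dependent, then $\gamma_j\in\F_q^\times\gamma_i$.
\item Otherwise $\gamma_i\gamma_j=\alpha\gamma_i+\beta\gamma_j$ with $\alpha,\beta\in\F_q$, i.e.\ $\alpha\gamma_j^{-1}+\beta\gamma_i^{-1}=1$. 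Here $\alpha,\beta\neq 0$, since otherwise $\gamma_i$ or $\gamma_j$ lies in $\F_q$ and $1\in W$.
\end{itemize}
Either way $\gamma_j^{-1}$ is an invertible $\F_q$-affine function of $\gamma_i^{-1}$, so $\F_q(\gamma_1)=\cdots=\F_q(\gamma_k)$. Thus all $p_i$ share a single order $r=\ord_{p_i}(q)$ and all divide the one number $\Phi_r(q)$.

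This has several consequences.
\begin{itemize}
\item Your claim that Efremenko's $m=511$ example is the case $\gamma_1+\gamma_2=\gamma_1\gamma_2$ is false. In $\F_{2^9}$ we have $\mu_7=\F_8^\times$, so the relation gives $\gamma_2=\gamma_1/(\gamma_1+1)\in\F_8$, which has no element of order $73$. Indeed $\ord_7(2)=3\neq 9=\ord_{73}(2)$.
\item Conjecture~\ref{conjecture:onetuple-intro} hands you primes $p_i=P^+(q^{d_i}-1)$ with pairwise distinct orders $\ord_{p_i}(q)=d_i$ (Lemma~\ref{lem:rough-order}), so it cannot feed your framework. What your route actually needs is $k$ distinct prime factors of one value $\Phi_r(q)$ whose $\gamma_i^{-1}$ are pairwise $\F_q$-affinely related. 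That is the arithmetic shape of the $k=2$ family of \cite{Chee2011} (two primes dividing one $2^d-1$). It is not implied by the theorem's hypothesis, and there is no evident reason it can be met for large $k$.
\item Your fallback to a larger base field does not help, since the argument above never used $q=2$.
\end{itemize}

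The missing idea is to stop insisting that the exponents be $q^t$. The paper shows (Lemma~\ref{lem:rugisgood}) that nonzero $a_i$ and distinct $t_j$ with $1+a_it_j\in\mu_{p_i}(\F)$ already give a $(k+1)$-sparse decoder. Its exponents are $\sum_\ell e_{\ell,j}c_\ell$, read off from discrete logarithms of the grid entries. Its coefficients come from Lagrange interpolation of $\prod_i(1+a_iX)^{1-u_i}$ at the $t_j$, extracting the $X^k$ coefficient.

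In that formulation the rows decouple. With $T\subseteq\F_q$ fixed, each $a_i$ is chosen separately in its own subfield $\F_{q^{d_i}}$ by the Weil-bound count of Claim~\ref{clm:weil-counting}; this is where $p_i\ge(q^{d_i}-1)^{1-1/(3k)}$ is used. Primes of different orders then coexist in $\F_{q^{\mathrm{lcm}(d_i)}}$. In the correspondence of Lemma~\ref{lemma:gkrigidity}, your linearized decoders are the grids whose $i$th row is $\{\gamma_i^{1-q^t}\}_t$. Tying every row to the same Frobenius is exactly what welds them into a single field.
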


\begin{corollary}[Secret share conversion]
    Under the same conjecture as in Theorem~\ref{thm:intromain}, for any constant $k$ there exists $m$ that is the product of $k$ distinct primes, such that there is a $(k+1)$-party $S^*_m$-share conversion scheme.
\end{corollary}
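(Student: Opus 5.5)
The corollary should follow from Theorem~\ref{thm:intromain} through the known correspondence between sparse $S$-decoding polynomials and share conversion. The plan is to take the $m = p_1\cdots p_k$, the field $\F$ with $m \mid |\F|-1$, and the $S^*_m$-decoding polynomial $P(X)=\sum_{\ell=0}^{k} c_\ell X^{e_\ell}$ of sparsity $k+1$ supplied by Theorem~\ref{thm:intromain}. From this we build a $(k+1)$-party conversion from additive shares over $\Z_m$ to additive shares over $\F$. The target property is that the output shares sum to $0$ when the secret $x\in S^*_m\setminus\{0\}$, and to a fixed nonzero value when $x=0$. This is the notion of $S$-share conversion used by Beimel and Lasri~\cite{bl26}, following Beimel, Ishai, Kushilevitz and Orlov.

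The construction follows Efremenko's approach. Fix a primitive $m$th root of unity $\gamma\in\F$. Share the secret $x\in\Z_m$ additively among $k+1$ parties, as $x = \sum_{\ell=0}^k x_\ell \bmod m$. I would use a ``shifted'' sharing in which party $\ell$ holds, for every $\ell'$, the shares $e_{\ell'} x_{\ell''}$ that it needs. Concretely:
\begin{itemize}
\item Start from a CNF- or replicated-style input sharing.
\item Instead, as in the standard Efremenko-to-share-conversion translation, have the dealer provide party $\ell$ with $y_\ell = e_\ell\cdot x + r_\ell$, where the masks $r_\ell$ sum to zero after suitable scaling.
\end{itemize}

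The cleanest route, I believe, is the one in~\cite{bl26}: they show that an $S$-decoding polynomial of sparsity $t$ over $\F$ with $\gamma$ of order $m$ yields a $t$-party $S$-share conversion from $\Z_m$-additive sharing (in their model). Given that, the corollary is immediate with $t=k+1$.

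If a self-contained argument is needed, I would verify the key identity directly. Party $\ell$ outputs $c_\ell\gamma^{e_\ell x}\cdot(\text{correction})$, chosen so that the outputs sum to $\sum_\ell c_\ell \gamma^{e_\ell x} = P(\gamma^x)$. This sum equals $0$ exactly for $x\in S^*_m\setminus\{0\}$ and equals $P(1)=1$ for $x=0$.

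The main obstacle is matching the precise share conversion definition in~\cite{bl26}: in particular, which input sharing scheme is allowed, and whether the output must be additive over $\F$ or over a group. I would resolve this by citing their equivalence theorem between $S$-decoding polynomials of sparsity $t$ and $t$-party $S$-share conversions. The remaining verification is just the evaluation $P(\gamma^x)$ for $x\in S^*_m$.
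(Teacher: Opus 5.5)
Your main route is exactly how the paper obtains this corollary: take the $(k+1)$-sparse $S^*_m$-decoding polynomial from Theorem~\ref{thm:intromain}, then apply the standard fact (from \cite{DBLP:conf/coco/BeimelIKO12}, and one direction of the equivalence in \cite{bl26}) that a sparsity-$T$ decoding polynomial yields a $T$-party share conversion. The paper gives no separate proof beyond the sketch in Section~\ref{sec:relatedwork}.

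However, the self-contained version you outline would not work as written. Suppose party $\ell$ receives $y_\ell = e_\ell x + r_\ell$ with independent masks constrained only to sum to zero. The natural output is $c_\ell\gamma^{y_\ell} = c_\ell\gamma^{r_\ell}\gamma^{e_\ell x}$, and $\sum_\ell c_\ell\gamma^{r_\ell}\gamma^{e_\ell x}$ is not related to $P(\gamma^x)$. The masks live in the exponent while the outputs are added in $\F$, so $\sum_\ell r_\ell = 0$ in $\Z_m$ cancels nothing. No local ``correction'' can fix this. Such a sharing is just an additive sharing of $(\sum_\ell e_\ell)x$. For $k\ge 2$ (at least three parties), one checks that any local maps $\Z_m \to (\F,+)$ whose sum vanishes on every additive sharing of a fixed value must all be constant, because $m$ is invertible in $\F$.

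The fix is to use a \emph{common} mask. Sample a single $r \gets \Z_m$ and give party $\ell$ the share $y_\ell = r + e_\ell x$, which is individually uniform. Party $\ell$ outputs $c_\ell\gamma^{y_\ell}$, and the outputs sum to $\gamma^r P(\gamma^x)$. This is $0$ for $x\in S^*_m\setminus\{0\}$ and $\gamma^r\neq 0$ for $x=0$. So you should not aim for the fixed value $P(1)$; a random nonzero value is all the definition requires.

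If you prefer to start from CNF sharing $x=\sum_j x_j$ with party $\ell$ missing $x_\ell$, your first bullet already works. Party $\ell$ can locally compute $\sum_{j\neq \ell}(e_\ell - e_j)x_j = r + e_\ell x$, where $r = -\sum_j e_j x_j$ is the same for every party.
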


Plugging Theorem~\ref{thm:intromain} into Theorem~\ref{thm:gksinformal} implies:
\begin{corollary}[PIR]\label{cor:asymptoticpir}
    Assume Conjecture~\ref{conjecture:onetuple-intro}. Then for any constant $s \geq 2$, there exists an $s$-server PIR scheme with communication $\exp(O((\log n)^{1/s} (\log \log n)^{1-1/s}))$.
\end{corollary}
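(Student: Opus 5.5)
The corollary follows by composing Theorem~\ref{thm:intromain} with Theorem~\ref{thm:gksinformal}, choosing the parameters so that the number of servers equals $s$. Given a constant $s \geq 2$, set $k := s-1$, which is a constant $\geq 1$. Theorem~\ref{thm:intromain}, under Conjecture~\ref{conjecture:onetuple-intro}, supplies three objects:
\begin{itemize}
    \item an integer $m = p_1 \cdots p_k$ that is a product of $k$ distinct primes;
    \item a finite field $\F$ with $m \mid |\F| - 1$;
    \item an $S^*_m$-decoding polynomial over $\F$ of sparsity $k+1 = s$.
\end{itemize}
Because the multiplicative group of $\F$ is cyclic of order $|\F|-1$ and $m$ divides this order, $\F$ contains a primitive $m$th root of unity $\gamma$. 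This is the field hypothesis of Theorem~\ref{thm:gksinformal}.

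One point needs checking: the decoding property must refer to the same $\gamma$ that Theorem~\ref{thm:gksinformal} uses. If the polynomial from Theorem~\ref{thm:intromain} is stated for some particular primitive root, we fix $\gamma$ to be that root. The detailed version (Theorem~\ref{thm:gks}) presumably allows any primitive $m$th root of unity.

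Applying Theorem~\ref{thm:gksinformal} with $k = s-1 = O(1)$ and sparsity bound $s$ then gives an $s$-server PIR scheme with communication
\[
\exp\bigl(O((\log n)^{1/(k+1)} (\log \log n)^{1-1/(k+1)})\bigr) = \exp\bigl(O((\log n)^{1/s} (\log \log n)^{1-1/s})\bigr).
\]
The constant in the $O(\cdot)$ depends only on $s$ and on the fixed choice of $m$ and $\F$. Since $s$ is constant, this is allowed. The same fixed $m$ and $\F$ are used for every $n$, so Theorem~\ref{thm:gksinformal} applies uniformly in $n$.

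The case $s = 2$ gives $k = 1$, where $m$ is a single prime. Here the trivial baseline sparsity $|S^*_m| = 2^1 = 2$ already achieves $k+1$. So this case needs no conjecture, provided Theorem~\ref{thm:gksinformal} as stated covers $k = 1$. This recovers the two-server bound of~\cite{DG16}.

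The only real obstacle is this hypothesis matching: $m \mid |\F|-1$ must yield a primitive root of unity compatible with the decoding polynomial, and the exact form of Theorem~\ref{thm:gks} must allow $k = 1$. Everything else is direct substitution.
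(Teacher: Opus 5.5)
Your proposal is correct and matches the paper's proof: set $k=s-1$, take $m$, $\F$ and an $s$-sparse $S^*_m$-decoding polynomial from Theorem~\ref{thm:intromain}, and plug them into Theorem~\ref{thm:gksinformal}. The side points you flag are handled in the paper by Lemma~\ref{lemma:rootnomatter} (the choice of primitive root does not matter) and by Claim~\ref{clm:trivialdec} for the unconditional $k=1$ case.
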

In Section~\ref{sec:empirics}, we empirically validate our construction for small values of $s$, yielding the following corollary:
\begin{corollary}[PIR, unconditional]\label{cor:concrete}
    For any constant $s \in [2, 15]$, there unconditionally exists an $s$-server PIR with communication $\exp(O((\log n)^{1/s} (\log \log n)^{1-1/s}))$.

    Moreover, assuming the primality of 13 specific integers (listed out in Section~\ref{sec:empirics}) that pass probable prime tests, the above is true for $s \in [2, 24]$.
\end{corollary}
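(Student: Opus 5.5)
The plan is to instantiate Theorem~\ref{thm:gksinformal} with $k = s-1$ primes, using explicit sparse $S^*_m$-decoding polynomials found by computer search rather than the conjectural construction of Theorem~\ref{thm:intromain}. For each $k \in [1,14]$ I will exhibit concrete distinct primes $p_1,\ldots,p_k$, set $m = p_1\cdots p_k$, choose a field $\F = \F_{q}$ with $m \mid q-1$, and write down a polynomial $P$ with at most $k+1$ monomials such that $P(1)=1$ and $P(\gamma^j)=0$ for every $j \in S^*_m\setminus\{0\}$. Theorem~\ref{thm:gksinformal} then yields a $(k+1)$-server PIR with communication $\exp(O((\log n)^{1/(k+1)}(\log\log n)^{1-1/(k+1)}))$, which is the claim with $s=k+1$. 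The case $s=2$ ($k=1$) is immediate: $S^*_p=\{0,1\}$, so $P(x)=(x-\gamma)/(1-\gamma)$ has two monomials.

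The search itself follows the framework behind Theorem~\ref{thm:intromain}. The primes are chosen of the special form the conjecture requires, for instance $p_i$ dividing a repunit-type quantity $(q^{t}-1)/(q-1)$, so that multiplication by $q$ (the Frobenius) permutes the relevant residues and the $2^k-1$ vanishing conditions collapse into a few orbits. Sparsity $k+1$ is then attained by taking exponents in a suitable geometric pattern, where $P$ is essentially a trace-like sum of $k+1$ Frobenius-twisted monomials. Since $S^*_m$ is the set of $j$ with $j \bmod p_i \in \{0,1\}$ for all $i$, these conditions reduce by CRT to finitely many checkable equations in $\F$.

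Verification is a finite computation: evaluate $P$ at the $2^k-1$ points $\gamma^j$ and at $1$. What is not automatic is that the integers used are genuinely prime. For $k \le 14$ they must be certified, for example by ECPP or Pocklington-style certificates, since the conjectural infinitude is irrelevant once specific instances are in hand. For $k\le 23$ some candidates have so far only passed probable-prime tests, which gives the conditional second statement.

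The main obstacle is finding, for each $k$ up to 14 (and 23), parameters meeting the conjecture's arithmetic conditions simultaneously and with sizes small enough that primality can be rigorously certified. The search space grows quickly with $k$. I would first try the smallest base $q$ and exponents $t$ for which the repunit $(q^t-1)/(q-1)$, or its cyclotomic factors, are prime, and search upward. Certifying large prime factors is the expected bottleneck.
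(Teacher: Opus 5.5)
Your outer frame matches the paper: take $k=s-1$, obtain a $(k+1)$-sparse $S^*_m$-decoding polynomial from concrete arithmetic data, feed it to Theorem~\ref{thm:gksinformal}, and separate proven primes ($k\le 14$) from probable primes ($k\le 23$). The gap is in the middle step. You never give a construction that provably attains sparsity $k+1$, and the verification you propose cannot be carried out.

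The mechanism you sketch does not work as stated. Take primes $p_i\nmid q(q-1)$, as for the repunit-type primes you have in mind. The $q$-orbit of each nonzero $\sum_i u_ic_i\in S^*_m$ meets $S^*_m$ only in that element: if $q^t\sum_i u_ic_i\in S^*_m$, then $q^t\equiv 1 \pmod{p_i}$ whenever $u_i=1$. So the $2^k-1$ vanishing conditions lie in $2^k-1$ distinct orbits and do not collapse at all. You also never say why a ``trace-like'' sum of $k+1$ monomials would satisfy them.

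More seriously, once the characteristic is $q$ and $\mathrm{ord}_{p_i}(q)=d_i$, any field with $m\mid |\F|-1$ is $\F_{q^e}$ with $\mathrm{lcm}(d_1,\ldots,d_k)\mid e$. For the parameters that actually work ($q=19$, fourteen distinct prime degrees between $31$ and $1571$), this lcm is about $10^{33}$. Neither $\gamma$ nor the coefficients of $P$ can be written down, let alone evaluated at $2^k$ points. Even locating one grid row $a_i$ by search would mean hitting a set of density roughly $h_i^{-k}$, about $18^{-14}$ for a repunit prime.

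The missing idea is an existence certificate that never requires constructing $P$. The paper's proof combines two ingredients:
\begin{itemize}
\item Lemma~\ref{lem:rugisgood}: a root-of-unity grid with $1+a_it_j\in\mu_{p_i}(\F)$ yields a $(k+1)$-sparse decoding polynomial, by reading off $\Coeff_{X^k}$ of $\prod_i(1+a_iX)^{1-u_i}$ via Lagrange interpolation at $t_1,\ldots,t_{k+1}$.
\item Lemma~\ref{lemma:suffcondsharp} with Claim~\ref{clm:weil-counting}: character orthogonality plus the Weil bound show such a grid exists over $\F_{q^D}$ once $q\ge k+1$, the $p_i\mid q^{d_i}-1$ are distinct primes, and $\frac{q^{d_i}-k}{h_i^k}>(k-1)\sqrt{q^{d_i}}(1-1/h_i^k)+1$ with $h_i=(q^{d_i}-1)/p_i$.
\end{itemize}
The finite computation is then only certifying primality of the $p_i$ and checking this inequality. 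For $q=19$, this works with the degrees $31,47,59,61,107,337,1061$ (where $(19^d-1)/18$ is prime) and $83,311,347,601,907,953,1571$ (where $P^+(19^d-1)$ is a proven prime that is large enough), giving $k=14$. Smaller $k$ follow by deleting rows and columns of the grid. The case $k=23$ comes from $q=37$, relying on the 13 probable primes. If you replace ``exhibit $P$ and evaluate it'' by this criterion, your outline becomes the paper's proof.
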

Finally, we note that our PIR results also admit the standard alternate interpretation in terms of locally-decodable codes~\cite{DBLP:conf/stoc/KatzT00,trevisan04}:

\begin{corollary}[Locally-decodable codes]
    Under the same Conjecture~\ref{conjecture:onetuple-intro}, for any constant $s$, there exists a smooth $s$-query locally-decodable code mapping $n$ bits to codewords of length $\exp(\exp(O((\log n)^{1/s} (\log \log n)^{1-1/s})))$ over an alphabet of size $\exp(\exp({O}((\log n)^{1/s}(\log \log n)^{1-1/s})))$.
\end{corollary}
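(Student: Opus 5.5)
This corollary is the standard PIR-to-LDC translation, applied to the PIR scheme of Corollary~\ref{cor:asymptoticpir}. It is cleanest to go through the ingredients of Theorem~\ref{thm:gksinformal} (matching vector family plus $S$-decoding polynomial), whose servers answer with a single field element. One could instead apply the generic Katz--Trevisan reduction to an arbitrary one-round PIR.

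\textbf{Step 1: build the code.} Fix $k=s-1$, and take the $m$, $\F$ and sparsity-$(k+1)$ $S^*_m$-decoding polynomial from Theorem~\ref{thm:intromain}. Let $C(\DB)$ be the table of all possible answers: one entry for each pair of a server $j\in[s]$ and a possible query $q_j$, holding the answer that server $j$ would give on query $q_j$. The decoder works as follows:
\begin{itemize}
\item it samples the user's randomness for index $i$;
\item it reads the $s$ positions $(j,q_j)$;
\item it runs the PIR reconstruction to output $\DB_i$.
\end{itemize}
So the code makes $s$ queries.

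\textbf{Step 2: smoothness.} PIR privacy says each $q_j$ is distributed independently of $i$. In the matching-vector construction, each $q_j$ is in fact uniform over the query space $\F^d$ (or $\Z_m^d$, depending on the presentation). So every individual query is uniform over its block. Padding or repeating blocks so they have equal size makes each query uniform over the whole codeword. This gives perfect smoothness.

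\textbf{Step 3: parameters.} Write $c(n)$ for the communication bound of Corollary~\ref{cor:asymptoticpir}, in bits. The query length $\ell_q$ satisfies $\ell_q\le c(n)=\exp(O((\log n)^{1/s}(\log\log n)^{1-1/s}))$, so each block has at most $2^{\ell_q}$ entries. The total codeword length is then at most $s\cdot 2^{c(n)}$, which has the form $\exp(\exp(O(\cdot)))$ since $s$ is constant. Each symbol is one server answer of at most $c(n)$ bits, so the alphabet has size at most $2^{c(n)}$, again of the stated form.

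\textbf{Main obstacle.} The only point needing care is checking that the GKS scheme behind Theorem~\ref{thm:gksinformal} is a one-round scheme with deterministic server answers and uniformly distributed individual queries. I would verify this against the detailed statement in Theorem~\ref{thm:gks}. If the queries are not exactly uniform, the fallback is Katz--Trevisan's general construction: index codeword positions by (server, query) pairs. Because each query's distribution is independent of $i$, the resulting code is smooth with the same length and alphabet bounds, up to the distribution's min-entropy, and uniformity gives exactly the bound above.
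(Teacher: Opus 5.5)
Your proposal is correct and follows the paper's route. The paper gives no separate proof: it simply invokes the standard Katz--Trevisan translation (a $T$-server PIR with queries of length $\ell_q$ and answers of length $\ell_a$ gives a smooth $T$-query LDC of length $2^{\ell_q}$ over an alphabet of size $2^{\ell_a}$) on top of Corollary~\ref{cor:asymptoticpir}, and your bounds $\ell_q,\ell_a\le c(n)$ yield exactly the stated parameters.

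There are a few harmless slips. By Theorem~\ref{thm:gks}, each answer lies in $\F^{d+1}$ rather than being a single field element, and each query lies in $\Z_m^d$ rather than $\F^d$. Also, equal-size blocks only make each read uniform within its own block, which gives smoothness up to the constant factor $s$; exact uniformity follows if the decoder reads the blocks in a uniformly random order.
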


The above statements are purely asymptotic; as the statement of Corollary~\ref{cor:concrete} suggests, our explicit constructions therein make use of probable primes $p_1, p_2, \ldots, p_k$ that have on the order of $10^6$ bits and are difficult to deterministically test for primality.
We tabulate the number of servers required by our construction in comparison with previous work for various communication regimes in Table~\ref{tbl:commregimes}.\footnote{A careful reader might notice that our reported numbers for~\cite{Efr09,Chee2011} in even rows and~\cite{GKS25} in odd rows of Table~\ref{tbl:commregimes} are better than those reported in those papers. For example, the number of servers claimed by~\cite{Chee2011} for even $r$ is $8 \cdot 3^{(r-3)/2}$. The reason for this is that these papers made use of Theorem~\ref{thm:tensoring} due to~\cite{itoh2010improved} in the specific case where none of the $m_i$'s are prime, as this was the variant originally stated by~\cite{itoh2010improved}. Looking more closely at the proof, this requirement is not necessary and setting one of the $m_i$'s to be prime readily recovers the results that we claim from the constructions of~\cite{Efr09,Chee2011,GKS25} with essentially no modification. For completeness, we reprove Theorem~\ref{thm:tensoring} in this generality in Appendix~\ref{sec:tensoringproof}. We make no claim to originality in this observation; for example, the $r = 3$ case is briefly mentioned by~\cite{bl26}.}

\begin{table}
\begin{center}
    \large
    \begin{tabular}{|c|c|c|c|c|}
    \hline
    \multirow{2}{*}{\textbf{Communication}} & \multicolumn{4}{|c|}{\textbf{\# of servers}} \\
    \cline{2-5}
    & \textbf{\cite{Efr09,Chee2011}} & \textbf{\cite{DG16}} & \textbf{\cite{GKS25}} & \textbf{This work} \\
    \hline
    \hline
    $\exp(\tilde{O}((\log n)^{1/2})$ & 3 & 2 & 2 & 2 \\
    \hline
    $\exp(\tilde{O}((\log n)^{1/3})$ & 6 & 4 & 3 & 3 \\
    \hline
    $\exp(\tilde{O}((\log n)^{1/4})$ & 9 & 8 & 6 & 4 \\
    \hline
    $\exp(\tilde{O}((\log n)^{1/5})$ & 18 & 16 & 9 & 5 \\
    \hline
    $\exp(\tilde{O}((\log n)^{1/6})$ & 27 & 32 & 18 & 6 \\
    \hline
    $\exp(\tilde{O}((\log n)^{1/7})$ & 54 & 64 & 27 & 7 \\
    \hline
    $\exp(\tilde{O}((\log n)^{1/8})$ & 81 & 128 & 54 & 8 \\
    \hline
    $\exp(\tilde{O}((\log n)^{1/9})$ & 162 & 256 & 81 & 9 \\
    \hline
    \end{tabular}
    \end{center}
    \caption{For $r \in [2, 9]$, we tabulate the number of servers required for PIR with communication $\exp(\tilde{O}((\log n)^{1/r}))$. We compare our constructions with the previous works of~\cite{Efr09,Chee2011,DG16,GKS25}. For all $r$ tabulated, our work requires $r$ servers and~\cite{DG16} requires $2^{r-1}$ servers. For even $r$,~\cite{Efr09,Chee2011} requires $3^{r/2}$ servers while~\cite{GKS25} requires $2 \cdot 3^{(r-2)/2}$ servers. For odd $r$,~\cite{Efr09,Chee2011} requires $2 \cdot 3^{(r-1)/2}$ servers while~\cite{GKS25} requires $3^{(r-1)/2}$ servers.}
    \label{tbl:commregimes}
\end{table}
We next turn to a brief discussion about the number-theoretic conjecture our results rely on.

\subsection{The Number-Theoretic Conjecture}

Our theorems stated thus far hinge on the following conjecture:
\begin{conjecture}\label{conjecture:onetuple-intro}
    For every positive integer $k > 1$, there exists a prime $q \geq k+1$ for which there exist $k$ distinct positive integers $d_1, \ldots, d_k \geq 6$ such that for all $i \in [k]$, $q^{d_i}-1$ has a prime divisor $\geq (q^{d_i}-1)^{1-1/(3k)}$.
\end{conjecture}
We note that it would suffice to ensure that (1) $d_i \geq 3k$; and (2) $(q^{d_i}-1)/(q-1)$ is prime. Integers of the form $(q^d-1)/(q-1)$ are sometimes referred to as \emph{generalized (or base-$q$) repunits}, because their base $q$ expansion consists of $d$ 1's. Base-$q$ repunit primes are a generalization of Mersenne primes, and their distribution is the subject of the generalized repunit conjecture~\cite{Bourdelais2009, dubner1993generalized}, a generalization of the Mersenne prime conjecture. In particular, the generalized repunit conjecture states that there exist infinitely many base-$q$ repunits for any prime $q$. The conjecture we need is only weaker; we only need $(q^{d_i}-1)/(q-1)$ to have a large prime divisor.

Finally, for reasons we will discuss in Section~\ref{sec:superconstantoverview}, Conjecture~\ref{conjecture:onetuple-intro} will be insufficient for us in the regime where the number of servers is superconstant.
What we roughly need is a variant of Conjecture~\ref{conjecture:onetuple-intro} that also specifies how large $q$ and the $d_i$'s need to be (as a function of $k$).
In particular, we would like to make use of a tighter heuristic analysis than just restricting attention to cases where $(q^{d_i}-1)/(q-1)$ is prime.
We state this more complicated conjecture at the beginning of Section~\ref{sec:hardconjecture} and provide a heuristic justification there.

\subsection{Organization of the Paper}

This paper is organized as follows. In Section~\ref{sec:techoverview}, we provide an overview of our techniques. In Section~\ref{sec:prelims}, we set up some notation and preliminaries. In Section~\ref{sec:decpolysmake}, we prove the main technical results underlying Theorem~\ref{thm:intromain}. In Section~\ref{sec:hardconjecture}, we provide additional justification for Conjecture~\ref{conjecture:onetuple-intro} and also state and justify the stronger conjecture that we need for our results in the superconstant \# of servers regime. In Section~\ref{sec:gettopir}, we put these ingredients together to prove our results in the superconstant \# of servers regime. In Section~\ref{sec:empirics}, we report empirical findings that allow us to make Theorem~\ref{thm:intromain} unconditional for $k \leq 14$ and therefore our PIR theorem unconditional for $s \leq 15$ servers. In Section~\ref{sec:concl}, we conclude with a survey of related work and some open questions.

\subsection{Statement on AI Use}

The main result (Theorem~\ref{thm:intromain}) and its proof (Lemmas~\ref{lem:rugisgood} and~\ref{lemma:suffcondsharp}) were discovered by OpenAI's GPT-5.5 Pro model in conversation with the authors.
The authors first asked the LLM to conduct an empirical search for $S^*_m$-decoding polynomials achieving better-than-state-of-the-art sparsity for $m$ being a product of $k$ primes for small values of $k$ (sparsity $\le 5$ for $m$ a product of three primes, and sparsity $\le 8$ for $m$ a product of four primes). After a few rounds of prompting, the model found a general framework to find many such polynomials. The model also stumbled upon a way of building decoding polynomials of sparsity $\leq 5$ for $m$ a product of four primes.
On further prompting, the model generalized this framework to Lemma~\ref{lem:rugisgood}, its instantiation via Lemma~\ref{lemma:suffcondsharp}, and therefore a proof of Theorem~\ref{thm:intromain}.

The authors then identified a number of next steps to produce a more complete picture of how far these techniques can be pushed to obtain better PIR. These steps were executed with the aid of GPT-5.5 Pro, GPT-5.6 Sol Pro, Codex 5.5, and Claude Opus 4.8.
These were (a) Sections~\ref{sec:hardconjecture} and~\ref{sec:gettopir} (adapting these results to the case of super-constant $s$); (b) Lemma~\ref{lemma:gkrigidity} (a converse of Lemma~\ref{lem:rugisgood}); and (c) the empirical validations of our number-theoretic conjectures detailed in Section~\ref{sec:empirics}. 

All text in this paper was written by the authors, sometimes with the aid of draft proofs generated by GPT-5.5 Pro and Claude Opus 4.8.
The authors manually verified all proofs.
Claude Opus 4.8 was used to generate Figures~\ref{fig:rug} and~\ref{fig:phasediagram}.

\subsection{Acknowledgements}

The authors thank Vinod Vaikuntanathan for helpful suggestions and discussions throughout this project.
SR also thanks Alexandra Henzinger and Edward Pyne for useful discussions about matching-vector PIR.
The authors' access to ChatGPT Pro was supported by the UK AISI alignment project, and SR's access to Claude Max was supported by Jane Street.

\section{Technical Overview}\label{sec:techoverview}

In Sections~\ref{sec:dptorug} and~\ref{sec:rughowto}, we introduce our main technical machinery for building optimally-sparse decoding polynomials.
In Section~\ref{sec:superconstantoverview}, we discuss how our results carry over to the superconstant-server regime.
Finally, in Section~\ref{sec:superconstanttechoverviewnt}, we briefly discuss our approach to justifying the stronger number-theoretic conjecture required for the superconstant-server setting.

We briefly introduce some relevant terminology. Let $m = p_1 \ldots p_k$ be a product of $k$ distinct primes.
Then the subset $S^*_m \subseteq \Z_m$ consists of the $2^k$ elements that are congruent to either 0 or 1 modulo each of the $p_i$'s.
To handle this set, it will be convenient to define the CRT basis $c_1, \ldots, c_k \in \Z_m$, where $c_i \equiv 1 \pmod{p_i}$ and $0 \pmod{p_j}$ for all $j \neq i$.

Let $\F$ be a finite field with a primitive $m$th root of unity $\gamma$.
For a prime $p \mid |\F|-1$, we will let $\mu_p(\F)$ denote the multiplicative subgroup of $p$th roots of unity.
Bearing this in mind, an $S^*_m$-decoding polynomial is a polynomial $P: \F \to \F$ such that $P(1) = 1$ and $P(\gamma^j) = 0$ for all $j \in S^*_m \backslash \{0\}$.
The primary goal of this section is to show our construction (assuming a plausible number-theoretic conjecture) of $m, \F, P$ such that $P$ is an $S^*_m$-decoding polynomial with only $k+1$ nonzero coefficients.

Before we proceed, we briefly remark that standard arguments (see Definition~\ref{def:decpoly} and Lemma~\ref{lemma:rootnomatter}) imply that (a) one need only require $P(1) \neq 0$; and (b) the existence of a decoding polynomial depends on $\F$ but not on the choice of $\gamma$ within $\F$.

\subsection{Decoding Polynomials and Root-of-Unity Grids}\label{sec:dptorug}

\definecolor{rowA}{RGB}{192,57,43}   
\definecolor{rowB}{RGB}{41,98,180}   
\definecolor{rowC}{RGB}{39,124,63}   

\newcommand{\nz}{\boldsymbol{\ast}}  

\begin{figure}
\centering
\begin{tikzpicture}[>=Stealth, font=\large,
  mat/.style={matrix of math nodes, left delimiter={[}, right delimiter={]},
              row sep=6pt, column sep=10pt, inner sep=1pt}]

\matrix (M1) [mat] {
 \nz & \nz & \nz \\
 \nz & \nz & \nz \\
 \nz & \nz & \nz \\
};

\node[right=14pt of M1] (plus) {$+$};

\matrix (M2) [mat, right=14pt of plus] {
 1 & 1 & 1 \\
 1 & 1 & 1 \\
 1 & 1 & 1 \\
};

\node[right=14pt of M2] (eq) {$=$};

\matrix (M3) [mat, right=14pt of eq,
   row 1/.style={nodes={text=rowA}},
   row 2/.style={nodes={text=rowB}},
   row 3/.style={nodes={text=rowC}}] {
 \bullet & \bullet & \bullet \\
 \bullet & \bullet & \bullet \\
 \bullet & \bullet & \bullet \\
};

\coordinate (capline) at ($(M2.south)+(0,-10pt)$);
\node[anchor=north, align=center, font=\small] at (M1|-capline) {Rank $1$\\(nonzero entries)};
\node[anchor=north, font=\small]               at (M2|-capline) {All ones};
\node[anchor=north, font=\small]               at (M3|-capline) {Root-of-unity grid};

\node[text=rowA, anchor=west] (LA) at ($(M3-1-3.east)+(1.35,0)$) {$p_1$-th roots of unity};
\node[text=rowB, anchor=west] (LB) at ($(M3-2-3.east)+(1.35,0)$) {$p_2$-th roots of unity};
\node[text=rowC, anchor=west] (LC) at ($(M3-3-3.east)+(1.35,0)$) {$p_3$-th roots of unity};
\draw[->,rowA,thick] ($(M3-1-3.east)+(0.45,0)$) -- (LA.west);
\draw[->,rowB,thick] ($(M3-2-3.east)+(0.45,0)$) -- (LB.west);
\draw[->,rowC,thick] ($(M3-3-3.east)+(0.45,0)$) -- (LC.west);

\draw[decorate, decoration={brace, amplitude=6pt}]
     ([xshift=8pt]LA.north east) -- ([xshift=8pt]LC.south east);
\node[anchor=west] at ([xshift=17pt]$(LA.east)!0.5!(LC.east)$) {in $\mathbb{F}$};

\end{tikzpicture}
\caption{\emph{Figure depicting a root-of-unity grid for a field $\F$ and modulus $m = p_1p_2p_3$ with $m \mid |\F|-1$. The entries with asterisks and bullets can be arbitrary subject to the stated conditions. The existence of a root-of-unity grid is necessary and sufficient for the existence of an $S^*_m$-decoding polynomial over $\F$ with optimal sparsity $k+1$. A root-of-unity grid is essentially a $k \times k$ matrix which can be written as a sum of a rank 1 matrix with nonzero entries and the all 1's matrix, such that the $i$th row consists only of $p_i$th roots of unity over $\F$.}}
\label{fig:rug}
\end{figure}

\paragraph{Starting point: the Ghasemi-Kopparty lower bound.}
Our starting point is the proof by Ghasemi and Kopparty~\cite[second proof of Theorem 3.6]{DBLP:conf/innovations/GhasemiK26} that an $S^*_m$-decoding polynomial \emph{cannot} have fewer than $k+1$ nonzero coefficients.
Let us write the purported decoding polynomial as $P(X) = \sum_{j = 1}^\ell \alpha_j X^{\beta_j}$ with $\ell < k+1$ nonzero coefficients.
Then the idea behind the proof by~\cite{DBLP:conf/innovations/GhasemiK26} is a simple computation (which we reproduce in Lemma~\ref{lemma:gkrigidity}) that yields elements $z_{i, j} \in \mu_{p_i}(\F)$ for $i \in [k]$ and $j \in [\ell]$ such that:
\begin{align*}
    \sum_{j = 1}^\ell \alpha_j \prod_{i = 1}^k (1+X_iz_{i, j}^{-1}) &= 1 \quad\forall X_1, \ldots, X_k \in \F.
\end{align*}
If $\ell \leq k$, we can force the LHS to vanish with a ``diagonalization'' idea by setting $X_i = -z_{i, i}$ for all $i$, giving the desired contradiction.

Now what if $\ell = k+1$? We can no longer set the $X_i$'s to eliminate all $k+1$ terms being summed on the LHS, however we can set them to eliminate any $k$ of them.
This gives some equations in the $\alpha_j$'s and $z_{i, j}$'s that can be combined (see Lemma~\ref{lemma:gkrigidity}) to arrive at the following necessary condition for a $(k+1)$-sparse decoding polynomial to exist:

\begin{quote}
    \begin{center}
        There exist distinct $a_1, \ldots, a_k \in \F \backslash \{0\}$ and distinct $t_1, \ldots, t_{k+1} \in \F$ such that $1+a_it_j \in \mu_{p_i}(\F) \quad\forall i \in [k], j \in [k+1]$.
    \end{center}
\end{quote}
We refer to this configuration as a \emph{root-of-unity grid}.
Note that we are free to set one of the $t_j$'s to 0 to automatically satisfy all the conditions for that $j$.
We visualize the definition of a root-of-unity grid for $k = 3$ in Figure~\ref{fig:rug}.

\paragraph{Sufficiency of root-of-unity grids.}
Perhaps surprisingly, we can also show that the converse is true! A root-of-unity grid is a necessary and sufficient condition for an $S^*_m$-decoding polynomial of sparsity $k+1$ to exist.
Again, we skip over the relevant computations and state the high-level workings of the proof, which we present in Lemma~\ref{lem:rugisgood}.
For bits $u_1, \ldots u_k \in \zo$, consider the univariate polynomial:
\begin{equation}\label{eq:univariate}
    f_{u_1, \ldots, u_k}(X) := \prod_{i = 1}^k (1+a_iX)^{1-u_i},
\end{equation}
and let $s \in S_m^*$ be written as $\sum_{i \in [k]} c_i u_i$.
Then we can explicitly write out a polynomial $P$ of sparsity $k+1$ such that:
\begin{equation}\label{eq:univariateinterp}
    P(\gamma^s) = \Coeff_{X^k} [f_{u_1, \ldots, u_k}].
\end{equation}
The way this polynomial is constructed is by unpacking how Lagrange interpolation would recover $\Coeff_{X^k} [f_{u_1, \ldots, u_k}]$ from the evaluations of $f_{u_1, \ldots, u_k}$ at the points $t_1, \ldots, t_{k+1}$.
(We will see a simpler computation with some similar features---but only justifying the trivial claim that there exists a decoding polynomial of sparsity $2^k$---in the next part of this overview.
At this point, we are done; by construction, $f_{u_1, \ldots, u_k}$ has degree $k$ if and only if all $u_i$'s are 0, which is in turn equivalent to $s = 0$.

\paragraph{New perspectives on known decoding polynomials.}
Our construction also gives new perspectives on existing constructions of decoding polynomials that we already had.
For example, there is a trivial $2^k$-term decoding polynomial for every $m, \F$ of the form $\prod_{i = 1}^k (X^{c_i} - \gamma^{c_i})$.
In the above terms, one can lay out the this decoding polynomial as follows:
\begin{itemize}
    \item Let $c_1, \ldots, c_k$ in $\Z_m$ be the CRT basis i.e., $c_i \equiv 1 \pmod{p_i}$ and $\equiv 0 \pmod{p_j}$ for all $j \neq i$.
    \item Then for any bits $u_1, \ldots, u_k \in \zo$, we now consider the multilinear and multivariate polynomial (compare with the univariate, degree $\le k$ polynomial in Equation~\eqref{eq:univariate}):
    \begin{equation}\label{eq:multivariate}
        f_{u_1, \ldots, u_k}(X_1, \ldots, X_k) = \prod_{i = 1}^k (1+(\gamma^{c_i} - 1)X_i)^{1-u_i},
    \end{equation}
    \item Indeed, in this case if we let $j = \sum_{i = 1}^k c_iu_i$, the interpolation identity stated in Lemma~\ref{lem:multilinearinterp} can be shown to yield $\Coeff_{X_1X_2\ldots X_k} [f_{u_1, \ldots, u_k}] = P(\gamma^j)$, 
    where $P(X) = (-1)^k X^{- \sum_{i \in [k]}c_i} \prod_{i = 1}^k (X^{c_i} - \gamma^{c_i}) = (-1)^k \prod_{i = 1}^k \left(1 - \gamma^{c_i} X^{-c_i}\right)  = (-1)^k \prod_{i = 1}^k \left(1 - \gamma^{c_i(1-u_i)}\right)$
    is, up to some prefactors, exactly the trivial $2^k$-term decoding polynomial.
\end{itemize}
Secondly, there are known constructions of $3$-term decoders modulo products of two primes~\cite{Efr09,Chee2011}.
Our results clarify that the mechanism enabling these seemingly ad hoc constructions is a root-of-unity grid: in more detail, they work with fields that have distinct nonzero elements $a_1, a_2$ and distinct nonzero elements $t_1, t_2$ such that in the matrix
$$\begin{bmatrix} 1 & 1 \\ 1+a_1t_1 & 1 + a_2t_1 \\ 1+a_1t_2 & 1+a_2t_2\end{bmatrix},$$
all entries in the left column are $p_1$th roots of unity and all entries in the right column are $p_2$th roots of unity.
We will say some more about the $k = 2$ case in the next section.

\subsection{Constructing Root-of-Unity Grids}\label{sec:rughowto}

So far, we have completely characterized when there exists an $S^*_m$-decoding polynomial over sparsity $k+1$, through our notion of a root-of-unity grid.
For what choices of $m = p_1 \cdots p_k$ and $\F$ does such a grid exist?
Prior work~\cite{Efr09,Chee2011} gave two constructions of such grids in the $k = 2$ case:
\begin{itemize}
    \item Efremenko~\cite{Efr09} showed directly that there is an $S^*_{7 \cdot 73}$-decoding polynomial over $\F_{2^9}$ of sparsity 3.
    \item \cite{Chee2011} provided a family of constructions for moduli $m = p_1p_2$ such that $m+1 = 2^d$ for a prime $d$.
\end{itemize}
For general $k$, we put forth two approaches to constructing root-of-unity grids.
The first is based on the probabilistic method and is our main workhorse for the theorems in this work.
The second approach is more algebraic and uses some light Galois theory; although we do not use it for our main results, it has the attractive feature that it provides a more general type of construction that includes Efremenko's original $m = 7 \cdot 73$ example.
Finally, we explain why neither of our approaches seems to capture the construction by~\cite{Chee2011}.
We view this as clear evidence that further improvements on our results (particular for the case of superconstant $s$) should be possible by finding different mechanisms for building root-of-unity grids; we will discuss this more in Section~\ref{sec:openquestions}.

\paragraph{Approach 1: the probabilistic method (Section~\ref{sec:rough-rug}).}
One strategy for understanding this question would be to assume that the characteristic $q$ of $\F$ is $\geq k+1$, and fix the elements $t_1, \ldots, t_{k+1}$ to be $1, \ldots, k-1, k, 0$.
Let $T = \{0, 1, \ldots, k\}$.
Now imagine selecting each $a_i$ independently and uniformly at random over $\F$ (in this overview, we elide technicalities about field elements that we consider possibly being 0).
What is the probability that $1+a_iT \subseteq \mu_{p_i}(\F)$? This approach is attractive because the criteria that each $a_i$ needs to satisfy are independent.
We reason about this probability as follows: 
\begin{itemize}
    \item For each $j \in [k]$, $1+a_i t_j$ is roughly a uniform and random element of $\F$. 
    \item Therefore, the probability that it is an element of $\mu_{p_i}$ is roughly $p_i/|\F|$.
    \item Pretending for a moment that this event is independent for each $j$, the probability that $a_i$ will work for us is roughly $(p_i/\F)^k$.
    This is too small to inspire hope that such an $a_i$ exists; for this, we would want a probability $\geq 1/|\F|$.
\end{itemize}
This should not be surprising; it is already known that even in the $k = 2$ case, combinations of $m$ and $\F$ that admit a 3-term decoding polynomial (equivalently, a root-of-unity grid) are rare~\cite{Chee2011}.
To fix the above idea, we make the following tweak: for each $i \in [k]$, let $d_i$ be minimal such that $q^{d_i}-1$ is divisible by $p_i$.
We pick $\F$ so that it contains a (unique) subfield isomorphic to $\F_{q^{d_i}}$, and \emph{we will only sample $a_i$ from this subfield.}
This changes things dramatically; now the heuristic probability will be $\approx (p_i/q^{d_i})^k$, which can now be much larger.
In particular, if we choose $d_i$ such that $\frac{q^{d_i}-1}{q-1}$ is prime (such a number is known as a \emph{generalized (or base $q$) repunit} because its base $q$ representation consists of \emph{repeated units}---$d_i$ 1's), we can take $p_i$ to be this prime and now the heuristic probability will be $\approx q^{-k}$.
If we set all the $d_i$'s to be $> k$, this will be $> q^{-d_i}$ and we will have succeeded.

At this point, we have heuristically argued that Conjecture~\ref{conjecture:onetuple-intro} ought to imply the existence of $m, \F$ for each $k$ that admit a root-of-unity grid.
Fortunately, this intuition can be made rigorous using the Weil bounds on character sums.
In fact, we can accommodate milder forms of the conjecture---instead of requiring $\frac{q^d-1}{q-1}$ to be prime, we are happy whenever $q^d-1$ has a large prime divisor; we merely discuss the generalized repunit prime variant for its simplicity.
We refer the reader to Section~\ref{sec:suffcond} for details.

\paragraph{Approach 2: kernels of the norm map (Section~\ref{sec:prime-norm-rug}).} 
There is a second, more algebraic approach to building root-of-unity grids, that has the added benefit of capturing Efremenko's original example of a 3-sparse decoding polynomial with $m = 7 \times 73$ over $\F_{2^9}$.
This approach manufactures root-of-unity grid rows $1 + aT$ directly from the norm map $N_{\F_{Q^d}/\F_Q} : \F_{Q^d}^\times \to \F_Q^\times$ that maps $x$ to the product of all its Galois conjugates $x^{1 + Q + \ldots + Q^{d-1}}$ (see Section~\ref{sec:galois} for a review of the very light Galois theory that we use). If $p := \frac{Q^d - 1}{Q-1} = 1 + Q + \ldots + Q^{d-1}$ is prime, then $\mu_p$ is exactly the kernel of the norm map $N_{\F_{Q^d}/\F_Q}$. Therefore, asking for an $a$ such that $1+aT \subseteq \mu_p$ for fixed $T \subseteq \F_Q$ is equivalent to asking for an $a$ such that $1+aT \subseteq \ker N_{\F_{Q^d}/\F_Q}$.  This can be achieved by choosing $a$ to be a root of a carefully designed irreducible polynomial. See Section~\ref{sec:prime-norm-rug} for more details.

\paragraph{Comparison with~\cite{Chee2011} (Section~\ref{sec:cheecomparison}).}
The construction of \cite{Chee2011} gives a different source of root-of-unity grids when $k = 2$ that our theorems do not subsume.
We will discuss this more in Section~\ref{sec:cheecomparison}, but for now we note that we view this as evidence that root-of-unity grids can arise from mechanisms beyond the two sufficient conditions isolated in this paper.
The equivalence between $(k+1)$-sparse decoding polynomials and root-of-unity grids gives a tight structural characterization of what must be constructed, but it does not yet give a sharp arithmetic characterization of when a given pair $(m,\F)$ admits such a grid.
In particular, the construction of~\cite{Chee2011} appears to exploit additional multiplicative structure in finite fields that is not captured either by our probabilistic-method approach or by the norm-kernel approach.

\subsection{PIR with a Superconstant Number of Servers}\label{sec:superconstantoverview}

To understand how the story changes depending on whether the number of servers is constant or superconstant, let us examine the workflow laid out by~\cite{GKS25} to go from an $S^*_m$-decoding polynomial of sparsity $s$ over $\F$ of characteristic $q$, where $m = p_1p_2\cdots p_k$:
\begin{itemize}
    \item Construct an $S^*_{qm}$-matching vector family of size $n$ over $\Z_{qm}$ of some dimension $d$. This will go via the construction of~\cite{Barrington1994,DBLP:journals/combinatorica/Grolmusz00,DGY11}.
    \item Then there exists a PIR scheme with $s$ servers and total communication $O(sd \log |\F|)$.
\end{itemize}
We summarize this in Theorem~\ref{thm:gks}.
When the number of servers $s$ is constant, we can also take $m, |\F|$ to be constant so that the communication is just $O(d)$.
The matching vector construction of~\cite{DBLP:journals/combinatorica/Grolmusz00} works with any product of $k+1$ primes and achieves a dimension of
$$d = \exp(O_m((\log n)^{1/(k+1)}(\log \log n)^{1-1/(k+1)})).$$
The contents of this section are succinctly captured by the phase diagram in Figure~\ref{fig:phasediagram}.

\newcommand{\CC}{\mathsf{CC}}

\begin{figure}
\centering
\begin{tikzpicture}[>=Latex,
  RM/.style     ={red!75!black,   line width=1.3pt},
  ourdec/.style ={blue!65!black,  line width=1.5pt},
  stddec/.style ={orange!92!black,line width=1.3pt},
  common/.style ={violet!65!black,line width=1.6pt},
  grid/.style   ={gray!30, line width=0.4pt},
  vguide/.style ={gray!40, dash pattern=on 1pt off 2pt, line width=0.4pt},
  lead/.style   ={gray!55, line width=0.3pt}
]
  \def\W{10.6}\def\H{11}
  \def\yO{0}\def\ysA{1.0}\def\ysG{1.9}\def\ysB{2.8}\def\ysF{3.7}\def\yP{4.95}
  \def\yRa{6.6}\def\yRb{7.7}\def\yRc{8.8}\def\yRd{9.9}\def\ymax{11}
  \def\xa{0}\def\xb{1.5}\def\xc{3.0}\def\xd{5.0}\def\xe{8.0}\def\xf{10.6}
  \def\xnew{6.5}   
  \def\xg{9.5}     

  \begin{scope}[on background layer]
    \fill[green!16] (\xe,\ysA) -- (\xf,\yO) -- (\xg,\ysA) -- (\xe,\ysG) -- cycle;
  \end{scope}

  \foreach \y in {\yO,\ysA,\ysG,\ysB,\ysF,\yP,\yRa,\yRb,\yRc,\yRd,\ymax}
    {\draw[grid] (0,\y)--(\W,\y);}
  \foreach \x in {\xa,\xb,\xc,\xd,\xnew,\xe,\xf}{\draw[vguide] (\x,0)--(\x,\H);}

  \draw[->,thick] (-0.15,0) -- (\W+0.5,0);
  \draw[->,thick] (0,0) -- (0,\H+0.6);

  \foreach \x in {\xa,\xnew,\xf}{\draw (\x,0)--(\x,-0.12); \draw[lead] (\x,-0.12)--(\x,-0.42);}
  \foreach \x in {\xc,\xe}{\draw (\x,0)--(\x,-0.12); \draw[lead] (\x,-0.12)--(\x,-1.08);}
  \foreach \x in {\xb,\xd}{\draw (\x,0)--(\x,-0.12); \draw[lead] (\x,-0.12)--(\x,-1.74);}
  \node[anchor=north,font=\scriptsize] at (\xa,-0.44) {$\log\log\log n + O(1)$};
  \node[anchor=north,font=\scriptsize] at (\xnew,-0.44) {$\tilde{O}((\log\log n)^{5/9})$};
  \node[anchor=north,font=\scriptsize] at (\xf,-0.44) {$\log\log n$};
  \node[anchor=north,font=\scriptsize] at (\xc,-1.10) {$c\log\log\log n\ \ (c>2)$};
  \node[anchor=north,font=\scriptsize] at (\xe,-1.10) {$(\log\log n)^{2/3}(\log\log\log n)^{2/3}$};
  \node[anchor=north,font=\scriptsize] at (\xb,-1.76) {$(2{+}o(1))\log\log\log n$};
  \node[anchor=north,font=\scriptsize] at (\xd,-1.76) {$\sqrt{\log\log n\,\cdot\,\log\log\log n}$};
  \node[font=\small\bfseries] at (\W/2,-2.80) {$x$-axis: $\log\log \CC$};

  \foreach \y/\lab in {%
     \yO/{$O(1)$},
     \ysA/{$\dfrac{(\log\log n)^{1/3}}{(\log\log\log n)^{2/3}}$},
     \ysG/{$\exp\!\big(\tfrac{(\log\log n)^{1/3}}{(\log\log\log n)^{2/3}}\big)$},
     \ysB/{$\exp\!\big(\sqrt{\tfrac{\log\log n}{\log\log\log n}}\big)$},
     \ysF/{$\exp\!\big(O\big(\tfrac{\log\log n}{\log\log\log n}\big)\big)$},
     \yP/{$(\log n)^{c}\ \ (\forall\,c\in(0,1))$},
     \yRa/{$\dfrac{\log n}{\exp(\tilde{O}((\log\log n)^{2/3}))}$},
     \yRb/{$\dfrac{\log n}{\exp(\tilde{O}(\sqrt{\log\log n}))}$},
     \yRc/{$\dfrac{\log n}{(\log\log n)^{c}}$},
     \yRd/{$\dfrac{\log n}{(\log\log n)^2}$},
     \ymax/{$\dfrac{\log n}{\log\log n}$}}
    {\draw (0,\y)--(-0.12,\y); \node[left,font=\scriptsize] at (-0.18,\y){\lab};}
  \node[rotate=90,font=\small\bfseries] at (-3.6,\H/2) {$y$-axis: \# of servers $s$};

  \draw[ourdec] (\xf,\yO) -- (\xe,\ysA) -- (\xd,\ysB);
  \draw[stddec] (\xf,\yO) -- (\xg,\ysA) -- (\xe,\ysG) -- (\xd,\ysB);
  \draw[common] (\xd,\ysB) .. controls (4.3,3.05) and (3.5,3.5) .. (\xc,\ysF);
  \draw[common] (\xc,\ysF) .. controls (2.6,4.05) and (1.65,4.7) .. (\xb,\yP);
  \draw[common] (\xb,\yP) -- (\xb,\ymax);
  \draw[RM] (\xf,\yO) -- (\xf,\yP);
  \draw[RM] (\xf,\yP) -- (\xe,\yRa) -- (\xd,\yRb) -- (\xc,\yRc) -- (\xb,\yRd) -- (\xa,\ymax);

  \foreach \p in {(\xf,\yP),(\xe,\yRa),(\xd,\yRb),(\xc,\yRc),(\xb,\yRd),(\xa,\ymax)}
    {\fill[red!75!black] \p circle (2.0pt);}
  \foreach \p in {(\xf,\yO),(\xe,\ysA),(\xd,\ysB),(\xc,\ysF),(\xb,\yP),(\xb,\ymax)}
    {\fill[black] \p circle (1.9pt);}
  \fill[orange!92!black] (\xe,\ysG) circle (1.9pt);
  \fill[blue!65!black] (\xnew,\ysG) circle (1.9pt);

  \node[font=\footnotesize\bfseries, red!70!black, rotate=90] at (10.86,2.6) {Reed-Muller};
  \node[font=\footnotesize\bfseries, blue!65!black, fill=white, inner sep=1pt] at (6.7,1.45) {Our Decoder};
  \node[font=\footnotesize\bfseries, orange!92!black, fill=white, inner sep=1pt] at (9.45,1.72) {Std Decoder};
  \node[font=\scriptsize\bfseries, violet!65!black, rotate=90, fill=white, inner sep=1pt] at (1.78,7.0) {Our Decoder $=$ Std Decoder};


\end{tikzpicture}

\caption[Qualitative phase diagram]{\emph{Qualitative phase diagram comparing various PIR constructions across different asymptotic regimes. The $y$-axis is indexed by the number of servers $s$ (as a function of the database size $n$). The $x$-axis is indexed by $\log \log \mathsf{CC}$, where $\mathsf{CC}$ is the communication complexity. The leftmost extreme of $\log \log \log n + O(1)$ corresponds to $\polylog(n)$ communication and the rightmost extreme captures $\poly(n)$ communication and also any $2^{(\log n)^{O(1)}}$ communication.}

\emph{Red curve: PIR based on Reed-Muller codes (\cite{CGKS98}, restated in Theorem~\ref{thm:rmpir}). Blue and violet curves: PIR prior to our work based on matching-vector families and $S^*_m$-decoding polynomials (\cite{Efr09}, restated in Theorem~\ref{thm:manyserver}). Orange and violet curves: PIR in our work---assuming Conjecture~\ref{conjecture:superconstant}---by constructing sparser $S^*_m$-decoding polynomials, stated in Theorems~\ref{thm:fewserver} (green shaded region) and~\ref{thm:intserver} (between green region and violet curve). Green shaded region: regimes of communication complexity where we manage to reduce the server count exponentially compared to~\cite{Efr09} while preserving the communication.}

\emph{As indicated in the diagram, we obtain superpolynomial reductions in communication complexity whenever $s \leq \exp(o(\sqrt{\log \log n/\log \log \log n}))$, and exponential reductions in the server count whenever $\log \log \mathsf{CC} \geq (\log \log n)^{2/3} (\log \log \log n)^{2/3}$.}}
\label{fig:phasediagram}

\end{figure}

\paragraph{Few-server regime (Theorem~\ref{thm:fewserver} and Green Shaded Region in Figure~\ref{fig:phasediagram}).}
Now suppose that the number of servers $s$ grows.
This necessarily forces the number of primes $k$ to grow, which means that $m$ cannot be constant anymore.
Consequently, the hidden dependence on $m$ in the $O_m(\cdot)$ is actually important now; very roughly (see Corollary~\ref{cor:mvparams} for a precise formulation), this dependence hides a multiplicative factor of $m^{1/(k+1)} \geq \Omega(k \log k)$ (since $m$ is a product of $k$ distinct primes).
This is a limitation of any PIR protocol that uses the matching vector constructions of~\cite{DBLP:journals/combinatorica/Grolmusz00}; no matter how many servers we have, the communication cannot drop below $\exp((\log \log n)^{2+o(1)})$.
This is in stark contrast to Reed-Muller PIR, where the communication can be pushed down to $\polylog(n)$ with $O(\log n/\log \log n)$ servers.

The upshot of this is that when building matching-vector PIR for a superconstant number of servers, we need to pay attention to how $m$ grows with $n$.
For example, the construction by~\cite{DG16} uses a decoding polynomial of trivial sparsity $2^k$ and does not require any structure on the $k$ primes, so the optimal choice is to take the first $k$ primes and their product $m$ will grow like $(k \log k)^k$.
In our setting, we can get away with sparsity $k+1$ but at the expense of requiring considerable number-theoretic structure of the primes. This forces the primes to be larger; we heuristically quantify how much larger in Conjecture~\ref{conjecture:superconstant} (which we will discuss in Section~\ref{sec:superconstanttechoverviewnt}).
Under this conjecture, our exponential reduction in the number of servers persists only up to $k \sim (\log \log n)^{1/3}/(\log \log \log n)^{2/3}$.
The takeaway of this is the following:
\begin{quote}
    \begin{center}
        For regimes of communication complexity $\mathsf{CC}$ where $\log \log \mathsf{CC} \geq \Omega((\log \log n)^{2/3}(\log \log \log n)^{2/3})$, our construction (assuming Conjecture~\ref{conjecture:superconstant}) needs exponentially fewer servers than constructions in prior work.
    \end{center}
\end{quote}

\paragraph{Many-server regime (Theorem~\ref{thm:manyserver} and Violet Curve in Figure~\ref{fig:phasediagram}).}
On the other extreme, once the number of servers is $\geq \exp(\Omega(\sqrt{\log \log n/\log \log \log n}))$, Conjecture~\ref{conjecture:superconstant} and our application of it in Section~\ref{sec:gettopir} suggess that our methods do not offer any improvement over prior constructions of matching-vector PIR by~\cite{Efr09,GKS25}.
As we remark in Section~\ref{sec:openquestions}, we believe that different approaches to building exponentially sparser $S$-decoding polynomials should be possible, which may help improve the state of affairs in this many-server regime.
To summarize:
\begin{quote}
    \begin{center}
        For regimes of communication complexity $\mathsf{CC}$ where $\log \log \mathsf{CC} \leq O(\sqrt{\log \log n \cdot \log \log \log n})$, our construction does not appear to improve on prior matching-vector PIR~\cite{Efr09}.
    \end{center}
\end{quote}

\paragraph{Intermediate-server regime (Theorem~\ref{thm:intserver} and Orange Curve Between Green Shaded Region and Violet Curve in Figure~\ref{fig:phasediagram}).}
When the number of servers lies between $(\log \log n)^{1/3}/(\log \log \log n)^{2/3}$ and $\exp(O(\sqrt{\log \log n/\log \log \log n}))$, we can smoothly interpolate between the two extremal approaches detailed so far to find a sweet spot.
To do this, we will need to compose a few different products of $k$ primes, each of which admits a $(k+1)$-sparse decoding polynomial.
This composition technique is standard, due to Itoh and Suzuki~\cite{itoh2010improved}, and employed by~\cite{Chee2011} to build the sparsest-known decoding polynomials prior to our work; we restate their result in Theorem~\ref{thm:tensoring}.
We summarize as follows:
\begin{quote}
    \begin{center}
        For regimes of $s, n$ where $s \leq \exp(o(\sqrt{\log \log n/\log \log \log n}))$, our construction (assuming Conjecture~\ref{conjecture:superconstant}) attains superpolynomially smaller communication complexity than prior matching-vector PIR~\cite{Efr09}.
    \end{center}
\end{quote}

\subsection{Heuristic Justification for Conjecture~\ref{conjecture:superconstant}}\label{sec:superconstanttechoverviewnt}

In the superconstant server regime, as discussed above, it turns out to be optimal to stitch many root-of-unity grids of different sizes over the same characteristic into a single decoding polynomial. Conjecture~\ref{conjecture:superconstant} estimates the cost ($|\F|, m$) of this decoding polynomial, given the sizes of the individual root-of-unity grids we want to stitch together. Taking approach 1 and constructing root-of-unity grids from repunits with large prime factors, estimating this costs boils down to estimating the density of integers $d$ such that $q^d - 1$ has large prime factors. That is, one can reduce Conjecture~\ref{conjecture:superconstant} to a much simpler conjecture about the density of such integers $d$ (Conjecture~\ref{conj:usable-degrees}). 

The density estimate in Conjecture~\ref{conj:usable-degrees} is based on the Dickman heuristic that studies the probability that a random integer has a large prime factor. Treating $q^d-1$ as a random integer, however, ignores a lot of structure---for example, if $d$ is even, $q^d-1 = (q^{d/2} - 1)(q^{d/2} + 1)$, and so $q^d-1$ cannot have a prime factor $\gg q^{d/2}$. One can restrict attention to prime $d$ to minimize such effects, but even this ignores the fact that the largest prime factors of $q^d-1$ can come from a special arithmetic progression. We also account for this structure by heuristically estimating the density of such primes, and corroborate the original estimate given by naive Dickman heuristic.

\section{Preliminaries}\label{sec:prelims}

\paragraph{Notation.}
For a polynomial $P \in \F[X]$, we let $\Coeff_{X^d}(P) \in \F$ denote the coefficient of $X^d$ in $P$.
We use $\Z_m = \Z/m\Z$ to denote the integers modulo $m$. See Section~\ref{sec:alg-nt} for more algebraic and number-theoretic notation and terminology we use.

\subsection{Matching Vector Families}
\label{sec:bg:mv}

We next define matching-vector families and verify some useful facts about them.
Let $a, p_1, p_2, \ldots, p_t$ be distinct primes.
Let $m > 1$ be a positive integer, and let $\Z_m = \Z/m\Z$ denote the ring of integers modulo $m$. 

\begin{definition}[$S$-matching-vector families]
    Let $d$ be a positive integer and $S \subseteq \Z_m$ some subset containing 0.
    We say that two collections of vectors $(U, V)$, where $U = (\uv_1, \dots, \uv_N) \in (\Z_m^d)^N$ and $V = (\vv_1, \dots, \vv_N) \in (\Z_m^d)^N$, form an $S$-matching-vector family over $\Z_m^d$  of size $N$,  if:
    \begin{enumerate}
        \item\label{item:restrictedset} for every $i, j \in [N]$, it holds that $\langle \uv_i, \vv_j \rangle \bmod{m} \in S$, and 
        \item\label{item:selfinnerproduct} for every $i, j \in [N]$, we have $i = j$ if and only if $\langle \uv_i, \vv_j \rangle \equiv 0 \pmod{m}$.
    \end{enumerate}
\end{definition}

\begin{definition}[Canonical set]\label{def:canonicalset}
    Let $\ell \geq 1$ and suppose $p_1, \ldots, p_\ell$ are distinct primes and $m = p_1p_2\ldots p_\ell$.
    Then the canonical set of $\Z_m$ is defined to be the set $S^*_m \subseteq \Z_m$ of residues that are either 0 or 1 mod $p_i$ for all $i \in [\ell]$. (Hence we have $|S^*_m| = 2^\ell$.)
\end{definition}

\begin{theorem}[\cite{Barrington1994,DBLP:journals/combinatorica/Grolmusz00,DGY11}]\label{thm:mvgeneral}
    Let $m$ be the product of $\ell > 1$ distinct primes $p_1, \ldots, p_\ell$, and let $S$ be the canonical set of $\Z_m$.
    Let $h, w$ be positive integers.
    Then there exists a $S^*_m$-matching-vector family over $\Z_m^d$ of size $N$, where:
    \begin{align*}
        N &:= \binom{h}{w},\text{ and}\\
        d &:= \sum_{j = 0}^{\lfloor (mw)^{1/\ell} \rfloor} \binom{h}{j}.
    \end{align*}
\end{theorem}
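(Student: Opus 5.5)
The plan is the Grolmusz / Barrington--Beigel--Rudich construction: vectors indexed by $w$-subsets of $[h]$, inner products given by a low-degree polynomial that computes set intersections modulo $m$.

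\textbf{Step 1 (symmetric polynomial).} By the standard BBR result, there is a multilinear polynomial $g(x_1,\ldots,x_h)$ over $\Z_m$ with the following properties. It has degree at most $\lfloor (mw)^{1/\ell}\rfloor$. On $\zo^h$ it satisfies, modulo each $p_i$, $g(x) \equiv 0$ exactly when $\sum_j x_j \equiv 0 \pmod{m}$, and $g(x) \equiv 1$ otherwise. This holds at least for inputs of Hamming weight less than $mw$, which suffices for our application.

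To build $g$, work prime by prime. For each $p_i$, take $e_i$ minimal with $p_i^{e_i} > w$ (or use the standard choice tied to the range $[0,w]$ of intersection sizes). By Lucas' theorem, the symmetric polynomial
\[
1-\prod_{t}\Bigl(1-\binom{\sum x}{p_i^{t}}^{p_i-1}\Bigr) \quad \text{over } \F_{p_i}
\]
detects whether $\sum x \not\equiv 0 \pmod{p_i^{e_i}}$. Its degree is about $p_i^{e_i}$.

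Combine the primes by CRT to get $g$ over $\Z_m$. Then $g \equiv 0 \pmod{p_i}$ for all $i$ iff $\sum x \equiv 0 \pmod{\prod_i p_i^{e_i}}$. Since all relevant intersection sizes are less than $\prod_i p_i^{e_i}$, this forces the intersection to be empty in the regime we use. Choosing each $p_i^{e_i} \approx (mw)^{1/\ell}$ makes $\prod_i p_i^{e_i} > w$. Finally, multilinearize $g$ on $\zo^h$; this gives degree at most $D=\lfloor (mw)^{1/\ell}\rfloor$.

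\textbf{Step 2 (vectors).} Write $g(x)=\sum_{|T|\le D} g_T \prod_{j\in T} x_j$. For a $w$-set $A$, set $(\uv_A)_T = g_T\cdot[T\subseteq A]$. For a $w$-set $B$, set $(\vv_B)_T=[T\subseteq \bar B]$. Then
\[
\langle \uv_A,\vv_B\rangle = g(\mathbf 1_{A\setminus B}).
\]
The set $A\setminus B$ has size $w-|A\cap B|\in[0,w]$, and it is empty iff $A=B$. With $g$ chosen so that, for sizes in $[0,w]$, the value lies in $S^*_m$ and is $0$ iff the size is $0$, both properties hold. The dimension is the number of sets $T$ with $|T|\le D$, which is $\sum_{j\le D}\binom hj$. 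The family size is $\binom hw$.

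\textbf{Main obstacle.} The degree bound $D=\lfloor (mw)^{1/\ell}\rfloor$ in Step 1 is the delicate part. We need, for each $i$, a polynomial over $\F_{p_i}$ of degree at most $D$ that vanishes on inputs of weight $0$ and equals $1$ on weights $r\in[1,w]$ with $r\not\equiv 0 \pmod{p_i^{e_i}}$. Its zeros must be arranged across primes so that the only common zero in $[0,w]$ is $r=0$.

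The route I would try first is to replace the exact $p_i^{e_i}$ thresholds with the DGY refinement. That is, use $g_i(r)=1-(\text{indicator that } r\equiv 0 \bmod q_i)$ for $q_i=p_i^{e_i}$ with $\prod_i q_i \ge mw$, each $q_i\le (mw)^{1/\ell}$. Since $q_i\le D$, the degree of each $g_i$ is at most $q_i-1\le D$. I would then verify that $\prod_i q_i > w$ guarantees that the CRT-combined zero set meets $[0,w]$ only at $0$, with the factor $m$ giving the slack needed to choose such prime powers.
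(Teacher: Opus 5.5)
Your construction is the right one, and it is what the paper invokes through \cite[Lemma 11]{DGY11}. It has three pieces: the Lucas-theorem polynomials $g_i$ of degree $p_i^{e_i}-1$ over $\F_{p_i}$; their CRT combination $g$ over $\Z_m$; and the vectors $(\uv_A)_T=g_T[T\subseteq A]$, $(\vv_B)_T=[T\subseteq \bar B]$, which give $\langle \uv_A,\vv_B\rangle=g(\mathbf 1_{A\setminus B})$ with $|A\setminus B|\in[0,w]$ and $|A\setminus B|=0$ iff $A=B$.

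The gap is in the one step the paper's proof actually carries out: choosing the exponents $e_i$. You need two conditions at once:
\begin{itemize}
\item $p_i^{e_i}\le (mw)^{1/\ell}$ for every $i$, so that $\deg g\le D$;
\item $\prod_i p_i^{e_i}>w$, so that the only $r\in[0,w]$ with $g(r)\equiv 0\pmod m$ is $r=0$.
\end{itemize}
Your proposal never pins this down. Your first suggestion, $e_i$ minimal with $p_i^{e_i}>w$, makes $\deg g\ge w$ and destroys the dimension bound. Saying ``$p_i^{e_i}\approx(mw)^{1/\ell}$'' is not a choice. The constraint you finally write, $\prod_i q_i\ge mw$ with each $q_i\le(mw)^{1/\ell}$, is infeasible. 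The upper bounds force $\prod_i q_i\le mw$, with equality only if all the $q_i$ coincide, and powers of distinct primes coincide only if they all equal $1$, which would force $mw=1$. Relatedly, the property in your first paragraph is misstated. The correct statement is that $g\equiv 0\pmod{p_i}$ iff $p_i^{e_i}\mid\sum_j x_j$, and the weights that matter are those in $[0,w]$, not those below $mw$.

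The fix is the paper's choice: take $e_i\ge 0$ minimal with $p_i^{e_i}>(mw)^{1/\ell}/p_i$.
\begin{itemize}
\item If $e_i\ge1$, minimality gives $p_i^{e_i-1}\le(mw)^{1/\ell}/p_i$, i.e.\ $p_i^{e_i}\le(mw)^{1/\ell}$. This is trivial when $e_i=0$.
\item Since $p_i^{e_i}$ is an integer, $p_i^{e_i}\le D$, so $\deg g_i\le D-1$.
\item Multiplying the strict lower bounds over $i$ gives $\prod_i p_i^{e_i}>mw/\prod_i p_i=w$.
\end{itemize}
This is exactly the ``slack from the factor $m$'' you intuited. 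Each prime power can be placed within a factor $p_i$ below $(mw)^{1/\ell}$, which costs at most $\prod_i p_i=m$ in the product.

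Note that $e_i=0$ genuinely occurs when $p_i>(mw)^{1/\ell}$. Then $g_i\equiv 0$, which is harmless: its values still lie in $\{0,1\}$, and $g(r)\equiv0\pmod m$ still holds iff $\prod_i p_i^{e_i}\mid r$. That condition forces $r=0$ on $[0,w]$.
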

\begin{proof}[Proof Sketch]
    This is immediate from setting parameters suitably in~\cite[Lemma 11]{DGY11}.
    For each $i \in [\ell]$, we take $e_i \geq 0$ to be minimal such that $p_i^{e_i} > (mw)^{1/\ell}/p_i$.
    This implies that $\prod_i p_i^{e_i} > mw/\prod_i p_i = w$,\footnote{As written,~\cite[Lemma 11]{DGY11} requires $p_i^{e_i} > w^{1/\ell}$ for all $i$, but inspecting the proof implies that it suffices to have $\prod_{i \in [\ell]} p_i^{e_i} > w$. Their proof does not explicitly handle the $e_i = 0$ case but it is straightforward to check that allowing $e_i = 0$ does not affect the proof.} and moreover for any $i$ we have $p_i^{e_i} \leq (mw)^{1/\ell}$.
\end{proof}

\noindent
The following corollary is straightforward; we defer its proof to Appendix~\ref{sec:mvparams}.
It so happens that our main results all only use the second of the two cases stated (where $\eta \geq 2^{\Omega(\ell^2)}$); however, we also include an asymptotically optimal expression for $d$ in the other case for completeness.

\begin{corollary}\label{cor:mvparams}
    Let $\eta = \frac{(\log n)^{\ell-1}}{m}$ and assume $\eta \geq 1$.
    Then for any $n$, there exists a $S^*_m$-matching-vector family over $\Z_m^d$ of size $\geq n$, where:
    \[
        d = \begin{cases}
            \exp\left(O\left(\log n \cdot \frac{1+\frac{1}{\ell}\log \eta}{\eta^{1/\ell}}\right)\right),\text{ if }\eta \leq 2^{O(\ell^2)},\\
            \exp\left(O\left(\log n \cdot \frac{(\log \eta)^{1-1/\ell}}{\ell\eta^{1/\ell}}\right)\right),\text{ if }\eta \geq 2^{\Omega(\ell^2)}.
        \end{cases}
    \]
\end{corollary}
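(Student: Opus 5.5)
We instantiate Theorem~\ref{thm:mvgeneral} with $h$ and $w$ chosen so that $\binom{h}{w}\ge n$, and then bound $d=\sum_{j\le r}\binom{h}{j}$, where $r=\lfloor (mw)^{1/\ell}\rfloor$. We will use $r$ as the main free parameter: take $w\approx r^\ell/m$, which keeps $r\ge (mw)^{1/\ell}-1$. We also impose $w\le h/2$, so that $\binom{h}{w}\ge (h/w)^w$. Since $r\le h/2$ as well, we get
\[
d\le (r+1)\binom{h}{r}\le (eh/r)^{r+1}.
\]
The problem therefore reduces to choosing $h$ and $r$ so as to minimize $r\log(eh/r)$, subject to $w\log(h/w)\ge \log n$ with $w=r^\ell/m$.

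Set $h=w\lambda$ for a ratio $\lambda\ge 2$ to be chosen. The covering constraint becomes $(r^\ell/m)\log\lambda\ge\log n$, so
\[
r\approx\left(\frac{m\log n}{\log\lambda}\right)^{1/\ell}.
\]
The cost is then
\[
\log d\approx r\log\!\left(\frac{h}{r}\right)=r\log\!\left(\frac{\lambda r^{\ell-1}}{m}\right).
\]
Now substitute $\eta=(\log n)^{\ell-1}/m$. One gets $r^{\ell-1}/m\approx \eta^{(\ell-1)/\ell}\cdot(\text{lower-order factors in }\lambda)$ and $r\approx \log n\cdot (\eta\log\lambda)^{-1/\ell}$. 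The target expression is thus roughly
\[
\log n\cdot\frac{\log\lambda+\frac{\ell-1}{\ell}\log\eta}{(\eta\log\lambda)^{1/\ell}}.
\]

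We then optimize over $\lambda$ in two cases. If $\eta\le 2^{O(\ell^2)}$, take $\lambda$ to be a constant such as $2$. This gives $\log n\cdot O\!\left((1+\tfrac{1}{\ell}\log\eta)\,\eta^{-1/\ell}\right)$ up to the factor $\ell/(\ell-1)\le 2$, after noting that $\frac{\ell-1}{\ell}\log\eta$ differs from the stated numerator only by constants in this range. We will double-check that the first case's form absorbs this, possibly by choosing $\lambda=\eta^{1/\ell}$-ish when convenient. If $\eta\ge 2^{\Omega(\ell^2)}$, balance the two terms in the numerator by taking $\log\lambda\approx \log\eta$. Then the numerator is $O(\log\eta)$ and the denominator is $\eta^{1/\ell}(\log\eta)^{1/\ell}$, giving $\log n\cdot(\log\eta)^{1-1/\ell}/\eta^{1/\ell}$.

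The stated bound has an extra $1/\ell$ factor. To recover it, use $\log\lambda\approx\frac{1}{\ell}\log\eta$, which also makes the numerator $O(\frac{1}{\ell}\log\eta)$. The denominator becomes $(\eta\log\eta/\ell)^{1/\ell}$, and $\ell^{1/\ell}=O(1)$, so the result is $\log n\cdot(\log\eta)^{1-1/\ell}/(\ell\eta^{1/\ell})$ up to constants. The regime condition $\eta\ge 2^{\Omega(\ell^2)}$ ensures that $\frac{1}{\ell}\log\eta\ge\Omega(\ell)$, so $\lambda\ge 2$ and this choice dominates the constant-$\lambda$ choice.

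The main obstacle is the bookkeeping of integrality and side conditions. We must check that $w\ge 1$, that $r\le h/2$, and that rounding $r$, $w$ and $h$ to integers costs only constant factors in the exponent. We must also verify that $\eta\ge1$ makes $r\ge 1$ and $w\ge1$; here $w\ge1$ is equivalent to $\log n\gtrsim\log\lambda$, which holds for large $n$. Finally, because $\binom{h}{w}$ may slightly exceed $n$, we take a sub-family of size exactly $n$ if needed, or simply note that the statement only requires size at least $n$.
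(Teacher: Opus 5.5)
Your plan follows the paper's proof. You instantiate Theorem~\ref{thm:mvgeneral} with $h=\lambda w$, use $\binom{h}{w}\ge (h/w)^w$, and bound $d$ by $O(\binom{h}{r})$ once $r$ is a constant fraction below $h$. You take $\lambda$ constant in the first regime (the paper's $w=\lfloor C_1\log n\rfloor$, $h\sim C_2\log n$) and let $\lambda$ grow with $\eta$ in the second. In the second regime the paper uses $\log_2\lambda=\theta=\Theta(\log\eta/\ell^2)$, while you use $\log\lambda\approx\frac{1}{\ell}\log\eta$. Both choices work, because $\log\lambda$ enters the denominator only through $(\log\lambda)^{1/\ell}$ and $\ell^{1/\ell}=O(1)$.

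However, one computation is wrong, and as written both of your case analyses depend on it. From $r^\ell=m\log n/\log\lambda$ and $m=(\log n)^{\ell-1}/\eta$ you get $r^{\ell-1}/m=\log n/(r\log\lambda)=\eta^{1/\ell}(\log\lambda)^{-(1-1/\ell)}$. It is not $\eta^{(\ell-1)/\ell}$ times lower-order factors. Hence
\[
\log\frac{h}{r}=\log\lambda+\frac{1}{\ell}\log\eta-\Bigl(1-\frac{1}{\ell}\Bigr)\log\log\lambda,
\]
which up to an additive constant is the expression $\log\bigl(2^\theta\eta^{1/\ell}/\theta^{1-1/\ell}\bigr)$ in the paper's Case~2. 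So the numerator of your target should be $\log\lambda+\frac{1}{\ell}\log\eta$.

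With your numerator $\log\lambda+\frac{\ell-1}{\ell}\log\eta$, two of your claims fail:
\begin{itemize}
\item In Case~1, the claim that it differs from $1+\frac{1}{\ell}\log\eta$ only by constants is false; the ratio is $\Theta(\ell)$ when $\log\eta=\Theta(\ell^2)$.
\item In Case~2, choosing $\log\lambda\approx\frac{1}{\ell}\log\eta$ would not make the numerator $O(\frac{1}{\ell}\log\eta)$.
\end{itemize}

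With the corrected formula, both cases go through directly:
\begin{itemize}
\item $\lambda=2$ gives $r=\Theta(\log n\cdot\eta^{-1/\ell})$ and $\log(eh/r)=O(1+\frac{1}{\ell}\log\eta)$, which is the first stated bound.
\item $\lambda=\eta^{1/\ell}$ gives $r=\log n\cdot(\ell/(\eta\log\eta))^{1/\ell}$. It also gives $\log(h/r)\in[L,2L]$ for $L=\frac{1}{\ell}\log\eta\ge 1$, which also ensures $\lambda\ge 2$ and $r\le h/2$. This yields the second stated bound.
\end{itemize}

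A minor point: your inequality $(r+1)(eh/r)^r\le (eh/r)^{r+1}$ can fail. It is enough to note that $\log(r+1)\le r\le r\log(eh/r)$.
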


\subsection{$S$-Decoding Polynomials}

Another useful ingredient for us will be that of $S$-decoding polynomials, a notion introduced by~\cite{Efr09}:

\begin{definition}[$S$-decoding polynomials]\label{def:decpoly}
    Let $m \geq 1$ be a positive integer, $S \subseteq \Z_m$ a subset containing 0, and $\F$ a field that contains some primitive $m$th root of unity $\gamma_m$.
    A polynomial $P \in \F[X]$ is called an $(\F, m, S)$-decoding polynomial with respect to $\gamma_m$ if:
    \begin{enumerate}
        \item for every $j \in S \backslash \{0\}$, we have $P(\gamma_m^j) = 0$, and
        \item $P(\gamma_m^0) = P(1) \neq 0$. (This can be strengthened to $P(1) = 1$ without loss of generality by scaling $P$ by a suitable constant factor.)
    \end{enumerate}
    The sparsity of $P$ is the number of nonzero coefficients it has after reducing the exponents modulo $m$. Often, $\F$ and $m$ are clear from the context, and we write $S$-decoding polynomial.
\end{definition}

\begin{remark}
    Strictly speaking, the definition of $P$ as a decoding polynomial should also depend on the choice of root $\gamma_m$.
    However, a straightforward argument (see Lemma~\ref{lemma:rootnomatter}) shows that the existence of $P$ within a certain sparsity is independent of the particular $\gamma_m$ that is chosen.
    So we will abuse terminology slightly and omit $\gamma_m$ when referring to $S$-decoding polynomials.
\end{remark}

We begin by recalling a standard observation about trivially sparse decoding polynomials always existing~\cite{Efr09}:

\begin{claim}\label{clm:trivialdec}
    Let $Q, m$ be positive integers with $m$ prime, $Q$ a prime power, and $m \mid Q-1$.
    (Note that $S^*_m = \{0, 1\}$.)
    Then there exists an $(\F_Q, m, S^*_m)$-decoding polynomial of sparsity 2.
\end{claim}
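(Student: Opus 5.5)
The construction is explicit: since $m$ is prime, $S^*_m=\{0,1\}$, so we need a polynomial with at most two nonzero coefficients that vanishes at $\gamma_m$ and is nonzero at $1$. Because $m \mid Q-1$, the cyclic group $\F_Q^\times$ has order divisible by $m$ and therefore contains a primitive $m$th root of unity $\gamma_m$. We fix it (by Lemma~\ref{lemma:rootnomatter} the choice does not matter, though the argument below works for any choice) and take
\[
    P(X) := X - \gamma_m .
\]

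The verification has three steps. First, $P(\gamma_m^1) = 0$, which is the vanishing condition for the only element of $S^*_m \setminus \{0\}$. Second, $P(1) = 1 - \gamma_m \neq 0$, because $\gamma_m$ is a primitive $m$th root of unity with $m > 1$, so $\gamma_m \neq 1$. Third, the sparsity is exactly $2$: the coefficients are $1$ and $-\gamma_m$, both nonzero, and their exponents $1$ and $0$ remain distinct after reduction mod $m$ since $m \geq 2$.

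To get the normalization $P(1) = 1$ we rescale by $(1-\gamma_m)^{-1}$, which does not change the sparsity. There is no real obstacle here. The only points to check are the existence of $\gamma_m$, which uses $m \mid Q-1$, and that $m \ge 2$ so that $\gamma_m \ne 1$; the latter holds because $m$ is prime.
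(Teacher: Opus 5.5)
Your proposal is correct and is exactly the paper's proof: take $P(X) = X - \gamma$ for a primitive $m$th root of unity $\gamma \in \F_Q$, so that $P(\gamma) = 0$ and $P(1) = 1 - \gamma \neq 0$. Your extra checks are sound details that the paper leaves implicit: the existence of $\gamma$ from the cyclicity of $\F_Q^\times$, and the distinctness of the exponents $0,1$ modulo $m$.
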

\begin{proof}
    Let $\gamma \in \F_Q$ be any primitive $m$th root of unity.
    We can simply take $P(X) := X-\gamma$, which has sparsity 2.
    We have $P(\gamma^1) = 0$ and $P(\gamma^0) = P(1) = 1-\gamma \neq 0$, as desired.
\end{proof}

The following theorem is due to~\cite{itoh2010improved}; for completeness, we reproduce its proof in Appendix~\ref{sec:tensoringproof}.

\begin{theorem}[Composition theorem~\cite{itoh2010improved}]\label{thm:tensoring}
    Let $q, m_1, \ldots, m_r$ be pairwise coprime squarefree positive integers with $q$ prime.
    Assume that for every $i \in [r]$ there is an $(\F_{q^{c_i}}, m_i, S^*_{m_i})$-decoding polynomial of sparsity $d_i$ for some integer $c_i \geq 1$.
    Then there is also an $(\F_{q^{\mathrm{lcm}(c_1, \ldots, c_r)}}, m_1\ldots m_r, S^*_{m_1\ldots m_r})$-decoding polynomial of sparsity $d_1 \ldots d_r$.
\end{theorem}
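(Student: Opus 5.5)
We take the tensor (product) of the given decoding polynomials after pulling them back along the CRT decomposition of $\Z_{m_1\cdots m_r}$. Set $M = m_1\cdots m_r$ and $c = \mathrm{lcm}(c_1,\ldots,c_r)$, and put $\F = \F_{q^c}$. Each $\F_{q^{c_i}}$ embeds into $\F$ as a subfield. Hence every $P_i$, whose coefficients lie in $\F_{q^{c_i}}$, can be viewed as a polynomial over $\F$.

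First we check that $\F$ contains a primitive $M$th root of unity. Each $m_i$ divides $q^{c_i}-1$, which divides $q^c-1$. The $m_i$ are pairwise coprime, so $M \mid q^c-1$. Fix a primitive $M$th root of unity $\gamma\in\F$. By CRT choose idempotents $e_i\in\Z_M$ with $e_i\equiv 1 \pmod{m_i}$ and $e_i\equiv 0\pmod{m_j}$ for $j\ne i$. Let $M_i = M/m_i$. Then $\gamma_i := \gamma^{M_i}$ is a primitive $m_i$th root of unity in $\F$.

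By Lemma~\ref{lemma:rootnomatter}, the existence of a sparsity-$d_i$ decoding polynomial does not depend on the choice of root of unity. Its proof only rescales exponents by a unit, which preserves sparsity and coefficients. Moreover, the decoding conditions are polynomial identities in the root, so they transfer from $\F_{q^{c_i}}$ to $\F$. So we may take $P_i\in\F[X]$ of sparsity $d_i$ that is an $S^*_{m_i}$-decoding polynomial with respect to $\gamma_i$. Write $P_i(X)=\sum_{b\in B_i}\alpha_{i,b}X^{b}$ with $|B_i| = d_i$ and exponents in $\Z_{m_i}$.

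Next we define
\[
P(X) := \prod_{i=1}^r P_i\bigl(X^{M_i}\bigr)\cdot\text{(exponents reduced mod } M\text{)}.
\]
Expanding, $P(X) = \sum_{(b_1,\ldots,b_r)} \prod_i \alpha_{i,b_i}\, X^{\sum_i M_i b_i}$. Since $M_i b_i$ is only defined modulo $M$ when $b_i\in\Z_{m_i}$, this is well defined. The map $(b_1,\ldots,b_r)\mapsto \sum_i M_i b_i \bmod M$ is injective: reducing mod $m_j$ gives $M_j b_j$, and $M_j$ is invertible mod $m_j$. Therefore $P$ has exactly $d_1\cdots d_r$ nonzero coefficients after reduction mod $M$.

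Finally we evaluate $P$ at $\gamma^j$ for $j\in\Z_M$. Using $\gamma^{jM_i}=\gamma_i^{j \bmod m_i}$, we get $P(\gamma^j)=\prod_i P_i(\gamma_i^{j \bmod m_i})$. The canonical set $S^*_M$ consists of residues that are $0$ or $1$ mod every prime dividing $M$. Since the $m_i$ are squarefree and pairwise coprime, $j\in S^*_M$ exactly when $j \bmod m_i\in S^*_{m_i}$ for every $i$. If $j\in S^*_M\setminus\{0\}$, then some $j \bmod m_i$ is a nonzero element of $S^*_{m_i}$, so that factor vanishes and $P(\gamma^j)=0$. If $j=0$, then $P(1)=\prod_i P_i(1)\ne0$. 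Hence $P$ is the required decoding polynomial.

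The main point needing care is the bookkeeping of exponents. The pulled-back exponents must be well defined modulo $M$ and pairwise distinct, so that the sparsity is exactly the product and not just an upper bound. The injectivity argument mod each $m_j$ above handles this. It also explains why nothing requires any $m_i$ to be composite: a prime $m_i$ with the sparsity-2 polynomial from Claim~\ref{clm:trivialdec} fits the same argument.
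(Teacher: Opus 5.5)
Your proposal is correct and follows the paper's proof essentially step for step. Both use the field $\F_{q^{\mathrm{lcm}(c_1,\ldots,c_r)}}$, the product $\prod_i P_i(X^{M_i})$ with $M_i = M/m_i$, the CRT evaluation argument, and the same injectivity check on exponents to get exact sparsity. The only cosmetic difference is that the paper applies Lemma~\ref{lemma:rootnomatter} inside the subfield $K_i \cong \F_{q^{c_i}}$ of $\F$ (noting that $\gamma^{M_i} \in K_i$), whereas you first transport $P_i$ into $\F$ and then change roots there; both are valid.
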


\begin{corollary}\label{cor:trivialdecodertensored}
    Let $\ell \geq 5$ be a positive integer.
    Then there exists $m$ that is a product of $\ell$ distinct primes and a field $\F$ of characteristic 2 such that there exists an $S^*_m$-decoding polynomial over $\F$ of sparsity $\leq 2^\ell$.
    Moreover, we have the following bounds:
    \begin{align*}
        m &\leq (\ell \log \ell)^{\ell} \\
        \log |\F| &\leq \exp\left(O\left(\ell \log \ell\right)\right).
    \end{align*}
\end{corollary}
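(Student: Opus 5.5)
The plan is to take the first $\ell$ odd primes $p_1 < \cdots < p_\ell$ (i.e., skipping $2$), apply Claim~\ref{clm:trivialdec} to each one, and glue the resulting decoders together with the composition theorem (Theorem~\ref{thm:tensoring}) in characteristic $q = 2$.

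Concretely, for each $i \in [\ell]$ let $c_i = \ord_{p_i}(2)$. Then $p_i \mid 2^{c_i} - 1$, so Claim~\ref{clm:trivialdec} gives an $(\F_{2^{c_i}}, p_i, S^*_{p_i})$-decoding polynomial of sparsity $2$. The integers $2, p_1, \ldots, p_\ell$ are pairwise coprime and squarefree, and $2$ is prime. Theorem~\ref{thm:tensoring} therefore yields an $S^*_m$-decoding polynomial over $\F = \F_{2^{L}}$ with $m = p_1 \cdots p_\ell$, $L = \mathrm{lcm}(c_1, \ldots, c_\ell)$, and sparsity $2^\ell$. The field has characteristic $2$ and $m$ is a product of $\ell$ distinct primes, as required.

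It remains to check the two bounds.
\begin{itemize}
\item \textbf{Bound on $m$.} By the prime number theorem (or Rosser's explicit bounds), the $\ell$-th odd prime is the $(\ell+1)$-st prime, which is at most $(\ell+1)(\log(\ell+1) + \log\log(\ell+1))$. So we only need $\prod_{i} p_i \le (\ell\log\ell)^\ell$. The product of the first $\ell+1$ primes is $e^{(1+o(1))(\ell+1)\log \ell}$ (Chebyshev's $\theta$ function). Dividing by $2$, the product of the first $\ell$ odd primes is roughly $(\ell \log \ell)^{\ell} e^{-\ell(1-o(1))}$, which is below $(\ell\log\ell)^\ell$.

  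This is the main obstacle: making the inequality hold uniformly for all $\ell \ge 5$ rather than only asymptotically. I would first try Rosser--Schoenfeld-type bounds of the form $\theta(x) \le x$ and $p_n \le n(\log n + \log\log n)$ for $n \ge 6$, combined with $\log$ taken as the natural log. If necessary I would check small $\ell$ directly; for example, at $\ell = 5$ we have $3\cdot5\cdot7\cdot11\cdot13 = 15015$ while $(5\ln 5)^5 \approx 1.07\times 10^{6}$. The hypothesis $\ell \ge 5$ is presumably there exactly to make this comparison clean.

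\item \textbf{Bound on $\log|\F|$.} Since $\log|\F| = L \log 2$ and $c_i \le p_i - 1 < p_\ell$, we get $L \le \prod_i c_i \le \prod_i p_i = m \le (\ell\log\ell)^\ell = \exp(\ell \log(\ell\log\ell)) = \exp(O(\ell\log\ell))$, as claimed. A crude bound via the product is enough here; no finer estimate of the lcm is needed.
\end{itemize}
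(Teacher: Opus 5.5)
Your construction (the first $\ell$ odd primes, Claim~\ref{clm:trivialdec} for each, glued by Theorem~\ref{thm:tensoring} with $q=2$) is exactly the paper's proof. Your bound on $\log|\F|$ is also fine: the paper notes that $\mathrm{lcm}(c_1,\ldots,c_\ell)$ is the order of $2$ modulo $m$, hence at most $m$, and your $\prod_i c_i\le m$ works equally well. The bound $m\le(\ell\log\ell)^\ell$ is precisely Lemma~\ref{lem:primorial}, which the paper states in its preliminaries and simply invokes. You can cite it rather than reprove it.

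If you do want a self-contained argument, be aware that the route you sketch does not close. Even granting $\theta(x)\le x$, combining it with $p_{\ell+1}<(\ell+1)(\log(\ell+1)+\log\log(\ell+1))$ only gives $\log m\le(\ell+1)(\log(\ell+1)+\log\log(\ell+1))-\log 2$. That exceeds $\ell\log(\ell\log\ell)$ by at least $\log\ell+\log\log\ell-\log 2>0$.

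The loss is in the bound on the top prime. Since $p_{\ell+1}\approx(\ell+1)(\log(\ell+1)+\log\log(\ell+1)-1)$, Rosser's bound overshoots by about $\ell$. Because $\log m\approx\theta(p_{\ell+1})\approx p_{\ell+1}$, this additive error in the exponent eats exactly the $e^{\ell}$ slack you need. Separately, $\theta(x)\le x$ is not a valid uniform bound: it fails for some enormous $x$ by Littlewood's oscillation theorem.

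What does work is to apply $p_i<i(\log i+\log\log i)$ to each $p_i$ with $6\le i\le\ell+1$ individually. This gives $\log m<\log\frac{3\cdot5\cdot7\cdot11}{5!}+\log((\ell+1)!)+(\ell-4)\log(\log(\ell+1)+\log\log(\ell+1))$. Stirling then supplies the missing $-\ell$. The extra $\log\log$ terms cost at most about $\ell\log(1+1/e)<0.32\,\ell$, and tracking constants shows the inequality holds for every $\ell\ge5$.

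Finally, a small numerical slip: $(5\ln 5)^5\approx 3.4\times 10^4$, not $1.07\times 10^6$, though it still exceeds $15015$.
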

\begin{proof}
    We apply Theorem~\ref{thm:tensoring} with $q = 2$, $r = \ell$, and $m_1, \ldots, m_\ell$ the first $\ell$ odd primes.
    For each $i \in [\ell]$, let $c_i$ be the multiplicative order of 2 modulo $m_i$.
    Then by Claim~\ref{clm:trivialdec} there exists an $(\F_{2^{c_i}}, m_i, S^*_{m_i})$-decoding polynomial of sparsity 2 for every $i \in [\ell]$.
    So Theorem~\ref{thm:tensoring} tells us that there exists an $(\F_{2^{\mathrm{lcm}(c_1, \ldots, c_\ell)}}, m_1\ldots m_\ell, S^*_{m_1\ldots m_\ell})$-decoding polynomial of sparsity $\leq 2^\ell$.

    It remains to establish the stated bounds on $m := m_1\ldots m_\ell$ and $|\F| = 2^{\mathrm{lcm}(c_1, \ldots, c_\ell)}$.
    First, $m$ is simply the product of the first $\ell$ odd primes, which is $\leq (\ell \log \ell)^\ell$ by Lemma~\ref{lem:primorial}.
    Secondly, $\mathrm{lcm}(c_1, \ldots, c_\ell)$ is exactly the multiplicative order of 2 modulo $m$.
    It must therefore be at most $m$.
    Consequently, we have:
    \begin{align*}
        \log |\F| &\leq O(\mathrm{lcm}(c_1, \ldots, c_\ell)) \\
        &\leq O(m) \\
        &\leq O((\ell \log \ell)^{\ell}) \\
        &= \exp(O(\ell \log \ell)),
    \end{align*}
    as claimed.
\end{proof}

\subsection{Previous Constructions of PIR}

\subsubsection{From Matching Vectors and Decoding Polynomials}

Below we state the main theorem of~\cite{GKS25}, reworded in such a way that it robustly holds even when the number of servers and the modulus $m$ are superconstant.

\begin{theorem}[Essentially Theorem 2.3 of \cite{GKS25}]\label{thm:gks}
    Let $m$ be a product of $k$ distinct primes, let $q$ be a prime with $\gcd(q, m) = 1$, and let $\F$ be a finite field of characteristic $q$ that contains a primitive $m$th root of unity.
    Assume the following two ingredients:
    \begin{itemize}
        \item an $S^*_{qm}$-matching-vector family over $\Z_{qm}^d$ of size $n$; and
        \item an $S^*_m$-decoding polynomial over $\F$ with sparsity $\leq T$.
    \end{itemize}
    Then there is a $T$-server PIR scheme for a database of size $n$ with total communication $O(Td \log |\F|)$.
    In more detail, each query from the user is an element of $\Z_m^d$, and each answer from each server is an element of $\F^{d+1}$.
\end{theorem}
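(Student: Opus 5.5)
The plan is to reproduce the Ghasemi--Kopparty--Sudan protocol and keep explicit track of the dependence on $m$, $T$ and $|\F|$, so that the bound holds even when $k$, $m$ and $|\F|$ grow with $n$. Write the decoding polynomial as $P(X)=\sum_{t=1}^{T}\alpha_t X^{\beta_t}$ with $\beta_t\in\Z_m$. Fix a primitive $m$th root of unity $\gamma\in\F$.

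Take the $S^*_{qm}$-matching-vector family $(\uv_j,\vv_j)_{j\in[n]}$ over $\Z_{qm}^d$. By CRT, split each vector into a $\Z_m$-component and a $\Z_q$-component. Since $\F$ has characteristic $q$, it has no primitive $q$th root of unity, so the two components must be handled differently.
\begin{itemize}
\item The $\Z_m$-component goes into exponents of $\gamma$.
\item The $\Z_q$-component is realized through (Hasse) derivatives, as in \cite{DG16,WY05}. Concretely, the database is encoded as a polynomial $F(\mathbf{z},\mathbf{y})=\sum_j \DB_j\,\gamma^{\langle \uv_j^{(m)},\mathbf{z}\rangle}\,\langle \uv_j^{(q)},\mathbf{y}\rangle$ (or an equivalent multilinear form in which the $\Z_q$ part appears linearly).
\item Each server returns the value of this encoding at its query together with its $d$ first-order partial derivatives. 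These are the $d+1$ field elements in the answer.
\end{itemize}

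The queries are built as follows. The user samples $\mathbf{r}\in\Z_m^d$ uniformly and sends server $t$ the point $\mathbf{r}+\beta_t\vv_i^{(m)}$. Each such point is uniform in $\Z_m^d$, which gives perfect privacy. Each query costs $d\log m$ bits and each answer costs $(d+1)\log|\F|$ bits, so the total communication is $O(Td\log|\F|)$, as claimed (using $m<|\F|$).

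For reconstruction, the user forms $\sum_t \alpha_t\cdot(\text{answer}_t)$, applied to the value and to each partial derivative, after multiplying by the known phases $\gamma^{-\dots}$ coming from $\mathbf{r}$. By linearity, the contribution of index $j$ gets multiplied by
\[
\sum_t\alpha_t\gamma^{\beta_t\langle \uv_j,\vv_i\rangle_m}=P\!\left(\gamma^{\langle\uv_j,\vv_i\rangle_m}\right).
\]
Because $\langle\uv_j,\vv_i\rangle\bmod m$ lies in $S^*_m$, this factor vanishes unless the $m$-part of the inner product is $0$, in which case it equals $P(1)\neq0$.

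The surviving indices $j$ are exactly those with $\langle\uv_j,\vv_i\rangle\equiv 0\pmod m$. For these, the full inner product mod $qm$ is $0$ or $c$, where $c$ is the CRT element that is $1$ mod $q$ and $0$ mod $m$. The value $0$ occurs only for $j=i$. The user now combines the derivative coordinates in direction $\vv_i^{(q)}$; this introduces a factor $\langle\uv_j^{(q)},\vv_i^{(q)}\rangle\in\{0,1\}\subseteq\F_q$. Subtracting this derivative combination from the value combination kills every $j$ whose $q$-part is $1$ and leaves $P(1)\gamma^{\langle\uv_i,\mathbf{r}\rangle}\DB_i$, from which $\DB_i$ is recovered.

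I expect this last step to be the main obstacle: arranging the encoding so that the derivative data separates the $\Z_q$-component (values $0$ versus $1$) without spoiling the sparse $\Z_m$-combination. If the bookkeeping becomes heavy, I would instead defer verbatim to \cite[Theorem 2.3]{GKS25}, checking only that no step there uses $k=O(1)$ beyond the accounting $O(Td\log|\F|)$.
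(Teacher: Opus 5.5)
Your argument is correct, and it differs from the paper mainly in that the paper does not reprove the statement. Its proof sketch cites \cite[Theorem 2.3]{GKS25} and only remarks that each coordinate of the user's query there is an $m$th root of unity, so it can be sent as an element of $\Z_m$; your fallback plan is exactly this. Writing the protocol out directly checks what the paper only asserts when it says the theorem ``robustly holds'' for superconstant parameters. Privacy, the filter $P(\gamma^{\langle\uv_j,\vv_i\rangle \bmod m})$, and the separation of the residues $0$ and $c$ modulo $qm$ never use $k=O(1)$. Also, only the $\Z_m$-part of $\vv_i$ is ever transmitted, which is why queries lie in $\Z_m^d$. The step you flag as the main obstacle is fine as you wrote it. A surviving $j\neq i$ has $\langle\uv_j,\vv_i\rangle\equiv c\pmod{qm}$ with $c\equiv 1\pmod q$, so subtracting the derivative combination from the value combination leaves only $P(1)\gamma^{\langle\uv_i,\mathbf{r}\rangle}\DB_i$.

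One slip needs repair. Your displayed $F(\mathbf{z},\mathbf{y})$ is homogeneous linear in $\mathbf{y}$. Its $\mathbf{y}$-partials are the right derivative data, but it has no ``value'' equal to $\sum_j \DB_j\gamma^{\langle\uv_j,\mathbf{w}\rangle}$, which the subtraction needs.

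A clean fix is as follows. Take $F(X_1,\ldots,X_d)=\sum_j \DB_j\prod_\ell X_\ell^{\uv_j[\ell]}$, with exponents lifted to $\{0,\ldots,qm-1\}$. The server holding $\mathbf{w}\in\Z_m^d$ returns $F$ and the $d$ values $X_\ell\partial_{X_\ell}F$, all evaluated at $(\gamma^{w_1},\ldots,\gamma^{w_d})$.
\begin{itemize}
\item Since $\gamma$ has order $m$, the exponents enter only modulo $m$.
\item Since $\F$ has characteristic $q$, the operator $X_\ell\partial_{X_\ell}$ multiplies each monomial by $\uv_j[\ell]\bmod q$.
\end{itemize}
So the CRT split you describe happens automatically, and ordinary first-order derivatives suffice; Hasse derivatives are not needed.

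Finally, no per-server phase correction is needed before combining. For each $j$, the phase $\gamma^{\langle\uv_j,\mathbf{r}\rangle}$ is common to all servers, and you divide by $P(1)\gamma^{\langle\uv_i,\mathbf{r}\rangle}$ only at the end.
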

\begin{proof}[Proof Sketch]
    As written, the construction by~\cite{GKS25} has the queries live in $\F^d$ and the answers live in $\F^{d+1}$, but on closer inspection every coordinate of the users' queries is an $m$th root of unity so it can be implicitly represented as an element of $\Z_m$.
\end{proof}

Plugging in the standard matching-vector family construction of~\cite{Barrington1994,DBLP:journals/combinatorica/Grolmusz00} and the trivial $2^k$-sparse decoding polynomial leads to the below theorem~\cite{Efr09,DGY11}.
For completeness, we re-derive this communication bound in Section~\ref{sec:manyserver}.

\begin{theorem}[Many-server matching-vector PIR~\cite{Efr09,DGY11}]\label{thm:manyserver}
    For any $n, s$ such that $32 \leq s \leq \log n$, there exists an $s$-server PIR for a database of size $n$ achieving total communication:
    $$\exp\left(O\left((\log n)^{\frac{1}{\lfloor \log_2 s \rfloor + 1}} \cdot (\log \log n)^{1-\frac{1}{\lfloor \log_2 s \rfloor + 1}} \cdot \log s \cdot \log \log s\right)\right).$$
    This can in turn be more coarsely bounded by:
    $$\exp\left(\exp\left(O\left(\log \log \log n + \frac{\log \log n}{\log s}\right)\right)\right).$$
\end{theorem}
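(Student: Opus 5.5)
We will prove Theorem~\ref{thm:manyserver} by instantiating Theorem~\ref{thm:gks} with the trivial decoder of Corollary~\ref{cor:trivialdecodertensored} and the matching vector family of Corollary~\ref{cor:mvparams}, and then optimizing parameters.

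\textbf{Choice of parameters.} Set $k := \lfloor \log_2 s \rfloor - 1$ if needed so that $2^k \le s$. The cleanest choice, which matches the exponent $1/(\lfloor\log_2 s\rfloor+1)$ in the statement, is $k := \lfloor \log_2 s \rfloor$, so $2^k \le s$ and $k \ge 5$ because $s \ge 32$. Corollary~\ref{cor:trivialdecodertensored} then supplies:
\begin{itemize}
\item $m = p_1\cdots p_k$ with $m \le (k\log k)^k$;
\item a field $\F$ of characteristic $q = 2$ with $\log|\F| \le \exp(O(k\log k))$;
\item an $S^*_m$-decoding polynomial over $\F$ of sparsity $\le 2^k \le s$.
\end{itemize}
Since all the $p_i$ are odd, $\gcd(2,m)=1$. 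We next need an $S^*_{2m}$-matching-vector family. We apply Corollary~\ref{cor:mvparams} with $\ell := k+1$ and modulus $2m$, so $\eta = (\log n)^{k}/(2m) \ge (\log n)^k/(2(k\log k)^k)$. Because $k \le \log_2\log n$, we have $\log \eta = k\log\log n - O(k\log k) = \Theta(k \log\log n)$ when $\log\log n \gg \log k$. We must check $\eta \ge 2^{\Omega(\ell^2)}$, i.e.\ $k\log\log n \gtrsim k^2$, which holds since $k \le \log_2\log n$. This is the one mildly delicate check; the boundary case $k \approx \log\log n$ may require adjusting constants, or falling back to the first case of Corollary~\ref{cor:mvparams}, which gives the same bound up to constants.

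\textbf{Estimating $d$.} The second case of Corollary~\ref{cor:mvparams} gives
\[
\log d = O\!\left(\log n\cdot \frac{(\log\eta)^{1-1/\ell}}{\ell\,\eta^{1/\ell}}\right).
\]
We bound the factors separately:
\[
\eta^{1/\ell} \ge \frac{(\log n)^{k/(k+1)}}{(2m)^{1/(k+1)}} \ge \frac{(\log n)^{1-1/\ell}}{O(k\log k)},
\qquad
(\log\eta)^{1-1/\ell} \le O\!\left((k\log\log n)^{1-1/\ell}\right).
\]
Combining these,
\[
\log d \le O\!\left((\log n)^{1/\ell}(\log\log n)^{1-1/\ell}\cdot \log k\cdot k^{-1/\ell}\right),
\]
which is at most $O\!\left((\log n)^{1/\ell}(\log\log n)^{1-1/\ell}\log s\log\log s\right)$ with room to spare.

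\textbf{Total communication.} Theorem~\ref{thm:gks} gives communication $O(s\,d\log|\F|)$. Here $\log s$ and $\log\log|\F| = O(k\log k) = O(\log s\log\log s)$ are both absorbed in the exponent, since the main term $(\log n)^{1/\ell}(\log\log n)^{1-1/\ell}$ is at least $\log\log n \ge 1$. This yields the first bound.

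\textbf{The coarse bound.} Take logs twice. The double logarithm of the communication is
\[
\frac{\log\log n}{\ell} + \left(1-\frac1\ell\right)\log\log\log n + O(\log\log s).
\]
Since $\ell = \Theta(\log s)$ and $\log\log s \le \log\log\log n$, this is $O\!\left(\log\log\log n + \log\log n/\log s\right)$, giving the second form.
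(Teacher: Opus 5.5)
Your proposal follows the paper's proof essentially step for step. Both use the tensored trivial decoder of Corollary~\ref{cor:trivialdecodertensored} with $k=\lfloor \log_2 s\rfloor$ primes in characteristic $2$, and then apply Corollary~\ref{cor:mvparams} with modulus $2m$ and $\ell=k+1$ in its second branch. That branch is justified, exactly as in the paper, by $k\le \log_2\log n$, i.e.\ $s\le \log n$. The remaining steps also match: $s\log|\F|$ is absorbed via Theorem~\ref{thm:gks}, and a double logarithm gives the coarse form.

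The one thing to fix is the combination step in your estimate of $d$. Plugging your two (correct) bounds $\eta^{1/\ell}\ge (\log n)^{1-1/\ell}/O(k\log k)$ and $(\log\eta)^{1-1/\ell}\le O((k\log\log n)^{1-1/\ell})$ into $\log n\cdot(\log\eta)^{1-1/\ell}/(\ell\eta^{1/\ell})$ gives
\[
O\!\left((\log n)^{1/\ell}(\log\log n)^{1-1/\ell}\cdot k^{1-1/\ell}\cdot \tfrac{k}{\ell}\log k\right)=O\!\left((\log n)^{1/\ell}(\log\log n)^{1-1/\ell}\cdot k^{1-1/\ell}\log k\right).
\]
This is not the bound with $\log k\cdot k^{-1/\ell}$ that you wrote. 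The $\ell$ in the denominator cancels only the $k$ coming from $(2m)^{1/\ell}$. The factor $k^{1-1/\ell}=\Theta(k)$ from $(\log\eta)^{1-1/\ell}$ survives.

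So there is no ``room to spare'': the $\log s\log\log s$ factor in the statement is exactly what this analysis yields, and the paper likewise ends with $\ell\log\ell$. However, $k^{1-1/\ell}\log k=O(\log s\log\log s)$, so your final bound, and everything you derive after it, remains correct.
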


\subsubsection{From Reed-Muller Codes}

Another approach to PIR is that based on Reed-Muller codes~\cite{CGKS98,BIKR02,WY05}.
We summarize the performance of Reed-Muller PIR below:
\begin{theorem}[Reed-Muller-based PIR~\cite{CGKS98,BIKR02,WY05}]\label{thm:rmpir}
    For any $n, s$, there exists an $s$-server PIR protocol achieving total communication:
    \begin{align*}
        \mathsf{CC} &= \min\left(2^{\tilde{O}(s)} \cdot n^{\frac{2 \log \log s}{s \log s}}, s^2 \log s \cdot n^{\frac{1}{2s-1}}\right).
    \end{align*}
    The former expression is due to~\cite{BIKR02} and is preferable for $s \leq \sqrt{\log n} \cdot \poly(\log \log n)$; the latter expression is due to~\cite{CGKS98,WY05} and is preferable for $s \geq \sqrt{\log n} \cdot \poly(\log \log n)$.
\end{theorem}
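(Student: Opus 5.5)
The statement to prove is Theorem~\ref{thm:rmpir}, the Reed-Muller PIR bound. It is essentially a restatement of known results, so the plan is to obtain each term of the minimum from the cited construction and then compare the two. The two terms are independent constructions, so I treat them separately.

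\textbf{Second term.} I would use the polynomial-interpolation scheme of~\cite{CGKS98} with the derivative (Woodruff--Yekhanin) improvement~\cite{WY05}.
\begin{enumerate}
\item Embed the database as a degree-$r$ polynomial $F$ in $\ell$ variables over a field $\F_Q$ with $Q > s$, using $\binom{\ell}{r} \geq n$ and $r = 2s-1$. This gives $\ell = O(s n^{1/(2s-1)})$.
\item Query the $s$ servers on the points $x + t_j y$ of a random line through the encoded index $x$.
\item Each server returns $F$ and its $\ell$ partial derivatives at its point.
\item Hermite interpolation of the univariate restriction, which has degree $2s-1$, from its $s$ values and $s$ directional derivatives recovers $F(x)$.
\end{enumerate}
The communication is $O(s\ell \log Q)$. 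With $Q = O(s)$ and the bound on $\ell$, this is at most $s^2 \log s \cdot n^{1/(2s-1)}$.

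\textbf{First term.} I would cite the main theorem of~\cite{BIKR02}, which gives communication $n^{O(\log\log s/(s\log s))}$. The real work is tracking the constants to obtain the exponent $2\log\log s/(s\log s)$ with a $2^{\tilde O(s)}$ prefactor. I would extract this from their explicit bound $n^{1/\lambda(s)}$ with $\lambda(s) \geq (1-o(1)) s\log s/(2\log\log s)$, together with their prefactor analysis.

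\textbf{Crossover claim.} Compare the logarithms of the two terms:
\[
\frac{2\log\log s}{s\log s}\,\log n + \tilde O(s) \qquad\text{versus}\qquad \frac{\log n}{2s} + O(\log s).
\]
The first is smaller exactly when $s\cdot\tilde O(s) \lesssim \log n$, up to polylog factors. That is, it is preferable when $s \leq \sqrt{\log n}\,\poly(\log\log n)$, and the second is preferable otherwise.

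\textbf{Main obstacle.} The hard step is the precise constant $2$ and the $2^{\tilde O(s)}$ prefactor in~\cite{BIKR02}. This depends on quoting their parameter optimization exactly rather than re-deriving it. If the constants do not come out cleanly, I would state the exponent as $O(\log\log s/(s\log s))$, which is all the later comparisons in Figure~\ref{fig:phasediagram} need.
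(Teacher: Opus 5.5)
This is essentially the paper's approach. The paper gives no proof of this theorem and simply cites \cite{CGKS98,BIKR02,WY05}, remarking that only the coarse shape of these bounds matters for Figure~\ref{fig:phasediagram}. So your fallback to an exponent of $O(\log\log s/(s\log s))$ is in the same spirit, and it is needed anyway: the constant $2$ from \cite{BIKR02} only holds up to a $1+o(1)$ factor in the exponent of $n$, which the $2^{\tilde{O}(s)}$ prefactor cannot absorb. Your Woodruff--Yekhanin sketch for the second term and your crossover comparison are correct.
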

We note that although the comparison between~\cite{BIKR02} and~\cite{WY05} is somewhat complicated, for the level of coarseness of our comparisons in Figure~\ref{fig:phasediagram}, the improvements by~\cite{BIKR02} in some regimes will not matter.
We could have even used the original $\poly(s) \cdot n^{1/(s-1)}$ communication construction by~\cite{CGKS98} without any impact on the contents of Figure~\ref{fig:phasediagram}.
Our statement of the results by~\cite{BIKR02,WY05} is only for completeness.

\subsection{Algebraic and Number-Theoretic Background}\label{sec:alg-nt}
\paragraph{Notation and terminology.} For a positive integer $n$, we write $P^+(n)$ to denote the largest prime factor of $n$. For integers $a, m$ such that $\gcd(a,m) = 1$, we define $\ord_m(a)$ to be the smallest positive integer $d$ such that $a^d \equiv 1 \pmod{m}$. It is well-known that $a^d \equiv 1 \pmod{m} \Leftrightarrow \ord_m(a) \mid d$. For a positive real number $x$, we write $\pi(x)$ to count the number of prime integers $0 < n \le x$. We use asymptotic notation $O(\cdot)$ to bound the \emph{absolute value} of quantities; for example we write $\epsilon \le O(1/n)$ to mean $|\epsilon| \le O(1/n)$.

\subsubsection{Polynomials and Finite Fields}

\begin{lemma}[Lagrange interpolation]\label{lemma:lagrange}
    Let $P \in \F[X]$ be a polynomial of degree $\leq k$.
    Let $t_1, \ldots, t_{k+1} \in \F$ be distinct points.
    Then we have:
    \[
    P(X) = \sum_{j = 1}^{k+1} P(t_j) \cdot \frac{\prod_{\ell \neq j} (X-t_\ell)}{\prod_{\ell \neq j} (t_j - t_\ell)}.
    \]
    In particular, equating coefficients of $X^k$ implies:
    \[
    \Coeff_{X^k}(P) = \sum_{j = 1}^{k+1} P(t_j) \cdot \frac{1}{\prod_{\ell \neq j} (t_j - t_\ell)}.
    \]
\end{lemma}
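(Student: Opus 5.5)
The plan is to prove the interpolation formula first and then read off the coefficient identity. Fix $P$ of degree at most $k$ and distinct points $t_1, \ldots, t_{k+1}$. For each $j$, define the Lagrange basis polynomial
\[
L_j(X) := \frac{\prod_{\ell \neq j} (X - t_\ell)}{\prod_{\ell \neq j} (t_j - t_\ell)}.
\]
Its denominator is a product of nonzero field elements, because the points are distinct, so $L_j$ is a well-defined polynomial of degree exactly $k$. By construction it satisfies $L_j(t_{j'}) = \mathbbm{1}[j = j']$.

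Next, set $R(X) := P(X) - \sum_{j} P(t_j) L_j(X)$. Then $R$ has degree at most $k$, and $R(t_{j'}) = P(t_{j'}) - P(t_{j'}) = 0$ for all $k+1$ distinct points $t_{j'}$. A nonzero polynomial of degree at most $k$ over a field has at most $k$ roots, so $R = 0$. This gives the first display.

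For the coefficient identity, compare coefficients of $X^k$ on both sides. Each numerator $\prod_{\ell \neq j}(X - t_\ell)$ is monic of degree $k$, so $\Coeff_{X^k}(L_j) = 1/\prod_{\ell \neq j}(t_j - t_\ell)$. Linearity of $\Coeff_{X^k}$ then yields the second display.

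No step is a real obstacle. The only points needing care are that distinctness is used twice, once for the denominators to be nonzero and once for the root-counting argument, and that the degree bound on $P$ is what allows uniqueness to apply.
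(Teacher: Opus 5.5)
Your proof is correct and complete. It is the standard argument: basis polynomials that are $1$ at their own node and $0$ at the others, uniqueness via root counting on the difference, and then reading off the $X^k$ coefficient from the monic numerators. The paper states this lemma as standard background without giving a proof, so your argument is exactly the one it implicitly relies on.
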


\begin{lemma}[Multilinear polynomial interpolation]\label{lem:multilinearinterp}
    Let $P \in \F[X_1, \ldots, X_k]$ be a multilinear polynomial (i.e., it has individual degree $\leq 1$).
    Then we have:
    $$\Coeff_{X_1X_2\ldots X_k} [P] = \sum_{\bv \in\{0,1\}^k} (-1)^{k-|\bv|}P(\bv),$$
    where $|\bv|$ denotes the Hamming weight of $\bv$.
\end{lemma}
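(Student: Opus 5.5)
We prove the multilinear interpolation identity by linearity, reducing to monomials. Both sides of
\[
\Coeff_{X_1X_2\ldots X_k}[P] = \sum_{\bv \in \zo^k} (-1)^{k-|\bv|} P(\bv)
\]
are $\F$-linear in $P$. The multilinear polynomials form a vector space spanned by the monomials $X_A := \prod_{i \in A} X_i$ for $A \subseteq [k]$. So it suffices to verify the identity for each $P = X_A$.

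For such a monomial, the left-hand side is $1$ if $A = [k]$ and $0$ otherwise. For the right-hand side, $X_A(\bv) = \prod_{i \in A} b_i$ equals $1$ exactly when $b_i = 1$ for all $i \in A$, and $0$ otherwise. Hence the sum runs over those $\bv$ that are all ones on $A$ and arbitrary on $[k]\setminus A$.

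Write $\bv$ restricted to $[k] \setminus A$ as $\mathbf{c} \in \zo^{k-|A|}$, so that $|\bv| = |A| + |\mathbf{c}|$. The right-hand side then factors as
\[
\sum_{\mathbf{c} \in \zo^{k-|A|}} (-1)^{k-|A|-|\mathbf{c}|} = \prod_{i \notin A} \bigl((-1)^{1} + (-1)^{0}\bigr) = 0^{\,k-|A|}.
\]
This equals $1$ if $A = [k]$ (an empty product) and $0$ otherwise, matching the left-hand side.

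An equivalent route avoids the expansion into monomials. The finite-difference operator $\Delta_i Q := Q|_{X_i=1} - Q|_{X_i=0}$ extracts the coefficient of $X_i$ in a polynomial that is linear in $X_i$. Applying $\Delta_1 \cdots \Delta_k$ to $P$ yields exactly the alternating sum on the right-hand side. Neither route has a real obstacle; the only care needed is the sign bookkeeping, and the factorization above makes it immediate.
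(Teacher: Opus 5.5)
Your proof is correct and complete. Reducing by linearity to the monomials $X_A$ and factoring the signed sum as $\prod_{i\notin A}\bigl((-1)^1+(-1)^0\bigr)=0^{k-|A|}$ handles every case, and the finite-difference route $\Delta_1\cdots\Delta_k$ is an equally valid alternative.

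The paper states this lemma in its preliminaries as a standard fact and gives no proof; it invokes it only in the technical overview to reinterpret the trivial $2^k$-sparse decoder. So there is no competing argument to compare against.

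Your finite-difference viewpoint is the natural one in this paper's context. It is the $k$-fold tensorization of the top-coefficient formula in Lemma~\ref{lemma:lagrange} with nodes $\{0,1\}$, which has weight $+1$ at $1$ and $-1$ at $0$. That mirrors the univariate interpolation used in Lemma~\ref{lem:rugisgood}.
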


\begin{definition}[Multiplicative structure of finite fields]
    Let $\F = \F_Q$ be a finite field of order $Q$.
    We denote by $\F^\times$ either the set $\F \backslash \{0\}$ or the multiplicative group that these elements form, depending on context.
    $\F^\times$ is a cyclic group of order $Q-1$.
    For $\F$ a finite field and $m$ a positive integer dividing $|\F|-1$, we let $\mu_m(\F)$ denote the cyclic subgroup of $\F^\times$ consisting of $m$th roots of unity.
\end{definition}

\subsubsection{Multiplicative Characters}

\begin{definition}[Multiplicative character]\label{def:character}
    For a finite field $\F_Q$, a multiplicative character is a group homomorphism $\chi: \F_Q^\times \to \C$.
    The collection of multiplicative characters forms a cyclic group of order $Q-1$, which we denote by $\widehat{\F_Q^\times}$.
    The order of a particular character $\chi$ is its order in this group.
    The trivial multiplicative character is the identity element of $\widehat{\F_Q^\times}$, and it maps every nonzero element to 1.
    
    We will sometimes abuse notation and extend $\chi$ to a function on all of $\F$ by defining $\chi(0) = 0$.
    Note that this extension still satisfies $\chi(xy) = \chi(x)\chi(y)$ for all $x, y \in \F$.
\end{definition}

\begin{definition}[Annihilator]\label{def:annihilator}
    For $m, Q$ such that $m | Q-1$, we define the annihilator subgroup $\mu_m^\perp(\F) \subseteq \widehat{\F_Q^\times}$ by:
    \[
    \mu_m^\perp(\F) := \{\chi \in  \widehat{\F_{Q}^\times} : \chi(x) = 1 \text{ for all } x \in \mu_m(\F_{Q})\}.
    \]
    $\mu_m^\perp(\F)$ is cyclic and consists of $(Q-1)/m$ elements.
\end{definition}

\begin{lemma}\label{lemma:characterorthogonality}
    For any $x \in \F_Q$, we have:
    \[
    \sum_{\chi \in \mu_m^\perp(\F_Q)} \chi(x) = \begin{cases}
        (Q-1)/m, \text{ if }x \in \mu_m(\F_Q),\\
        0,\text{ otherwise.}
    \end{cases}
    \]
\end{lemma}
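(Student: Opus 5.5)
The plan is to use the standard orthogonality relation for the characters of the finite cyclic group $\F_Q^\times$, applied to the quotient $\F_Q^\times/\mu_m(\F_Q)$. I will treat $x = 0$ and $x \neq 0$ separately.

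\emph{Case $x = 0$.} By the convention in Definition~\ref{def:character}, every character, including the trivial one, has $\chi(0) = 0$. So the sum vanishes. This matches the claim, since $0 \notin \mu_m(\F_Q)$.

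\emph{Case $x \neq 0$.} I will use that $\mu_m^\perp(\F_Q)$ is exactly the group of characters of $\F_Q^\times$ that factor through the quotient $G := \F_Q^\times/\mu_m(\F_Q)$. The group $G$ is cyclic of order $(Q-1)/m$. The map $\chi \mapsto \bar\chi$ is a bijection from $\mu_m^\perp(\F_Q)$ onto $\widehat{G}$: restriction is injective, and every character of $G$ pulls back to a character of $\F_Q^\times$ that is trivial on $\mu_m$. A counting check confirms this, since Definition~\ref{def:annihilator} says $|\mu_m^\perp| = (Q-1)/m = |G|$.

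\begin{itemize}
\item If $x \in \mu_m(\F_Q)$, then every term equals $1$, and the sum is $(Q-1)/m$.
\item If $x \notin \mu_m(\F_Q)$, then its image $\bar x \in G$ is nontrivial. Since $G$ is cyclic, there is a character $\psi \in \widehat{G}$ with $\psi(\bar x) \neq 1$: take a generator $g$ of $G$, write $\bar x = g^e$ with $e \not\equiv 0 \pmod{|G|}$, and let $\psi(g) = e^{2\pi i/|G|}$. Lift $\psi$ to $\chi_0 \in \mu_m^\perp(\F_Q)$.
\end{itemize}

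In the second subcase, multiplication by $\chi_0$ permutes the subgroup $\mu_m^\perp(\F_Q)$. Writing $\Sigma$ for the sum, this gives $\chi_0(x)\,\Sigma = \Sigma$. Since $\chi_0(x) \neq 1$, we conclude $\Sigma = 0$.

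The only step that needs any care is producing the separating character $\chi_0$ in the second subcase, and it is handled by the explicit cyclic-group construction above. The rest is routine bookkeeping.
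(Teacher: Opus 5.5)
Your proposal is correct and follows the paper's proof. It uses the same three-way split ($x=0$, $x\in\mu_m(\F_Q)$, and all other $x$) and the same trick of multiplying the sum by $\chi_0(x)$ for some $\chi_0\in\mu_m^\perp(\F_Q)$ with $\chi_0(x)\neq 1$, using that multiplication by $\chi_0$ permutes the subgroup; the only addition is that you justify the existence of such a $\chi_0$ via characters of the cyclic quotient $\F_Q^\times/\mu_m(\F_Q)$, which the paper simply asserts.
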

\begin{proof}
    If $x \in \mu_m(\F_Q)$, then by definition we have $\chi(x) = 1$ for all $\chi \in \mu_m^\perp(\F_Q)$, so the sum will be $(Q-1)/m$.
    If $x = 0$, then we have $\chi(x) = 0$ for all $\chi$, so the sum will indeed be 0.
    For $x \notin \mu_m(\F_Q) \cup \{0\}$, note that for any $\chi' \in \mu_m^\perp(\F_Q)$ we have:
    \begin{align*}
    \chi'(x) \sum_{\chi \in \mu_m^\perp(\F_Q)} \chi(x) &= \sum_{\chi \in \mu_m^\perp(\F_Q)} (\chi'\chi)(x) = \sum_{\chi \in \mu_m^\perp(\F_Q)} \chi(x) \\ \Rightarrow (\chi'(x) - 1)\sum_{\chi \in \mu_m^\perp(\F_Q)} \chi(x) &= 0.
    \end{align*}
    Since $x \notin \mu_m(\F_Q)$, there will exist a choice of $\chi'$ such that $\chi'(x) \neq 1$, so it follows that the sum must be 0.
\end{proof}

\begin{lemma}[Weil bound, Theorem 2C' on Page 43 of~\cite{schmidt2006equations}]\label{lemma:weil}
    Let $\F_Q$ be a finite field, and let $\chi$ be a nontrivial multiplicative character of $\F_Q^\times$ of order $h > 1$.
    Let $f \in \F_Q[X]$ be a nonconstant polynomial that is not of the form $a \cdot g(X)^h$ for some $a \in \F_Q^\times$ and $g \in \F_Q[X]$.
    Suppose that $k$ is the number of distinct roots of $f$ in the algebraic closure $\overline{\F_Q}$. Then the following holds:
    \begin{align*}
        \left|\sum_{\lambda \in \F_Q} \chi(f(\lambda))\right| \le (k-1) \sqrt{Q}.
    \end{align*}
\end{lemma}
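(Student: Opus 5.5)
This is the Weil bound, quoted from Schmidt, and the paper uses it as a black box. If one wanted to prove it rather than cite it, I would follow the Stepanov--Schmidt route or the standard route through the rationality of the $L$-function; the rest of this note sketches the latter.

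\emph{Step 1: reduce to a clean $f$.} Write $f = a\prod_{r}(X-r)^{e_r}$ over $\overline{\F_Q}$. We may replace each $e_r$ by $e_r \bmod h$, since this only multiplies $f$ by an $h$th power and changes the character sum at most at the roots of $f$. Removing roots whose exponent becomes $0$ only helps the bound. The hypothesis that $f$ is not $a\cdot g^h$ ensures that after this reduction $f$ still has at least one root with $h \nmid e_r$. So we may assume all $e_r \in [1,h-1]$.

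\emph{Step 2: the $L$-function.} Consider the sums $S_\nu = \sum_{\lambda \in \F_{Q^\nu}} \chi(N_\nu(f(\lambda)))$, where $N_\nu$ is the norm to $\F_Q$, and form
\[
L(T) = \exp\Big(\sum_\nu S_\nu T^\nu/\nu\Big).
\]
Expanding the exponential as an Euler product over monic irreducible polynomials, $L(T)$ equals $\sum_g \chi(\mathrm{Res}(g,f))\,T^{\deg g}$ up to sign conventions, taken over all monic $g$. Using multiplicativity, the inner sum over $g$ of a fixed degree $\geq k$ vanishes: the map $g \mapsto \chi(\mathrm{Res}(g,f))$ depends only on $g \bmod \mathrm{rad}(f)$ and is a nontrivial character there. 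Hence $L$ is a polynomial of degree at most $k-1$. Write $L(T)=\prod_{j=1}^{k-1}(1-\omega_j T)$, so that $S_1 = -\sum_j \omega_j$.

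\emph{Step 3 (the main obstacle): the Riemann hypothesis $|\omega_j| = \sqrt Q$.} I would obtain this by identifying $L$ with a factor of the zeta function of the superelliptic curve $y^h = f(x)$ and invoking Weil's theorem for curves. Alternatively, the elementary route is Stepanov's auxiliary-polynomial method: it gives $|S_\nu| = O(Q^{\nu/2})$ for all large $\nu$, and then the identity $S_\nu = -\sum_j\omega_j^\nu$ forces $|\omega_j| \le \sqrt Q$ for every $j$. The bound then follows, since $|S_1| \le \sum_j |\omega_j| \le (k-1)\sqrt Q$. Constructing the Stepanov auxiliary polynomial with a high-order vanishing condition is the technically hardest part of the argument, and I would simply import it from Schmidt's book.
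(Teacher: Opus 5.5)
The paper gives no proof of this lemma; it imports it verbatim from Schmidt's Theorem 2C'. Your sketch is the standard proof and essentially Schmidt's own route, so it is correct: reduce the exponents mod $h$ (Frobenius-stably), show the $L$-function is a polynomial of degree at most $k-1$ because $g \mapsto \chi(\mathrm{Res}(g,f))$ is, in each fixed degree, a nontrivial character on $(\F_Q[X]/\mathrm{rad}(f))^\times$, then get $|\omega_j| \le \sqrt{Q}$ via Stepanov and the power-sum argument.

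The only slip is stating the key input as ``$|\omega_j| = \sqrt{Q}$''. That fails in general: $f = X(X-1)$ with $\chi$ quadratic gives $L(T) = 1 - T$, because when $h \mid \deg f$ the unramified place at infinity contributes a reciprocal root of absolute value $1$. Your final step uses only $|\omega_j| \le \sqrt{Q}$, which is exactly what the Stepanov argument yields, so the bound $(k-1)\sqrt{Q}$ stands.
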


\subsubsection{Galois Theory}\label{sec:galois}

Throughout this section, let $Q$ be a prime power and $d > 1$ an integer.
All definitions and facts here are with respect to the field extension $\F_{Q^d}/\F_Q$.

\begin{definition}[Automorphisms]
    A mapping $\sigma: \F_{Q^d} \to \F_{Q^d}$ is said to be an \emph{automorphism} if it is a field homomorphism, bijective, and satisfies $\sigma(x) = x$ for every $x \in \F_Q$.
    Such automorphisms form a group which we denote by $\Aut(\F_{Q^d}/\F_Q)$.
    
    This group has order exactly $d$. In fact, every automorphism has the form $\sigma(x) = x^{Q^i}$ for some $i \in [0, d-1]$.
\end{definition}

\begin{definition}[Norm]
    For an element $\alpha \in \F_{Q^d}$, we define the norm $N_{\F_{Q^d}/\F_Q}(\alpha)$ to be:
    $$N_{\F_{Q^d}/\F_Q}(\alpha) = \prod_{\sigma \in \Aut(\F_{Q^d}/\F_Q)} \sigma(\alpha) = \alpha^{1+Q+\ldots+Q^{d-1}}.$$
    In particular, when $\alpha \in \F_Q$, we have $N_{\F_{Q^d}/\F_Q}(\alpha) = \alpha^d$.
\end{definition}

\begin{lemma}
    For any $\alpha \in \F_{Q^d}^\times$, we have $N_{\F_{Q^d}/\F_Q}(\alpha) \in \F_Q^\times$.
    In fact, $N_{\F_{Q^d}/\F_Q}$ is a surjective group homomorphism from $\F_{Q^d}^\times$ to $\F_Q^\times$.
\end{lemma}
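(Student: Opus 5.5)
The claim has two parts: the norm lands in $\F_Q^\times$, and $N_{\F_{Q^d}/\F_Q}$ is a surjective group homomorphism $\F_{Q^d}^\times \to \F_Q^\times$. I would prove them in that order, using only that $\F_{Q^d}^\times$ is cyclic of order $Q^d-1$ and that $\F_Q$ is exactly the set of fixed points of the Frobenius $x \mapsto x^Q$ inside $\F_{Q^d}$.

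\textbf{Multiplicativity.} This is immediate from the formula $N(\alpha) = \alpha^{e}$ with $e = 1+Q+\cdots+Q^{d-1}$, since $x \mapsto x^e$ is a homomorphism of any abelian group. Equivalently, each $\sigma$ is multiplicative, so the product over $\sigma$ is too. Nonzero elements map to nonzero elements, because a power of a nonzero field element is nonzero.

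\textbf{Image lies in $\F_Q$.} Let $\phi(x) = x^Q$. Then
\[
\phi(N(\alpha)) = \alpha^{Q+Q^2+\cdots+Q^d} = N(\alpha)\cdot \alpha^{Q^d-1} = N(\alpha),
\]
since $\alpha^{Q^d-1}=1$ for $\alpha \in \F_{Q^d}^\times$. Alternatively, applying any automorphism permutes the factors in $\prod_\sigma \sigma(\alpha)$. Either way $N(\alpha)$ is a root of $X^Q - X$, and the $Q$ roots of this polynomial are exactly the elements of $\F_Q$. So $N(\alpha) \in \F_Q^\times$.

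\textbf{Surjectivity.} Let $g$ generate the cyclic group $\F_{Q^d}^\times$, of order $Q^d-1 = (Q-1)e$. Then $N(g) = g^e$ has order exactly $(Q^d-1)/e = Q-1$. So the image of $N$ is a subgroup of $\F_Q^\times$ of size $Q-1 = |\F_Q^\times|$, and hence equals all of $\F_Q^\times$.

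No step here is a real obstacle. The only point needing care is the characterization of $\F_Q$ as the fixed field of Frobenius, which is standard: $\F_Q$ consists of the $Q$ distinct roots of $X^Q-X$. As a byproduct, the kernel has size $e$, namely $\ker N = \mu_e(\F_{Q^d})$; this is the fact used later in the overview of Approach 2.
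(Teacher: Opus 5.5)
Your argument is correct and complete. The power-map formula gives multiplicativity, Frobenius-invariance via $\alpha^{Q^d-1}=1$ puts the image in $\F_Q^\times$, and $N(g)=g^{e}$ having order exactly $Q-1$ gives surjectivity. The paper states this lemma as standard background without proof, so there is no argument to compare against; yours is the standard textbook one. Your byproduct $\ker N = \mu_e(\F_{Q^d})$ is exactly the fact the paper invokes in its overview of Approach 2.
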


\subsubsection{Cyclotomic Polynomials}

\begin{definition}[Cyclotomic polynomials]\label{defn:cyclotomic-poly}
    Let $n$ be a positive integer. The $n$th cyclotomic polynomial is the unique monic irreducible polynomial with integer coefficients that is a divisor of $x^n - 1$ and is not a divisor of $x^k - 1$ for any $k < n$.
\end{definition}
\begin{fact}\label{cyclotomic-roots-of-unity}
\begin{align*}
    x^n - 1 = \prod_{d | n} \Phi_d(x).
\end{align*}
\end{fact}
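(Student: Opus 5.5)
We would prove this directly from Definition~\ref{defn:cyclotomic-poly}, taking as given what the definition implicitly asserts: for each $d$ there is exactly one monic irreducible $\Phi_d \in \Z[x]$ dividing $x^d-1$ but no $x^k-1$ with $k<d$. (That uniqueness is really the irreducibility of the classical cyclotomic polynomial; we do not reprove it.) The strategy is to factor $x^n-1$ into monic irreducibles and match the factors bijectively with the divisors $d \mid n$.

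\textbf{Step 1: $x^n-1$ is squarefree.} Over $\mathbb{Q}$ we have
\[
\gcd\bigl(x^n-1,\; n x^{n-1}\bigr)=1,
\]
since $x=0$ is not a root of $x^n-1$. Hence $x^n-1$ has no repeated irreducible factor. By Gauss's lemma, since $x^n-1$ is monic in $\Z[x]$, we can write
\[
x^n-1=f_1\cdots f_r
\]
with the $f_i$ distinct, monic and irreducible in $\Z[x]$.

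\textbf{Step 2: each factor is some $\Phi_d$ with $d\mid n$.} Fix $f=f_i$ and let $d$ be the least positive integer with $f \mid x^d-1$; such a $d$ exists since $f\mid x^n-1$. Then $f$ divides
\[
\gcd\bigl(x^d-1,\;x^n-1\bigr)=x^{\gcd(d,n)}-1.
\]
This gcd identity is standard: it follows from the Euclidean algorithm, using $x^a-1 \equiv x^{a-b}-1 \pmod{x^b-1}$. Minimality of $d$ forces $\gcd(d,n)=d$, so $d\mid n$. Now $f$ satisfies every defining property of $\Phi_d$, so uniqueness gives $f=\Phi_d$.

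\textbf{Step 3: each $\Phi_d$ with $d\mid n$ occurs exactly once.} Conversely, for $d\mid n$ we have $\Phi_d \mid x^d-1 \mid x^n-1$. By unique factorization, $\Phi_d$ is one of the $f_i$. Distinct $d$ give distinct polynomials $\Phi_d$, because the minimal $k$ with $\Phi_d\mid x^k-1$ is $d$ itself. So $i\mapsto d$ is a bijection between $\{f_1,\dots,f_r\}$ and the divisors of $n$. Combined with squarefreeness from Step 1, this gives $x^n-1=\prod_{d\mid n}\Phi_d(x)$.

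\textbf{Main obstacle.} The substantive difficulty is hidden in the well-definedness of $\Phi_d$, namely irreducibility of the polynomial whose roots are the primitive $d$th roots of unity. We treat this as part of the definition. If it had to be proven, we would use the standard argument: if $\zeta$ is a root of the minimal polynomial $g$, then so is $\zeta^p$ for every prime $p\nmid d$. This is shown by reducing mod $p$ and using Frobenius together with the squarefreeness of $x^d-1$ mod $p$.
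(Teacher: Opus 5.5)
The paper gives no proof of this statement. It is quoted as a standard fact immediately after Definition~\ref{defn:cyclotomic-poly}, so there is no argument of the paper's to compare against.

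Your proof is correct relative to that definition, and each step checks out:
\begin{itemize}
\item $x^n-1$ is squarefree over $\mathbb{Q}$.
\item The identity $\gcd(x^a-1,x^b-1)=x^{\gcd(a,b)}-1$ forces the minimal $d$ with $f \mid x^d-1$ to divide $n$.
\item The uniqueness clause of the definition identifies $f$ with $\Phi_d$.
\item Together these give a bijection between the monic irreducible factors of $x^n-1$ and the divisors of $n$.
\end{itemize}

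Your route differs from the usual textbook one, which \emph{defines} $\Phi_d(x)=\prod(x-\zeta)$ over the primitive $d$th roots of unity $\zeta\in\C$.
\begin{itemize}
\item In the textbook route, the identity is immediate by sorting the $n$th roots of unity by exact order. Integrality of $\Phi_d$ then follows by induction from the identity itself. No irreducibility is needed for the identity. It is needed only to see that this $\Phi_d$ coincides with the paper's, which characterizes $\Phi_d$ as the unique irreducible factor with a given minimal exponent.
\item Your route works directly with the paper's characterization and is purely algebraic, with no complex roots. The price, which you correctly flag, is that everything rests on the well-definedness of that definition. That well-definedness is exactly the irreducibility of the classical cyclotomic polynomials, and your sketch of its standard proof (Frobenius mod $p \nmid d$, using squarefreeness of $x^d-1$ mod $p$) is accurate.
\end{itemize}

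For the paper's actual use of the fact, writing $q^d-1=\prod_{d'\mid d}\Phi_{d'}(q)$ in the heuristic discussion, the textbook version of the identity already suffices.
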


\subsubsection{Prime Density Estimates}

\begin{lemma}[Bertrand, Chebyshev]\label{lem:bertrand}
    If $n > 1$, then there is at least one prime $p$ such that $n < p < 2n$.
\end{lemma}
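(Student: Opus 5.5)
This is Bertrand's postulate. I would not reprove it from scratch; it is a classical result of Chebyshev, and the paper lists it as background. If a self-contained argument is wanted, I would follow Erdős's proof via the central binomial coefficient $N=\binom{2n}{n}$.

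\emph{Lower bound on $N$.} First record the trivial bound $N \geq 4^n/(2n+1)$. This holds because $N$ is the largest of the $2n+1$ terms summing to $4^n$.

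\emph{Upper bound on $N$, assuming no prime in $(n,2n)$.} Suppose for contradiction that no prime lies strictly between $n$ and $2n$. For $n\ge 2$ the endpoint $2n$ is not prime, so every prime dividing $N$ is at most $n$. By Legendre's formula, the exponent of $p$ in $N$ is $\sum_{j\ge1}\left(\lfloor 2n/p^j\rfloor-2\lfloor n/p^j\rfloor\right)$. Each summand is $0$ or $1$, and the summands vanish once $p^j>2n$. Three consequences follow:
\begin{enumerate}
\item The prime power $p^{v_p(N)}$ is at most $2n$.
\item Primes $p>\sqrt{2n}$ appear with exponent at most $1$.
\item Primes with $2n/3<p\le n$ do not divide $N$ at all (for $n\ge 3$), since then $\lfloor 2n/p\rfloor = 2 = 2\lfloor n/p\rfloor$.
\end{enumerate}
Combining these with the primorial bound $\prod_{p\le x}p\le 4^{x}$ gives
\[
N \;\le\; (2n)^{\sqrt{2n}}\cdot 4^{2n/3}.
\]
The primorial bound itself is proved by induction, using that $\binom{2m+1}{m}\le 4^m$ is divisible by every prime in $(m+1,2m+1]$.

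\emph{Contradiction.} Putting the two bounds together gives $4^{n/3}\le (2n+1)(2n)^{\sqrt{2n}}$. This fails for all sufficiently large $n$; an explicit threshold such as $n\ge 468$ works after taking logarithms.

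\emph{Small cases.} For $2\le n<468$, exhibit the chain of primes $3,5,7,13,23,43,83,163,317,631$, each less than twice its predecessor. For any such $n$, the first prime in this chain that exceeds $n$ is less than $2n$. The case $n=1$ is excluded by the statement's hypothesis $n>1$, and this is forced anyway: there is no prime strictly between $1$ and $2$.

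The main obstacle is purely bookkeeping: making the explicit numerical threshold meet the finite check. Everything else is standard.
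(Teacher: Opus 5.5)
Your proposal is correct: the paper gives no proof of this lemma and simply invokes it as the classical Bertrand--Chebyshev theorem, which is exactly your first move. Your optional Erd\H{o}s-style argument is also sound. The threshold $n\ge 468$ does force $4^{n/3}>(2n+1)(2n)^{\sqrt{2n}}$ (only barely at $n=468$, with the gap growing afterwards), and the prime chain covers $2\le n<631$. The one small omission is in consequence 3: you should also note that the Legendre terms with $j\ge 2$ vanish, since $p^2>2n$ whenever $p>2n/3$ and $n\ge 3$.
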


\begin{lemma}\label{lem:primorial}
    For any $\ell \geq 5$, the product of the first $\ell$ odd primes is $\leq (\ell \log \ell)^{\ell}$.
\end{lemma}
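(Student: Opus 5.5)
The plan is to take logarithms and reduce the claim to the prime number theorem in its Chebyshev form. Let $P_\ell$ denote the product of the first $\ell$ odd primes, and let $p_\ell'$ be the $\ell$th odd prime, which is the $(\ell+1)$th prime overall. Then
\[
\log P_\ell = \sum_{3 \le p \le p_\ell'} \log p = \theta(p_\ell') - \log 2,
\]
where $\theta$ is the Chebyshev function. So it suffices to show $\theta(p_\ell') \le \ell \log(\ell \log \ell) + \log 2$ for all $\ell \ge 5$.

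The first step is to bound the size of $p_\ell'$. I would use the Rosser--Schoenfeld bound for the $n$th prime,
\[
p_n < n(\log n + \log\log n) \quad (n \ge 6),
\]
with $n = \ell+1$. The second step is to bound $\theta$. Two options are available:
\begin{itemize}
\item the crude bound $\theta(x) < 1.01624\,x$, also due to Rosser--Schoenfeld;
\item the sharper inequality $\theta(p_n) \le n(\log n + \log\log n - 1 + o(1))$, from Robin and Dusart, which is essentially $\log$ of the primorial.
\end{itemize}
The crude route fails by a factor of about $\log \ell$. Indeed, $\theta(p_\ell') \approx \ell \log \ell$, and $\ell \log(\ell \log \ell) = \ell\log\ell + \ell \log\log\ell$, so the required slack is only $\ell\log\log\ell$. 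This means I must compare $\theta(p_\ell')$ with $\ell \log \ell$ directly rather than passing through $\theta(x) \le c\,x$.

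For this I would cite Dusart's bound $\theta(p_n) \le n(\log n + \log\log n - 1 + \log\log n/\log n)$ for large $n$. With $n = \ell+1$, the right-hand side is about
\[
\ell\log\ell + \ell\log\log\ell - \ell + O\!\left(\frac{\ell \log\log \ell}{\log \ell}\right) + O(\log \ell).
\]
The $-\ell$ term absorbs the error terms and the $\log 2$ once $\ell$ exceeds a modest threshold. I would then check the statement numerically for all $\ell$ from $5$ up to that threshold. For example, at $\ell=5$ we have $3\cdot5\cdot7\cdot11\cdot13 = 15015$, while $(5\log 5)^5 \approx 8.05^5 \approx 3.4\times 10^4$, so the claim holds. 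Here $\log$ is taken as natural; if the paper intends $\log_2$, the bound only becomes easier.

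The main obstacle is the delicacy of the constants: the target leaves only $\ell\log\log\ell$ of slack over the main term $\ell\log\ell$. This forces reliance on explicit second-order estimates for $\theta(p_n)$ from the literature, together with a finite computation covering the range where those explicit estimates are not yet valid. I would first try citing Dusart's explicit inequality, compute its threshold, and verify the cases below it directly.
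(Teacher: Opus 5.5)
The paper states this lemma without proof, treating it as a standard primorial estimate, so there is no argument to compare against. On its own terms your plan is correct.

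\begin{itemize}
\item The reduction $\log P_\ell=\theta(p_{\ell+1})-\log 2$ is right.
\item A Dusart-type bound $\theta(p_n)\le n(\log n+\log\log n-1+\frac{\log\log n}{\log n})$, valid beyond an explicit threshold, leaves a margin of roughly $(1-1/e)\ell-O(\log \ell)$, using $\frac{\log\log n}{\log n}\le 1/e$.
\item The finitely many remaining $\ell$ can then be checked by machine, and your value at $\ell=5$ is correct.
\end{itemize}

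One part of your diagnosis is off. The slack over $\theta(p_{\ell+1})$ is not $\ell\log\log\ell$, because $\theta(p_{\ell+1})$ itself already contains a $+\ell\log\log\ell$ term. The true slack is only about $\ell$, which is why the $-1$ in the second-order estimate matters. You do use this correctly when you let the $-\ell$ term absorb the errors.

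More importantly, second-order estimates are not actually forced; a short induction works. Write $B_\ell=(\ell\log\ell)^\ell$ and induct from the base case $\ell=5$.
\begin{itemize}
\item Lower bound on the ratio: $B_{\ell+1}/B_\ell\ge(1+1/\ell)^\ell(\ell+1)\log(\ell+1)\ge 2.48\,(\ell+1)\log(\ell+1)$ for $\ell\ge 5$.
\item Upper bound on the new factor: by the Rosser bound you already cite, the next odd prime satisfies $p_{\ell+2}<(\ell+2)\bigl(\log(\ell+2)+\log\log(\ell+2)\bigr)\le 1.74\,(\ell+1)\log(\ell+1)$. This uses $\log\log y\le(\log y)/e$ and $\ell+1\ge 6$.
\item Hence each new factor $p_{\ell+2}$ is smaller than $B_{\ell+1}/B_\ell$, and the inequality propagates.
\end{itemize}
The factor $(1+1/\ell)^\ell\approx e$ in the ratio is exactly the Stirling-type saving encoded by the $-1$ in $\theta(p_n)\approx n(\log n+\log\log n-1)$, so the induction gets it for free. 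It needs only the classical first-order bound on $p_n$ and one base case. Your route instead needs a deeper explicit estimate plus a computation up to its threshold. Both are valid, but the induction is shorter and self-contained.
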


\begin{definition}[Rough numbers]\label{defn:rough}
    Let $\epsilon \in (0, 1/2)$. Recall that for a positive integer $n$, we define $P^+(n)$ to be the largest prime divisor of $n$.
    We say that a positive integer $n$ is $\epsilon$-rough if $P^+(n) \ge n^{(1 - \epsilon)}$.\footnote{We note that this is non-standard terminology; usually, roughness refers to the property where \emph{all} prime divisors of the number are large, not just the largest prime divisor. When $\epsilon < 1/2$, the standard definition is essentially meaningless because it is equivalent to $n$ being prime.}
\end{definition}

\begin{lemma}\label{lem:rough-order}
    Let $q$ be a prime, and let $d \ge 6$ be a positive integer such that $q^d - 1$ is $\epsilon$ rough for some $0 < \epsilon \le 1/6$. Then, if $p = P^+(q^d - 1)$, we must have that $\ord_p(q) = d$.
\end{lemma}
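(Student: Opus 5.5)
Set $e := \ord_p(q)$. Since $p \mid q^d-1$, the order $e$ divides $d$, and we want to rule out $e < d$. We argue by contradiction: assume $e$ is a proper divisor of $d$. Then $e \le d/2$, and $p \mid q^e - 1$. This gives the upper bound
$$p \le q^e - 1 < q^{d/2}.$$

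On the other hand, roughness gives a lower bound. By hypothesis $p = P^+(q^d-1) \ge (q^d-1)^{1-\epsilon} \ge (q^d-1)^{5/6}$, since $\epsilon \le 1/6$. Combining the two bounds,
$$(q^d-1)^{5/6} < q^{d/2}.$$
Raising both sides to the power $6/5$ yields $q^d - 1 < q^{3d/5}$.

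This is impossible. Since $q \ge 2$ and $d \ge 6$, we have $q^d - q^{3d/5} \ge q^{3d/5}\bigl(q^{2d/5}-1\bigr) \ge 2^{18/5}(2^{12/5}-1) > 1$. Hence $q^d - 1 > q^{3d/5}$, contradicting the previous display, so $e = d$.

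The only step needing any care is this final numeric inequality, and it has plenty of slack. In fact, the exponent comparison alone ($5d/6 > d/2$) nearly settles the matter; the hypotheses $d \ge 6$ and $\epsilon \le 1/6$ mainly serve to absorb the $-1$ comfortably.
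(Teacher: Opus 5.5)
Your proof is correct and follows the paper's argument: $\ord_p(q)$ is a proper divisor of $d$, so $p \le q^{d/2}-1$, which contradicts $p \ge (q^d-1)^{5/6}$. The only difference is cosmetic. You verify the final inequality as $q^d - 1 > q^{3d/5}$, while the paper bounds the equivalent ratio $(q^d-1)^{5/6}/q^{d/2} = (q^{2d/5} - q^{-3d/5})^{5/6} > 1$.
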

\begin{proof}
    Observe that $\ord_p(q) | d$.
    Suppose for contradiction that $\ord_p(q) = e < d$. Then, $e \le d/2$, so $p| q^e - 1 \Rightarrow p \leq q^e-1 \le q^{d/2}-1$, and so $p \le q^e - 1 \le q^{d/2}-1 \le q^{d/2}$. This is a contradiction, since by hypothesis, we must have that
    \begin{align*}
        p \ge (q^{d}-1)^{(1-\epsilon)} \ge (q^d - 1)^{5/6} > q^{d/2},
    \end{align*}
    where the last inequality follows from the following calculation:
    \begin{align*}
        \frac{(q^d  -1)^{5/6}}{q^{d/2}} &= \left(q^{2d/5} - \frac{1}{q^{3d/5}}\right)^{5/6} \\
        &\ge \left(q^{2d/5} - 1\right)^{5/6} &\text{since } q>1\\
        &\ge (2^{12/5} - 1)^{5/6} &\text{since } q \ge 2, d \ge 6\\
        &\ge 3.27^{5/6} > 1.
    \end{align*}
\end{proof}

\begin{lemma}\label{lem:distinct-degrees-implies-distinct-primes}
    Let $q$ be a prime. Suppose that $d, d' \ge 6$ are distinct positive integers such that $q^d - 1$ and $q^{d'} - 1$ are $\epsilon$-rough, for some $0 < \epsilon \le 1/6$. Then,
    \begin{align*}
        P^+(q^d - 1) \neq P^+(q^{d'} - 1).
    \end{align*}
\end{lemma}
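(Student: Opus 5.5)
The plan is to derive this directly from Lemma~\ref{lem:rough-order}. Set $p = P^+(q^d-1)$ and $p' = P^+(q^{d'}-1)$, and suppose for contradiction that $p = p'$. The hypotheses of Lemma~\ref{lem:rough-order} hold for both exponents: $d, d' \ge 6$, and $q^d-1$, $q^{d'}-1$ are $\epsilon$-rough with $\epsilon \le 1/6$. Applying the lemma twice gives
\[
\ord_p(q) = d \quad\text{and}\quad \ord_{p'}(q) = d'.
\]

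Under the assumption $p = p'$, these are orders of the same integer $q$ modulo the same prime. The multiplicative order is uniquely determined by $q$ and the modulus, so $d = \ord_p(q) = \ord_{p'}(q) = d'$. This contradicts the assumption that $d$ and $d'$ are distinct, so $P^+(q^d-1) \neq P^+(q^{d'}-1)$.

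The only point that needs checking is that $\ord_p(q)$ is well defined, which requires $p \nmid q$. Since $p$ divides $q^d-1$, it cannot divide $q$, so this holds. I do not expect any real obstacle: all the substantive work, namely ruling out $\ord_p(q)$ being a proper divisor of $d$ via the size bound $p > q^{d/2}$, is already done inside Lemma~\ref{lem:rough-order}.
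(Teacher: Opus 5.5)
Your proposal is correct and is essentially the paper's proof: apply Lemma~\ref{lem:rough-order} to both exponents to get $\ord_p(q) = d$ and $\ord_{p'}(q) = d'$, then observe that $p = p'$ would force $d = d'$. Your remark that $p \nmid q$ (so the order is well defined) makes explicit what the paper leaves implicit.
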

\begin{proof}
    Let $p = P^+(q^d - 1)$ and $p' = P^+({q^{d'} - 1})$. Lemma~\ref{lem:rough-order} shows that $\ord_p(q) = d$, and similarly $\ord_{p'}(q) = d'$. Since the order is well-defined, we are done. 
\end{proof}

\begin{lemma} [\cite{dickman1930frequency}, Equation (1) of \cite{moree2014integers}]\label{lem:dickman}
    For every $\alpha > 0$,
    \begin{align*}
        \lim _{N\to \infty} \Pr_{n \gets[1, N]} [P^+(n) \le N^{\alpha}]  = \rho(1/\alpha),
    \end{align*}
    where $\rho$ is the Dickman-de Bruijn function, and $\rho(1/ \alpha) = 1 - \log (1/\alpha)$ for $1/2 \le \alpha \le 1$.
\end{lemma}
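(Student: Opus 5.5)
The statement is classical and the paper only uses the range $1/2 \le \alpha \le 1$. I would therefore give a self-contained elementary proof of that case, and for general $\alpha>0$ defer to Dickman~\cite{dickman1930frequency} (see also~\cite{moree2014integers}), indicating how the general case follows by induction.

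\textbf{Case $1/2 \le \alpha \le 1$.} Let $\Psi(N, y)$ denote the number of $n \le N$ with $P^+(n) \le y$, so the probability in question is $\Psi(N, N^\alpha)/N$.
\begin{enumerate}
\item \emph{Decompose the complement.} An integer $n \le N$ with $P^+(n) > N^\alpha \ge N^{1/2}$ has \emph{exactly one} prime divisor $p > N^\alpha$. Two such primes would give $n \ge p p' > N^{2\alpha} \ge N$, which is impossible, and $p^2 \mid n$ is excluded the same way. Hence
\[
N - \Psi(N, N^\alpha) = \sum_{N^\alpha < p \le N} \left\lfloor \frac{N}{p} \right\rfloor
= N \sum_{N^\alpha < p \le N} \frac{1}{p} + O(\pi(N)).
\]
\item \emph{Bound the error.} The error is $O(\pi(N)) = O(N/\log N) = o(N)$ by Chebyshev's estimate.
\item \emph{Apply Mertens.} Mertens' second theorem, $\sum_{p \le x} 1/p = \log\log x + M + O(1/\log x)$, gives
\[
\sum_{N^\alpha < p \le N} \frac{1}{p} = \log\log N - \log(\alpha \log N) + o(1) = \log(1/\alpha) + o(1).
\]
\item \emph{Conclude.} Dividing by $N$,
\[
\Psi(N, N^\alpha)/N = 1 - \log(1/\alpha) + o(1),
\]
which is $\rho(1/\alpha)$ on this range. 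The endpoint $\alpha = 1$ is trivial.
\end{enumerate}

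\textbf{General $\alpha>0$ (sketch only).} Set $u = 1/\alpha$ and proceed by induction on $\lceil u \rceil$ using Buchstab's identity,
\[
\Psi(N, N^{1/u}) = \Psi(N, N^{1/(u-1)}) - \sum_{N^{1/u} < p \le N^{1/(u-1)}} \Psi\!\left(N/p,\, p\right),
\]
where the sum removes the $n$ whose largest prime factor $p$ lies in $(N^{1/u}, N^{1/(u-1)}]$, written as $n = p m$ with $P^+(m) \le p$.

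Write $p = N^{1/v}$ with $v \in [u-1, u)$. Then $N/p = N^{1-1/v}$ and $p = (N/p)^{1/(v-1)}$, so $\Psi(N/p, p) \approx (N/p)\,\rho(v-1)$ by the induction hypothesis. Convert the prime sum into an integral using the prime number theorem in the form $\sum_{p \le x} 1/p \approx \log\log x$. With $du/u$ as the density of $d\log\log p$, this recovers the integral equation
\[
\rho(u) = \rho(u-1) - \int_{u-1}^{u} \frac{\rho(v-1)}{v}\,dv,
\]
which is equivalent to the delay differential equation $u\rho'(u) = -\rho(u-1)$ together with $\rho = 1$ on $[0,1]$.

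\textbf{Main obstacle.} The difficulty is making the $o(1)$ errors uniform in $p$ when the induction hypothesis is fed back into the sum. Near the endpoints, $N/p$ may not be large, so one needs a uniform version of the inductive statement, namely uniformity for $u$ in compact sets. Since this is exactly the content of the cited references and only the elementary range is used later in the paper, I would not reproduce that part.
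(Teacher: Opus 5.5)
Your proposal is correct. It differs from the paper mainly in that the paper gives no argument at all: Lemma~\ref{lem:dickman} is simply quoted from \cite{dickman1930frequency,moree2014integers} for every $\alpha>0$.

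Your treatment of $1/2\le\alpha\le 1$ is the standard self-contained proof. Since $N^{2\alpha}\ge N$, each $n\le N$ with $P^+(n)>N^\alpha$ has exactly one prime factor above $N^\alpha$, and that prime divides $n$ only once. So the complement is counted exactly by $\sum_{N^\alpha<p\le N}\lfloor N/p\rfloor$, and Chebyshev's bound plus Mertens' second theorem give $1-\log(1/\alpha)+o(1)=\rho(1/\alpha)$.

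This buys something concrete for the paper. The only place the lemma is used is in deriving Conjecture~\ref{conj:usable-degrees} from Heuristic~\ref{heur:dickmanqd-1}, at $\alpha=1-\epsilon\in[5/6,1)$, which lies inside your elementary range. So that step really rests only on Mertens' theorem, not on the full Dickman theorem.

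For general $\alpha$ you, like the paper, ultimately rely on the references. Your Buchstab-identity sketch correctly recovers the integral equation $\rho(u)=\rho(u-1)-\int_{u-1}^{u}\rho(v-1)\,dv/v$; where you wrote ``$du/u$'' you mean $dv/v$. You also correctly identify the only real remaining work: making the inductive estimate uniform for the parameter in compact ranges.
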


The following strengthening of Mertens theorem can be proven assuming the Generalized Riemann Hypothesis (See Section~\ref{sec:strong-mertens}).
\begin{lemma}[Strong Mertens' Theorem] \label{lem:mertens}
    Assume the GRH for all Dirichlet $L$-functions. There is a universal constant $C > 0$ such that for all integers $m \ge 1$, and $a$ with $\gcd(a, m) = 1$, and all real numbers $y \ge x \ge 8$ such that $m \le x$,
    \begin{align*}
        \left|\Bigl(\sum_{\substack{\text{prime }p \in [x,y] \\ p = a \bmod m}} \frac{1}{p}\Bigr) - \frac{1}{\phi(m)} (\log \log y - \log \log x)\right| \le C \frac{\log x}{\sqrt{x}}.
    \end{align*}
\end{lemma}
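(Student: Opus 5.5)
The plan is to derive the estimate by partial summation from the GRH form of the prime number theorem in arithmetic progressions. The cleanest input is the Chebyshev-type function
\[
\theta(t;m,a) := \sum_{\substack{\text{prime } p \le t\\ p \equiv a \bmod m}} \log p .
\]
Under GRH for all Dirichlet $L$-functions mod $m$, one has uniformly for $m \le t$ the classical bound
\[
\psi(t;m,a) = \frac{t}{\phi(m)} + O\big(\sqrt{t}\,\log^2 t\big)
\]
(see, e.g., Montgomery--Vaughan). The difference between $\psi$ and $\theta$ is the contribution of prime powers $p^j$ with $j \ge 2$, which is $O(\sqrt{t}\log t)$ regardless of $m$. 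Hence $\theta(t;m,a) = t/\phi(m) + E(t)$ with $|E(t)| \le C_0 \sqrt{t}\log^2 t$ for an absolute constant $C_0$ and all $t \ge 8$ with $m \le t$. I deliberately work with $\theta$ rather than $\pi(t;m,a)$. Converting to $\pi$ first and then summing $1/p$ would cost an extra logarithm and only yield $\log^2 x/\sqrt{x}$.

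Next I write $\frac1p = \frac{\log p}{p}\cdot\frac{1}{\log p}$ and apply Riemann--Stieltjes partial summation with weight $g(t) = \frac{1}{t\log t}$:
\[
\sum_{\substack{p\in[x,y]\\ p\equiv a \bmod m}} \frac1p = \int_{x^-}^{y} g(t)\, d\theta(t;m,a).
\]
Substituting $\theta = t/\phi(m) + E$ splits this into a main term and an error term. The main term is exactly
\[
\frac{1}{\phi(m)}\int_x^y \frac{dt}{t\log t} = \frac{1}{\phi(m)}\big(\log\log y - \log\log x\big).
\]
This uses no approximation, so the main term in the statement appears on the nose.

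For the error term $\int g\,dE$, I integrate by parts to get $g(y)E(y) - g(x^-)E(x^-) - \int_x^y E(t)g'(t)\,dt$. The boundary terms are bounded by $C_0\log t/\sqrt{t}$ at $t = x$ and $t = y$. Since $\log t/\sqrt t$ is decreasing for $t \ge e^2 \approx 7.39$, both are $\le C_0 \log x/\sqrt{x}$, and this is exactly where the hypothesis $x \ge 8$ enters. Because $|g'(t)| \le 2/(t^2\log t)$ for $t \ge 8$, the integral is at most
\[
2C_0 \int_x^\infty \frac{\log t}{t^{3/2}}\,dt = O\!\left(\frac{\log x}{\sqrt x}\right).
\]
Finally, the hypothesis $m \le x \le t$ guarantees that the GRH bound is applied only in its uniform range, so the resulting constant $C$ is independent of $m$ and $a$.

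The only genuinely delicate point is the input estimate: I need a uniform-in-$m$ GRH bound with error $\sqrt t\log^2 t$ for $\psi$. Any $\log(mt)$ factors in that bound are harmless because $m \le t$. The main care in the write-up is to track this log-power precisely so that the final error is $\log x/\sqrt{x}$ rather than something weaker. Everything after that input is routine calculus.
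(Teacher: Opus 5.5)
Your proposal is correct and follows essentially the same route as the paper: a GRH bound $\vartheta(t;m,a)=t/\phi(m)+O(\sqrt{t}\log^2 t)$ (the paper cites this directly for $\vartheta$ from Montgomery--Vaughan, Corollary 13.8, rather than passing through $\psi$), followed by partial summation against $f(t)=1/(t\log t)$. As in the paper, the main term comes out exactly as $\frac{1}{\phi(m)}(\log\log y-\log\log x)$, and the two boundary terms and the integral $\int E(t)f'(t)\,dt$ are each bounded by $O(\log x/\sqrt{x})$.
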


\section{New Constructions of $S$-Decoding Polynomials}\label{sec:decpolysmake}

In this section, we first reduce the task of constructing a sparse decoding polynomial to the (equivalent) task of constructing a primitive we call ``root-of-unity grid'' (Definition~\ref{defn:rug}). Section~\ref{sec:equivalent-poly-rug} shows this equivalence. We then give two approaches to constructing these root-of-unity grids: (1) sequences of rough repunits (Section~\ref{sec:rough-rug}) and (2) prime field norms (Section~\ref{sec:prime-norm-rug}). The relevant algebraic and number-theoretic background can be found in Section~\ref{sec:alg-nt}, and will be referenced when required.
Finally, Section~\ref{sec:stitch} shows how to stitch many root-of-unity grids together to obtain a sparse decoding polynomial. This will be relevant in the regime of superconstant number of servers.

\subsection{Equivalence of $S$-Decoding Polynomials and Root-of-Unity Grids}\label{sec:equivalent-poly-rug}

\begin{definition}[Root-of-unity grid]\label{defn:rug}
    Let $m = p_1 \ldots p_k$ be a product of $k$ distinct primes, and let $\F$ be a finite field with $m | (|\F|-1)$.\footnote{Therefore there exists a primitive $m$th root of unity.}
    A $(\F, m)$-\emph{root-of-unity grid} is a collection of elements $a_1, \ldots, a_k \in \F \backslash \{0\}$ and distinct elements $t_1, \ldots, t_{k+1} \in \F$ such that
    \begin{align*}
        1 + a_i t_j \in \mu_{p_i}(\F),
    \end{align*}
    for all $i \in [k]$, $j \in [k+1]$.
\end{definition}

The following straightforward claim is not necessary for our results, but we present it for intuition.
\begin{claim}
    In any root-of-unity grid, the elements $a_1, \ldots, a_k$ must be distinct.
\end{claim}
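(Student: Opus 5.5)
We argue by contradiction. Suppose $a_i = a_{i'} =: a$ for some $i \neq i'$. Then for every $j \in [k+1]$ the element $1 + a t_j$ lies in both $\mu_{p_i}(\F)$ and $\mu_{p_{i'}}(\F)$. Since $p_i$ and $p_{i'}$ are distinct primes, these two cyclic subgroups of $\F^\times$ have orders $p_i$ and $p_{i'}$. Their intersection is a subgroup whose order divides $\gcd(p_i, p_{i'}) = 1$, so $\mu_{p_i}(\F) \cap \mu_{p_{i'}}(\F) = \{1\}$. Hence $1 + a t_j = 1$, that is, $a t_j = 0$, for every $j \in [k+1]$.

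By Definition~\ref{defn:rug} we have $a \neq 0$, so this forces $t_j = 0$ for all $j \in [k+1]$. But the $t_j$ are distinct and $k+1 \geq 2$, so at most one of them can equal $0$. This is a contradiction, and therefore the $a_i$ are pairwise distinct.

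The only step needing care is the trivial-intersection fact, which is just Lagrange's theorem applied to a common subgroup of two groups of coprime prime orders. Only two of the $t_j$ are actually needed for the argument, so there is no real obstacle.
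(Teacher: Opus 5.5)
Your proof is correct and follows the same argument as the paper: a repeated value $a_i = a_{i'}$ puts every $1+a t_j$ in $\mu_{p_i}(\F)\cap\mu_{p_{i'}}(\F)=\{1\}$, which forces all $t_j=0$ and contradicts the distinctness of the $t_j$. You also spell out two points the paper leaves implicit: why the intersection is trivial (Lagrange's theorem) and where $a\neq 0$ is used.
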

\begin{proof}
    If they are not, then there exist different $i, i' \in [k]$ such that $1+a_it_j = 1+a_{i'}t_j$ for all $j \in [k+1]$. This common element lies in $\mu_{p_i}(\F) \cap \mu_{p_{i'}}(\F)$, so it must be 1.
    This forces $t_j = 0$ for all $j \in [k+1]$ which contradicts the requirement that the $t_j$'s are distinct.
\end{proof}

\subsubsection{Root-of-Unity Grid $\Rightarrow$ Sparse Decoding Polynomial}

\begin{lemma}\label{lem:rugisgood}
    Let $m = p_1 \cdots p_k$ be a product of $k$ distinct primes, and let $\F$ be a finite field with $m | (|\F| - 1)$. Suppose that there exists a $(\F, m)$-root-of-unity grid.
    Then there exists an $S^*_m$-decoding polynomial over $\F$ of sparsity $k+1$, for some primitive $m$th root of unity $\gamma$.
\end{lemma}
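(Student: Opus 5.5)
The plan is to realize the interpolation identity \eqref{eq:univariateinterp} sketched in the overview by an explicit polynomial. Fix any primitive $m$th root of unity $\gamma \in \F$ and let $c_1, \ldots, c_k \in \Z_m$ be the CRT basis ($c_i \equiv 1 \pmod{p_i}$, $c_i \equiv 0 \pmod{p_{i'}}$ for $i' \neq i$). Then $\gamma^{c_i}$ has order exactly $p_i$, so it generates $\mu_{p_i}(\F)$. Given the root-of-unity grid $(a_i)_{i \in [k]}, (t_j)_{j \in [k+1]}$, write $z_{i,j} := 1 + a_i t_j \in \mu_{p_i}(\F)$ as $z_{i,j} = \gamma^{c_i e_{i,j}}$ for some $e_{i,j} \in \Z_{p_i}$.

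For each $j$, I would use the CRT to define $\beta_j \in \Z_m$ by $\beta_j \equiv e_{i,j} \pmod{p_i}$ for all $i$. The key computation is the following. For $s = \sum_i c_i u_i \in S^*_m$ with $u \in \zo^k$, we have $c_i \beta_j \equiv c_i e_{i,j} \pmod m$, since both sides vanish mod $p_{i'}$ for $i' \neq i$ and agree mod $p_i$. Hence
\[
\gamma^{s\beta_j} = \prod_{i} z_{i,j}^{u_i}.
\]
Setting $Z_j := \prod_i z_{i,j}$, this gives
\[
f_{u}(t_j) = \prod_i z_{i,j}^{1-u_i} = Z_j \gamma^{-s\beta_j},
\]
where $f_u(X) = \prod_i (1+a_iX)^{1-u_i}$ as in \eqref{eq:univariate}.

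Next I would define
\[
P(X) := \sum_{j=1}^{k+1} \frac{Z_j}{\prod_{\ell \neq j}(t_j - t_\ell)} X^{m-\beta_j},
\]
with exponents read modulo $m$. The denominators are nonzero because the $t_j$ are distinct. By the coefficient form of Lagrange interpolation (Lemma~\ref{lemma:lagrange}), applied to $f_u$, which has degree $\le k$, we get $P(\gamma^s) = \Coeff_{X^k}(f_u)$. This coefficient equals $\prod_i a_i \neq 0$ when $u = 0$, i.e.\ $s = 0$. It equals $0$ otherwise, since then $\deg f_u < k$. Hence $P$ is an $S^*_m$-decoding polynomial, and it has at most $k+1$ monomials.

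The only remaining subtlety is the sparsity count. If some $\beta_j$ coincide, or some combined coefficient cancels, $P$ would have fewer than $k+1$ nonzero coefficients. That is still a decoding polynomial of sparsity $\le k+1$. By the Ghasemi--Kopparty lower bound (sparsity $\ge k+1$) it must in fact have exactly $k+1$ terms, so either reading of ``sparsity $k+1$'' is satisfied. If one prefers to avoid the citation, one can simply state the bound as $\le k+1$.

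I expect the main place requiring care to be the CRT bookkeeping: checking that $\gamma^{s\beta_j}$ factors coordinatewise as claimed, and that the $c_i u_i$ decomposition of every $s \in S^*_m$ is unique. Everything else is direct substitution.
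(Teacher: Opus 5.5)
Your proof is correct and is essentially the paper's proof. The exponents are the same CRT-basis exponents: your $X^{m-\beta_j}$ is exactly the paper's $X^{\sum_\ell e_{\ell,j}c_\ell}$, since the paper defines $e_{\ell,j}$ via $(1+a_\ell t_j)^{-1}$. The coefficients are the same, $w_j\prod_i(1+a_it_j)$, and the Lagrange coefficient-extraction argument is the same.

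The only divergence is the sparsity count, and there no appeal to the Ghasemi--Kopparty bound is needed. Since $a_1\neq 0$ and the $t_j$ are distinct, the values $1+a_1t_j$ are distinct, so the $\beta_j$ are already distinct modulo $p_1$. Each coefficient $w_jZ_j$ is nonzero, being a nonzero Lagrange weight times a product of roots of unity. Hence $P$ has exactly $k+1$ terms directly, which is how the paper argues it.
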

\begin{proof}
    Let $\gamma \in \F$ be a primitive $m$th root of unity, and let $a_1, \ldots, a_k \in \F \backslash \{0\}$ and $t_1, \ldots, t_{k+1} \in \F$ form a $(\F, m)$-root-of-unity grid. That is, $1 + a_i t_j \in \mu_{p_i}(\F)$ for all $i \in [k]$, $j \in [k+1]$.
    For $j \in [k+1]$, define the Lagrange weights
    \begin{align*}
        w_j := \frac{1}{\prod_{\ell \neq j} (t_j - t_\ell)}.
    \end{align*}
    Let $c_1,\ldots,c_k \in \Z_m$ form a CRT basis for $m$ i.e., we have $c_i \bmod p_\ell = 1$ if $\ell = i$ and 0 otherwise.
    For each $i \in [k]$, let $\gamma_i := \gamma^{c_i}$.
    $\gamma_i$ is a generator of $\mu_{p_i}(\F)$, so by the root-of-unity grid condition there exist $e_{i,j}$ such that $\gamma_i^{e_{i,j}} = (1 + a_i t_j)^{-1}$.
    Then, define
    \begin{align*}
        P(X) &:= \sum_{j=1}^{k+1} \left(w_j \cdot \prod_{i = 1}^k (1 + a_i t_j)\right) \cdot X^{\sum_{\ell=1}^k e_{\ell, j} \cdot c_\ell}.
    \end{align*}
    One can check that $P(X)$ has sparsity exactly $k+1$ (no two terms have the same power of $X$ because $t_1, \ldots, t_{k+1}$ are distinct so the tuples $\{(e_{1, j}, e_{2, j}, \ldots, e_{k, j}): j \in [k+1]\}$ are also distinct).
    We now claim that $P(X)$ is an $S^*_m$-decoding polynomial.
    Indeed, consider any $x \in S^*_m$.
    We may write $x = \sum_{i = 1}^k u_i c_i$ where $u_i \in \{0,1\}$ for all $i$.
    Then we have:
    \begin{align*}
        P(\gamma^x) &= \sum_{j=1}^{k+1} \left(w_j \cdot \prod_{i = 1}^k (1 + a_i t_j)\right) \cdot \gamma^{x(\sum_{\ell=1}^k e_{\ell, j} \cdot c_\ell)}\\
        &= \sum_{j=1}^{k+1} \left(w_j \cdot \prod_{i = 1}^k (1 + a_i t_j)\right) \cdot \gamma^{\sum_{\ell, \ell'=1}^k e_{\ell, j} \cdot u_{\ell'} \cdot c_\ell \cdot c_{\ell'}} &\text{writing $x = \sum u_{\ell'} c_{\ell'}$}\\
        &= \sum_{j=1}^{k+1} \left(w_j \cdot \prod_{i = 1}^k (1 + a_i t_j)\right) \cdot \gamma^{\sum_{\ell=1}^k e_{\ell, j} \cdot u_{\ell} \cdot c_\ell} &\text{since $c_\ell c_{\ell'} \equiv \mathbbm{1}[\ell=\ell']c_\ell \pmod{m}$}\\
        &= \sum_{j=1}^{k+1} \left(w_j \cdot \prod_{i = 1}^k (1 + a_i t_j)\right) \cdot \prod_{\ell=1}^k \gamma_\ell^{ e_{\ell, j} \cdot u_{\ell}} &\text{by definition of $\gamma_\ell$}\\
        &= \sum_{j=1}^{k+1} w_j \cdot \prod_{i = 1}^k (1 + a_i t_j) ^{1 - u_{i}} &\text{by definition of $e_{\ell, j}$}.
    \end{align*}
    Now define the univariate polynomial $f_{u_1, \ldots, u_k}(X) = \prod_{i = 1}^k (1 + a_i X) ^{1 - u_{i}}$. 
    It has degree $\leq k$.
    Applying Lemma~\ref{lemma:lagrange} to $f_{u_1, \ldots, u_k}$ and the evaluation points $t_1, \ldots, t_{k+1}$ implies:
    \begin{align*}
        \Coeff_{X^k}(f_{u_1, \ldots, u_k}) &= \sum_{j = 1}^{k+1} f_{u_1, \ldots, u_k}(t_j) \cdot \frac{1}{\prod_{\ell \neq j} (t_j - t_\ell)} \\
        &= \sum_{j = 1}^{k+1} w_j f_{u_1, \ldots, u_k}(t_j) \\
        &= P(\gamma^x),
    \end{align*}
    by the previous computation.
    Now we have two cases:
    \begin{itemize}
        \item If $x \neq 0$, then we have $(u_1, \ldots, u_k) \neq (0, \ldots, 0)$, so $f_{u_1, \ldots, u_k}$ has degree $\leq k-1$, implying $P(\gamma^x) = 0$.
        \item If $x = 0$, then we have $(u_1, \ldots, u_k) = (0, \ldots, 0)$ so we have $P(\gamma^x) = \Coeff_{X^k} (f_{u_1, \ldots, u_k}) = a_1a_2\ldots a_k \neq 0$.
    \end{itemize}
    This completes the proof.
\end{proof}

\subsubsection{Sparse Decoding Polynomial $\Rightarrow$ Root-of-Unity Grid}

We begin with a straightforward preliminary lemma:

\begin{lemma}\label{lem:affinerank1}
    Assume $k \geq 2$.
    Let $\{z_{i, j} \in \F \backslash \{0\}: i \in [k], j \in [k+1]\}$ be such that for any distinct $\ell, a, b \in [k+1]$ and distinct $r,s \in [k]$ we have:
    $$(z_{r, \ell} - z_{r, a})(z_{s, \ell} - z_{s, b}) = (z_{r, \ell} - z_{r, b})(z_{s, \ell} - z_{s, a}) \neq 0.$$
    Then there exist $a_1, \ldots, a_k \in \F \backslash \{0\}$, and distinct $t_1, \ldots, t_{k+1} \in \F$ such that $z_{i, j} = z_{i, 1}+a_it_j$ for all $i \in [k], j \in [k+1]$.
\end{lemma}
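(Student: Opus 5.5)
Set $D_{i,j} := z_{i,j} - z_{i,1}$, so that $D_{i,1}=0$ for every $i$. The task is to show that the $k\times(k+1)$ matrix $D=(D_{i,j})$ has rank one, with every row a nonzero multiple of a single vector $(t_1,\ldots,t_{k+1})$ whose entries are distinct and satisfy $t_1=0$. Then $z_{i,j}=z_{i,1}+a_it_j$ with $a_i\neq 0$, which is the claim.

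\emph{Step 1 (the case $\ell=1$).} Apply the hypothesis with $\ell=1$ and distinct $a,b\in\{2,\ldots,k+1\}$; this needs $k+1\geq 3$, which holds since $k\geq 2. Because $z_{r,1}-z_{r,a}=-D_{r,a}$, the hypothesis becomes
\[
D_{r,a}D_{s,b}=D_{r,b}D_{s,a}\neq 0 .
\]
The nonvanishing part says that $D_{r,a}\neq 0$ for every $r$ (using any $s\neq r$, which exists as $k\geq 2$) and every $a\geq 2$. So all entries of $D$ outside the first column are nonzero. The equality part says that every $2\times 2$ minor of $D$ taken from columns $2,\ldots,k+1$ vanishes. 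Including column $1$ changes nothing, since that column is zero.

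\emph{Step 2 (rank one).} Define $t_j := D_{1,j}/D_{1,2}$ and $a_i := D_{i,2}$. Then $t_1=0$, $t_2=1$, and $a_i\neq 0$. The vanishing $2\times 2$ minors give
\[
D_{i,j}D_{1,2}=D_{i,2}D_{1,j},
\]
so $D_{i,j}=a_it_j$ for all $j\geq 2$. For $j=1$ this holds trivially. The relation $D_{i,j}D_{1,2}=D_{i,2}D_{1,j}$ is just the minor with $r=i$, $s=1$ on columns $2$ and $j$. It is available whenever $i\neq 1$ and $j\notin\{1,2\}$, and the remaining cases are trivial.

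\emph{Step 3 (distinctness of the $t_j$).} First, $t_j\neq 0=t_1$ for $j\geq 2$, because $D_{1,j}\neq 0$ by Step 1. Now take distinct $a,b\geq 2$ and apply the hypothesis with $\ell=a$, together with any two distinct $r,s$. The nonvanishing factor gives $z_{r,a}-z_{r,b}\neq 0$, that is, $a_r(t_a-t_b)\neq 0$. Hence $t_a\neq t_b$.

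The main point needing care is extracting enough nonvanishing from the hypothesis. Choosing $\ell=1$ gives nonzero entries, while choosing $\ell$ among the other columns gives distinctness. Beyond that the argument is routine bookkeeping. The assumption $k\geq 2$ is used to guarantee that distinct $r,s$ exist, and $k+1\geq 3$ that distinct $\ell,a,b$ exist.
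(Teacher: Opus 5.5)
Your proof is correct and follows essentially the same route as the paper. It subtracts column $1$, uses the $\ell=1$ instance of the hypothesis to get nonzero entries off the first column and vanishing $2\times 2$ minors (hence rank one), and uses the nonvanishing factor for other choices of $\ell$ to force distinct $t_j$'s. Your explicit normalization $t_j = D_{1,j}/D_{1,2}$, $a_i = D_{i,2}$ just spells out the rank-one factorization the paper asserts directly. In Step 3, the third index the hypothesis requires can simply be taken to be column $1$.
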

\begin{proof}
    For any $r \in [k], a \in [k+1]$, define $M_{r, a} = z_{r, a} - z_{r, 1}$.
    Then for any distinct $a, b \in [2, k+1]$ and $r, s \in [k]$, we have:
    \begin{align*}
        M_{r, a} M_{s, b} &= (z_{r, a} - z_{r, 1})(z_{s, b} - z_{s, 1}) \\
        &= (z_{r, b} - z_{r, 1})(z_{s, a} - z_{s, 1}) \\
        &= M_{r, b} M_{s, a}.
    \end{align*}
    Moreover, we have $M_{r, a} = 0 \Leftrightarrow a = 1$.
    Consequently, the above remains true if either $a = 1$ or $b = 1$; both sides will simply be zero.
    It follows $M$ is a rank 1 matrix i.e., there exist $a_1, \ldots, a_k, t_2, \ldots, t_{k+1} \in \F \backslash \{0\}$ and $t_1 = 0$ such that $M_{i, j} = a_i t_j$ for all $i \in [k], j \in [k+1]$.
    Consequently, we have $z_{i, j} = z_{i, 1} + a_i t_j$.
    It remains to check the nonzero and distinctness conditions:
    \begin{itemize}
        \item The $a_i$'s are nonzero by construction.
        \item Finally, if $t_j = t_{j'}$ this would imply $z_{i, j} = z_{i, j'} \Rightarrow j = j'$. So the $t_j$'s are also distinct.
    \end{itemize}
\end{proof}

The following converse of Lemma~\ref{lem:rugisgood} builds on the second approach by~\cite{DBLP:conf/innovations/GhasemiK26} to show that there does not exist an $S^*_m$-decoding polynomial over $\F$ of sparsity $\leq k$:

\begin{lemma}\label{lemma:gkrigidity}
    Assume $k \geq 2$.
    Let $\F, m$ (with $m$ a product of $k$ distinct primes) be such that there is an $S^*_m$-decoding polynomial over $\F$ of sparsity $k+1$.
    Then there exists a root-of-unity grid.
\end{lemma}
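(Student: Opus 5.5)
Write the decoding polynomial as $P(X)=\sum_{j=1}^{k+1}\alpha_j X^{\beta_j}$ with all $\alpha_j\neq 0$ and the $\beta_j$ distinct modulo $m$. Fix a primitive $m$th root of unity $\gamma$ and the CRT basis $c_1,\dots,c_k$, with $\gamma_i=\gamma^{c_i}$ generating $\mu_{p_i}(\F)$. Set $z_{i,j}:=\gamma_i^{\beta_j}\in\mu_{p_i}(\F)$, which is nonzero. For $x=\sum_i u_ic_i\in S^*_m$, the same CRT computation as in Lemma~\ref{lem:rugisgood} gives $P(\gamma^x)=\sum_j\alpha_j\prod_i z_{i,j}^{u_i}$. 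This is a multilinear function of $u\in\{0,1\}^k$. Multilinear extension therefore gives the identity
\[
F(X_1,\dots,X_k):=\sum_{j=1}^{k+1}\alpha_j\prod_{i=1}^k\bigl(1+(z_{i,j}-1)X_i\bigr)=\mathbbm{1}[X=0]\text{ on }\{0,1\}^k.
\]
Since both sides are multilinear, this holds as a polynomial identity with right-hand side $P(1)\prod_i(1-X_i)$. After normalizing $P(1)=1$ we get
\[
\sum_j\alpha_j\prod_i\bigl(1+(z_{i,j}-1)X_i\bigr)=\prod_i(1-X_i).
\]
It is cleaner to substitute $Y_i=X_i/(1-X_i)$, or equivalently to divide through. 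This gives
\[
\sum_j\alpha_j\prod_i(1+z_{i,j}Y_i)=1,
\]
which is the form quoted in the overview, up to inverting the $z$'s.

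Next I would use the diagonalization trick. For any $\ell\in[k+1]$ and any injective assignment of the other $k$ indices $j\neq\ell$ to the coordinates $i\in[k]$, set $Y_i=-z_{i,\sigma(i)}^{-1}$. This kills every term except the $\ell$th, leaving $\alpha_\ell\prod_i(1-z_{i,\ell}/z_{i,\sigma(i)})=1$. In particular $z_{i,\ell}\neq z_{i,j}$ for all $i$ and all $j\neq\ell$, so every row has distinct entries.

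Now take two assignments $\sigma,\sigma'$ that differ by swapping the columns $a,b$ assigned to rows $r,s$. Dividing the two resulting equations cancels $\alpha_\ell$ and all the common factors. What remains is
\[
(1-z_{r,\ell}/z_{r,a})(1-z_{s,\ell}/z_{s,b})=(1-z_{r,\ell}/z_{r,b})(1-z_{s,\ell}/z_{s,a}).
\]
Clearing denominators with the symmetric product $z_{r,a}z_{r,b}z_{s,a}z_{s,b}$ does not directly give the form required by Lemma~\ref{lem:affinerank1}. So I would instead work with $w_{i,j}:=z_{i,j}^{-1}$. Then $1-z_{i,\ell}/z_{i,a}=z_{i,\ell}(w_{i,\ell}-w_{i,a})$, the prefactor $z_{r,\ell}z_{s,\ell}$ cancels, and we obtain exactly
\[
(w_{r,\ell}-w_{r,a})(w_{s,\ell}-w_{s,b})=(w_{r,\ell}-w_{r,b})(w_{s,\ell}-w_{s,a})\neq0,
\]
with nonvanishing coming from the distinctness established above. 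Swapping a pair of assignments is possible only when $k\ge2$, which is where that hypothesis is used. Lemma~\ref{lem:affinerank1} then yields nonzero $a_i$ and distinct $t_j$ such that $w_{i,j}=w_{i,1}+a_it_j$.

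Finally, I normalize. Dividing by $w_{i,1}\in\mu_{p_i}(\F)$ gives $w_{i,j}/w_{i,1}=1+(a_i/w_{i,1})t_j\in\mu_{p_i}(\F)$. So $a_i':=a_i/w_{i,1}$ together with the $t_j$ forms a root-of-unity grid.

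The main obstacle I expect is bookkeeping. I need to get the inversion and substitution right so that the swapped-diagonal equations land exactly in the hypothesis shape of Lemma~\ref{lem:affinerank1}. I also need to check that the multilinear identity legitimately upgrades from the Boolean cube to all of $\F^k$, which holds because both sides are multilinear.
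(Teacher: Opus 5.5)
Your proof is correct and follows essentially the same route as the paper. You derive $\sum_j \alpha_j\prod_i(1+\gamma_i^{\beta_j}Y_i)=1$, kill all but one term by diagonal substitution, compare two substitutions differing by a transposition, feed the resulting equations into Lemma~\ref{lem:affinerank1}, and rescale by $w_{i,1}$. The differences are cosmetic: you reach the key identity via the multilinear extension and the change of variables $Y_i=X_i/(1-X_i)$, whereas the paper simply expands the product and reads off coefficients, which is more direct; and your $w_{i,j}=\gamma_i^{-\beta_j}$ is exactly the paper's $z_{i,j}$.
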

\begin{proof}
    Again, let $\gamma \in \F$ be a primitive $m$th root of unity, $c_1, \ldots, c_k \in \Z_m$ a CRT basis for $m$, and for each $i \in [k]$ let $\gamma_i = \gamma^{c_i}$.
    Let us write the $S$-decoding polynomial as:
    $$P(X) = \sum_{j = 1}^{k+1} \alpha_j X^{\beta_j},$$
    for $\alpha_j \neq 0 \forall j \in [k+1]$.
    Assume without loss of generality that $P(1) = 1$ (by scaling).
    In the following, let $z_{i, j} = \gamma_i^{-\beta_j} \in \mu_{p_i}(\F)$ for every $i \in [k], j \in [k+1]$.
    Then the $S$-decoding property tells us that for any $(u_1, \ldots, u_k) \in \{0, 1\}^k$, we have:
    \begin{align*}
        \sum_{j = 1}^{k+1} \alpha_j \prod_{i = 1}^k \gamma_i^{u_i \beta_j} = P\left(\prod_{i = 1}^k \gamma_i^{u_i}\right) = \begin{cases}
            1,\text{ if }u_i = 0\forall i \\ 
            0,\text{ otherwise.}
        \end{cases}
    \end{align*}
    To analyze this condition, introduce formal variables $X_1, \ldots, X_k$ and notice that:
    \begin{align}
        \sum_{j = 1}^{k+1} \alpha_j \prod_{i = 1}^k (1+X_i z_{i, j}^{-1}) &= \sum_{j = 1}^{k+1} \alpha_j \left(\sum_{u_1, \ldots, u_k \in \zo} \prod_{i = 1}^k X_i^{u_i} \gamma_i^{u_i\beta_j}\right) \nonumber \\
        &= \sum_{u_1, \ldots, u_k \in \zo} \left(\prod_{i = 1}^k X_i^{u_i}\right) \cdot \sum_{j = 1}^{k+1} \alpha_j \prod_{i = 1}^k \gamma_i^{u_i\beta_j} \nonumber \\
        &= 1.\label{eq:funkyidentity}
    \end{align}
    The idea is to argue that $\{z_{i, j}: i \in [k], j \in [k+1]\}$ forms our desired root-of-unity grid (up to some rescaling).
    To this end, consider any permutation $\sigma: [k+1] \to [k+1]$.
    Then we can substitute $X_i = -z_{i, \sigma(i)}$ into~\eqref{eq:funkyidentity}.
    The only term on the LHS that will survive is that corresponding to $j = \sigma(k+1)$, yielding:
    \begin{align}
        \alpha_{\sigma(k+1)} \prod_{i = 1}^k \left(1-\frac{z_{i, \sigma(i)}}{z_{i, \sigma(k+1)}}\right) &= 1 \nonumber\\
        \Rightarrow \prod_{i = 1}^k (z_{i, \sigma(k+1)} - z_{i, \sigma(i)}) &= \frac{1}{\alpha_{\sigma(k+1)}} \prod_{i = 1}^k z_{i, \sigma(k+1)}.\label{eq:funkyidentitysubin}
    \end{align}
    We begin by noting that the RHS of~\eqref{eq:funkyidentitysubin} is clearly nonzero, so this must also be true of the LHS.
    Consequently, for any $i \in [k]$ and distinct $j, j' \in [k+1]$ we have $z_{i, j} \neq z_{i, j'}$. It follows that $\beta_j \not\equiv \beta_{j'} \pmod{p_i}$, so in particular the $\beta_j$'s are distinct modulo $m$.

    Now, fix distinct $\ell, a, b \in [k+1]$ and distinct $r, s \in [k]$.
    We consider a permutation $\sigma$ on $[k+1]$ such that $\sigma(k+1) = \ell, \sigma(r) = a, \sigma(s) = b$.
    We also define $\sigma'$ to be identical to $\sigma$ except we swap to get $\sigma'(r) = b, \sigma'(s) = a$.
    Then~\eqref{eq:funkyidentitysubin} implies that:
    \begin{align}
        \prod_{i = 1}^k (z_{i, \ell} - z_{i, \sigma(i)}) &= \frac{1}{\alpha_{\ell}} \prod_{i = 1}^k z_{i, \ell} \nonumber \\
        &= \prod_{i = 1}^k (z_{i, \ell} - z_{i, \sigma'(i)}) \nonumber \\
        \Rightarrow (z_{r, \ell} - z_{r, a})(z_{s, \ell} - z_{s, b}) &= (z_{r, \ell} - z_{r, b})(z_{s, \ell} - z_{s, a}). \label{eq:affineequality}
    \end{align}
    At this point, we apply Lemma~\ref{lem:affinerank1} to deduce that there exist $a'_1, \ldots, a'_k \in \F \backslash \{0\}$ and distinct $t_1, \ldots, t_{k+1} \in \F$ such that $z_{i, j} = z_{i, 1} + a'_i t_j$ for all $i \in [k], j \in [k+1]$.
    To construct our root of unity grid, we will take $a_i = a'_i/z_{i, 1}$ for all $i \in [k]$ and the same $t_1, \ldots, t_{k+1}$. We check the desired conditions:
    \begin{itemize}
        \item For each $i, j$, we have $1+a_it_j = (z_{i, 1} + a'_it_j)/z_{i, 1} = z_{i, j}/z_{i, 1} \in \mu_{p_i}(\F)$.
        \item We have $a'_i \neq 0 \Rightarrow a_i \neq 0$ for all $i$.
        \item We are given by Lemma~\ref{lem:affinerank1} that $t_1, \ldots, t_{k+1}$ are distinct.
    \end{itemize}
    The conclusion follows.
\end{proof}

\subsection{Approach 1: Rough Repunits Imply Root-of-Unity Grids}\label{sec:suffcond}\label{sec:rough-rug}
In this section, we show in that to construct a root-of-unity grid, it suffices to find a sequence of rough repunits,\footnote{Repunits in base $q$ are numbers of the form $\frac{q^d - 1}{q-1}$, not $q^d - 1$, but we do not make this distinction since a rough repunit (that is, a repunit with a large prime factor) will automatically give us a rough number of the form $q^d - 1$ and vice versa.} that is, numbers of the form $q^{d_1} - 1, \ldots, q^{d_k} - 1$ such that the largest prime factor of $q^{d_i} - 1$ is at least $(q^{d_i} - 1)^{1- 1/(3k)}$ (see Corollary~\ref{cor:roughness-rug}).
One can intuit the results of this section in terms of the probabilistic method, as we sketched in Section~\ref{sec:rughowto}.

\begin{lemma}[Sufficient conditions for a root-of-unity grid]\label{lemma:suffcondsharp}
    Let $k \ge 2$ be an integer, and let $q$ be a prime. Suppose there exist positive integers $d_1, \ldots, d_k$ and pairwise distinct primes $p_1, \ldots, p_k$ such that the following hold:
    \begin{itemize}
        \item $p_i \mid q^{d_i} - 1$ for all $i \in [k]$;
        \item $q^{\gcd(d_1, \ldots, d_k)} \geq k+1$; 
        \item for all $i \in [k]$, we have that
        \begin{align*}
            \frac{q^{d_i} - k}{h_i^k} > (k-1) \sqrt{q^{d_i}} (1-1/h_i^k) + 1,
        \end{align*}
        where $h_i := \frac{q^{d_i} - 1}{p_i}$. 
    \end{itemize}
    Then, there exists a $(\F_{q^D}, m)$-root-of-unity grid, where $D = \mathrm{lcm}(d_1, \ldots, d_k) $ and $ m = p_1 \cdots p_k$.
\end{lemma}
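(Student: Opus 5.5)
We fix the grid points first. Let $g = \gcd(d_1,\ldots,d_k)$. Since $q^g \ge k+1$, the subfield $\F_{q^g}$ has at least $k+1$ elements, so we choose distinct $t_1,\ldots,t_{k+1}\in\F_{q^g}$ with $t_{k+1}=0$; write $T=\{t_1,\ldots,t_k\}$, a set of $k$ distinct nonzero elements. Because $g\mid d_i$, every $t_j$ lies in $\F_{q^{d_i}}$ for each $i$. The condition $1+a_i t_j\in\mu_{p_i}$ then involves only the single unknown $a_i$. So the plan is to find, for each $i$ separately, a nonzero $a_i\in\F_{q^{d_i}}$ with $1+a_i t\in\mu_{p_i}(\F_{q^{d_i}})$ for all $t\in T$; the condition for $t_{k+1}=0$ holds automatically.

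Finally embed everything in $\F_{q^D}$, where $D=\mathrm{lcm}(d_i)$. It contains every $\F_{q^{d_i}}$, and $\mu_{p_i}(\F_{q^{d_i}})=\mu_{p_i}(\F_{q^D})$ because the $p_i$th roots of unity in a field are unique. Since the $p_i$ are distinct primes each dividing $q^D-1$, we get $m\mid q^D-1$, and the $a_i,t_j$ form an $(\F_{q^D},m)$-root-of-unity grid.

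For fixed $i$, set $Q=q^{d_i}$, $p=p_i$, $h=h_i=(Q-1)/p$. We count
\[
N=\sum_{a\in\F_Q}\prod_{t\in T}\mathbbm{1}[1+at\in\mu_p]
\]
using Lemma~\ref{lemma:characterorthogonality}:
\[
N=h^{-k}\sum_{\chi_1,\ldots,\chi_k\in\mu_p^\perp}\ \sum_{a\in\F_Q}\prod_{t\in T}\chi_t(1+at).
\]
\begin{itemize}
\item \emph{All characters trivial.} The term is the number of $a$ with $1+at\neq0$ for all $t\in T$, which is $Q-k$ (the excluded points $-1/t$ are distinct).
\item \emph{Some character nontrivial.} Rewrite $\prod_t\chi_t(1+at)=\chi(F(a))$ with $\chi$ a generator of the cyclic group $\mu_p^\perp$ (of order $h$) and $F(X)=\prod_t(1+tX)^{e_t}$, where $0\le e_t<h$ and not all $e_t$ are zero. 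Then $F$ is nonconstant, has at most $k$ distinct roots, and is not a constant times an $h$th power, because the nonzero exponents $e_t<h$ sit on distinct roots. Lemma~\ref{lemma:weil} bounds each such sum by $(k-1)\sqrt Q$. There are $h^k-1$ such tuples.
\end{itemize}
Hence
\[
N\ \ge\ \frac{Q-k}{h^k}-(k-1)\sqrt Q\,(1-h^{-k}).
\]

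The value $a=0$ is always counted and must be discarded, so we need $N>1$. That is exactly the third hypothesis, which gives a nonzero $a_i$.

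The main obstacle is the Weil-bound step: checking that every nontrivial character tuple gives a polynomial that is not of the form $c\cdot g^h$. A subtle point is characters of order less than $h$, where the exponent can be reduced; I would handle this by working with the order of $\chi$ and reducing the $e_t$ modulo that order. This stays valid since some $\chi_t$ is nontrivial, and the number of distinct roots is at most $k$, so the bound $(k-1)\sqrt Q$ still holds.
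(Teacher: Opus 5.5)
Your proposal follows the paper's proof essentially step for step. You fix $T$ inside $\F_{q^{g}}$ with $t_{k+1}=0$, count the admissible $a_i$ in $\F_{q^{d_i}}$ by expanding indicators over $\mu_{p_i}^\perp$, bound the nontrivial tuples with Weil, and embed everything in $\F_{q^D}$. The paper packages the middle step as Claim~\ref{clm:weil-counting} with $Q=q^{g}$ and $d=d_i/g$, and treats $h=1$ separately; in your version that case is just the empty sum.

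There is one slip in the Weil step that needs fixing. The orthogonality expansion (Lemma~\ref{lemma:characterorthogonality}) is valid at $x=0$ only because \emph{every} character, including the trivial one, is set to $0$ there. You use this correctly when you count $Q-k$ for the all-trivial tuple. But with exponents $0\le e_t<h$, a trivial $\chi_t$ becomes the factor $(1+tX)^0=1$ in $F$. So $\prod_t\chi_t(1+at)=\chi(F(a))$ fails at each point $a=-1/t$ with $e_t=0$: the left side is $0$, while the right side is a root of unity because the $t$'s are distinct. The third hypothesis is exactly the inequality $N>1$ with no slack, so an unaccounted error of up to $k-1$ per tuple cannot be ignored.

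The fix is easy. As in the paper, take exponents $r_t\in\{1,\ldots,h\}$, so that a trivial $\chi_t$ contributes $(1+tX)^h$, which satisfies $\chi((1+ta)^h)=\mathbbm{1}[1+ta\neq 0]$. Then $F$ has exactly $k$ distinct roots, and it is not a constant times an $h$th power because some $r_t$ lies in $\{1,\ldots,h-1\}$. Alternatively, keep your $F$ with its $r$ distinct roots, apply Weil to get $(r-1)\sqrt Q$, add at most $k-r$ for the mismatched points, and note that $(r-1)\sqrt Q+(k-r)\le(k-1)\sqrt Q$.

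Your closing worry about characters of order less than $h$ is moot. You always apply Weil to the fixed generator $\chi$, which has order $h$, and your argument that $F$ is not $c\,g^h$ already covers every tuple.
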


\begin{proof}
    Recall that a product $m = p_1 \cdots p_k$ of $k$ primes is defined to have a root-of-unity grid if there exists a field $\F$ containing a primitive $m$th root of unity $\gamma$, $a_1, \ldots, a_k \in \F^\times$, and distinct $t_1, \ldots, t_{k+1} \in \F$ such that $1 + a_i t_j \in \mu_{p_i}(\F)$ for all $i \in [k]$, $j \in [k+1]$. Note that without loss of generality, we can take $t_{k+1} = 0$.

    Suppose $m$ is such that there exist $q, d_1, \ldots, d_k, p_1, \ldots, p_k$ satisfying the preconditions of the statement.
    Fix arbitrary $T = \{t_1, \ldots, t_k\} \subseteq \F_{q^{\gcd(d_1, \ldots, d_k)}} \setminus \{0\}$. Such a $T$ can be chosen since $q^{\gcd(d_1, \ldots, d_k)} \geq k+1$.
    Applying Claim~\ref{clm:weil-counting} with $Q = q^{\gcd(d_1, \ldots, d_k)}$, $d = d_i/\gcd(d_1, \ldots, d_k)$, and $p = p_i$ for each $i \in [k]$, we know that there exists $a_i \in \F_{q^{d_i}}^\times$ such that $1 + a_i T \subseteq \mu_{p_i} (\F_{q^{d_i}})$.

    To finish, we take $\F = \F_{q^D}$, where $D = \mathrm{lcm}(d_1, \ldots d_k)$, to be a common extension of all the $\F_{q^{d_i}}$ fields and the $a_1, \ldots, a_k$ chosen above.
    $\F$ has a primitive $m$th root of unity since $p_i \mid q^{d_i} - 1 \Rightarrow p_i \mid q^D - 1 \forall i \Rightarrow m \mid q^D - 1 = |\F| - 1$.
    This yields the desired root-of-unity grid.
\end{proof}

\begin{claim}\label{clm:weil-counting}
    Let $k \ge 2$ be an integer, and let $Q \geq k+1$ be a prime power. Let $T = \{t_1, \ldots, t_k\} \subseteq \F_Q  \setminus \{0\}$ be an arbitrary set of $k$ distinct nonzero elements. Suppose $d, p$ are positive integers with $p$ prime such that $p \mid Q^d-1$ and
    \begin{align}
        \frac{Q^d - k}{h^k} &> (k-1)\sqrt{Q^d} (1-1/h^k) + 1,\label{eq:weilprecondition}
    \end{align}
    where $h := \frac{Q^d - 1}{p}$.
    Then, there exists some $a \in \F_{Q^d} \setminus \{0\}$ such that $1 + a t_j \in \mu_p(\F_{Q^d})$ for every $j \in [k]$.
\end{claim}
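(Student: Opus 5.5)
We count the elements $a \in \F_{Q^d}$ with $1 + a t_j \in \mu_p(\F_{Q^d})$ for every $j \in [k]$, using character sums, and show the count is positive. Write $\F := \F_{Q^d}$ and $N := Q^d$. Lemma~\ref{lemma:characterorthogonality} gives an indicator of $\mu_p$: for $x \in \F$,
\[
\mathbbm{1}[x \in \mu_p(\F)] = \frac{1}{h}\sum_{\chi \in \mu_p^\perp(\F)} \chi(x),
\]
since $|\mu_p^\perp(\F)| = (N-1)/p = h$. Multiplying these over $j$ and summing over $a \in \F$ gives the count
\[
A := \sum_{a \in \F} \prod_{j=1}^k \mathbbm{1}[1+at_j \in \mu_p] = \frac{1}{h^k}\sum_{\chi_1,\ldots,\chi_k \in \mu_p^\perp} \sum_{a \in \F} \prod_{j=1}^k \chi_j(1+at_j).
\]
The value $a = 0$ contributes exactly $1$ to $A$, because $1 \in \mu_p$. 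It therefore suffices to show $A > 1$.

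\textbf{Main term.} Take the tuple with all $\chi_j$ trivial. With the convention $\chi(0)=0$, the inner sum counts the $a$ with $1+at_j \neq 0$ for all $j$. The $t_j$ are distinct and nonzero, so the values $-t_j^{-1}$ are $k$ distinct elements, and the inner sum equals $N - k$. This produces the term $(Q^d - k)/h^k$ in \eqref{eq:weilprecondition}.

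\textbf{Error terms.} The remaining $h^k - 1$ tuples have some $\chi_j$ nontrivial. Fix a generator $\psi$ of the cyclic group $\mu_p^\perp$, which has order $h$, and write $\chi_j = \psi^{e_j}$ with $e_j \in [0,h-1]$. Then
\[
\prod_j \chi_j(1+at_j) = \psi\Bigl(\prod_j (1+at_j)^{e_j}\Bigr) = \psi(f(a)), \qquad f(X) := \prod_j (1+t_jX)^{e_j}.
\]
Here $\chi(0)=0$ is consistent: whenever some factor vanishes with $e_j>0$, both sides are $0$. There is one subtlety when $e_j = 0$: the trivial character gives $\chi_j(0) = 0$, not $1$. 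The product still equals $\psi(f(a))$ except at the at most $k$ roots $-t_j^{-1}$, so this adds an error of at most $k$ per tuple. I expect this bookkeeping to be the main obstacle, namely getting the constant to match \eqref{eq:weilprecondition} exactly. To handle it cleanly, I would first restrict the sum to $a \notin \{-t_j^{-1}\}$ from the start. Then the main term is exactly $N-k$, and each error sum differs from the complete sum $\sum_{a \in \F}\psi(f(a))$ only at points where $f(a)=0$ for every nontrivial part. These points contribute $0$ to the complete sum whenever $e_j>0$; for the $j$ with $e_j = 0$ they contribute a factor that is $0$ or $1$. I would check this carefully, bounding each adjusted sum by $(k-1)\sqrt{N}$.

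Now apply the Weil bound, Lemma~\ref{lemma:weil}, with character $\psi$ of order $h$. The polynomial $f$ has at most $k$ distinct roots $-t_j^{-1}$, and these roots are distinct with multiplicities $e_j$. The form $f = c\,g^h$ would require every $e_j \equiv 0 \pmod h$, hence all $e_j = 0$, which is excluded. Also $f$ is nonconstant. So each such sum is at most $(k-1)\sqrt{N}$ in absolute value.

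\textbf{Conclusion.} Combining the terms,
\[
A \ge \frac{N-k}{h^k} - \frac{h^k-1}{h^k}(k-1)\sqrt{N} = \frac{Q^d-k}{h^k} - (k-1)\sqrt{Q^d}\,(1-1/h^k),
\]
and this is greater than $1$ by \eqref{eq:weilprecondition}. Hence some $a \neq 0$ satisfies $1 + a t_j \in \mu_p(\F_{Q^d})$ for all $j \in [k]$, which proves Claim~\ref{clm:weil-counting}.

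Two facts make the roots $-t_j^{-1}$ distinct and nonzero: $T \subseteq \F_Q \setminus \{0\}$ with distinct elements, and $Q \ge k+1$ guarantees such a $T$ exists.
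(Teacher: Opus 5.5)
Your argument follows the paper's: expand $\mathbbm{1}[x\in\mu_p]$ over the $h$ characters of $\mu_p^\perp$, take the all-trivial tuple as the main term $(Q^d-k)/h^k$, and bound each of the other $h^k-1$ tuples by $(k-1)\sqrt{Q^d}$ via Lemma~\ref{lemma:weil}. However, the one step you leave unchecked is not right as written. With exponents $e_j\in[0,h-1]$ and $N=Q^d$, the error term for a tuple is $S=\sum_{a\notin R}\psi(f(a))$, where $R:=\{-t_i^{-1}\}$. (This holds whether you restrict to $a\notin R$ explicitly or use $\chi(0)=0$ also for the trivial character.) This differs from the complete sum $\sum_{a\in\F}\psi(f(a))$ by $\sum_{i:\,e_i=0}\psi(f(-t_i^{-1}))$. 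These points do not contribute ``$0$ or $1$''. Since $f(-t_i^{-1})=\prod_{j\neq i}(1-t_j/t_i)^{e_j}\neq 0$, each contributes a root of unity of modulus exactly $1$. Suppose you bound the complete sum by $(k-1)\sqrt{N}$ using only ``$f$ has at most $k$ roots''. Then you get $|S|\le(k-1)\sqrt N+(k-k')$, where $k'=\#\{j:e_j>0\}$, and this does not yield the exact constant in \eqref{eq:weilprecondition}.

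The repair is to use the exact root count. The polynomial $f$ has exactly $k'\ge 1$ distinct roots, so Weil gives $(k'-1)\sqrt N$ for the complete sum. Hence $|S|\le(k'-1)\sqrt N+(k-k')\le(k-1)\sqrt N$, since $\sqrt N\ge 1$. The paper sidesteps the issue by writing $\chi_j=\psi^{r_j}$ with $r_j\in\{1,\dots,h\}$, where $r_j=h$ is the trivial character. Every exponent is then positive, so $\prod_j\psi^{r_j}(1+at_j)=\psi\bigl(\prod_j(1+at_j)^{r_j}\bigr)$ holds for every $a\in\F$ under the $\chi(0)=0$ convention. The polynomial then always has exactly $k$ distinct roots, and Weil applies directly with no correction term. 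Finally, Lemma~\ref{lemma:weil} needs $\psi$ nontrivial, i.e.\ $h>1$. When $h=1$ there are no error tuples, and the count is exactly $N-k>1$; the paper disposes of this case first.
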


\begin{proof}
We begin by disposing of the case where $h = 1$.
In this case, $\mu_p(\F_{Q^d}) = \F_{Q^d}^\times$, so all we need to ensure is that $1+at_j \neq 0$ for all $j \in [k]$.
This rules out exactly $k$ values of $a$, so it suffices to check that $Q^d-1 > k$, which is immediate from~\eqref{eq:weilprecondition}.
So assume from now on that $h > 1$.

For fixed $p, d$, the existence of some $a \in \F_{Q^d} \backslash \{0\}$ such that $1 + a T \subseteq \mu_p(\F_{Q^d})$ is equivalent to the integer
\begin{align*}
    N_{p, d} := \# \{\lambda \in \F_{Q^d} : 1 + \lambda T \subseteq \mu_p(\F_{Q^d})\}
\end{align*}
being strictly greater than 1 (it will be at least 1 due to the trivial case of $\lambda = 0$). Let $h := \frac{Q^d - 1}{p}$, and let $\mu_p^\perp \le \widehat{\F_{Q^d}^\times}$ be the annihilator of $\mu_p$ as defined in Definition~\ref{def:annihilator}. 
In the following, we will use the convention specified in Definition~\ref{def:character} and also define $\chi(0) = 0$ for any character $\chi$ on $\F_{Q^d}^\times$.
Then $N_{p, d}$ can be expressed as the character sum:
\begin{align*}
    N_{p,d} &= \sum_{\lambda \in \F_{Q^d}} \prod_{j \in [k]} \mathbbm{1}\left[1 + \lambda t_j \in \mu_p(\F_{Q^d}) \right]\\
    &= \sum_{\lambda \in \F_{Q^d}} \prod_{j \in [k]} \left(\frac{1}{h} \sum_{\chi \in \mu_p^\perp} \chi(1 + \lambda t_j)\right) \text{ (Lemma~\ref{lemma:characterorthogonality})}\\
    &= \frac{1}{h^k} \sum_{\chi_1, \ldots, \chi_k \in \mu_p^\perp} \sum_{\lambda \in \F_{Q^d}} \left(\prod_{j \in [k]} \chi_j(1 + \lambda t_j)\right).
\end{align*}
Then, the all-trivial character tuple $(\chi_1, \ldots, \chi_k) = (1, \ldots, 1)$ contributes $\frac{Q^d - k}{h^k}$ to the sum, because there are exactly $k$ values of $\lambda$ such that $1+\lambda t_j = 0$ for some $j$.

Recall from Definition~\ref{def:annihilator} that $\mu_p^\perp$ is a cyclic group of order $h$.
So letting $\chi$ be a generator of $\mu_p^\perp$, we can rewrite $\chi_j = \chi^{r_j}$ for $r_j \in \{1, 2, \ldots, h\}$.
\begin{align*}
    N_{p,d} &= \frac{Q^d - k}{h^k} + \frac{1}{h^k} \sum_{(r_1, \ldots, r_k) \in \{1, \ldots, h\}^k \setminus \{h^k\}} \sum _{\lambda \in \F_{Q^d}} \left(\prod_{j \in [k]} \chi^{r_j}(1 + \lambda t_j)\right)\\
    &= \frac{Q^d - k}{h^k} + \frac{1}{h^k} \sum_{(r_1, \ldots, r_k) \in \{1, \ldots, h\}^k \setminus \{h^k\}} \sum _{\lambda \in \F_{Q^d}} \chi \left(\prod_{j \in [k]} (1 + \lambda t_j)^{r_j}\right)\\
    &\ge \frac{Q^d - k}{h^k} - \frac{1}{h^k} \sum_{(r_1, \ldots, r_k) \in \{1, \ldots, h\}^k \setminus \{h^k\}} \left|\sum _{\lambda \in \F_{Q^d}} \chi (f_{r_1, \ldots, r_k}(\lambda))\right|\\
\end{align*}
where $f_{r_1, \ldots, r_k}(\lambda) := \prod_{j \in [k]} (1 + \lambda t_j)^{r_j}$, and the above inequality follows from the triangle inequality.
We now apply the Weil bound for character sums (Lemma~\ref{lemma:weil}) to $f_{r_1,\ldots, r_k}$ and $\chi$.
We check the preconditions:
\begin{itemize}
    \item We are assuming that $h > 1$.
    \item $\chi$ is defined to be a generator of $\mu_p^\perp$, so it has order $h$.
    \item Note that the factors $(1+\lambda t_j)^{r_j}$ are pairwise coprime (viewed as elements of $\F_Q[\lambda]$) by distinctness of the $t_j$'s.
    Consequently, $f_{r_1, \ldots, r_k}$ can only be an $h$th power (up to scaling) if each of these factors is an $h$th power (up to scaling).
    For this to be true, we need $r_j \equiv 0 \pmod{h}$ for all $j$, which contradicts the fact that we are only applying the bound when $(r_1, \ldots, r_k) \neq (h, \ldots, h)$.
    \item Finally, it is clear from definition that $f_{r_1, \ldots, r_k}$ has $k$ distinct roots.
\end{itemize}
Therefore applying Lemma~\ref{lemma:weil} tells us that:
\begin{align*}
    N_{p,d} &\ge \frac{Q^d - k}{h^k} - \frac{1}{h^k} \sum_{(r_1, \ldots, r_k) \in \{1, \ldots, h\}^k \setminus h^k} (k-1) \sqrt{Q^d}\\
    &\ge \frac{Q^d - k}{h^k} - \frac{h^k - 1}{h^k} (k-1) \sqrt{Q^d}\\
    &= \frac{Q^d - k}{h^k} - (k-1)\sqrt{Q^d} (1-1/h^k),
\end{align*}
which by assumption is $> 1$, as desired.
\end{proof}

\begin{corollary}[Rough repunits suffice for root-of-unity grids]\label{cor:roughness-rug}
    Let $k \ge 2$ be an integer. Suppose there exists a prime $q$ and pairwise distinct positive integers $d_1, \ldots, d_k$ such that the following hold:
    \begin{itemize}
        \item $q \geq k+1$ and $d_i \ge 6$ for all $i \in [k]$, and 
        \item $q^{d_i}-1$ is $\epsilon$-rough, for $\epsilon = 1/(3k)$ for all $i \in [k]$.
    \end{itemize}
    Then, there exists a $(\F_{q^D}, m)$-root-of-unity grid where $D := \mathrm{lcm}(d_1, \ldots, d_k) $, $m := p_1 \cdots p_k$, and $p_i := P^+(q^{d_i}-1)$.
\end{corollary}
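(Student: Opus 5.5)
The plan is to apply Lemma~\ref{lemma:suffcondsharp} with $p_i := P^+(q^{d_i}-1)$ and the given $q, d_1, \ldots, d_k$. This requires checking four things in turn: the $p_i$ are pairwise distinct primes, $p_i \mid q^{d_i}-1$, $q^{\gcd(d_1,\ldots,d_k)} \geq k+1$, and the Weil-type inequality for each $i$.

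The first three checks are immediate.
\begin{itemize}
    \item Divisibility holds by definition of $P^+$.
    \item Distinctness follows from Lemma~\ref{lem:distinct-degrees-implies-distinct-primes}. This applies because the $d_i$ are distinct and $\geq 6$, and because $\epsilon = 1/(3k) \leq 1/6$ as $k \geq 2$.
    \item Since $\gcd(d_1,\ldots,d_k) \geq 1$, we get $q^{\gcd(d_1,\ldots,d_k)} \geq q \geq k+1$.
\end{itemize}

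The main step is the inequality
\[
\frac{q^{d_i}-k}{h_i^k} > (k-1)\sqrt{q^{d_i}}\,(1-1/h_i^k) + 1, \qquad h_i = \frac{q^{d_i}-1}{p_i}.
\]
Write $N := q^{d_i}$. Roughness gives $p_i \geq (N-1)^{1-1/(3k)}$, hence $h_i \leq (N-1)^{1/(3k)} \leq N^{1/(3k)}$ and $h_i^k \leq N^{1/3}$. The left side is therefore at least $(N-k)/N^{1/3}$. It suffices to show
\[
\frac{N-k}{N^{1/3}} > (k-1)N^{1/2} + 1,
\]
that is, roughly $N^{2/3} - kN^{-1/3} > (k-1)N^{1/2} + 1$.

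This holds once $N^{1/6}$ comfortably exceeds $k$. Here we use $q \geq k+1$ and $d_i \geq 6$, which give $N^{1/6} = q^{d_i/6} \geq q \geq k+1$. Multiplying through by $N^{-1/2}$, we need
\[
N^{1/6} > (k-1) + N^{-1/2} + kN^{-5/6}.
\]
Since $N \geq (k+1)^6 \geq 3^6$, the two error terms sum to well under $1$. Meanwhile $N^{1/6} \geq k+1 = (k-1) + 2$, so the inequality holds with room to spare.

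The only care needed is bookkeeping around $N$ versus $N-1$ in the roughness bound. This is harmless: $h_i^k \leq (N-1)^{1/3} < N^{1/3}$ goes in the right direction. Having verified all hypotheses, Lemma~\ref{lemma:suffcondsharp} yields the $(\F_{q^D}, m)$-root-of-unity grid with $D = \mathrm{lcm}(d_1,\ldots,d_k)$ and $m = p_1\cdots p_k$.
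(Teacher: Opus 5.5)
Your proposal is correct and follows the paper's proof essentially step for step: distinctness of the $p_i$ comes from Lemma~\ref{lem:distinct-degrees-implies-distinct-primes}, the gcd condition from $q \geq k+1$, and the Weil-type inequality from $h_i^k \le q^{d_i/3}$ together with $q^{d_i/6} \geq q \geq k+1$. The paper substitutes $x = q^{d_i/6}$ where you multiply through by $N^{-1/2}$, which is the same computation; your explicit note that $\epsilon = 1/(3k) \le 1/6$ is needed to invoke the distinctness lemma is a point the paper leaves implicit.
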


\begin{proof}
    By Lemma~\ref{lem:distinct-degrees-implies-distinct-primes}, we see that the primes $p_i$ must be pairwise distinct. Since $q \geq k + 1$, we automatically get that $q^{\gcd(d_1, \ldots, d_k)} \geq k+1$, so all that remains is to deduce the third bullet of the hypothesis in Lemma~\ref{lemma:suffcondsharp}.

    Since $p_i \ge (q^{d_i} - 1)^{1 - 1/(3k)}$ for all $i$, we get that
    \begin{align*}
        h_i := \frac{q^{d_i} - 1}{p_i} \le (q^{d_i} - 1)^{1/(3k)} \le q^{d_i/(3k)}.
    \end{align*}
    Now, writing $x = q^{d_i/6} \geq k$, we have:
    \begin{align*}
        \frac{q^{d_i} - k}{h_i^k} - (k-1)\sqrt{q^{d_i}} (1-1/h_i^k) &> \frac{q^{d_i} - k}{q^{d_i/3}} - (k-1)\sqrt{q^{d_i}} \\
        &= x^4 - k/x^2 - (k-1)x^3 \\
        &= x^3(x-(k-1)) - k/x^2 \\
        &\geq k^3 - k/x^2 \\
        &\geq k^3 - 1/k \\
        &> 1.
    \end{align*}
    Therefore, all the preconditions of Lemma~\ref{lemma:suffcondsharp} are met, and the statement follows.
\end{proof}

\subsection{Approach 2: Prime Field Norms Imply Root-of-Unity Grids}\label{sec:prime-norm-rug}

In this section, we present a more algebraically-structured approach to constructing root-of-unity grids.
This will not be used in any of our main results, but we show it for completeness and as an abstract explanation of why Efremenko's decoding polynomial for $m = 7 \cdot 73$ and $k = 2$ exists~\cite{Efr09} (which was not captured by the results of~\cite{Chee2011}).

\begin{lemma}\label{lem:normprimegivesrug}
    Let $Q$ be a prime power and $d > 1$ a positive integer such that $(Q^d-1)/(Q-1) = p$ is prime.
    Let $T = \{t_1, \ldots, t_k\} \subseteq \F_Q^\times$ be an arbitrary set of $k$ distinct nonzero elements.
    Assume there exists a polynomial $P(X) \in \F_Q[X]$ such that:
    \begin{itemize}
        \item $\deg P = d$ and $P(0) = 1$;
        \item $P(t) = 1$ for every $t \in T$; and
        \item the polynomial $f(X) := X^d \cdot P(-1/X) \in \F_Q[X]$ is irreducible.
    \end{itemize}
    Then there exists $a \in \F_{Q^d}^\times$ such that $1+aT \subseteq \mu_p(\F_{Q^d})$.
\end{lemma}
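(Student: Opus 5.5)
Let $a \in \F_{Q^d}$ be a root of the irreducible polynomial $f(X) = X^d P(-1/X)$. The goal is to show that $N_{\F_{Q^d}/\F_Q}(1+at) = 1$ for every $t \in T$. Since $p = (Q^d-1)/(Q-1)$, the group $\mu_p(\F_{Q^d})$ is exactly the set of elements $x$ with $x^{1+Q+\cdots+Q^{d-1}} = 1$, which is the kernel of the norm map. Hence norm $1$ is equivalent to membership in $\mu_p$.

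\textbf{Step 1: $a$ is a nonzero element of $\F_{Q^d}$.} Because $f$ is irreducible of degree $d$ over $\F_Q$, its roots lie in $\F_{Q^d}$. We must check that $\deg f = d$. Write $P(X) = \sum_{i=0}^d b_i X^i$ with $b_0 = 1$ and $b_d \neq 0$. Then
\[
f(X) = \sum_{i=0}^d b_i (-1)^i X^{d-i},
\]
whose leading coefficient is $b_0 = 1$ and whose constant term is $(-1)^d b_d \neq 0$. So $f$ is monic of degree $d$, and $a \neq 0$. Irreducibility of $f$ also gives that the $d$ conjugates $a, a^Q, \ldots, a^{Q^{d-1}}$ are exactly the $d$ distinct roots of $f$. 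Therefore
\[
f(X) = \prod_{\sigma} (X - \sigma(a)),
\]
where $\sigma$ ranges over $\Aut(\F_{Q^d}/\F_Q)$.

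\textbf{Step 2: compute the norm.} Fix $t \in T$. Since $t \in \F_Q$, every $\sigma$ fixes $t$, so
\[
N(1+at) = \prod_\sigma (1+\sigma(a)t) = \prod_\sigma t\,\bigl(t^{-1} + \sigma(a)\bigr) = t^d \prod_\sigma \bigl(\sigma(a) - (-t^{-1})\bigr).
\]
The last product equals $(-1)^d f(-t^{-1})$. By definition of $f$,
\[
f(-1/t) = (-1/t)^d \, P\bigl(-1/(-1/t)\bigr) = (-1)^d t^{-d} P(t).
\]
Combining these,
\[
N(1+at) = t^d \cdot (-1)^d \cdot (-1)^d t^{-d} P(t) = P(t) = 1.
\]
Thus $1 + at \in \ker N = \mu_p(\F_{Q^d})$, as required. (As a consistency check, the same computation at $t = 0$ recovers $P(0) = 1$.)

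\textbf{Main obstacle.} The only delicate step is the sign and power bookkeeping in the reciprocal-polynomial identity relating $\prod_\sigma(1+\sigma(a)t)$ to $P(t)$. The other ingredients are routine: the degree of $f$, the fact that $a \neq 0$, and the identification of $\mu_p$ with the norm kernel, which uses surjectivity of the norm onto $\F_Q^\times$ together with $|\ker N| = (Q^d-1)/(Q-1) = p$. One could instead verify directly that $P(X) = \prod_\sigma(1+\sigma(a)X)$, since both sides have constant term $1$ and the same roots $-1/\sigma(a)$. Evaluating at $X = t$ then gives the norm immediately, and this formulation is probably cleaner.
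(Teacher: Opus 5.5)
Your proof is correct and follows the paper's argument essentially step for step: take $a$ to be a root of $f$, use irreducibility and monicity to write $f(X)=\prod_{i=0}^{d-1}(X-a^{Q^i})$, and unwind $(1+at)^{p}=\prod_{i}(1+a^{Q^i}t)=(-t)^d f(-1/t)=P(t)=1$. Your appeal to surjectivity of the norm is unnecessary: $N(x)=x^{1+Q+\cdots+Q^{d-1}}=x^{p}$ by definition, so $\ker N=\mu_p(\F_{Q^d})$ immediately.
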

\begin{proof}
    Since $f \in \F_Q[X]$ is an irreducible polynomial over $\F_Q$ of degree $d$, we know that all its roots must appear in $\F_{Q^d}$.
    We also know that $f(0) \neq 0$.
    We will take $a \in \F_{Q^d}^\times$ to be any root of $f$ and argue that it satisfies the given conditions.
    Firstly, note that $f \in \F_Q[X]$ so for any automorphism $\sigma \in \Aut(\F_{Q^d}/\F_Q)$ we have $f(\sigma(a)) = 0$.
    So $a, a^Q, a^{Q^2}, \ldots, a^{Q^{d-1}}$ are all roots of $f$.
    Moreover, they must be distinct; if some two of these are equal then powering both sides implies that there is a $d' < d$ such that $a^{Q^{d'}} = a$, which means that $a \in \F_{Q^{d'}}$ which contradicts that fact that $a$'s minimal polynomial over $\F_Q$ has degree $d$.
    Also, $f(X)$ must be monic since $P(0) = 1$.
    Consequently, we must have:
    $$f(X) = \prod_{i = 0}^{d-1} (X-a^{Q^{i}}).$$
    Now, for any $t \in \F_Q^\times$, we have:
    \begin{align*}
        (1+at)^p &= (1+at)^{(Q^d-1)/(Q-1)} \\
        &= (1+at)^{1+Q+Q^2+\ldots+Q^{d-1}} \\
        &= \prod_{i = 0}^{d-1} (1+at)^{Q^i} \\
        &= \prod_{i = 0}^{d-1} (1+a^{Q^i}t) &\text{since $t \in \F_Q \Rightarrow t^{Q^i} = t$} \\
        &= (-t)^d \prod_{i = 0}^{d-1} (-1/t-a^{Q^i}) \\
        &= (-t)^d f(-1/t) \\
        &= (-t)^d \cdot (-1/t)^d \cdot P(t) \\
        &= P(t).
    \end{align*}
    So for any $t \in T$, we have $(1+at)^p = P(t) = 1 \Rightarrow 1+at \in \mu_p(\F_{Q^d})$, as desired.
\end{proof}

This approach also explains Efremenko’s example $m=7 \times 73$ over $\F_{2^9}$: taking $Q = 8$ and $d = 3$ so that $(8^3-1)/(8-1)=73$ is prime, we can obtain the $73$-row of the root-of-unity grid from the above argument over the extension $\F_{8^3}/\F_8$. The $7$-row is simpler to construct, by using the observation that $\F_8^\times = \mu_7(\F_{2^9})$. Thus, after choosing a suitable two-point set $T$ inside $\F_8$, the two rows together form a root-of-unity grid over $\F_{2^9}$, and hence give a 3-sparse decoding polynomial for $m=7\times 73$.

\begin{corollary}[Rederivation of Efremenko's decoding polynomial~\cite{Efr09}]
    Let $m = 7 \times 73$. Then there exists an $S^*_m$-decoding polynomial over $\F_{2^9}$ of sparsity 3.
\end{corollary}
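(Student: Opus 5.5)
The plan is to exhibit an $(\F_{2^9}, 7\cdot 73)$-root-of-unity grid and then invoke Lemma~\ref{lem:rugisgood} with $k=2$, $p_1=7$, $p_2=73$. Since $2^9-1=511=7\cdot 73$, we have $m \mid |\F_{2^9}|-1$, so the field hypothesis is met. We need nonzero $a_1,a_2\in\F_{2^9}$ and distinct $t_1,t_2,t_3$ (we take $t_3=0$) such that $1+a_1t_j\in\mu_7(\F_{2^9})$ and $1+a_2t_j\in\mu_{73}(\F_{2^9})$ for $j=1,2$.

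\emph{The $7$-row.} We use $\mu_7(\F_{2^9})=\F_8^\times$, since $\F_8\subseteq\F_{2^9}$ and $|\F_8^\times|=7$. We choose $t_1,t_2\in\F_8^\times$ distinct and set $a_1=1$. Then $1+t_j\in\F_8$, and $1+t_j\neq 0$ as long as $t_j\neq 1$. So we pick $t_1,t_2\in\F_8\setminus\{0,1\}$ distinct; this is possible because $|\F_8|=8$. Then $1+a_1t_j\in\F_8^\times=\mu_7$, and the $t_3=0$ entry is trivially $1$.

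\emph{The $73$-row.} We apply Lemma~\ref{lem:normprimegivesrug} with $Q=8$, $d=3$, so $(Q^3-1)/(Q-1)=73$ is prime and $\F_{Q^d}=\F_{2^9}$. We need a cubic $P\in\F_8[X]$ with $P(0)=1$ and $P(t_1)=P(t_2)=1$, such that $f(X)=X^3P(-1/X)$ is irreducible over $\F_8$. The constraints force
\[
P(X)=1+X(X-t_1)(X-t_2)\,c+X(X-t_1)(X-t_2)\cdot 0,
\]
more precisely $P(X)=1+cX(X-t_1)(X-t_2)$ with $c\in\F_8^\times$, so that $\deg P=3$. Then $f(X)=X^3+c(1+t_1X)(1+t_2X)$, up to signs, which vanish in characteristic $2$. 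Being a cubic, $f$ is irreducible over $\F_8$ iff it has no root in $\F_8$. So the task reduces to finding $t_1,t_2\in\F_8\setminus\{0,1\}$ and $c\in\F_8^\times$ such that $x^3\neq c(1+t_1x)(1+t_2x)$ for every $x\in\F_8$.

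\emph{The main obstacle.} This existence step is the only real difficulty, and I expect to settle it by explicit verification. The counting heuristic is favourable. For fixed $t_1,t_2$, each $x\in\F_8$ with $(1+t_1x)(1+t_2x)\neq0$ and $x\neq0$ excludes exactly one value of $c$, namely $c=x^3/((1+t_1x)(1+t_2x))$. The values $x=0$, $x=t_1^{-1}$ and $x=t_2^{-1}$ exclude nothing, because $f(0)=c\neq0$ and $f(t_j^{-1})=t_j^{-3}\neq0$. So at most $8-3=5$ values of $c$ are excluded out of the $7$ in $\F_8^\times$, and a valid $c$ always exists. This also makes the existence of $c$ rigorous rather than merely heuristic. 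Lemma~\ref{lem:normprimegivesrug} then gives $a_2\in\F_{2^9}^\times$ with $1+a_2t_j\in\mu_{73}$.

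\emph{Conclusion.} The data $a_1,a_2$ and $t_1,t_2,0$ form a root-of-unity grid, and Lemma~\ref{lem:rugisgood} yields a sparsity-$3$ $S^*_{511}$-decoding polynomial over $\F_{2^9}$.
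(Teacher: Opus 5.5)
Your proposal is correct and follows the paper's proof: you build a root-of-unity grid whose $7$-row comes from $\F_8^\times=\mu_7(\F_{2^9})$ and whose $73$-row comes from Lemma~\ref{lem:normprimegivesrug} with $Q=8$, $d=3$, and then apply Lemma~\ref{lem:rugisgood}. The only difference is cosmetic. The paper fixes $a_1=\theta$, $T=\{1,\theta\}$ and $P(X)=1+X(X-1)(X-\theta)$, and checks directly that $f$ has no root in $\F_8$; you instead leave the scalar $c$ free and show by counting that at most $5$ of the $7$ choices of $c$ fail, which replaces the explicit root check.
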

\begin{proof}
    Let $p_1 = 7$ and $p_2 = 73$.
    By Lemma~\ref{lem:rugisgood}, it suffices to show that there is a root-of-unity grid over $\F_{2^9}$.
    Take $\theta \in \F_{2^3}$ such that $\theta^3+\theta+1 = 0$.
    For our root-of-unity grid, we will take $T = \{1, \theta, 0\}$.
    First, we will take $a_1 = \theta$. We have $1+a_1T = \{1+\theta, 1+\theta^2, 1\}$.
    These are all nonzero elements of $\F_8$, so this choice of $a_1$ works precisely because $\F_8^\times = \mu_7(\F_{2^9})$.

    To choose $a_2$, we will use Lemma~\ref{lem:normprimegivesrug} with $Q = 8$ and $d = 3$.
    We will take:
    \begin{align*}
        P(X) &= 1 + X(X-1)(X-\theta) \\
        &= 1 + X(X+1)(X+\theta) &\text{since we are in characteristic 2} \\
        \Rightarrow f(X) &= X^3 \left(1+(-1/X)(-1/X-1)(-1/X-\theta)\right) \\
        &= X^3+(X+1)(X\theta+1) \\
        &= X^3+\theta X^2 + (1+\theta)X + 1.
    \end{align*}
    By definition of $P$, we have $P(0) = P(1) = P(\theta) = 1$.
    A direct check verifies that $f(X)$ has no roots over $\F_8$ and is therefore irreducible.
    Thus Lemma~\ref{lem:normprimegivesrug} is applicable so there exists $a_2 \in \F_{2^9}^\times$ such that $1+a_2 \{1, \theta\} \subseteq \mu_{73}(\F_{2^9})$.
    Noting that $1 + a_2 \cdot 0 = 1 \in \mu_{73}(\F_{2^9})$, we are done.
\end{proof}

\subsection{Comparison with~\cite{Chee2011}}\label{sec:cheecomparison}

Here, we briefly explain why none of our results appear to capture the results by~\cite{Chee2011} in the $k = 2$ case.
Their theorem says that if $m=p_1p_2$ and $m+1=2^d$ for a prime $d$, then there is an $S^*_m$-decoding polynomial of sparsity $3$ over $\F_{2^d}$.
By the equivalence we have shown, this is the same as saying that there is a root-of-unity grid over $\F_{2^d}$ with respect to $m$.

Firstly the rough-repunit/Weil-bound approach in Section~\ref{sec:rough-rug} requires each $p_i$ to be a very large divisor of $q^{d_i}-1$ (at the very least, much larger than $q^{d_i/2}$).
In the setting of~\cite{Chee2011}, however, the two primes $p_1,p_2$ are the two prime factors of the same number $2^d-1$.
It is clearly not possible for both $p_1$ and $p_2$ to be larger than $2^{d/2}$, so the rough-repunit approach does not explain why this configuration gives rise to a root-of-unity grid.

The norm-kernel construction of Section~\ref{sec:prime-norm-rug} also does not recover the theorem of~\cite{Chee2011} in general.
We saw that it explains Efremenko's example $m=7\cdot 73$ because $73=(8^3-1)/(8-1)$ is prime (in addition to some other conditions).
By contrast, in the theorem of~\cite{Chee2011}, neither $p_1$ nor $p_2$ need arise as a prime of the form $(Q^e-1)/(Q-1)$ for a convenient intermediate field $\F_Q$.
Thus the norm-kernel argument gives a conceptual explanation of Efremenko's original example, but not of the full family constructed in~\cite{Chee2011}.

\subsection{Stitching Together Simultaneous Root-of-Unity Grids}\label{sec:stitch}
In some parameter regimes, in particular, when the number of servers is superconstant, the optimal thing to do will be to construct many simultaneous root-of-unity grids over the same characteristic and stitch them together into a sparse decoding polynomial.

\begin{lemma}[Stitching together many root-of-unity grids to get a sparse decoding polynomial]\label{lem:decoderfromconjecture}
    Let $t \ge 0, r > 0$ and $2 \le k_1 \le \ldots \le k_r$ be integers.
    Suppose there exists a prime $q$, integers $\{d_{i,j}\}_{i \in [r], j \in [k_i]} \cup \{d_i'\}_{i \in [t]}$,  and pairwise distinct primes $\{p_{i,j}\}_{i \in [r], j \in [k_i]} \cup \{p_i'\}_{i \in [t]}$ such that the following conditions hold:
    \begin{itemize}
        \item For every $i \in [r]$, there is a $(\F_{q^{D_i}}, m_i)$-root-of-unity grid, where
        \begin{align*}
            m_i &= \prod_{j \in [k_i]} p_{i,j} \text{, and }\\
            D_i &= \mathrm{lcm}(d_{i,1}, \ldots, d_{i, k_i}).
        \end{align*}
        \item For every $i \in [t]$, $p_i' | q^{d_i'} - 1$.
    \end{itemize}
    Then there exists an $S^*_m$-decoding polynomial over $\F_{q^D}$ with sparsity $\leq 2^t(k_1+1)\ldots (k_r+1)$, where
    \begin{align*}
        m &:= \left(\prod_{i=1}^t p'_i\right) \cdot \left(\prod_{i=1}^r m_i\right), \text{and}\\
        D &:= \mathrm{lcm}(d_1', \ldots, d_t', D_1, \ldots, D_r).
    \end{align*}
\end{lemma}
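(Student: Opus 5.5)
The plan is to apply the composition theorem (Theorem~\ref{thm:tensoring}) to $r+t$ factors of the modulus. The first $r$ factors are $m_1, \ldots, m_r$, each equipped with a $(k_i+1)$-sparse decoding polynomial coming from its root-of-unity grid. The last $t$ factors are the single primes $p'_1, \ldots, p'_t$, each equipped with the trivial $2$-sparse decoder.

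\textbf{Step 1: the $m_i$ factors.} For each $i \in [r]$, the hypothesis gives an $(\F_{q^{D_i}}, m_i)$-root-of-unity grid. Since $k_i \geq 2$, the modulus $m_i$ is a product of $k_i$ distinct primes, and $m_i \mid q^{D_i}-1$ is part of the definition of a grid. Lemma~\ref{lem:rugisgood} therefore yields an $(\F_{q^{D_i}}, m_i, S^*_{m_i})$-decoding polynomial of sparsity $k_i+1$. This is where Lemma~\ref{lemma:rootnomatter} is needed: Lemma~\ref{lem:rugisgood} produces a decoder for some primitive root $\gamma$, and independence of the choice of root lets us use it in the composition theorem.

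\textbf{Step 2: the $p'_i$ factors.} For each $i \in [t]$, we have $p'_i \mid q^{d'_i}-1$, so Claim~\ref{clm:trivialdec} with $Q = q^{d'_i}$ and $m = p'_i$ gives an $(\F_{q^{d'_i}}, p'_i, S^*_{p'_i})$-decoding polynomial of sparsity $2$.

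\textbf{Step 3: check the hypotheses of Theorem~\ref{thm:tensoring} and conclude.} The theorem needs $q, m_1, \ldots, m_r, p'_1, \ldots, p'_t$ to be pairwise coprime and squarefree.
\begin{itemize}
\item Squarefreeness and pairwise coprimality of the moduli follow because all the primes $p_{i,j}$ and $p'_i$ are pairwise distinct.
\item Coprimality with $q$ holds because every $p_{i,j}$ and $p'_i$ divides some $q^e-1$, so none of them equals $q$.
\end{itemize}
The theorem then gives a decoding polynomial over $\F_{q^{D}}$ with $D = \mathrm{lcm}(D_1, \ldots, D_r, d'_1, \ldots, d'_t)$. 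Its modulus is $m$, and its sparsity is $\prod_i (k_i+1) \cdot 2^t$.

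\textbf{Edge cases.} If $t = 0$, the empty product is $1$ and nothing changes. If $r+t = 1$ (that is, $r = 1$ and $t = 0$), Step 1 already gives the result over $\F_{q^{D_1}}$, so no composition is needed.

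I do not expect a real obstacle. The only delicate point is the bookkeeping of the fields: each ingredient decoder lives over $\F_{q^{c}}$ with $c \in \{D_i, d'_i\}$, and Theorem~\ref{thm:tensoring} takes exactly the lcm of these exponents, which matches the stated $D$.
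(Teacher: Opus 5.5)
Your proposal is correct and is essentially the paper's proof. Like the paper, you obtain $(k_i+1)$-sparse decoders for each $m_i$ from Lemma~\ref{lem:rugisgood}, obtain $2$-sparse decoders for each $p'_i$ from Claim~\ref{clm:trivialdec}, and compose them with Theorem~\ref{thm:tensoring}; your explicit checks of pairwise coprimality and of no prime equalling $q$ are details the paper leaves implicit.
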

\begin{proof}
    We have all of the following:
    \begin{itemize}
        \item There is an $S^*_{p'_i}$-decoding polynomial over $\F_{q^{d'_i}}$ of sparsity 2 for all $i \in [t]$.
        This is by Claim~\ref{clm:trivialdec}.
        \item For every $i \in [r]$, there is an $S^*_{m_i}$-decoding polynomial over $\F_{q^{D_i}}$ of sparsity $k_i+1$. This is by Lemma~\ref{lem:rugisgood} applied to the given root-of-unity grids.
    \end{itemize}
    We now compose these decoding polynomials together using Theorem~\ref{thm:tensoring} to obtain the conclusion.
\end{proof}

We also note that when $r=0$, one can instantiate the above construction without requiring any conjectures.
This is exactly Corollary~\ref{cor:trivialdecodertensored}.

\section{Number-Theoretic Conjectures for Rough Repunits}\label{sec:hardconjecture}
In this section, we state and justify two number-theoretic conjectures that imply the existence of root-of-unity grids via the rough repunit-approach. The two conjectures are for two different regimes: a constant number of servers, for which a single root-of-unity grid suffices (Section~\ref{sec:constant-server-conjecture}) and superconstant number of servers, for which the optimal construction of the decoding polynomials (see Section~\ref{sec:gettopir} for the optimization in the superconstant server regime) will be to stitch together several simultaneous root-of-unity grids occurring over the same characteristic (Section~\ref{sec:superconstnat-server-conjecture}).

\subsection{Constant \# of Servers Regime}\label{sec:constant-server-conjecture}

We begin by recalling Conjecture~\ref{conjecture:onetuple-intro} that we rely on for the constant-server case, as well as our main theorem:

\begin{conjecture}[Generalized repunit rough numbers, restated from Conjecture~\ref{conjecture:onetuple-intro}]\label{conjecture:onetuple-main}
    For every positive integer $k > 1$, there exists a prime $q \geq k+1$ for which there exist $k$ distinct positive integers $d_1, \ldots, d_k \geq 6$ such that $q^{d_i}-1$ is $1/(3k)$-rough for all $i \in [k]$.
\end{conjecture}

\begin{theorem}[Sparse decoding polynomials from Conjecture~\ref{conjecture:onetuple-main} (restated from Theorem~\ref{thm:intromain})]\label{thm:intromain-restate}
    Assuming Conjecture~\ref{conjecture:onetuple-main}, for any constant $k$ there exists $m$ that is a product of $k$ distinct primes, a finite field $\F$ with $m \mid |\F| - 1$, and an $S^*_m$-decoding polynomial of sparsity $k+1$.
\end{theorem}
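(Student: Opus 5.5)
The proof will be a direct chaining of results already established, with essentially no new computation. Fix a constant $k$. If $k = 1$ the statement is immediate from Claim~\ref{clm:trivialdec}: take any prime $p$, a field $\F_{q^{\ord_p(q)}}$ for a prime $q \neq p$, and the sparsity-$2$ polynomial $X - \gamma$. So assume $k \geq 2$, which is exactly the range covered by Conjecture~\ref{conjecture:onetuple-main}.

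Invoking Conjecture~\ref{conjecture:onetuple-main}, we obtain a prime $q \geq k+1$ and distinct integers $d_1, \ldots, d_k \geq 6$ such that each $q^{d_i}-1$ is $1/(3k)$-rough. Since $k \geq 2$, we have $\epsilon = 1/(3k) \leq 1/6$, so Lemma~\ref{lem:distinct-degrees-implies-distinct-primes} applies. It shows that the primes $p_i := P^+(q^{d_i}-1)$ are pairwise distinct, which means $m := p_1 \cdots p_k$ is a product of $k$ distinct primes. These are precisely the hypotheses of Corollary~\ref{cor:roughness-rug}. That corollary yields an $(\F_{q^D}, m)$-root-of-unity grid with $D = \mathrm{lcm}(d_1, \ldots, d_k)$. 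Its proof also records that $m \mid q^D - 1$, because each $p_i \mid q^{d_i}-1$ and $d_i \mid D$.

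Finally, we apply Lemma~\ref{lem:rugisgood} to this grid over $\F := \F_{q^D}$. It produces an $S^*_m$-decoding polynomial over $\F$ of sparsity exactly $k+1$ with respect to some primitive $m$th root of unity. By the remark following Definition~\ref{def:decpoly} (Lemma~\ref{lemma:rootnomatter}), the choice of root is immaterial. This gives all three objects claimed in the theorem.

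There is no real obstacle, since all of the analytic work (the Weil-bound counting in Claim~\ref{clm:weil-counting} and the numeric inequality in Corollary~\ref{cor:roughness-rug}) has already been done. The only point needing care is the bookkeeping of hypotheses. We must check that $\epsilon = 1/(3k) \leq 1/6$, so that the distinct-prime lemma applies. We must also check that $q \geq k+1$ guarantees $q^{\gcd(d_i)} \geq k+1$, which is needed inside Corollary~\ref{cor:roughness-rug}. Both follow immediately from $k \geq 2$.
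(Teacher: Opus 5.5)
Your proposal is correct and follows the paper's proof: you dispose of $k=1$ via Claim~\ref{clm:trivialdec}, and for $k\ge 2$ you feed the conjecture into Corollary~\ref{cor:roughness-rug} to obtain a root-of-unity grid, then apply Lemma~\ref{lem:rugisgood}. Your separate invocation of Lemma~\ref{lem:distinct-degrees-implies-distinct-primes} is redundant, since the corollary already uses it internally, but it does no harm.
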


\begin{proof}
    If $k = 1$, this is immediate from Claim~\ref{clm:trivialdec}.
    So assume from now that $k > 1$.
    Assuming Conjecture~\ref{conjecture:onetuple-main}, there exist $q, d_1, \ldots, d_k$ satisfying the preconditions of Corollary~\ref{cor:roughness-rug}.
    The corollary gives us a finite field $\F$ and modulus $m$ that is a product of $k$ distinct primes such that there is an $(\F, m)$-root-of-unity grid.
    Then Lemma~\ref{lem:rugisgood} tells us that there exists an $S^*_m$-decoding polynomial over $\F$ of sparsity $k+1$, as desired.
\end{proof}
Now that we have proven Theorem~\ref{thm:intromain}, we turn our attention to justifying the plausibility of Conjecture~\ref{conjecture:onetuple-main}.

\paragraph{Justification for Conjecture~\ref{conjecture:onetuple-main}.}
As noted in the introduction, it would suffice to ensure that $d_i \geq 3k$ and $(q^{d_i}-1)/(q-1)$ is prime. The generalized repunit conjecture~\cite{Bourdelais2009, dubner1993generalized}, a generalization of the Mersenne prime conjecture, states that there exist infinitely many base-$q$ repunits for prime $q$. The conjecture we need is only weaker; we only need $(q^{d_i}-1)/(q-1)$ to be a ``rough'' number, or in other words, have a large prime divisor.

There are two different ways to justify that the above conjecture ought to be true:
\begin{itemize}
    \item \emph{Fix $q$ then pick the $d_i$'s.}
    As we have discussed, assuming the generalized repunit conjecture is enough to show that this approach will work.
    To justify why the generalized repunit conjecture might be true, since $d$ is an odd prime there is no ``obvious reason'' preventing $\frac{q^d-1}{q-1}$ from being prime.
    So assuming $\frac{q^d-1}{q-1}$ is as likely to be prime as a random integer of the same size, the probability that a particular $d$ works for us is $\sim \frac{1}{d \log q}$.
    Summing over all primes $d \geq 3k$, the number of good primes $d$ is heuristically:
    $$\approx \sum_{d \geq 3k\text{ prime}} \frac{1}{d \log q} = \frac{1}{\log q} \sum_{d \geq 3k\text{ prime}} \frac{1}{d},$$
    which diverges to $\infty$ by Mertens's theorem.

    \item \emph{Fix the $d_i$'s then pick $q$.}
    An alternative strategy would be to pick arbitrary distinct primes $d_1, \ldots, d_k \geq 3k$, then conjecture that a good value of $q$ should exist.
    Indeed, this is implied by Schinzel's hypothesis H~\cite{schinzel1958certaines,Bateman1962} applied to the $k+1$ polynomials $X, \frac{X^{d_1}-1}{X-1}, \ldots, \frac{X^{d_k}-1}{X-1}$.
    Informally, this hypothesis states that for any tuple of irreducible polynomials, they should simultaneously evaluate to primes infinitely often unless this claim can be refuted by looking at the polynomials' evaluations modulo some fixed prime $p$.

    We take this opportunity to justify the applicability of Schinzel's hypothesis H: first, note that since the $d_i$'s are prime these polynomials are all irreducible.
    Secondly, we need to rule out the existence of a prime $p$ that divides one of these polynomials for every evaluation point $x$.
    We can take $x$ to be a generator of the multiplicative group $\F_p^\times$, then we have $p \nmid x$ and for some $d_i$ we have $p \mid x^{d_i}-1 \Rightarrow p-1 \mid d_i$, which forces $p-1 \in \{1, d_i\}$. The latter case is impossible since $p, d_i$ are both primes with $d_i \geq 6$, so they cannot be consecutive.
    So this leaves us with the former case of $p = 2$.
    However, it is easy to see that when $x$ is odd all these polynomials are odd (since the $d_i$'s are odd), so this case cannot happen either.

    So if Schinzel's hypothesis H were true, it would follow that there are infinitely many positive integers $q$ such that $q, \frac{q^{d_1}-1}{q-1}, \ldots, \frac{q^{d_k}-1}{q-1}$ are all prime, as claimed.
\end{itemize}

\subsection{Superconstant \# of Servers Regime}\label{sec:superconstnat-server-conjecture}
To optimize the constructions of $S$-decoding polynomials for the superconstant server regime, it turns out to be optimal to stitch together many root-of-unity grids over the same characteristic. To carry out this optimization and estimate costs of the final PIR scheme, we need to estimate the parameters (such as $m, q$, field size) of the decoding polynomial obtained via this stitching. The following conjecture does this.

\begin{conjecture}\label{conjecture:superconstant}
    Let $t \ge 0, r > 0$ and $2 \le k_1 \le \ldots \le k_r$ be integers.
    Then there exists a prime $q$, integers $\{d_{i,j}\}_{i \in [r], j \in [k_i]} \cup \{d_i'\}_{i \in [t]}$,  and pairwise distinct primes $\{p_{i,j}\}_{i \in [r], j \in [k_i]} \cup \{p_i'\}_{i \in [t]}$ such that the following conditions hold:
    \begin{itemize}
        \item For every $i \in [r]$, there is a $(\F_{q^{D_i}}, m_i)$-root-of-unity grid, where
        \begin{align*}
            m_i &= \prod_{j \in [k_i]} p_{i,j} \text{, and }\\
            D_i &= \mathrm{lcm}(d_{i,1}, \ldots, d_{i, k_i}).
        \end{align*}
        \item For every $i \in [t]$, $p_i' | q^{d_i'} - 1$.
    \end{itemize}
    Moreover, we can upper bound the prime $q$, modulus $m := \left(\prod_{i = 1}^r \prod_{j=1}^{k_i} p_{i,j} \right) \cdot \left(\prod_{i \in [t]} p_i'\right)$ and size of the field $\F_{q^D}$, where $D := \mathrm{lcm}(D_1, \ldots, D_r, d_1', \ldots, d_t')$ as follows:
    \begin{align*}
    q &\le 2 k_r, \\
    m &\le \exp\left\{ O\left( t \log(t + K) + (\log k_r) \cdot \sum_{i=1}^r k_i^2 K_i \log K_i \right) \right\}, \text{ and}\\
    \log |\F_{q^D}| &\le  (\log k_r) \cdot  \left(\prod_{i=1}^r O(k_i K_i \log K_i)^{k_i}\right) \cdot\exp(O(t \log (t+K) )),
\end{align*}
where $K_i := \sum_{j=i}^r k_j$ and $K = K_1 = \sum_{j=1}^r k_j$.
\end{conjecture}

\begin{remark}
    We note that although we use the full generality of Conjecture~\ref{conjecture:superconstant} to find the optimal choice of decoding polynomials, our final PIR constructions require the conjecture to hold only for some choices of $t, r, k_1, \ldots, k_r$. In particular, we only ever consider cases where $t = 0$ and all the $k_1 = \ldots = k_r$.
\end{remark}

Our strategy for finding $r$ small root-of-unity grids, given $k_1, \ldots, k_r$ will be to first pick a characteristic $q$, and then search for \emph{prime} degrees $d_{i,j}$ such that $q^{d_{i,j}} - 1$ is $(1/(3k_i))$-rough, for all $i \in [r], j \in [k_i]$---we call such degrees \emph{usable}. We note that conversion of rough repunits $q^{d_{i,j}} - 1$ to root-of-unity grids does not require the $d_{i,j}$'s to be prime; in the following conjecture we insist on prime degrees because it is easier to reason about their density. Having picked enough usable degrees $d_{i,j}$, we set $p_{i,j} = P^+(q^{d_{i,j}} - 1)$, and the rest of the root-of-unity grid follows from these choices. Then, estimating the expected size of $m, q, D$ boils down to estimating the density of usable primes.

\begin{conjecture}[Density of Usable Degrees]\label{conj:usable-degrees}
    For every $\epsilon \in (0,1/6]$, prime $q$ and integer $\Delta \ge 6$, such that $q^{5\Delta/12} > \Delta^3 \log q$.
    \begin{align*}
        \Pr_{\substack{\text{prime } d
        \\ d \gets [\Delta, 2\Delta]
        }} [q^d - 1 \text{ is } \epsilon\text{-rough} ] \ge \log \frac{1}{ 1 - \epsilon} + O \left(\frac{1}{\Delta}\right).
    \end{align*}
\end{conjecture}

The purpose of the above conjecture is to give us many root-of-unity grids over the same characteristic that can be stitched together into one large decoding polynomial.
We capture this with the below lemma:

\begin{theorem}
    Assume Conjecture~\ref{conj:usable-degrees} holds. Then, Conjecture~\ref{conjecture:superconstant} holds, and therefore for every choice of $t \ge 0, r > 0$ and $2 \le k_1 \le \ldots \le k_r$, there exists an $S_m^*$-decoding polynomial over $\F_{q^D}$ with sparsity $\le 2^t (k_1 + 1) \cdots (k_r+1)$, where
    \begin{align*}
        q &\le 2 k_r, \\
    m &\le \exp\left\{ O\left( t \log(t + K) + (\log k_r) \cdot \sum_{i=1}^r k_i^2 K_i \log K_i \right) \right\}, \text{ and}\\
    \log |\F_{q^D}| &\le  (\log k_r) \cdot  \left(\prod_{i=1}^r O(k_i K_i \log K_i)^{k_i}\right) \cdot\exp(O(t \log (t+K) )),
    \end{align*}
    where $K_i := \sum_{j=i}^r k_j$ and $K = K_1 = \sum_{j=1}^r k_j$.
\end{theorem}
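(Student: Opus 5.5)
The plan is to verify Conjecture~\ref{conjecture:superconstant} directly, and then obtain the decoding polynomial from Lemma~\ref{lem:decoderfromconjecture}.

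\emph{Choice of $q$ and of the grid degrees.} By Lemma~\ref{lem:bertrand}, fix a prime $q$ with $k_r < q \le 2k_r$. Then $q \ge k_i+1$ for every $i$, which gives the required bound $q \le 2k_r$. We choose the degrees greedily, processing the grids in the order $i = r, r-1, \ldots, 1$ (largest $k_i$ first). For grid $i$, set $\epsilon_i := 1/(3k_i) \le 1/6$ and $\Delta_i := C\, k_i K_i \log K_i$ for a large absolute constant $C$. Taking $C$ large guarantees $\Delta_i \ge 6$ and $q^{5\Delta_i/12} > \Delta_i^3 \log q$, so Conjecture~\ref{conj:usable-degrees} applies.

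\emph{Counting usable degrees.} By Conjecture~\ref{conj:usable-degrees}, the number of usable primes $d \in [\Delta_i, 2\Delta_i]$ is at least
\[
\bigl(\pi(2\Delta_i)-\pi(\Delta_i)\bigr)\bigl(\log\tfrac{1}{1-\epsilon_i} - O(1/\Delta_i)\bigr) \ \ge\ \Omega\!\left(\frac{\Delta_i}{k_i \log \Delta_i}\right).
\]
This uses $\log\frac{1}{1-\epsilon}\ge \epsilon$ and the fact that $1/\Delta_i \ll 1/k_i$. With $C$ large, the right-hand side is at least $K_i = k_i + K_{i+1}$. We can therefore pick $k_i$ usable prime degrees $d_{i,1},\ldots,d_{i,k_i}$ in $[\Delta_i,2\Delta_i]$ that avoid the $K_{i+1}$ degrees chosen for the later grids. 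All degrees are then distinct primes $\ge 6$.

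\emph{Building the grids.} Set $p_{i,j} := P^+(q^{d_{i,j}}-1)$. By Lemma~\ref{lem:distinct-degrees-implies-distinct-primes}, together with Lemma~\ref{lem:rough-order} (which gives $\ord_{p_{i,j}}(q) = d_{i,j}$), these primes are pairwise distinct. Corollary~\ref{cor:roughness-rug} then yields an $(\F_{q^{D_i}}, m_i)$-root-of-unity grid for each $i$.

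\emph{The $t$ extra primes.} Take $p'_1,\ldots,p'_t$ to be the $t$ smallest primes that are different from $q$ and do not lie in the set $\{p: \ord_p(q) \in \{d_{i,j}\}\}$. Each prime $p$ excluded by the second condition satisfies $p \ge d_{i,j}+1$, and there are few of them in any initial range. Hence all $p'_i \le O((t+K)\log(t+K))$, and they are automatically distinct from the $p_{i,j}$. Set $d'_i := \ord_{p'_i}(q) < p'_i$.

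\emph{Bounding $m$.} We use $p_{i,j} < q^{d_{i,j}}$, so
\[
\log m \ \le\ \log q \sum_{i,j} d_{i,j} + \sum_{i} \log p'_i \ \le\ O\!\left(\log k_r \sum_i k_i \Delta_i\right) + O\bigl(t \log(t+K)\bigr),
\]
which is the claimed bound.

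\emph{Bounding the field size.} Since the $d_{i,j}$ are distinct primes, $\mathrm{lcm}(D_1,\ldots,D_r) = \prod_{i,j} d_{i,j} \le \prod_i (2\Delta_i)^{k_i}$. Moreover, $\mathrm{lcm}(d'_1,\ldots,d'_t) \le \prod_i p'_i = \exp(O(t\log(t+K)))$. Multiplying by $\log q = O(\log k_r)$ gives the stated bound on $\log|\F_{q^D}|$. Finally, Lemma~\ref{lem:decoderfromconjecture} converts this data into the decoding polynomial of sparsity $\le 2^t\prod_i(k_i+1)$.

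\emph{Main obstacle.} The delicate step is the counting one. We must make sure the probability lower bound in Conjecture~\ref{conj:usable-degrees}, with its $O(1/\Delta)$ error, together with an explicit lower bound on the number of primes in $[\Delta,2\Delta]$, really leaves at least $K_i$ usable primes when $\Delta_i = \Theta(k_iK_i\log K_i)$. This requires checking that $\log \Delta_i = O(\log K_i)$ (true since $k_i \le K_i$). It also requires that the $O(1/\Delta_i)$ term is dominated by $1/(3k_i)$, which holds because $\Delta_i \ge C k_i$.
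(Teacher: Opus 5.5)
Your proposal follows the paper's proof essentially step for step. Both use Bertrand's postulate for $q$, pick usable prime degrees greedily from $i=r$ down to $i=1$ in windows $[\Delta_i,2\Delta_i]$ with $\Delta_i=\Theta(k_iK_i\log K_i)$ while avoiding the $K_{i+1}$ degrees already used, invoke Corollary~\ref{cor:roughness-rug} for the grids, and finish with Lemma~\ref{lem:decoderfromconjecture}. Your explicit count from Conjecture~\ref{conj:usable-degrees} is a more careful version of the paper's ``we expect to find'' argument.

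The one step that does not hold as written is the choice of the $t$ extra primes. You exclude every prime $p$ with $\ord_p(q)\in\{d_{i,j}\}$ and assert that few such primes lie below $X=O((t+K)\log(t+K))$. The only justification you offer is $p\ge d_{i,j}+1$ (in fact $p\equiv 1\pmod{2d_{i,j}}$). That is not enough when $t$ is large compared with the degrees, or when $r$ is large and $k_r$ small. Then the degrees $d_{r,j}=\Theta(k_r^2\log k_r)$ lie far below $X$, and the congruence only bounds the number of excluded primes up to $X$ by roughly $\sum_{i,j}X/(2d_{i,j})$. That need not be small compared with $\pi(X)\approx X/\log X$. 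The bound you can actually prove is $\sum_{i,j}\omega(q^{d_{i,j}}-1)\le \log_2 q\sum_{i,j}d_{i,j}$, which is polynomial in $K$ rather than $O(K)$.

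The exclusion is also unnecessary. Conjecture~\ref{conjecture:superconstant} and Lemma~\ref{lem:decoderfromconjecture} only need the primes to be pairwise distinct and different from $q$. So, as the paper does, take $p'_1,\ldots,p'_t$ to be the $t$ smallest primes outside the $(K+1)$-element set $\{q\}\cup\{p_{i,j}\}$. Then $p'_t$ is at most the $(t+K+1)$-th prime, hence $O((t+K)\log(t+K))$, and your bounds on $m$ and $\log|\F_{q^D}|$ go through unchanged. Alternatively, you can keep your larger excluded set and use the polynomial-in-$K$ bound. That still gives $\log p'_i=O(\log(t+K))$, which is all the final estimates need.
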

\begin{proof}
Fix any prime $q > k_r$. By Lemma~\ref{lem:bertrand}, there exists prime $k_r \le q \le 2 k_r$. Call prime $d \ge 6$ \emph{usable for level $i$} if $q^d - 1$ is $\epsilon_i$ rough, where $\epsilon_i = 1/(3k_i)$. For each $i$, we will pick $k_i$ usable primes $d_{i, 1}, \ldots, d_{i, k_i}$ such that $\{d_{i,j}\}_{i,j}$ are pairwise distinct. Therefore, by Corollary~\ref{cor:roughness-rug}, there exists a $(\F_{q^{D_i}}, m_i)$-root-of-unity grid, where $D = \mathrm{lcm}(d_{i, 1}, \ldots, d_{i, k_i})$ and $m_i = p_{i,1} \cdots p_{i, k_i}$, where $p_{i, j} = P^+(q^{d_{i,j}} - 1)$.

By Conjecture~\ref{conj:usable-degrees}, we have that for random prime $d \gets [\Delta_i, 2 \Delta_i]$, large enough\footnote{Note that the precondition for Conjecture~\ref{conj:usable-degrees} that $q^{5\Delta/12} \gg \Delta^3 \log q$ is satsified as long as $\Delta \ge \Omega(\log \log q)$.} $\Delta_i \ge 6, k_i, \log \log q$, we have that $d$ is usable for level $i$ with probability around
\begin{align*}
    \log \frac{1}{1-\epsilon_i} + O \left(\frac{1}{\Delta_i} \right) &= \epsilon_i + O(\epsilon_i^2) + O \left(\frac{1}{\Delta_i} \right)= \Omega\left(\frac{1}{k_i}\right) + O\left(\frac{1}{k_i^2}\right).
\end{align*}
To get $\ell$ usable primes greater than $\Delta_i$, we need to search over roughly $O(k_i \ell)$ primes greater than $\Delta_i$. To make sure that the interval $[\Delta_i, 2 \Delta_i]$ contains enough primes, we would like that $\pi(2 \Delta_i) - \pi(\Delta_i) \ge \Omega(k_i \ell)$ for all $i \in [r]$. By the prime number theorem,
\begin{align*}
    \pi(2 \Delta_i) - \pi(\Delta_i) &= \Theta \left( \frac{2 \Delta_i}{\log 2\Delta_i} - \frac{\Delta_i}{\log \Delta_i} \right),
\end{align*}
so taking $\Delta_i = O(k_i \ell \log (k_i \ell))$ suffices, and we expect to find $d_{i, 1}, \ldots, d_{i, k_i} \le O(k_i \ell \log (k_i \ell))$.

Since $\epsilon_1 \ge \ldots \ge \epsilon_r$, and $\epsilon$-roughness is a stronger condition for smaller $\epsilon$, we have every degree usable at level $i$ must be usable for all levels $i' \le i$. Therefore to ensure that all the $d_{i,j}$'s are distinct, let us start picking from $i = r$ and move downwards. We first pick $k_r$ distinct prime degrees to be usable at level $r$, so we expect $d_{r, 1} , \ldots , d_{r, k_r} \le O(k_r^2 \cdot \log (k_r^2))$. To pick distinct usable degrees at level $i$, we want to avoid all $K_{i+1} = K_i - k_i$ degrees picked so far, and pick $k_i$ additional degrees usable at level $i$, so we expect to find $d_{i, 1}, \ldots, d_{i, k_i} \le O(k_i K_i \cdot \log (k_i K_i))$. Since $\log(k_i K_i) \le 2 \log K_i$ for all $i \in [r]$, we get the slightly simpler bound
\begin{align*}
    d_{i, 1}, \ldots, d_{i, k_i} \le O(k_i K_i \log (K_i)).
\end{align*}
After having picked all the primes $p_{i,j}$ for $i \in [r], j \in [k_i]$, we look for $t$ additional small primes $p_i'$ for $i \in [t]$. Keep in mind that all the primes $\{p_{i,j}\}_{i \in [r],j \in [k_i]} \cup \{p_i'\}_{i \in [t]}$ must be pairwise distinct. To avoid picking the same prime twice, we will simply look for the smallest $t$ primes that do not already occur in the set $\{p_{i,j}\}_{i \in [r],j \in [k_i]} \cup \{q\}$ of size $K+1$. By the prime number theorem, we expect $p_t' \le O((t + K) \log (t + K))$. Now all that remains to do is to find $d_i'$, $i \in [t]$ such that $p_i' | q^{d_i'} - 1$ for all $i \in [t]$. This can be done by picking $d_i' = \ord_{p_i'}(q) \le p_i'$ for all $i \in [t]$.

Finally, we can bound the modulus $m$:
\begin{align*}
    m &= \left(\prod_{i=1}^t p_i'\right) \cdot \left(\prod_{i=1}^r \prod_{i=1}^{k_i} p_{i,j}\right)\\
    &= O((t+K) \log (t + K))^t \cdot q^{\sum_{i=1}^r k_i \cdot O(k_i K_i \log (K_i)) }\\
    &= \exp\left\{ O \left(t \log (t + K) + (\log q) \cdot \sum_{i=1}^r k_i^2 K_i \log (K_i) \right)\right\}\\
    &\le \exp\left\{ O \left(t \log (t + K) + (\log k_r) \cdot \sum_{i=1}^r k_i^2 K_i \log (K_i) \right)\right\}
\end{align*}
and the logarithm of the field size
\begin{align*}
    \log |\F_{q^D}| &= (\log q) \cdot D\\
    &= (\log q) \cdot {\mathrm{lcm}(D_1, \ldots D_r, d_1' ,\ldots, d_t')}\\
    &\le (\log q) \cdot  \left(\prod_{i=1}^t d_i'\right) \cdot \left(\prod_{i=1}^r \prod_{i=1}^{k_i} d_{i,j}\right)\\
    &\le O(\log k_r) \cdot O((t+K) \log (t + K))^t \cdot \left(\prod_{i=1}^r O(k_i K_i \log(K_i))^{k_i}\right)\\
    &\le O(\log k_r) \cdot \exp(O (t \log (t+K))) \cdot \left(\prod_{i=1}^r O(k_i K_i \log(K_i))^{k_i}\right).
\end{align*}
\end{proof}

\subsection{Heuristics Justifying Conjecture~\ref{conj:usable-degrees}}
We present two heuristics (Heuristics~\ref{heur:dickmanqd-1} and \ref{heuristic:q-random-in-p}) that each individually support Conjecture~\ref{conj:usable-degrees}.
First, we introduce some useful notation.
\begin{definition}\label{defn:rough-notation}
    Fix $\epsilon \in (0, 1/6]$. Throughout this section, when $\epsilon$ is clear from the context, we use the following notation:
    \begin{align*}
        \tau_{q, d} &:= q^{d (1-\epsilon)} \ge (q^d - 1)^{(1-\epsilon)}, \text{ and}\\
        N_{q, d} &:= \begin{cases}
            \frac{q^d - 1}{ d (q-1)} &\text{if } d | q-1,\\
            \frac{q^d - 1}{q-1} &\text{otherwise.}
        \end{cases}
    \end{align*}
\end{definition}

\begin{claim}\label{clm:rough-N-tau}
    For any $\epsilon \in (0, 1/6]$ and primes $q, d$, we have that $P^+(q^d-1)\ge \tau_{q,d}$ if and only if $P^+(N_{q,d})\ge \tau_{q,d}$.
\end{claim}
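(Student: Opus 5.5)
Write $q^d-1 = (q-1)\cdot\frac{q^d-1}{q-1}$. The plan is to show that every prime factor of $q^d-1$ that is $\ge \tau_{q,d}$ must divide $N_{q,d}$. Conversely, every prime factor of $N_{q,d}$ divides $q^d-1$, because $N_{q,d}$ divides $q^d-1$. The reverse direction is immediate: if $P^+(N_{q,d})\ge \tau_{q,d}$, then $P^+(q^d-1)\ge P^+(N_{q,d})\ge\tau_{q,d}$.

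For the forward direction, let $p = P^+(q^d-1)\ge \tau_{q,d} = q^{d(1-\epsilon)}$. We first check that $p\nmid q-1$. Since $d$ is prime we have $d\ge 2$, and $\epsilon\le 1/6$, so
\[
\tau_{q,d}\ge q^{5d/6} \ge q^{5/3} > q-1 .
\]
Hence $p > q-1$, so $p \nmid q-1$, and therefore $p \mid \frac{q^d-1}{q-1}$.

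It remains to handle the case $d\mid q-1$, where $N_{q,d}$ carries an extra division by $d$. Since $d$ is prime, it suffices to show $p\ne d$. But $d\le q-1 < p$, so $p \neq d$. Then $p \mid \frac{q^d-1}{q-1} = d\,N_{q,d}$ together with $p\ne d$ gives $p\mid N_{q,d}$, and so $P^+(N_{q,d})\ge p\ge\tau_{q,d}$.

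The one point that needs care is that $N_{q,d}$ is an integer when $d\mid q-1$, i.e.\ that $d \mid \frac{q^d-1}{q-1}$. This follows from $q\equiv 1 \pmod d$, which gives $\frac{q^d-1}{q-1}=\sum_{i=0}^{d-1} q^i\equiv d\equiv 0 \pmod d$. Beyond this the argument is a short size comparison; no step is a real obstacle, and the main thing is to handle the $d\mid q-1$ branch correctly.
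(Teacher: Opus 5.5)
Your proof is correct and follows the paper's argument. Both show that any prime $p \ge \tau_{q,d}$ dividing $q^d-1$ exceeds $q-1$, so it survives passing to $N_{q,d}$, and both check that $N_{q,d}$ is an integer via $\sum_{i<d} q^i \equiv d \pmod d$. The one cosmetic difference is how $p = d$ is ruled out: the paper uses $\tau_{q,d} \ge 2^{5d/6} > d$, while you use $d \le q-1 < p$ in the branch $d \mid q-1$, which works equally well.
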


\begin{proof}
    First note that \(N_{q,d}\) is an integer: $q-1 | q^d - 1$ for all $d \ge 1$, and $d | q-1$, then $q = 1 \bmod d$
    \begin{align*}
        \frac{q^d-1}{q-1}& =1+q+\cdots+q^{d-1} = 0 \bmod d.
    \end{align*}
    We will show that every prime divisor of $(q^d - 1)/N_{q,d}$ is strictly smaller than $\tau_{q,d}$. Since $\epsilon \le 1/6$, $q, d \ge 2$, we have that $\tau_{q, d} = q^{d (1-\epsilon)} \ge q^{5d/6}> q-1$.
    Also $\tau_{q,d} \ge q^{5d/6} \ge 2^{5d/6} > d$ for all $d \ge 2$. It then follows that $q^d-1$ has a prime factor at least $\tau_{q,d}$ if and only if $N_{q, d}$ does.
\end{proof}

\subsubsection{Naive Dickman Random Factorization Heuristics}
Dickman characterized the probability that a random number drawn from an interval is smooth (Lemma~\ref{lem:dickman}). As a first heuristic, we model the repunit integers $q^d-1$ as random integers of size $q^d$, and apply the Dickman heuristic to these numbers.

\begin{heuristic}\label{heur:dickmanqd-1}[Dickman heuristic for $q^d-1$]
For every prime $q$, $\alpha > 0$, and integer $\Delta$,
    \begin{align*}
        \Pr_{\substack{\text{prime } d\\ d \gets [\Delta, 2 \Delta]}}[P^+(q^d-1) \le q^{\alpha d}] = \rho\left(\frac{1}{\alpha}\right) + O\left(\frac{1}{\Delta}\right).
    \end{align*}
\end{heuristic}

\begin{lemma}
    Under Heuristic~\ref{heur:dickmanqd-1}, Conjecture~\ref{conj:usable-degrees} holds.
\end{lemma}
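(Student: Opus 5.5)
This is a direct computation. The event that $q^d-1$ is $\epsilon$-rough means $P^+(q^d-1)\ge (q^d-1)^{1-\epsilon}$. Its complement is essentially $P^+(q^d-1) < (q^d-1)^{1-\epsilon}$, which implies $P^+(q^d-1)\le q^{(1-\epsilon)d}$. So I would apply Heuristic~\ref{heur:dickmanqd-1} with $\alpha = 1-\epsilon$:
\[
\Pr[q^d-1\text{ not }\epsilon\text{-rough}] \le \Pr[P^+(q^d-1)\le q^{(1-\epsilon)d}] = \rho\!\left(\tfrac{1}{1-\epsilon}\right)+O\!\left(\tfrac{1}{\Delta}\right).
\]

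Since $\epsilon\le 1/6$, we have $1/2\le 1-\epsilon\le 1$. Lemma~\ref{lem:dickman} therefore gives the closed form
\[
\rho(1/(1-\epsilon)) = 1-\log\frac{1}{1-\epsilon}.
\]
Taking complements yields
\[
\Pr[\epsilon\text{-rough}] \ge \log\frac{1}{1-\epsilon} - O(1/\Delta).
\]
This is the bound of Conjecture~\ref{conj:usable-degrees}; recall the paper's convention that $O(\cdot)$ bounds absolute values, so the sign of the error term is absorbed. The extra precondition $q^{5\Delta/12}>\Delta^3\log q$ is not needed in this argument, and I would simply ignore it.

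The only point needing care is the boundary between $(q^d-1)^{1-\epsilon}$ and $q^{(1-\epsilon)d}$. If $P^+(q^d-1)$ lies in the sliver $[(q^d-1)^{1-\epsilon}, q^{(1-\epsilon)d}]$, the number is rough but is counted as "smooth" by the heuristic event. This only helps us, since the inclusion of events goes the right way: "not rough" implies "$P^+\le q^{(1-\epsilon)d}$". So no estimate of the sliver is required, and I expect no real obstacle beyond stating this containment explicitly.
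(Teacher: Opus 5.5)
Your proposal is correct and is essentially the paper's proof. Both arguments use the same event inclusion between the roughness threshold $(q^d-1)^{1-\epsilon}$ and $\tau_{q,d}=q^{(1-\epsilon)d}$ (you state it as the contrapositive, ``not rough $\Rightarrow P^+(q^d-1)\le q^{(1-\epsilon)d}$''), then apply Heuristic~\ref{heur:dickmanqd-1} with $\alpha=1-\epsilon$ and the closed form of $\rho$ from Lemma~\ref{lem:dickman}.
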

\begin{proof}
Since $\tau_{q, d} := q^{(1-\epsilon) d} \ge (q^d-1)^{1-\epsilon}$, we have that that $x \ge \tau_{q, d} \implies x \ge (q^d-1)^{1-\epsilon}$, so applying $x = q^d-1$ this means that
    \begin{align*}
        \Pr_{d} [q^d - 1 \text{ is } \epsilon\text{-rough} ] &\ge \Pr_{d} [ P^+(q^d-1) \ge  \tau_{q, d}]\\
        &= \Pr_d[P^+(q^d-1) \ge q^{d(1-\epsilon)}] &\text{Notation~\ref{defn:rough-notation}}\\
        &= 1 - \rho\left(\frac{1}{1-\epsilon}\right) + \left(\frac{1}{\Delta}\right) &\text{Heuristic~\ref{heur:dickmanqd-1}}\\
        &= \log \frac{1}{1-\epsilon} + \left(\frac{1}{\Delta}\right) &\text{Lemma~\ref{lem:dickman}, since } 5/6 \le \alpha \le 1.
    \end{align*}
\end{proof}

\paragraph{Justification of Heuristic~\ref{heur:dickmanqd-1}.}
For a random (not necessarily prime) integer $d \gets [\Delta, 2 \Delta]$, modeling $q^d-1$ as a random integer of size $q^d$ ignores the cyclotomic structure of such numbers: 
\begin{align*}
    q^d - 1 &= \prod_{d' | d} \Phi_{d'}(q),
\end{align*}
where $\Phi_{d'}(q)$ is the value of the $d'$th cyclotomic polynomial, evaluated at $q$ (see Fact~\ref{cyclotomic-roots-of-unity}). For example, if $d$ is even, every prime factor of $q^d - 1$ must divide at least one of $q^{d/2} - 1, q^{d/2} + 1$, ruling out any hope for $q^d-1$ to be rough. In general, for non-prime $d$, and in the extreme case where $d$ has a lot of (small) prime factors, one would expect to run into similar issues. This is why we assume $d$ is prime in our heuristic; Claim~\ref{clm:rough-N-tau} shows that we are then effectively treating $N_{q,d}$ as a random number that satisfies the Dickman heuristic.

\subsubsection{Heuristics Accounting for the Cyclotomic Structure of $q^d - 1$}
In order to sidestep the cyclotomic structure of $q^d-1$ and use the Dickman heuristic for random integers, the previous section and Heuristic~\ref{heur:dickmanqd-1} restricted attention to prime $d$, effectively treating $N_{q, d}$ as a random integer that satisfies the Dickman heuristic, but this still ignores the fact that $N_{q,d}$ itself has structure that could make it look very different from a random number. 

Concretely, by Lemma~\ref{lem:rough-order} and Claim~\ref{clm:rough-N-tau}, we know that for rough $q^d-1$, $p = P^+(N_{q, d})$ if and only if $\ord_p(q) = d$, which implies that $d | p-1$. In other words, $p = 1 \bmod d$. Therefore, the prime divisors of $N_{q,d}$ come from a sparse arithmetic progression of primes of density $1/ \phi(d)$. In this section, instead of heuristically treating $N_{q, d}$ as a random number, we use heuristically estimate the density of large primes $p$ such that $p | N_{q, d}$. 

\begin{heuristic}\label{heuristic:q-random-in-p}
    Let $\epsilon \in (0,1/6]$ and let $q$ be a prime and let $\Delta$ be an integer 
    \begin{align*}
        \E_{d}  \sum_{\substack{\text{primes}\\p \in [\tau_{q, d}, N_{q, d}]}} \mathbbm{1}[p | N_{q, d}] =
        \E_{d} \Biggl[ \sum_{\substack{\text{primes}\\p \in [\tau_{q, d}, N_{q, d}] \\ p = 1 \bmod d}  } \frac{d-1}{p-1} \Biggr]+  O \left(\frac{1}{\Delta}\right),
    \end{align*}
    where the expectation is taken over random prime $d \gets [\Delta, 2 \Delta]$.
\end{heuristic}
\paragraph{Justification for Heuristic~\ref{heuristic:q-random-in-p}.}
For $p \ge \tau_{q, d}$, we know from the proof of Lemma~\ref{lem:distinct-degrees-implies-distinct-primes} that $p| N_{q, d}$ if and only if $\ord_p(q) = d$. This can happen only if $d | p-1$, i.e., $p = 1 \bmod d$. Since $\F^\times_p$ is cyclic, the number of elements if order $d$ is $\phi(d) = d-1$. If $q$ was sampled uniformly at random from $\F_p^\times$, conditioned on $p = 1 \bmod d$, the probability that $\ord_p(q) = d$ is $(d-1)/(p-1)$. Summing this density and averaging over $d$ yields the expression in the heuristic. In our setting, $q$ is a fixed prime. Heuristic~\ref{heuristic:q-random-in-p} asserts that this quantity nevertheless gives a good estimate in the fixed-$q$ case.

\begin{lemma}\label{lem:usable-degrees}
    Assume that the GRH for all $L$-functions. 
    Under Heuristic~\ref{heuristic:q-random-in-p}, Conjecture~\ref{conj:usable-degrees} holds.
\end{lemma}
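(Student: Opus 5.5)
The plan is to convert the heuristic's expected count of large prime divisors into a lower bound on the probability of roughness, and then evaluate that count with the Strong Mertens' Theorem (Lemma~\ref{lem:mertens}).

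\emph{From counts to probabilities.} Fix prime $d \in [\Delta, 2\Delta]$. Since $\tau_{q,d} \ge q^{5d/6} > \sqrt{N_{q,d}}$, at most one prime $p \ge \tau_{q,d}$ can divide $N_{q,d}$. Hence the indicator sum $\sum_{p \in [\tau_{q,d}, N_{q,d}]} \mathbbm{1}[p \mid N_{q,d}]$ is itself the $\{0,1\}$ indicator of the event $P^+(N_{q,d}) \ge \tau_{q,d}$. By Claim~\ref{clm:rough-N-tau} this event is the same as $P^+(q^d-1) \ge \tau_{q,d}$. Since $\tau_{q,d} \ge (q^d-1)^{1-\epsilon}$, that event in turn implies $q^d-1$ is $\epsilon$-rough. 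So the probability in Conjecture~\ref{conj:usable-degrees} is at least the left-hand side of Heuristic~\ref{heuristic:q-random-in-p}. It therefore suffices to show that for each fixed prime $d$,
\begin{align*}
\sum_{\substack{p \in [\tau_{q,d}, N_{q,d}] \\ p \equiv 1 \bmod d}} \frac{d-1}{p-1} = \log\frac{1}{1-\epsilon} + O\left(\frac{1}{\Delta}\right).
\end{align*}

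\emph{Evaluating the sum.} First replace $\frac{1}{p-1}$ by $\frac{1}{p}$. The error is $\sum \frac{d}{p(p-1)} \le d \sum_{n \ge \tau_{q,d}} n^{-2} \le 2d/\tau_{q,d}$, which is negligible. Then apply Lemma~\ref{lem:mertens} with modulus $m = d$, residue $a = 1$, $x = \tau_{q,d}$ and $y = N_{q,d}$. The hypotheses $m \le x$ and $x \ge 8$ hold because $\tau_{q,d} \ge 2^{5d/6} > d$ once $d \ge 6$. Multiplying by $d-1 = \phi(d)$ gives
\begin{align*}
\log\log N_{q,d} - \log\log \tau_{q,d} + O\left(\frac{d \log \tau_{q,d}}{\sqrt{\tau_{q,d}}}\right).
\end{align*}
We have $\log\log \tau_{q,d} = \log(1-\epsilon) + \log(d\log q)$. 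We also have $\log N_{q,d} = d\log q - \log(q-1) - O(\log d)$, so $\log\log N_{q,d} = \log(d\log q) + O\!\left(\frac{\log q + \log d}{d \log q}\right)$. The main terms give exactly $\log\frac{1}{1-\epsilon}$, with an error of $O(1/d) = O(1/\Delta)$.

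\emph{Main obstacle: the Mertens error term.} The error term is $d^2\log q \cdot q^{-5d/12}$. This is where the hypothesis $q^{5\Delta/12} > \Delta^3\log q$ enters. Since $d \in [\Delta, 2\Delta]$, we get $d^2 \log q \, q^{-5d/12} \le 4\Delta^2\log q \, q^{-5\Delta/12} < 4/\Delta = O(1/\Delta)$. I expect the delicate part to be the bookkeeping here: checking that the exponent $5/12$ really follows from $\sqrt{\tau_{q,d}} \ge q^{5d/12}$ with $\epsilon \le 1/6$, and that all constants stay uniform in $q$ and $\epsilon$.

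Finally, average over prime $d$ and add the heuristic's own $O(1/\Delta)$ term. This gives the bound claimed in Conjecture~\ref{conj:usable-degrees}.
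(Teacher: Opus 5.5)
Your proposal is correct and follows essentially the same route as the paper's proof. You reduce roughness to $P^+(N_{q,d}) \ge \tau_{q,d}$ via Claim~\ref{clm:rough-N-tau} and the at-most-one-large-prime observation, invoke Heuristic~\ref{heuristic:q-random-in-p}, replace $\frac{1}{p-1}$ by $\frac{1}{p}$, apply Lemma~\ref{lem:mertens} with modulus $d$ and multiply by $\phi(d)=d-1$, and control the error using $q^{5\Delta/12} > \Delta^3 \log q$.

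Two small points need tightening:
\begin{itemize}
\item Your $O(\log d)$ term in $\log N_{q,d}$ appears only when $d \mid q-1$, which forces $d < q$. Say this explicitly, since it is what makes the error $O(1/d)$ uniformly in $q$; the paper's two-case computation does exactly this.
\item Lemma~\ref{lem:mertens} also requires $N_{q,d} \ge \tau_{q,d}$. This can fail when $\epsilon = O(1/d)$, but in that case $\log\frac{1}{1-\epsilon} = O(1/\Delta)$ and the target bound holds trivially. The paper's proof skips this case too.
\end{itemize}
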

\begin{proof}
    Throughout this proof, we are sampling $d$ to be a random prime in the interval $[\Delta, 2 \Delta]$.
    First, we claim that $q^d - 1$ is $\epsilon$-rough if and only $P^+(N_{q, d}) \ge \tau_{q,d}$.

    Since $\tau_{q, d} \ge (q^d -1)^{1-\epsilon}$, we have that $P^+(q^d-1) \ge  \tau_{q, d}$ implies that $q^d-1$ is $\epsilon$-rough.

    \begin{align*}
        \Pr_{d} [q^d - 1 \text{ is } \epsilon\text{-rough} ] &\ge \Pr_{d} [ P^+(q^d-1) \ge  \tau_{q, d}]\\
        &= \Pr_{d} [ P^+(N_{q, d}) \ge  \tau_{q, d}] &\text{Claim~\ref{clm:rough-N-tau}}\\
        &= \E_d \# \{\text{prime } p | N_{q, d} : p \ge \tau_{q, d}\},
    \end{align*}
    where the last equality holds because $\# \{\text{prime } p | N_{q, d} : p \ge \tau_{q, d}\} \le 1$. To see, this, suppose for contradiction there were two distinct primes $p_1, p_2| N_{q, d}$ such that $ p_1, p_2 \ge \tau_{q, d}$, we would have that
    \begin{align*}
        q^d \ge N_{q, d} \ge p_1 p_2 \ge \tau_{q, d}^2 = q^{2d (1-\epsilon)} \ge q^{5d/3},
    \end{align*}
    a contradiction. We can rewrite $\# \{\text{prime } p | N_{q, d} : p \ge \tau_{q, d}\}$ as a sum of an indicator over primes $p$ in the interval
    \begin{align*}
        \Pr_{d} [q^d - 1 \text{ is } \epsilon\text{-rough} ] 
        &\ge \E_d \#\{\text{prime } p | N_{q, d} : p \ge \tau_{q, d}\}\\
        &= \E_{d}  \sum_{\substack{\text{primes}\\p \in [\tau_{q, d}, N_{q, d}]}} \mathbbm{1}[p | N_{q, d}]\\
        &= \E_{d} \sum_{\substack{\text{primes}\\p \in [\tau_{q, d}, N_{q, d}] \\ p = 1 \bmod d}  } \frac{d-1}{p-1} + O\left(\frac{1}{\Delta}\right) &\text{Heuristic~\ref{heuristic:q-random-in-p}}\\
        &= \E_{d} \sum_{\substack{\text{primes}\\p \in [\tau_{q, d}, N_{q, d}] \\ p = 1 \bmod d}  } \frac{d-1}{p} \left(1+ \frac{1}{(p-1)}\right) + O\left(\frac{1}{\Delta}\right)\\
        &= \E_{d} \sum_{\substack{\text{primes}\\p \in [\tau_{q, d}, N_{q, d}] \\ p = 1 \bmod d}  } \frac{d-1}{p} \left(1+ O\left(\frac{1}{\tau_{q, d}}\right)\right) + O\left(\frac{1}{\Delta}\right).\\
    \end{align*}
By Lemma~\ref{lem:mertens}, we can lower bound this quantity, and get that
    \begin{align*}
        \Pr_{d} &[q^d - 1 \text{ is } \epsilon\text{-rough} ] \\
        &\ge  \E_d \left[ (d-1) \cdot \left(\frac{1}{\phi(d)} (\log \log N_{q, d} - \log \log \tau_{q, d}) + O \left(\frac{\log \tau_{q, d}}{\sqrt{\tau_{q, d}}} \right)\right) \cdot \left(1+ O\left(\frac{1}{\tau_{q, d}}\right) \right)\right] + O\left(\frac{1}{\Delta}\right) \\
        &=  \E_d \left[ \log \frac{\log N_{q,d}}{\log \tau_{q, d}}  + O \left(\frac{d \log \tau_{q, d}}{\sqrt{ \tau_{q, d}}} + \frac{d}{\tau_{q, d}}\right) \right] + O\left(\frac{1}{\Delta}\right) \\
        &= \log \frac{1}{1 - \epsilon} + O\left(\frac{1}{\Delta}\right) + O \left(\frac{d \log \tau_{q, d}}{\sqrt{ \tau_{q, d}}}\right) + O\left(\frac{1}{\Delta}\right) ,
    \end{align*}
    where the second equality holds because $d$ is an odd prime, so $\phi(d) = d-1$, and the third equality holds because of the following calculation: 
    \begin{itemize}
        \item If $d \nmid q-1$,
    \begin{align*}
        \frac{\log N_{q,d}}{\log \tau_{q,d}} 
        &=\frac{\log \frac{q^d (1 - 1/q^d)}{q-1}}{\log q^{d(1-\epsilon)}}
        =\frac{d \log q + \log (1 - 1/q^d) - \log(q-1)}{d(1-\epsilon) \log q}
        \ge \frac{d \log q + O(1) + O(\log q)}{d(1-\epsilon) \log q}\\
        &= \frac{1}{1-\epsilon} \left(1 + O\left(\frac{1}{d}\right)\right)
    \end{align*}
    \item If $d | q-1$, we have that $ d \le q-1$, and so $\log d \le \log q$, and an almost identical calculation as above shows that
    \begin{align*}
        \frac{\log N_{q,d}}{\log \tau_{q,d}} \ge \frac{1}{1-\epsilon} \left(1 + O\left(\frac{1}{d}\right)\right).
    \end{align*}
    \end{itemize}
    In both cases, since $\log N_{q, d} / \log \tau_{q,d} = \frac{1}{1-\epsilon} \left(1 + O\left(\frac{1}{d}\right)\right)$, we have that 
    \begin{align*}
        \log \frac{\log N_{q,d}}{\log \tau_{q,d}} = \log \frac{1}{1-\epsilon} + \log \left(1 + O\left(\frac{1}{d}\right)\right) = \log \frac{1}{1-\epsilon} + O\left(\frac{1}{d}\right) = \log \frac{1}{1-\epsilon} + O\left(\frac{1}{\Delta}\right).
    \end{align*}
    Finally, observing that $2 d (\log \tau_{q, d})/  \sqrt{\tau_{q, d}} \le O(d (\log \tau_{q, d}) q^{-5d/12}) \ll O(1/\Delta)$ whenever $q^{5\Delta/12} > \Delta^3 \log q$ finishes the proof.
\end{proof}

\section{PIR from Our Decoding Polynomials}\label{sec:gettopir}

We begin by quickly completing the proof of Corollary~\ref{cor:asymptoticpir} (PIR in the constant-server regime):

\begin{proof}[Proof of Corollary~\ref{cor:asymptoticpir}]
    Let $k = s-1$.
    By Theorem~\ref{thm:intromain}, there exists $m$ that is a product of $k$ distinct primes and a finite field $\F$ with $m \mid |\F|-1$ such that there is an $S^*_m$-decoding polynomial over $\F$ of sparsity $k+1 = s$.
    The conclusion is now immediate by plugging into Theorem~\ref{thm:gksinformal}.
\end{proof}
Now we turn to the case of superconstant $s$.
Here, Theorem~\ref{thm:gksinformal} does not apply out of the box so we will use the following workflow to compute bounds on the PIR communication complexity:
\begin{itemize}
    \item We invoke our root-of-unity grid framework to obtain $q, m, \F$ with bounds on these quantities such that there exists an $S^*_m$-decoding polynomial over $\F$ of sparsity $\leq s$.
    
    This part splits into three regimes (chosen according to what will give the best end guarantee for the communication complexity):
    \begin{itemize}
        \item Section~\ref{sec:smallserver}: when $s \leq O((\log \log n)^{1/3}/(\log \log \log n)^{2/3})$, we will use Conjecture~\ref{conjecture:superconstant} and Lemma~\ref{lem:decoderfromconjecture} with $t = 0$, $r = 1$, and $k_1 = s-1$.
        In other words, we build our decoding polynomial from just one root-of-unity grid.
        This is intuitive as this is what we do in the constant-server case.

        \item Section~\ref{sec:medserver}: when $s$ ranges from $\Omega((\log \log n)^{1/3}/(\log \log \log n)^{2/3})$ to $\exp(o(\sqrt{\log \log n/\log \log \log n}))$, we will use multiple root-of-unity grids of the same size $k_i$, but this size will be smaller than $k$.
        (This still goes via Conjecture~\ref{conjecture:superconstant} and Lemma~\ref{lem:decoderfromconjecture}.)

        \item Section~\ref{sec:manyserver}: finally, when $s \geq \exp(\Omega(\sqrt{\log \log n/\log \log \log n}))$, our techniques combined with Conjecture~\ref{conjecture:superconstant} do not provide any advantage over using the standard construction of decoding polynomials.
        For completeness, we have restated the relevant decoding polynomial result in Corollary~\ref{cor:trivialdecodertensored} and derived the resulting communication guarantee in Theorem~\ref{thm:manyserver}.
    \end{itemize}
    
    \item Once we have bounds on $q, m, |\F|$, we can feed these into Corollary~\ref{cor:mvparams}~\cite{Barrington1994,DBLP:journals/combinatorica/Grolmusz00,DGY11} to bound the dimension $d$ of a matching vector family modulo $qm$ of size $\geq n$.
    \item Finally, we can plug these into Theorem~\ref{thm:gks} to obtain a bound on the PIR communication complexity $\mathsf{CC}$.
\end{itemize}

\subsection{Small Superconstant-Server Regime}\label{sec:smallserver}

\begin{theorem}[PIR in the few-server regime]\label{thm:fewserver}
    Assume Conjecture~\ref{conjecture:superconstant}, and suppose $n, s$ are such that $s = \omega(1)$ and:
    $$s \leq O((\log \log n)^{1/3}/(\log \log \log n)^{2/3}).$$
    Then there exists an $s$-server PIR for a database of size $n$ achieving total communication:
    \begin{align*}
        \mathsf{CC} &= \exp\left(\exp\left(O\left(\frac{\log \log n}{s}\right)\right)\right).
    \end{align*}
\end{theorem}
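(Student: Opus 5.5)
We follow the workflow laid out at the start of Section~\ref{sec:gettopir}. Set $k := s-1$ and invoke Conjecture~\ref{conjecture:superconstant} together with Lemma~\ref{lem:decoderfromconjecture} with $t = 0$, $r = 1$, and $k_1 = k$. This produces a prime $q \le 2k$, a modulus $m$ that is a product of $k$ distinct primes (none equal to $q$, since each $p_{1,j}$ divides $q^{d}-1$), and a field $\F$ of characteristic $q$. Over $\F$ there is an $S^*_m$-decoding polynomial of sparsity $k+1 = s$. With $K_1 = k$, the conjecture's bounds specialize to
\[
\log m \le O(k^3 \log^2 k), \qquad \log|\F| \le (\log k)\cdot O(k^2\log k)^{k} = \exp(O(k\log k)).
\]
We then take a matching vector family modulo $qm$, which is a product of $\ell := k+1 = s$ distinct primes, and feed everything into Theorem~\ref{thm:gks}. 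The resulting communication is $O(s\, d \log|\F|)$.

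The main step is bounding $d$ via Corollary~\ref{cor:mvparams} with $\ell = s$ and modulus $qm$. Set $\eta = (\log n)^{s-1}/(qm)$. Then $\log \eta = (s-1)\log\log n - O(s^3\log^2 s)$. The hypothesis $s \le O((\log\log n)^{1/3}/(\log\log\log n)^{2/3})$ is exactly what makes $s^3\log^2 s \le c\, s\log\log n$ for a small constant $c$: it gives $s^2\log^2 s \lesssim (\log\log n)^{2/3}(\log\log\log n)^{-4/3}\log^2 s \le o(\log\log n)$ up to constants. So with a suitable implied constant we get $\log\eta = \Theta(s\log\log n)$, and also $\eta \ge 2^{\Omega(s^2)}$, which places us in the second case of the corollary. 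That case gives
\[
\log d \le O\!\left(\log n\cdot\frac{(s\log\log n)^{1-1/s}}{s\,\eta^{1/s}}\right),
\]
with $\eta^{1/s} = (\log n)^{1-1/s}(qm)^{-1/s}$ and $(qm)^{1/s} = \exp(O(s^2\log^2 s))$. Hence
\[
\log d \le (\log n)^{1/s}(\log\log n)^{1-1/s}\exp(O(s^2\log^2 s)).
\]
I expect the main obstacle to be getting this calculation to hold uniformly in growing $s$, in particular confirming that the $\exp(O(s^2\log^2 s))$ factor coming from $m^{1/s}$ is dominated. That factor is $\exp(o(\log\log n / s))$ precisely when $s^3\log^2 s = o(\log\log n)$, which is where the threshold $(\log\log n)^{1/3}/(\log\log\log n)^{2/3}$ comes from.

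Finally, we combine the pieces. We have $(\log n)^{1/s} = \exp(\log\log n/s)$, and $(\log\log n)^{1-1/s} \le \exp(\log\log\log n)$. Since $s \le (\log\log n)^{1/3}$, we have $\log\log\log n \le O(\log\log n/s)$. Therefore $\log d \le \exp(O(\log\log n/s))$. The remaining terms are $\log s$ and $\log\log|\F| = O(k\log k) = O(s\log s)$, and $s\log s \le \exp(O(\log\log n/s))$ because $\log\log n/s \ge (\log\log n)^{2/3}$. Thus
\[
\log \mathsf{CC} \le \log s + \log d + \log\log|\F| + O(1) \le \exp(O(\log\log n/s)),
\]
which is the claimed bound. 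The assumption $s = \omega(1)$ is needed only so that the implied constants absorb the lower-order terms cleanly.
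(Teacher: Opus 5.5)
Your proposal is correct and follows essentially the same route as the paper: instantiate Lemma~\ref{lem:decoderfromconjecture} (with the bounds of Conjecture~\ref{conjecture:superconstant}) at $t=0$, $r=1$, $k_1=s-1$, apply the second branch of Corollary~\ref{cor:mvparams} with modulus $qm$ and $\ell=s$, absorb the $\exp(O(s^2\log^2 s))$ factor from $(qm)^{1/s}$ using the upper bound on $s$, and finally absorb $s\log|\F| \le \exp(O(s\log s))$ via Theorem~\ref{thm:gks}. One small imprecision: the hypothesis gives only $s^3\log^2 s = O(\log\log n)$, not $o(\log\log n)$, so that factor is $\exp(O(\log\log n/s))$ rather than $\exp(o(\log\log n/s))$, which is still all the argument needs.
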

\begin{proof}
    We begin by setting parameters and calculating the resulting bounds on $m, |\F|$. We invoke Lemma~\ref{lem:decoderfromconjecture} with $t = 0$, $r = 1$, and $k_1 = s-1$.
    The resulting decoding polynomial has sparsity $\leq 2^t (k_1+1) = s$.
    The bounds from the lemma tell us that:
    \begin{align*}
        q &\leq O(s), \\
        m &\leq \exp\left(O\left((\log q) \cdot (s-1)^3 \log (s-1)\right)\right) \\
        &= \exp(O(s^3 (\log s)^2)),\text{ and} \\
        \log |\F| &\leq (\log q) \cdot O((s-1)^2 \log(s-1))^{s-1} \\
        &\leq O(1)^s \cdot s^{2s} \cdot (\log s)^s.
    \end{align*}
    In Corollary~\ref{cor:mvparams}, we will take the modulus to be $qm \leq \exp(O(s^3(\log s)^2))$ and the number of primes $\ell$ is exactly $s$ (accounting for the extra prime we get from $q$ as well).
    \begin{claim}\label{clm:smalleta}
        We have:
        $$\exp({\Omega(s^2)}) \leq \exp((s-1)\log \log n - O(s^3(\log s)^2)) \leq \eta \leq \exp(s \log \log n).$$
    \end{claim}
    \begin{proof}
        We have:
        \begin{align*}
            \eta &= \frac{(\log n)^{\ell-1}}{qm} \\
            &\geq \frac{(\log n)^{s-1}}{\exp(O(s^3(\log s)^2))} \\
            &= \exp\left((s-1)\log \log n - O(s^3(\log s)^2)\right),
        \end{align*}
        and
        $\eta \leq (\log n)^{s-1} \leq \exp(s \log \log n)$.
        The leftmost inequality follows from the fact that $s \log \log n \geq \tilde{\Omega}(s^4)$.
    \end{proof}

    \begin{claim}\label{clm:smalld}
        We have:
        $$d \leq \exp\left(\exp\left(O\left(\frac{\log \log n}{s}\right)\right)\right).$$
    \end{claim}
    \begin{proof}
        By Claim~\ref{clm:smalleta}, we will use the second branch of Corollary~\ref{cor:mvparams} and compute:
        \begin{align*}
            d &= \exp\left(O\left(\log n \cdot \frac{(\log \eta)^{1-1/\ell}}{\ell \eta^{1/\ell}}\right)\right) \\
            &\leq \exp\left(O\left(\log n \cdot \frac{(s \log \log n)^{1-1/s}}{\ell \eta^{1/\ell}}\right)\right) \\
            &\leq \exp\left(O\left(\log n \cdot \frac{(s \log \log n)^{1-1/s}}{s \cdot \exp((1-1/s) \log \log n - O(s^2(\log s)^2))}\right)\right) \\
            &= \exp\left(2^{O(s^2(\log s)^2)} \cdot (\log n)^{1/s} (\log \log n)^{1-1/s}\right) \\
            &= \exp\left(\exp\left(O(s^2(\log s)^2) + \frac{\log \log n}{s} + \left(1-\frac{1}{s}\right)\log \log \log n\right)\right) \\
            &= \exp\left(\exp\left(O\left(\frac{\log \log n}{s}\right)\right)\right).
        \end{align*}
        In the final step, we are using the fact that $s \leq O((\log \log n)^{1/3}/(\log \log \log n)^{2/3})$ to absorb the $O(s^2(\log s)^2)$ term.
    \end{proof}
    \noindent
    Now to complete the proof and bound $\mathsf{CC}$, by Theorem~\ref{thm:gks} it remains to verify that:
    $$s \log |\F| \leq \exp\left(\exp\left(O\left(\frac{\log \log n}{s}\right)\right)\right).$$
    From the earlier computations, we have:
    \begin{align*}
        s \log|\F| &\leq O(1)^s \cdot s^{2s} \cdot (\log s)^s \\
        &\leq \exp(O(s \log s)) \\
        &= \exp(\exp(O(\log s))),
    \end{align*}
    and now the conclusion follows by noting that $s \log s \leq \tilde{O}((\log \log n)^{1/3}) \leq O(\log \log n)$.
\end{proof}

\subsection{Intermediate-Server Regime}\label{sec:medserver}

\begin{theorem}[PIR in the intermediate-server regime]\label{thm:intserver}
    Assume Conjecture~\ref{conjecture:superconstant}, and suppose $s$ is such that:
    $$\Omega\left(\frac{(\log \log n)^{1/3}}{(\log \log \log n)^{2/3}}\right) \leq s \leq \exp\left(o\left(\sqrt{\frac{\log \log n}{\log \log \log n}}\right)\right).$$
    Then there exists an $s$-server PIR for a database of size $n$ achieving total communication:
    $$\exp\left(\exp\left(O\left(\frac{(\log \log n)^{2/3} (\log \log \log n)^{1/3}}{(\log s)^{1/3}} \cdot \left(\log\left(\frac{\log \log n}{(\log \log \log n) \cdot (\log s)^2}\right)\right)^{2/3}\right)\right)\right).$$
\end{theorem}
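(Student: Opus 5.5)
I would follow the same three-step workflow as in the proof of Theorem~\ref{thm:fewserver}, changing only the parameter choice. Invoke Lemma~\ref{lem:decoderfromconjecture} (via Conjecture~\ref{conjecture:superconstant}) with $t = 0$ and $r$ root-of-unity grids of a common size $k_1 = \cdots = k_r = k$, where $k$ is a free parameter optimized at the end. Choose $r = \lfloor \log s/\log(k+1)\rfloor$, so the sparsity $(k+1)^r \le s$. The number of primes is then $\ell = rk + 1$ (counting $q$), which is roughly $k\log s/\log k$. With $K_i = (r-i+1)k$ and $K = rk$, the bounds of Conjecture~\ref{conjecture:superconstant} become
\[
q \le 2k, \qquad \log m \le O\Bigl((\log k)\, k^3 \sum_{i} (r-i+1)\log K\Bigr) = O(k^3 r^2 \log k \log K),
\]
and $\log|\F| \le (\log k)\cdot O(k^2 r\log K)^{kr} = \exp(O(kr\log K))$.

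Next I feed the modulus $qm$ and $\ell = rk+1$ into Corollary~\ref{cor:mvparams}. As in Claim~\ref{clm:smalleta}, I write
\[
\log\eta = (\ell-1)\log\log n - \log(qm) \ge rk\log\log n - O(k^3r^2\log k\log K).
\]
I will constrain $k$ so that the subtracted term is at most half the main term, which means $k^2 r \log k\log K \ll \log\log n$. This keeps us in the second branch, with $\eta^{1/\ell} \ge (\log n)^{1-1/\ell}\cdot \exp(-O(k^2 r\log k\log K))$. The second branch then gives
\[
\log d \approx (\log n)^{1/\ell}(\log\log n)^{1-1/\ell}\cdot \exp\bigl(O(k^2r\log k\log K)\bigr).
\]
Hence
\[
\log\log d \lesssim \frac{\log\log n}{rk} + k^2 r\log k\log K + \log\log\log n.
\]
The factor $s\log|\F| = \exp(O(kr\log K))$ is dominated. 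Since $r \approx \log s/\log k$, the two competing terms are $A = \frac{\log\log n\cdot\log k}{k\log s}$ and $B = \frac{k^2\log s\cdot\log K}{1}$ up to constants, where the $\log k$ factors partly cancel.

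The main step is to balance $A$ against $B$. This suggests $k^3 \approx \log\log n\,/\,\bigl((\log s)^2\log K\bigr)$ up to log factors, so that
\[
k \approx \Bigl(\frac{\log\log n}{(\log s)^2\log\log\log n}\Bigr)^{1/3}\cdot\text{polylog}.
\]
Substituting back gives $\log\log\mathsf{CC} \approx (\log\log n)^{2/3}(\log\log\log n)^{1/3}(\log s)^{-1/3}\cdot(\log k)^{2/3}$. This matches the stated bound, with the last factor $\bigl(\log\frac{\log\log n}{(\log\log\log n)(\log s)^2}\bigr)^{2/3}$ being $\Theta((\log k)^{2/3})$.

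I expect the main obstacle to be the bookkeeping of the constraints, namely checking three things:
\begin{itemize}
\item $k \ge 2$ and $r \ge 1$. The condition $k \ge 2$ is exactly where the upper bound $s \le \exp(o(\sqrt{\log\log n/\log\log\log n}))$ enters; beyond it $k$ would be forced below a constant.
\item $k \le s-1$, so that $r \ge 1$. This is where the lower bound on $s$ enters: the matching lower bound makes $k$ as chosen at most $s$, and at the boundary the regime meets Theorem~\ref{thm:fewserver}.
\item $\eta \ge 2^{\Omega(\ell^2)}$, which follows from the half-main-term condition because $\ell^2 \approx k^2r^2 \ll rk\log\log n$.
\end{itemize}
I would also handle rounding in $r$: if $(k+1)^r$ falls short of $s$, the construction simply uses fewer than $s$ servers, which is allowed since the sparsity only needs to be $\le s$. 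Finally, the $\log\log\log n$ term from $(\log\log n)^{1-1/\ell}$ is absorbed by the main term.
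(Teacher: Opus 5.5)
Your proposal is correct and takes essentially the same route as the paper. The paper likewise uses $t=0$, $r=\lfloor \log s/\log(k+1)\rfloor$ grids of equal size, the second branch of Corollary~\ref{cor:mvparams}, and balances $\frac{\log\log n}{kr}$ against $k^2 r\log k\log K$, making your choice explicit as $k=a=\Theta((x\log x)^{1/3})$ with $x=\frac{\log\log n}{(\log\log\log n)(\log s)^2}=\omega(1)$, replacing $\log K$ by $O(\log\log\log n)$, and checking the same constraints you list ($a+1\le s$ from the lower bound on $s$, plus $\eta\ge 2^{\Omega(\ell^2)}$ and $ar=o(\log\log n)$).
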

\begin{proof}
    We begin by setting parameters and calculating the resulting bounds on $m, |\F|$.
    Define
    $$x = \frac{\log \log n}{(\log \log \log n) \cdot (\log s)^2} = \omega(1),$$
    and $a = \Theta((x \log x)^{1/3})$.   
    We will invoke Lemma~\ref{lem:decoderfromconjecture} with $t = 0$, $r = \lfloor \frac{\log s}{\log (a+1)} \rfloor$, and $k_1 = \ldots = k_r = a$.
    The resulting decoding polynomial has sparsity $2^t \prod_{i = 1}^r (k_i+1) = (a+1)^r \leq s$.
    As an additional sanity check, we verify that $a+1 \leq s$; indeed, we have:
    \begin{align*}
        a+1 &\leq s \\
        \Leftarrow (x \log x)^{1/3} &\leq O(s) \\
        \Leftrightarrow x &\leq O(s^3/\log s) \\
        \Leftrightarrow \frac{\log \log n}{\log \log \log n} &\leq O(s^3 \log s) \\
        \Leftrightarrow \Omega\left(\frac{(\log \log n)^{1/3}}{(\log \log \log n)^{2/3}}\right) &\leq s,
    \end{align*}
    which we are assuming to be the case.
    Continuing on, we have $\ell = ar+1$ and $K = ar$.
    For any $i \in [r]$, we have $K_i = a(r-i+1)$.
    \begin{claim}
        We have:
        \begin{align*}
            q &\leq 2a\\
            m &\leq \exp(O(a^3r^2\log a \log \log \log n)) \\
            \log|\F| &\leq \exp(O(ar\log \log \log n)).
        \end{align*}
    \end{claim}
    \begin{proof}
        By the bounds stated in Lemma~\ref{lem:decoderfromconjecture}, we have:
    \begin{align*}
        q &\le 2a, \\
        m &\le \exp\left( O\left((\log q) \cdot \sum_{i=1}^r k_i^2 K_i \log K_i \right) \right) \\
        &= \exp\left( O\left((\log q) \cdot \sum_{i=1}^r a^2 \cdot a(r-i+1) \cdot (\log (a(r-i+1)))\right) \right) \\
        &= \exp\left(O\left(a^3 \log a \cdot \left(\sum_{i = 1}^r (r-i+1) \cdot (\log(a(r-i+1)))\right)\right)\right) \\
        &= \exp\left(O\left(a^3 \log a \cdot \left(r^2 \log a + r^2 \log r\right)\right)\right), \text{ and}\\
        \log |\F| &\le  (\log q) \cdot  \left(\prod_{i=1}^r O(k_i K_i \log K_i)^{k_i}\right) \\
        &\leq (\log q) \cdot \prod_{i = 1}^r O(a^2(r-i+1) \log(a(r-i+1)))^a \\
        &= (\log q) \cdot \exp\left(O\left(a \sum_{i = 1}^r \left(O(1) + \log a + \log(r-i+1) + \log \log (a(r-i+1))\right)\right)\right) \\
        &= \exp\left(O\left(ar(\log a+\log r)\right)\right).
    \end{align*}
    To complete the proof of the claim, it suffices to show that $\log a + \log r \leq O(\log \log \log n)$.
    Indeed, we have:
    \begin{align*}
        \log a &= \Theta(\log x) \\
        &\leq O(\log \log \log n),\text{ and}\\
        \log r &\leq \log \log s \\
        &\leq O(\log \log \log n).
    \end{align*}
    \end{proof}

    \begin{claim}\label{claim:medeta}
        We have:
        $$\exp({\Omega(\ell^2})) \leq \exp(ar\log \log n - O(a^3r^2\log a \cdot \log \log \log n)) \leq \eta \leq \exp(ar\log \log n),$$
        and $ar \leq o(\log \log n)$.
    \end{claim}
    \begin{proof}
        First we lower bound $\eta$:
        \begin{align*}
            \eta &= \frac{(\log n)^{\ell-1}}{qm} \\
            &\geq \frac{(\log n)^{ar}}{\exp(O(a^3r^2 \log a \cdot \log \log \log n))} \\
            &= \exp(ar\log \log n - O(a^3r^2\log a \cdot \log \log \log n)).
        \end{align*}
        The upper bound on $\eta$ is straightforward: $\eta \leq (\log n)^{\ell-1} = \exp(ar\log \log n)$.
        To establish the leftmost inequality, note that $\ell^2 = a^2r^2$, so it suffices to show that:
        \begin{align*}
            ar\log \log n &\geq \omega(a^3r^2 \log a(\log a + \log r)) \\
            \Leftrightarrow \log \log n &\geq \omega(a^2r \log a(\log a + \log r)) \\
            \Leftrightarrow \log \log n &\geq \omega(a^2 \log s(\log a + \log r)).
        \end{align*}
        To check this, note that:
        \begin{align*}
            a^2 \log s(\log a + \log r) &\leq O\left(x^{2/3} (\log x)^{2/3} \cdot \log s \cdot (\log x + \log \log s)\right) \\
            &\leq O\left(\frac{(\log \log n)^{2/3}}{(\log s)^{1/3}} \cdot (\log \log \log n + \log \log s)\right) \\
            &\leq o(\log \log n),
        \end{align*}
        as claimed.
        Lastly, we have already shown that $a^2r^2 \leq o(ar\log \log n) \Rightarrow ar \leq o(\log \log n)$.
    \end{proof}

    \begin{claim}\label{claim:medd}
        We have:
        $$d \leq \exp\left(\exp\left(O\left(\frac{(\log \log n)^{2/3} (\log \log \log n)^{1/3}}{(\log s)^{1/3}} \cdot \left(\log\left(\frac{\log \log n}{(\log \log \log n) \cdot (\log s)^2}\right)\right)^{2/3}\right)\right)\right).$$
    \end{claim}
    \begin{proof}
        By Claim~\ref{claim:medeta}, we can use the second branch of Corollary~\ref{cor:mvparams} and compute:
        \begin{align*}
            d &= \exp\left(O\left(\log n \cdot \frac{(\log \eta)^{1-1/\ell}}{\ell \eta^{1/\ell}}\right)\right) \\
            &\leq \exp\left(O\left(\log n \cdot \frac{(ar\log \log n)^{1-1/\ell}}{\ell \eta^{1/\ell}}\right)\right) \\
            &\leq \exp\left(O\left(\log n \cdot \frac{(ar\log \log n)^{1-1/\ell}}{ar \exp\left((1-\frac{1}{ar+1})\log \log n - O(a^2r \log a(\log\log\log n))\right)}\right)\right) \\
            &= \exp\left(\exp\left(O\left(\frac{\log \log n}{ar} + \log \log \log n +  a^2r\log a(\log \log \log n)\right)\right)\right).
        \end{align*}
        We address each of the three terms in the parentheses above separately.
        First, we have:
        \begin{align*}
            a^2r\log a(\log \log \log n) &\leq O\left(x^{2/3}(\log x)^{2/3} \cdot \frac{\log s}{\log x} \cdot \log x \cdot (\log \log \log n)\right) \\
            &= O\left(x^{2/3}(\log x)^{2/3} \cdot \log s \cdot (\log \log \log n)\right) \\\\
            &= O\left(\frac{(\log \log n)^{2/3}}{(\log \log \log n)^{2/3} (\log s)^{4/3}} (\log x)^{2/3} \cdot \log s \cdot \log \log \log n\right)\\
            &= O\left(\frac{(\log \log n)^{2/3} (\log \log \log n)^{1/3}}{(\log s)^{1/3}} (\log x)^{2/3}\right),
        \end{align*}
        which is equivalent to the desired bound by definition of $x$.
        From this, we can also infer the same upper bound on $\log \log \log n$.
        As for the first of the three terms, we have:
        \begin{align*}
            \frac{\log \log n}{ar} &= \Theta\left(\frac{\log \log n}{(x \log x)^{1/3} (\log s/\log x)}\right) &\text{by definition of $a, r$} \\
            &= \Theta\left(\frac{\log \log n}{x^{1/3} \log s} \cdot (\log x)^{2/3}\right) \\
            &= \Theta\left(\frac{\log \log n}{\log s} \cdot \frac{(\log \log \log n)^{1/3} \cdot (\log s)^{2/3}}{(\log \log n)^{1/3}} (\log x)^{2/3}\right) \\
            &= \Theta\left(\frac{(\log \log n)^{2/3}(\log \log \log n)^{1/3}}{(\log s)^{1/3}} \cdot (\log x)^{2/3}\right).
        \end{align*}
        This completes the proof of this claim.
    \end{proof}
    Now to complete the proof and bound $\mathsf{CC}$, by Theorem~\ref{thm:gks} it remains to verify that:
    $$s \log |\F| \leq \exp\left(\exp\left(O\left(\frac{(\log \log n)^{2/3} (\log \log \log n)^{1/3}}{(\log s)^{1/3}} (\log x)^{2/3}\right)\right)\right).$$
    First we show that $s$ satisfies this bound:
    \begin{align*}
        \log s &\leq o(\sqrt{\log \log n}) \\
        \Rightarrow (\log s)^{1/3} &\leq o((\log \log n)^{1/6}) \\
        \Rightarrow (\log s)^{1/3} \log \log s &\leq o((\log \log n)^{2/3}) \\
        \Rightarrow \log \log s &\leq \frac{(\log \log n)^{2/3}}{(\log s)^{1/3}} \\
        &\leq O\left(\frac{(\log \log n)^{2/3} (\log \log \log n)^{1/3}}{(\log s)^{1/3}} (\log x)^{2/3}\right) \\
        \Rightarrow s &\leq \exp\left(\exp\left(O\left(\frac{(\log \log n)^{2/3} (\log \log \log n)^{1/3}}{(\log s)^{1/3}} (\log x)^{2/3}\right)\right)\right).
    \end{align*}
    Similarly, we have:
    \begin{align*}
        \log|\F| &\leq \exp(O(ar\log \log \log n)) \\
        &\leq \exp(O(\log \log n \cdot \log \log \log n)) \\
        &\leq \exp(\exp(O(\log \log \log n))) \\
        &\leq \exp\left(\exp\left(O\left((\log \log n)^{1/2}\right)\right)\right) \\
        &\leq \exp\left(\exp\left(O\left(\frac{(\log \log n)^{2/3}}{(\log s)^{1/3}}\right)\right)\right) &\text{by the above bound on $\log s$} \\
        &\leq \exp\left(\exp\left(O\left(\frac{(\log \log n)^{2/3} (\log \log \log n)^{1/3}}{(\log s)^{1/3}} (\log x)^{2/3}\right)\right)\right).
    \end{align*}
\end{proof}

\subsection{Many-Server Regime}\label{sec:manyserver}

It now remains to deal with the regime where $s \geq \exp(\Omega(\sqrt{\log \log n/\log \log \log n}))$.
In this regime, it appears that our techniques together with Conjecture~\ref{conjecture:superconstant} do not offer any advantage at all over prior work~\cite{Efr09}.
Put another way, the communication-minimizing setting of parameters in Conjecture~\ref{conjecture:superconstant} would be to set $r$ to 0---or equivalently, just use the $2^{\ell}$-sparse decoding polynomial modulo a product of $\ell$ primes given by Corollary~\ref{cor:trivialdecodertensored}.
This is exactly the state-of-the-art approach prior to our work~\cite{Efr09}, which we summarized in Theorem~\ref{thm:manyserver}.
For completeness, we re-derive its communication bound here, making no claim to originality.

\begin{proof}[Proof of Theorem~\ref{thm:manyserver}]
    We begin by setting parameters and calculating the resulting bounds on $m, |\F|$.
    To do this, we invoke Corollary~\ref{cor:trivialdecodertensored} with $\lfloor \log_2 s \rfloor$ primes (note that the corollary is applicable as $s \geq 32 \Rightarrow \lfloor \log_2 s \rfloor \geq 5$) to obtain $m, \F$ of characteristic $q = 2$ such that there is an $S^*_m$-decoding polynomial over $\F$ of sparsity $\leq 2^{\lfloor \log_2 s \rfloor} \leq s$ and we have the bounds:
    \begin{align*}
        m &\leq (\log_2 s \log \log_2 s)^{\log_2 s} \\
        \log|\F| &\leq \exp\left(O(\log s \log \log s)\right).
    \end{align*}
    Then for the purposes of Corollary~\ref{cor:mvparams}, we have $\ell = \lfloor \log_2 s \rfloor + 1$.

    \begin{claim}\label{clm:largeeta}
        We have:
        $$\exp(\Omega(\ell^2)) \leq \frac{(\log n)^{\lfloor \log_2 s \rfloor}}{2(\ell \log \ell)^\ell} \leq \eta \leq (\log n)^{\lfloor \log_2 s \rfloor}.$$
    \end{claim}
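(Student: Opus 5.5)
The claim has three inequalities. I will verify them in order, using only the definition $\eta = (\log n)^{\ell-1}/m'$. Here the modulus is $m' = qm = 2m$ and $\ell = \lfloor \log_2 s\rfloor + 1$. The characteristic $q=2$ contributes the extra prime, which is why $\ell$ exceeds the number of primes in $m$ by one, and $\ell - 1 = \lfloor \log_2 s\rfloor$.

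\emph{Upper bound.} Since $m' = 2m \ge 1$, we immediately get $\eta \le (\log n)^{\ell-1} = (\log n)^{\lfloor \log_2 s\rfloor}$.

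\emph{Middle inequality.} I bound $m$ using Corollary~\ref{cor:trivialdecodertensored}, which was applied with $\ell-1 = \lfloor\log_2 s\rfloor \ge 5$ primes. It gives $m \le ((\ell-1)\log(\ell-1))^{\ell-1}$, which is at most $(\ell\log\ell)^\ell$ by monotonicity. Hence $qm \le 2(\ell\log\ell)^\ell$, and therefore $\eta \ge (\log n)^{\lfloor\log_2 s\rfloor}/(2(\ell\log\ell)^\ell)$.

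\emph{Leftmost inequality.} This is the only step needing the hypothesis $s \le \log n$, and I expect it to be the main obstacle. Taking logarithms, I must show
\[
(\ell-1)\log\log n - \log 2 - \ell\log(\ell\log\ell) \ \ge\ \Omega(\ell^2).
\]
From $s \le \log n$ we get $\ell - 1 \le \log_2 s \le \log_2\log n$, so $\log\log n \ge (\ell-1)\ln 2$. This makes the first term at least $(\ln 2)(\ell-1)^2$. The subtracted terms are $O(\ell\log\ell) = o(\ell^2)$. Since $\ell \ge 6$, after adjusting the constant in the $\Omega$ (or noting that finitely many small $\ell$ are absorbed into constants), the difference is at least $c\ell^2$ for some absolute $c>0$. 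That gives the lower bound $\exp(\Omega(\ell^2))$. If the constants turn out tight for small $\ell$, I will invoke that the $\Omega$ hides an absolute constant, which can be chosen small enough to cover every $\ell \ge 6$, because $(\ln 2)(\ell-1)^2 - \ell\log(\ell\log\ell) - \log 2$ is positive and grows quadratically.
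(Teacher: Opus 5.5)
Your proof is correct and follows essentially the same route as the paper: the upper bound is immediate from $2m \ge 1$, the middle inequality comes from the bound on $m$ in Corollary~\ref{cor:trivialdecodertensored}, and the leftmost inequality reduces, after discarding the $O(\ell\log\ell)$ terms, to $\lfloor\log_2 s\rfloor\log\log n \ge \Omega(\ell^2)$, which follows from $s \le \log n$. Your explicit check that $(\ln 2)(\ell-1)^2 - \ell\log(\ell\log\ell) - \log 2$ stays positive and grows quadratically for $\ell \ge 6$ just makes the paper's absorption of lower-order terms slightly more quantitative.
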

    \begin{proof}
        First we lower bound $\eta$:
        \begin{align*}
            \eta &= \frac{(\log n)^{\ell-1}}{2m} \\
            &\geq \frac{(\log n)^{\lfloor \log_2 s \rfloor}}{2(\ell \log \ell)^\ell}.
        \end{align*}
        The upper bound on $\eta$ is immediate from $\eta \leq (\log n)^{\ell-1}$.
        Finally, to justify the leftmost inequality, we want to show that:
        \begin{align*}
            \Omega(\ell^2) &\leq \log \left(\frac{(\log n)^{\lfloor \log_2 s \rfloor}}{2(\ell \log \ell)^\ell}\right) \\
            &= \lfloor \log_2 s \rfloor \log \log n - \log 2 - \ell \log \ell - \ell \log \log \ell \\
            \Leftrightarrow \Omega(\ell^2) &\leq \lfloor \log_2 s \rfloor \log \log n.
        \end{align*}
        This follows from noting that $\ell^2 \leq O((\log s)^2) \leq O(\log s \cdot \log \log n)$, since we are assuming that $s \leq \log n$.
    \end{proof}

    \begin{claim}
        We have:
        \begin{align*}
            d &\leq \exp\left(O\left((\log n)^{\frac{1}{\lfloor \log_2 s \rfloor + 1}} \cdot (\log \log n)^{1-\frac{1}{\lfloor \log_2 s \rfloor + 1}} \cdot \log s \cdot \log \log s\right)\right).
        \end{align*}
    \end{claim}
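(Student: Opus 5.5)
We will prove the claim by applying the second branch of Corollary~\ref{cor:mvparams}, which Claim~\ref{clm:largeeta} licenses since $\eta \geq \exp(\Omega(\ell^2))$. Here the modulus is $2m$ and $\ell = \lfloor \log_2 s \rfloor + 1$. That branch gives
\[
d = \exp\left(O\left(\log n \cdot \frac{(\log \eta)^{1-1/\ell}}{\ell\, \eta^{1/\ell}}\right)\right),
\]
so it suffices to upper bound the numerator and lower bound the denominator, both using Claim~\ref{clm:largeeta}.

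\textbf{Numerator.} The upper bound $\eta \leq (\log n)^{\ell-1}$ gives $\log \eta \leq \ell \log\log n$. Hence
\[
(\log\eta)^{1-1/\ell} \leq \ell^{1-1/\ell} (\log\log n)^{1-1/\ell} \leq \ell\,(\log\log n)^{1-1/\ell}.
\]
The factor $\ell$ here cancels the $\ell$ in the denominator.

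\textbf{Denominator.} The lower bound $\eta \geq (\log n)^{\ell-1}/(2(\ell\log\ell)^\ell)$ gives
\[
\eta^{1/\ell} \geq (\log n)^{1-1/\ell} \cdot \frac{1}{2^{1/\ell}\,\ell\log\ell}.
\]

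\textbf{Combining.} Substituting both bounds gives
\[
\log n \cdot \frac{(\log \eta)^{1-1/\ell}}{\ell\, \eta^{1/\ell}} \leq O\left((\log n)^{1/\ell}(\log\log n)^{1-1/\ell}\cdot \ell\log\ell\right).
\]
Since $\ell = \Theta(\log s)$, we have $\ell \log \ell = O(\log s \log\log s)$, which yields exactly the claimed exponent $(\log n)^{\frac{1}{\lfloor\log_2 s\rfloor+1}}(\log\log n)^{1-\frac{1}{\lfloor\log_2 s\rfloor+1}}\log s\log\log s$.

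\textbf{Main obstacle.} The only delicate step is bookkeeping in the denominator. We must check that the $\ell$-th root of the prefactor $2(\ell\log\ell)^\ell$ costs only a factor $O(\ell\log\ell)$, not something exponential in $\ell$. We must also check that the factors $\ell^{1-1/\ell}$ from the numerator and $\ell$ from the denominator combine to a constant. Since everything is taken to the power $1/\ell$, both checks are immediate.
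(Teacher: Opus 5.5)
Your proposal is correct and follows the paper's proof essentially line for line. You apply the second branch of Corollary~\ref{cor:mvparams} via Claim~\ref{clm:largeeta}, bound $(\log\eta)^{1-1/\ell}$ above using $\eta \le (\log n)^{\ell-1}$, bound $\eta^{1/\ell}$ below using $\eta \ge (\log n)^{\ell-1}/(2(\ell\log\ell)^\ell)$, and simplify. The only cosmetic difference is that you use $\ell\log\log n$ where the paper uses $(\ell-1)\log\log n$, which changes nothing.
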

    \begin{proof}
        By Claim~\ref{clm:largeeta}, we use the second branch of Corollary~\ref{cor:mvparams} to compute:
        \begin{align*}
            d &= \exp\left(O\left(\log n \cdot \frac{(\log \eta)^{1-1/\ell}}{\ell \eta^{1/\ell}}\right)\right) \\
            &\leq \exp\left(O\left(\log n \cdot \frac{((\ell-1)\log \log n)^{1-1/\ell}}{\ell \eta^{1/\ell}}\right)\right) \\
            &\leq \exp\left(O\left(\log n \cdot \frac{((\ell-1)\log \log n)^{1-1/\ell}}{\ell (\log n)^{1-1/\ell} /(2^{1/\ell} \ell \log \ell)}\right)\right) \\
            &= \exp\left(O\left(\log n \cdot \frac{(\log \log n)^{1-1/\ell}}{(\log n)^{1-1/\ell} /(\ell \log \ell)}\right)\right)\\
            &= \exp\left(O\left((\log n)^{1/\ell} \cdot (\log \log n)^{1-1/\ell} \cdot \ell \log \ell\right)\right)\\
            &= \exp\left(O\left((\log n)^{\frac{1}{\lfloor \log_2 s \rfloor + 1}} \cdot (\log \log n)^{1-\frac{1}{\lfloor \log_2 s \rfloor + 1}} \cdot \log s \cdot \log \log s\right)\right).
        \end{align*}
    \end{proof}
    Now to complete the proof and bound $\mathsf{CC}$, by Theorem~\ref{thm:gks} it remains to verify that:
    $$s \log |\F| \leq \exp\left(O\left((\log n)^{\frac{1}{\lfloor \log_2 s \rfloor + 1}} \cdot (\log \log n)^{1-\frac{1}{\lfloor \log_2 s \rfloor + 1}} \cdot \log s \cdot \log \log s\right)\right).$$
    The $s$ term on the LHS is clearly absorbed by the $\log s$ on the RHS, and we know that $\log |\F| \leq \exp(O(\log s \log \log s))$ so this term will also be absorbed.

    Finally, we justify the coarser bound of:
    $$\exp\left(O\left((\log n)^{\frac{1}{\lfloor \log_2 s \rfloor + 1}} \cdot (\log \log n)^{1-\frac{1}{\lfloor \log_2 s \rfloor + 1}} \cdot \log s \cdot \log \log s\right)\right) \leq \exp\left(\exp\left(O\left(\log \log \log n + \frac{\log \log n}{\log s}\right)\right)\right).$$
    The $\frac{\log \log n}{\log s}$ term from the RHS absorbs the $(\log n)^{\frac{1}{\lfloor \log_2 s \rfloor + 1}}$ from the LHS.
    The $\log \log \log n$ term from the RHS absorbs the $(\log \log n)^{1 - \frac{1}{\lfloor \log_2 s \rfloor + 1}}$ term on the LHS.
    Finally, we have $\log s \cdot \log \log s \leq \exp(O(\log \log s)) \leq \exp(O(\log \log \log n))$ so the $\log \log \log n$ also absorbs the remaining terms on the LHS.
\end{proof}

\section{Concrete Empirical Results for Small $s$}\label{sec:empirics}

While we believe Conjecture~\ref{conjecture:onetuple-intro} and therefore Theorem~\ref{thm:intromain} and Corollary~\ref{cor:asymptoticpir} to be highly plausible, we still believe it is important to complement this with explicit examples of PIR protocols achieving the guarantees we claim for finite values of $s$, to the extent possible.
A similar empirical study was conducted by \cite{Chee2011} to support their conjectural construction of 3-sparse decoding polynomials modulo products of two primes.
In this section, we present the empirical results that we obtained to this end.

For each prime characteristic $q$, we aim to find $q-1$ primes $d_i$ satisfying the preconditions of Lemma~\ref{lemma:suffcondsharp}, taking $p_i$ to be the largest prime divisor of $q^{d_i} - 1$ and ensuring that the $p_i$'s are distinct.
If we find $k$ such primes, then we know there exists a $(k+1)$-term decoding polynomial with respect to some product of $k$ primes.
Moreover, it turns out that there also exists a $(k'+1)$-term decoding polynomial with respect to some product of $k'$ primes for any $k' \leq k$.
To see this ``monotonicity'' property, simply note that a root-of-unity grid for $k$ primes can be turned into a root-of-unity grid for $k'$ primes by dropping some rows and columns.
(In fact, monotonicity was also shown in more generality for share conversion schemes by Beimel and Lasri~\cite[Theorem 1.4]{bl26}.)

Bearing this in mind, we search over primes $d$ satisfying the inequality of Lemma~\ref{lemma:suffcondsharp} when we take $p$ to be the largest prime divisor of $q^{d} - 1$.
Note that the precondition $q^{\gcd(d_1, \ldots, d_k)} \geq k+1$ is automatically ensured by $q \geq k+1$.
Our search has two distinct components which we detail next.

Before that, we remark that an important distinction is that some of the numbers we are considering are so large that it is not easy to verify their primality with certainty.
Such numbers are tested with ``probable prime'' tests.
Therefore some of our constructions rely on some probable primes actually being prime.
We will explicitly highlight those cases.

\paragraph{Base $q$ repunit primes.}
First, in Table~\ref{tbl:primes} we tabulate for various characteristics $q$ degrees $d$ such that the base $q$ repunit $(q^d-1)/(q-1)$ is a (probable) prime.
For such a $d$, we can take $p = (q^d-1)/(q-1)$ and $h = q-1$, so the third condition in Lemma~\ref{lemma:suffcondsharp} becomes:
\begin{align*}
    \frac{q^{d}-k}{(q-1)^k} &> (k-1)\sqrt{q^{d}}(1-1/(q-1)^k)+1 \\
    \Leftrightarrow q^{d} - k &> (k-1) q^{d/2} ((q-1)^k - 1) + (q-1)^k,
\end{align*}
which holds for any $d \geq 2k+2$. In particular, any $d \geq 2q$ will be usable for every $k \leq q-1$.
Values of $d < 2q$ will also be usable for us in some cases; any such cases are included in Table~\ref{tbl:roughcomposites}.

We include all characteristics $\leq 19$ and after that we include characteristics for which many such $d$ are known.
This table was built using the information at \href{https://pzktupel.de/Primetables/TableRepunitGenS.php}{https://pzktupel.de/Primetables/TableRepunitGenS.php}.
For completeness, we also provide OEIS references for these sequences when known.

\begin{table}
\begin{center}
    \begin{tabular}{|c|c|c|c|}
    \hline
    \multirow{2}{*}{\textbf{Characteristic $q$}} & \multirow{2}{*}{\textbf{OEIS ID}} & \textbf{\# of (Probable) Primes} & \multirow{2}{*}{\textbf{Degrees $d$}} \\
    & & \textbf{(Clipped at $q-1$)} & \\
    \hline
    \hline
    3 & A028491 & 2 & 7, 13 \\
    \hline
    5 & A004061 & 4 & 11, 13, 47, 127 \\
    \hline
    7 & A004063 & 6 & 131, 149, 1699, 14221, \emph{35201, 126037} \\
    \hline
    \multirow{2}{*}{11} & \multirow{2}{*}{A005808} & \multirow{2}{*}{10} & 73, 139, 907, 1907, 2029, 4801, \\
    & & & 5153, 10867, 20161, \emph{293831} \\
    \hline
    \multirow{2}{*}{13} & \multirow{2}{*}{A016054} & \multirow{2}{*}{11} & 137, 283, 883, 991, 1021, 1193 \\
    & & & 3671, \emph{18743, 31751, 101089, 1503503} \\
    \hline
    \multirow{2}{*}{17} & \multirow{2}{*}{A006034} & \multirow{2}{*}{8} & 47, 71, 419, 4799, \emph{35149, 54919} \\
    & & & \emph{74509, 1990523} \\
    \hline
    \multirow{2}{*}{19} & \multirow{2}{*}{A006035} & \multirow{2}{*}{9} & 47, 59, 61, 107, 337, 1061 \\
    & & & \emph{9511, 22051, 209359} \\
    \hline
    \multirow{2}{*}{37} & \multirow{2}{*}{A128003} & \multirow{2}{*}{10} & 181, 251, 463, 521, \emph{7321, 36473} \\
    & & & \emph{48157, 87421, 168527, 249341} \\
    \hline
    43 & A240765 & 6 & \emph{6277, 26777, 27299, 40031, 44773, 194119} \\
    \hline
    67 & & 7 & 367, 1487, 3347, 4451, \emph{10391, 13411, 167449} \\
    \hline
    71 & & 7 & 157, 1583, \emph{31079, 55079, 72043, 150697, 172259} \\
    \hline
    \end{tabular}
    \end{center}
    \caption{For each characteristic $q$, we list here known prime degrees $d \geq 2q$ such that $(q^d-1)/(q-1)$ is either a prime or a probable prime. Probable primes are italicized. The $d \geq 2q$ requirement is to ensure that these primes are usable for every value $k \leq q-1$; some values $d < 2q$ may still be usable for particular values of $k$. We only list up to $q-1$ primes in each row, since this is all we need when $q \geq k+1$.}
    \label{tbl:primes}
\end{table}

\paragraph{Base $q$ repunit rough composites.}
For characteristics $q \leq 11$, we can already attain a root-of-unity grid with $k = q-1$ using just the primes already discussed.
For $q > 11$, we augment our search by including cases where $(q^d-1)/(q-1)$ is a rough composite integer.
By Corollary~\ref{cor:roughness-rug}, this is still sufficient for our purposes.
To this end, we exhaustively search over $d \leq 60000$ and see whether taking $p$ to be the largest prime divisor of $(q^d-1)/(q-1)$ will satisfy the third condition of Lemma~\ref{lemma:suffcondsharp}.
We obtain these prime divisors (when known) using queries to FactorDB.
We tabulate the results (in addition to repunit primes with $d < 2q$ that are still usable for these cases) in Table~\ref{tbl:roughcomposites}.

\begin{table}
\begin{center}
    \begin{tabular}{|c|c|c|}
    \hline
    \multirow{2}{*}{\textbf{Characteristic $q$}} & \textbf{Largest Achievable} & \multirow{2}{*}{\textbf{Degrees $d$ in Addition to Table~\ref{tbl:primes}}} \\
    & $k \leq Q-1$ & \\
    \hline
    \hline
    3 & 2 & \multirow{4}{*}{None required} \\
    \cline{1-2}
    5 & 4 & \\
    \cline{1-2}
    7 & 6 & \\
    \cline{1-2}
    11 & 10 & \\
    \hline
    13 & 12 & 101 \\
    \hline
    17 & 16 & 131, 229, 359, 647, 701, 2269, 2459, 2879 \\
    \hline
    19 & 18 & 311, 347, 907, 1571, 1759, 2539, \emph{7019, 12919, 18947} \\
    \hline
    \multirow{2}{*}{37} & \multirow{2}{*}{23} & 71, 971, 1451, 1559, 1979, \emph{2371, 2683} \\
    & & 3371, \emph{5431, 6173, 9521, 14197, 55661} \\
    \hline
    \multirow{2}{*}{43} & \multirow{2}{*}{17} & 181, 311, 457, 653, 1021, 1283 \\
    & & \emph{2081, 8317, 22543, 22639, 33623} \\
    \hline
    \multirow{2}{*}{67} & \multirow{2}{*}{20} & 727, 1279, 1361, 1567, 2081, \emph{2693, 3539} \\
    & & \emph{3719, 3907, 6373, 8893, 19609, 56171} \\
    \hline
    71 & 15 & 41, 269, 349, 431, 569, 613, 1231, 2711 \\
    \hline
    \end{tabular}
    \end{center}
    \caption{For each characteristic $q$, we tabulate the largest $k \leq Q-1$ for which we found $k$ (probable) primes satisfying the conditions in Lemma~\ref{lemma:suffcondsharp}. We searched over degrees $d \leq 60000$ and took the largest (probable) prime factor of $(q^d-1)/(q-1)$ in each case. These prime factors (when known) were computed using queries to FactorDB. Once again, probable primes are italicized.}
    \label{tbl:roughcomposites}
\end{table}

\paragraph{Summary.}
We now summarize with a proof of Corollary~\ref{cor:concrete}:

\begin{proof}[Proof of Corollary~\ref{cor:concrete}]
On the unconditional side (so without relying on probable primes), the largest $k$ we obtain for Theorem~\ref{thm:intromain} is $k = 14$ for $q = 19$.
Feeding this into the proof of Corollary~\ref{cor:asymptoticpir} gives the claimed PIR for up to 15 servers.
The relevant degrees $d$ are as follows:
\begin{itemize}
    \item $d$ such that $(19^d-1)/18$ is prime: $31, 47, 59, 61, 107, 337, 1061$.
    \item $d$ such that $(19^d-1)/18$ is a rough composite: $83, 311, 347, 601, 907, 953, 1571$. ($d = 1759$ and $2539$ are also available alternatives, although these do not allow us to obtain a tuple for $k \geq 15$ because $d = 31, 83$ are only usable for $k \leq 14$.)

    The cases of $d = 83, 601, 953$ are not shown in Table~\ref{tbl:roughcomposites} because they are not usable for the $k = 18$ result tabulated there, although they are usable for the $k = 14$ result we are showing here.
\end{itemize}
On the conditional side (assuming that integers passing probable prime tests are actually prime), the largest $k$ we obtain for Theorem~\ref{thm:intromain} is $k = 23$ for $q = 37$.
Feeding this into the proof of Corollary~\ref{cor:asymptoticpir} gives the claimed PIR for up to 24 servers.
This relies on 6 probable primes from Table~\ref{tbl:primes} and 7 probable primes from Table~\ref{tbl:roughcomposites}.
For clarity, we list these below:
\begin{itemize}
    \item The 6 probable primes from Table~\ref{tbl:primes} are $\frac{37^d-1}{36}$ for $d = 7321, 36473, 48157, 87421, 168527, 249341$.
    \item The 7 probable primes from Table~\ref{tbl:roughcomposites} are:
    \begin{itemize}
        \item $(37^{2371}-1)/3413169294372$;
        \item $(37^{2683}-1)/307412007186684769953576371198892$;
        \item $(37^{5431}-1)/76456531836$;
        \item $(37^{6173}-1)/3931185602686020572988$;
        \item $(37^{9521}-1)/1419551395838858218572$;
        \item $(37^{14197}-1)/12198192238493172$; and
        \item $(37^{55661}-1)/4007628$.
    \end{itemize}
\end{itemize}
\end{proof}

\section{Conclusion}\label{sec:concl}

For any constant $k$, our work---assuming Conjecture~\ref{conjecture:onetuple-intro}---completely resolves the question of the minimal sparsity achievable by an $S^*_m$-decoding polynomial for $m$ a product of $k$ distinct primes, by showing that sparsity $k+1$ is achievable.
This is the best one can hope for due to lower bounds shown by~\cite{DBLP:conf/innovations/GhasemiK26,bl26}.
Via the results by~\cite{GKS25}, our result implies that for any constant $s \geq 2$ there exists an $s$-server PIR protocol that on an $n$-bit database requires communication $\exp(O((\log n)^{1/s} (\log \log n)^{1-1/s}))$, which is strictly better than the previous state of the art for any $s \geq 4$.
As additional contributions, we have run experiments to remove the need for any reliance on conjectures for $s \leq 15$, and also taken a first step towards reducing the communication of PIR when the number of servers grows with $n$.

We next provide a summary of some related work, before concluding with some future directions.

\subsection{Related Work}\label{sec:relatedwork}

\paragraph{Decoding polynomials.}
The central object of study in our paper is $S$-decoding polynomials.
These were first introduced by Efremenko~\cite{Efr09}, where he introduced this notion as a powerful abstraction for PIR.
He also provided an example of a decoding polynomial attaining nontrivial sparsity: for $m = 511 = 7 \times 73$, he showed that there is an $S^*_m$-decoding polynomial of sparsity 3.
This was followed up by work by \cite{Chee2011}, who demonstrated that under a plausible number-theoretic conjecture, there is an $S^*_m$-decoding polynomial of sparsity 3 for infinitely many $m = p_1p_2$.
More recently, work by Ghasemi and Kopparty~\cite{DBLP:conf/innovations/GhasemiK26} shows lower bounds against the sparsity of $S$-decoding polynomials, in settings including but not limited to the $S^*_m$-decoding setting that is most relevant to this work.

\paragraph{Secret share conversion.}
A closely related object to decoding polynomials is \emph{share conversion}.
This is an elegant abstraction introduced by~\cite{DBLP:conf/coco/BeimelIKO12} that generalizes $S$-decoding polynomials and remains applicable to PIR.
In share conversion, a secret $a \in S \subseteq \Z_m$ is converted into shares $\sh_1, \ldots, \sh_T$ that are distributed to $T$ parties such that each individual share perfectly hides $a$.
The defining requirement is there are local functions $C_1, \ldots, C_T$ with outputs in $\F$ such that $\sum_{i \in [T]} C_i(\sh_i) = 0$ if and only if $a \neq 0$.

To provide some intuition, we show why an $S$-decoding polynomial directly gives share conversion, as was shown by~\cite{DBLP:conf/coco/BeimelIKO12}: let the decoding polynomial be $P(X) = \sum_{i \in [T]} \alpha_i X^{\beta_i}$.
Then to share $a$, we will sample $r \gets \Z_m$ and set $\sh_i = r + a\beta_i$.
The $i$th party's local computation is $C_i(\sh) = \alpha_i \gamma^{\sh}$.
Now we have:
$$\sum_{i \in [T]} C_i(\sh_i) = \sum_{i \in [T]} \alpha_i \gamma^{\sh_i} = \sum_{i \in [T]} \alpha_i \gamma^{r+a\beta_i} = \gamma^r \cdot P(\gamma^a),$$
which is indeed 0 for $a \in S \backslash \{0\}$ and nonzero for $a = 0$.
As mentioned earlier, share conversion also captures constructions that cannot be captured by $S$-decoding polynomials.
In fact, the alternative presentation by~\cite{DBLP:conf/tcc/AlonBL25} of the PIR construction of~\cite{GKS25} adds the generalization that decoding polynomials can be replaced by share conversion from $\Z_m$ to a field $\F$ of characteristic not dividing $m$.
A natural question this raises is whether the results of~\cite{GKS25} can be improved by using a share conversion with few parties in a way that decoding polynomials would not capture.
However, recent follow-up work by Beimel and Lasri~\cite{bl26} answers this negatively; in the setting where the share conversion is from $\Z_m$ to $\F$ and the characteristic of $\F$ does not divide $m$, share conversion and decoding polynomials are equivalent objects.

\paragraph{Locally-decodable codes.}
An essentially equivalent object to PIR is locally-decodable codes (LDCs), introduced by Katz and Trevisan~\cite{DBLP:conf/stoc/KatzT00,trevisan04}.
A PIR protocol with $T$ servers, queries of length $\ell_q := \ell_q(n)$, and answers of length $\ell_a := \ell_a(n)$, implies a smooth $T$-query LDC mapping $n$ bits into a codeword of length $2^{\ell_q}$ over an alphabet of size $2^{\ell_a}$.
We refer the reader to Yekhanin's comprehensive survey~\cite{Yek12} for a discussion of state-of-the-art locally-decodable codes across different regimes.
The cases of Reed-Muller codes and matching-vector codes are also captured by the phase diagram in Figure~\ref{fig:phasediagram}.

\paragraph{Doubly-efficient PIR.}
The only PIR metrics that our work studies are the number of servers and the communication complexity.
One could also ask for a protocol where all parties run in sublinear time $n^{o(1)}$ per query.
Such a protocol is known as doubly-efficient PIR, and necessitates working in an alternate model such as PIR-with-preprocessing~\cite{DBLP:conf/crypto/BeimelIM00}.
The problem of doubly-efficient PIR is very important and has seen exciting advances in recent years~\cite{LMW23}; however, these protocols are ultimately incomparable to those we consider in this work.
They are computationally more efficient, but at the expense of either requiring more communication (albeit still sublinear in $n$) or only being secure against computationally-bounded adversaries.

\subsection{Open Questions}\label{sec:openquestions}

Our work nails down the best communication-server tradeoff achievable for a constant number of servers, while staying within the ``matching vectors + decoding polynomials'' framework~\cite{GKS25} and adhering to the matching vector construction by~\cite{Barrington1994,DBLP:journals/combinatorica/Grolmusz00,DGY11}. This leaves the following outstanding gaps:

\paragraph{Superconstant servers.}
Within the same framework and matching vector construction, what is the best communication-server tradeoff achievable in regimes where $s$ is superconstant in $n$?
As we will outline in Section~\ref{sec:superconstantoverview}, our work makes progress on answering this question but we suspect improvements should be possible.
We see at least a couple of candidate routes towards addressing this:
\begin{itemize}
    \item Find families of primes $p_1, \ldots, p_k$ that occur more frequently and admit decoding polynomials of sparsity $k+1$.
    As discussed in Section~\ref{sec:rughowto}, we seem to be far from a complete understanding of exactly when these decoding polynomials exist so this direction is promising; our work only provides evidence for their existence for infinitely many $m, \F$.

    \item More modestly, find families of $p_1, \ldots, p_k$ that occur more frequently, at the expense of having decoding polynomials that do not quite hit sparsity $k+1$ but attain some intermediate regime of sparsity that is still useful.
    Our work focuses on the sparsity exactly $k+1$ case; it may be beneficial when considering a superconstant number of servers to relax this.
\end{itemize}

\paragraph{Unconditional constructions.}
Our main theorem rests on a number-theoretic conjecture. While we have provided heuristic justification that this conjecture is plausible, it would be preferable to have unconditional statements.
Heuristics of a similar flavor have appeared in previous constructions of matching-vector PIR~\cite{Y08,Chee2011}; we expect proving these heuristics to be extremely difficult since they are related, for example, to the infinitude of Mersenne primes which is wide open.
However, one could hope for a different construction from ours that does not rest on Mersenne-like conjectures.

\paragraph{Better constructions of matching vector families.}
The past decades have seen advances in finding nontrivially sparse $S^*_m$-decoding polynomials, yet tantalizingly the best matching vector family construction has stood since the work of Grolmusz~\cite{DBLP:journals/combinatorica/Grolmusz00} in 2000!
Existing lower bounds~\cite{DBLP:journals/siamcomp/0001DL14} are far from suggesting that this construction is tight.
We believe this to be a difficult but underexplored and high-impact open direction.

\paragraph{New frameworks for PIR.}
Another ambitious direction is to find new ways to build low-communication PIR from algebraic primitives, including---but not limited to---matching vectors.
Once again, all constructions in the constant-server and sub-polynomial-communication regime are encapsulated by the unifying framework of~\cite{GKS25}, but we see no reason why this should be the end of the story.

\bibliographystyle{alpha}
\bibliography{refs}

\appendix

\section{Proof of Corollary~\ref{cor:mvparams}}\label{sec:mvparams}

In all cases, we will proceed by instantiating Theorem~\ref{thm:mvgeneral} for suitably chosen $h$ and $w$.

\paragraph{Case 1: $\eta \leq 2^{O(\ell^2)}$.}
Fix a constant $C_1 > 0$ and take $w = \lfloor C_1 \log n \rfloor$.
Then we have $mw \sim C_1 \frac{(\log n)^\ell}{\eta}$, or equivalently:
\begin{equation}
    (mw)^{1/\ell} \sim C_1^{1/\ell} \frac{\log n}{\eta^{1/\ell}}.
\end{equation}
So since $\eta \geq 1$, this implies $(mw)^{1/\ell} \leq C_1^{1/\ell} \log n$.

Also, set $C_2 > 0$ and $h \sim C_2 \log n$ such that $\binom{h}{w} \geq n$ and $C_2 > 3\max(C_1, 1)$.
Then we have:
\begin{align*}
    \sum_{j = 0}^{\lfloor (mw)^{1/\ell} \rfloor} \binom{h}{j} &= O\left(\binom{h}{(mw)^{1/\ell}}\right) \\
    &= O\left(\binom{C_2 \log n}{C_1^{1/\ell} \log n/\eta^{1/\ell}}\right).
\end{align*}
In the first step above, we are using the well-known inequality that for $a \leq b/3$ we have $\binom{b}{0} + \ldots + \binom{b}{a} = O\left(\binom{b}{a}\right)$, together with the fact that:
$$(mw)^{1/\ell} \leq C_1^{1/\ell} \log n \leq \max(C_1, 1) \log n < \frac{C_2}{3}\log n \sim \frac{h}{3}.$$
Continuing along, if $\eta \leq 2^{O(\ell)}$, then this is $\poly(n)$ which is indeed consistent with the statement of the corollary.
If $\eta \geq 2^{\Omega(\ell)}$, then we have:
\begin{align*}
    \log \binom{C_2 \log n}{C_1^{1/\ell} \log n/\eta^{1/\ell}} &= \Theta\left(\frac{C_1^{1/\ell} \log n}{\eta^{1/\ell}} \cdot \log \frac{C_2 \eta^{1/\ell}}{C_1^{1/\ell}}\right) \\
    &= \Theta\left(\frac{\log n \log(\eta^{1/\ell})}{\eta^{1/\ell}}\right),
\end{align*}
so this construction achieves the claimed performance.

\paragraph{Case 2: $\eta \geq 2^{\Omega(\ell^2)}$.}
Here, we will take $\theta \geq 10$ such that $\theta = \Theta(\log \eta/\ell^2)$ and take $w \sim \frac{\log n}{\theta}$ and $h \sim \frac{\log n}{\theta} \cdot 2^\theta$.
In this regime, we have $\binom{h}{w} \geq (h/w)^w = n$.
Secondly, we have $mw \sim \frac{(\log n)^\ell}{\eta\theta} \Rightarrow (mw)^{1/\ell} \sim \frac{\log n}{\eta^{1/\ell} \theta^{1/\ell}} < h/3$.
Consequently, we once again have:
\begin{align*}
    \sum_{j = 0}^{\lfloor (mw)^{1/\ell} \rfloor} \binom{h}{j} &= O\left(\binom{h}{(mw)^{1/\ell}}\right),\text{ and }\\
    \log \binom{h}{(mw)^{1/\ell}} &= \log \binom{\log n \cdot 2^\theta/\theta}{\log n/(\eta \theta)^{1/\ell}} \\
    &= \Theta\left(\frac{\log n}{(\eta \theta)^{1/\ell}} \log \frac{2^\theta \eta^{1/\ell}}{\theta^{1-1/\ell}}\right) \\
    &= \Theta\left(\frac{\log n}{(\eta \log \eta)^{1/\ell}} \log \frac{2^\theta \eta^{1/\ell}}{\theta^{1-1/\ell}}\right) \\
    &= \Theta\left(\frac{\log n}{(\eta \log \eta)^{1/\ell}} \left(\theta + \frac{\log \eta}{\ell}\right)\right) \\
    &= \Theta\left(\frac{\log n}{(\eta \log \eta)^{1/\ell}} \left(\frac{\log \eta}{\ell}\right)\right) \\
    &= \Theta\left(\frac{(\log n) \cdot (\log \eta)^{1-1/\ell}}{\ell\eta^{1/\ell}}\right),
\end{align*}
as claimed.

\section{Proof of Theorem~\ref{thm:tensoring}}\label{sec:tensoringproof}

\begin{lemma}\label{lemma:rootnomatter}
    Let $S \subseteq \Z_m$ be an arbitrary subset containing 0 and $\F$ a field such that $m | |\F|-1$.
    Assume there exists some primitive $m$th root of unity $\gamma_m \in \F$ and an $S$-decoding polynomial $A$ with respect to $\gamma_m$ of sparsity $T$.
    Then for any primitive $m$th root of unity $\delta_m \in \F$, there exists an $S$-decoding polynomial $B$ with respect to $\delta_m$ of sparsity $T$.
\end{lemma}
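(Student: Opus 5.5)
Since $\gamma_m$ and $\delta_m$ are both generators of the cyclic group $\mu_m(\F)$, I will write $\delta_m = \gamma_m^{u}$ for some $u \in \Z_m^\times$ and obtain $B$ by rescaling the exponents of $A$ by $u^{-1}$. Concretely, write $A(X) = \sum_{j=1}^{T} \alpha_j X^{\beta_j}$ with exponents already reduced modulo $m$, pairwise distinct in $\Z_m$, and all $\alpha_j \neq 0$. Let $v \in \Z_m^\times$ be the inverse of $u$ modulo $m$, and set
\[
B(X) := \sum_{j=1}^{T} \alpha_j X^{v\beta_j \bmod m}.
\]

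First I check sparsity. Multiplication by the unit $v$ is a bijection on $\Z_m$, so the exponents $v\beta_j$ remain pairwise distinct modulo $m$. No terms merge, so $B$ has exactly $T$ nonzero coefficients after reducing exponents modulo $m$.

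Next I check the decoding property. For any $x \in \Z_m$, we have $\delta_m^{x v\beta_j} = \gamma_m^{u v x \beta_j} = \gamma_m^{x\beta_j}$, where I use that $\gamma_m^m = 1$ so only the exponent modulo $m$ matters. Hence $B(\delta_m^x) = A(\gamma_m^x)$ for every $x$. Taking $x \in S\setminus\{0\}$ gives $B(\delta_m^x) = 0$, and taking $x = 0$ gives $B(1) = A(1) \neq 0$. So $B$ is an $S$-decoding polynomial with respect to $\delta_m$.

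The only point needing care is the distinctness of exponents after rescaling, which is exactly why $u$ must be a unit; that holds because $\delta_m$ is primitive. Everything else is routine.
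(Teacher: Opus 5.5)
Your proof is correct and is essentially the paper's argument: the paper writes $\gamma_m = \delta_m^{u}$ with $\gcd(u,m)=1$ and sets $B(X) := A(X^{u})$, which is exactly your $B$ with your $v$ playing the role of the paper's $u$. Your explicit check that multiplying the exponents by a unit keeps them pairwise distinct modulo $m$ fills in the sparsity claim that the paper leaves implicit.
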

\begin{proof}
    Since $\gamma_m, \delta_m$ are primitive $m$th roots of unity, there exists an integer $u$ such that $\gcd(u, m) = 1$ and $\delta_m^u = \gamma_m$.
    So if we define $B(X) := A(X^u)$, we have for any $j \in \Z_m$ that $B(\delta_m^j) = A(\delta_m^{ju}) = A(\gamma_m^j)$.
    The conclusion follows.
\end{proof}

\begin{proof}[Proof of Theorem~\ref{thm:tensoring}]
Let $M=m_1m_2\cdots m_r$, $C=\operatorname{lcm}(c_1,\ldots,c_r)$, and $\F=\F_{q^C}$.
For each \(i\), since there is an \(S^*_{m_i}\)-decoding polynomial modulo \(m_i\) over
\(\F_{q^{c_i}}\), the field \(\F_{q^{c_i}}\) contains a primitive \(m_i\)-th
root of unity. Hence $m_i \mid q^{c_i} - 1 \Rightarrow m_i \mid q^C - 1$.
Because the \(m_i\)'s are pairwise coprime, it follows that their product $M \mid q^C - 1$ as well.
Thus \(\F^\times\) contains a primitive \(M\)-th root of unity. Fix one and call
it \(\gamma\).

For each $i$, set $M_i = M/m_i$ and let $\beta_i = \gamma^{M_i}$ be a primitive $m_i$-th root of unity.
Let $K_i \subseteq \F_{q^C}$ be the unique\footnote{Uniqueness can be seen by noticing that $K_i$ is exactly the splitting field of the polynomial $X^{q^{c_i}} - X$.} subfield that is isomorphic to $\F_{q^{c_i}}$.
Since $\beta_i$ is an $m_i$-th root of unity and $m_i \mid q^{c_i}-1$, we have $\beta_i \in K_i$.
So by Lemma~\ref{lemma:rootnomatter}, there exists an $S^*_{m_i}$-decoding polynomial
$$P_i(X)=\sum_{e\in E_i} c_{i,e}X^e$$
of sparsity $d_i$ over $K_i$ with respect to the root of unity $\beta_i$.
By reducing exponents modulo $m_i$, we may assume that all elements of $E_i$ lie in $\Z_{m_i}$ and $|E_i| = d_i$.
Note also that for each $e$ we have $c_{i, e} \in K_i \Rightarrow c_{i, e} \in \F$.
We may now define the polynomial
$$P(X) = \prod_{i = 1}^r P_i(X^{M_i}) \in \F[X].$$
We claim that $P$ is an $S^*_M$-decoding polynomial mod $M$ over $\F$.
Indeed, for $j \in S^*_M$ we have $P(\gamma^j) \neq 0 \Leftrightarrow P_i(\beta_i^j) = P_i(\gamma^{jM_i}) \neq 0 \forall i$.
It immediately follows that for $j = 0$ we have $P_i(1) \neq 0 \forall i \Rightarrow P(1) \neq 0$.
For $j \in S^*_M \backslash \{0\}$, there must exist an index $i$ such that $j \not\equiv 0 \pmod{m_i}$. For this $i$, we therefore have $P_i(\beta_i^j) = 0 \Rightarrow P(\gamma^j) = 0$.
This proves that $P$ is an $S^*_M$-decoding polynomial.

For sparsity, note that $P$ is a product of $r$ polynomials with sparsities $d_1, \ldots, d_r$.
This immediately implies that $P$ itself has sparsity $\leq d_1\ldots d_r$.
Although this is enough for our purposes, we show for completeness that $P$ has sparsity exactly $d_1\ldots d_r$.
To show this, we want to show that there are no collisions in the map $(e_1, \ldots, e_r) \mapsto \sum_{i = 1}^r e_iM_i\bmod{M}$ for $e_i \in \Z_{m_i}$.
This map is linear so it suffices to show that $\sum_{i = 1}^r e_iM_i \equiv 0 \pmod{M}$ implies that $e_i \equiv 0 \pmod{m_i}$ for all $i$.
For a particular $i^*$, this can be seen by reducing both sides of the congruence modulo $m_{i^*}$. This implies that $e_{i^*} M_{i^*} \equiv 0 \pmod{m_{i^*}} \Rightarrow e_{i^*} \equiv 0 \pmod{m_{i^*}}$, as claimed.
\end{proof}

\section{Proof of Lemma~\ref{lem:mertens}}\label{sec:strong-mertens}

In this section we prove Lemma~\ref{lem:mertens}, assuming GRH for all $L$-functions. First we define the function
\begin{align*}
        \vartheta(x; m, a) &= \sum_{\substack{\text{primes } p \le x \\ p = a \bmod m}} \log p.
    \end{align*}
Before we present the proof, we requirng the following two lemmas from analytic number theory.
\begin{lemma}[Corollary 13.8 of \cite{montgomery2007multiplicative}]\label{lem-grh-input}
    Fix any modulus $m$. Assuming the GRH for all $L$-functions modulo $m$. Let $a$ be an integer such that $\gcd(a,m) = 1$ and $x\ge 2$ be a real number. Then,
    \begin{align*}
        \left|\vartheta(x;m, a) - \frac{x}{\phi(m)}\right| &\le O(\sqrt{x} \cdot (\log x)^2).
    \end{align*}
\end{lemma}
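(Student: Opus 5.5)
The plan is to derive the bound from the explicit formula for Dirichlet $L$-functions under GRH. We work first with the Chebyshev function $\psi(x;m,a)=\sum_{n\le x,\ n\equiv a \bmod m}\Lambda(n)$ and pass to $\vartheta$ at the end.

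\emph{Step 1 (character decomposition).} By orthogonality of Dirichlet characters modulo $m$,
$$\psi(x;m,a)=\frac{1}{\phi(m)}\sum_{\chi \bmod m}\overline{\chi}(a)\,\psi(x,\chi),\qquad \psi(x,\chi)=\sum_{n\le x}\chi(n)\Lambda(n).$$
It therefore suffices to show two things: $\psi(x,\chi_0)=x+O(\sqrt{x}(\log x)^2)$ for the principal character, and $\psi(x,\chi)=O(\sqrt{x}(\log x)^2)$ for every nonprincipal $\chi$. Averaging over the $\phi(m)$ characters with weights of modulus $1/\phi(m)$ then preserves the bound. For $\chi_0$ we reduce to $\psi(x)$: the two differ only by prime powers $p^k\le x$ with $p\mid m$, contributing $O(\log m\cdot\log x)$, which is negligible.

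\emph{Step 2 (explicit formula).} For primitive $\chi$ of conductor $\le m$ we use the truncated explicit formula
$$\psi(x,\chi)=\delta_\chi x-\sum_{|\gamma|\le T}\frac{x^{\rho}}{\rho}+O\!\left(\frac{x(\log mxT)^2}{T}+\log x\cdot\log m\right).$$
Here $\delta_\chi$ is $1$ for the principal character and $0$ otherwise, and the error term also absorbs the contribution of the trivial zero at $s=0$ for even characters. Imprimitive characters differ from their inducing primitive character by $O(\log m\log x)$. Under GRH every nontrivial zero has $\mathrm{Re}\,\rho=1/2$, so $|x^\rho|=\sqrt{x}$ and $|\rho|\ge 1/2$. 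The zero-counting estimate $N(t+1,\chi)-N(t,\chi)\ll\log(m(|t|+2))$ then gives
$$\sum_{|\gamma|\le T}\frac{1}{|\rho|}\ll(\log mT)^2.$$
Taking $T=x$ yields $\psi(x,\chi)=\delta_\chi x+O(\sqrt{x}(\log mx)^2)$. Since $m$ is fixed (and in our application $m\le x$), $\log(mx)\ll\log x$, which gives the required $O(\sqrt{x}(\log x)^2)$ uniformly.

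\emph{Step 3 (from $\psi$ to $\vartheta$).} The difference $\psi(x;m,a)-\vartheta(x;m,a)$ is at most the contribution of all prime powers $p^k\le x$ with $k\ge2$. This is $\sum_{k\ge 2}\vartheta(x^{1/k})\ll\sqrt{x}$ by Chebyshev's bound, which is absorbed into the error term.

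The main obstacle is Step 2. One has to justify the truncated explicit formula with error uniform in the modulus, including the low-lying zeros and the trivial zero at $0$, and control $\sum 1/|\rho|$ via the Riemann–von Mangoldt-type count for $L(s,\chi)$. I would simply import these standard facts (Montgomery–Vaughan, Chapters 12–13) rather than reprove them, since the lemma is exactly their Corollary 13.8.
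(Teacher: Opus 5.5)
Your proposal is correct. It is the standard derivation behind Montgomery--Vaughan's Corollary~13.8: the GRH explicit formula plus zero counting gives the per-character bound, and you then sum over characters and pass from $\psi$ to $\vartheta$. The paper simply cites that corollary without proof, and your form of the error, $O(\sqrt{x}(\log mx)^2)$, also makes explicit the uniformity in $m \le x$ that Lemma~\ref{lem:mertens} relies on.
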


\begin{lemma}[Abel summation formula, Theorem 4.2 of \cite{apostol2013introduction}] \label{lem:abel-sum}
    Let $0 < x < y$ be real numbers.
    Let $(b_n)_{n=0}^\infty$ be a sequence of complex numbers, and let $B(t) = \sum_{n \le t} b_n$ and let $f: \R \to \R$ be a continuously differentiable function on the interval $[x,y]$. Then,
    \begin{align*}
        \sum_{\substack{n \in \mathbb{Z}, \\ x < n \le y}} b_n \cdot f(n) = B(y) f(y) - B(x) f(x) - \int_x^y B(t) f'(t) \, dt.
    \end{align*}
\end{lemma}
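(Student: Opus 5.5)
The plan is to prove the Abel summation formula directly from the fundamental theorem of calculus, using that $B(t)$ is a step function which is constant between consecutive integers. The idea is to rewrite each term $b_n f(n)$ as $(B(n) - B(n^-))f(n)$, where $B(n^-)=B(n-1)$ for integers $n\ge 1$. Then the integral $\int_x^y B(t) f'(t)\,dt$ splits into a finite sum of integrals over subintervals on which $B$ is constant. Each such integral evaluates to $B\cdot(f(\text{right})-f(\text{left}))$. After that everything telescopes.

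Concretely, I will let $n_0 = \lfloor x\rfloor$ and $N = \lfloor y \rfloor$. The integers in $(x,y]$ are then $n_0+1,\ldots,N$; if there are none, the left side is $0$, $B(t)=B(x)=B(y)$ on $[x,y]$, and the right side is $B(x)(f(y)-f(x)) - B(x)(f(y)-f(x)) = 0$. Otherwise I partition $[x,y]$ at these integers into $[x,n_0+1]$, $[n,n+1]$ for $n_0+1\le n\le N-1$, and $[N,y]$. On the interior of each piece $B$ is constant: it equals $B(x)$ on the first piece, $B(n)$ on $[n,n+1)$, and $B(N)=B(y)$ on the last. Since $f$ is $C^1$ on $[x,y]$, each piece contributes $B(\cdot)$ times a difference of $f$-values. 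This gives
\begin{align*}
\int_x^y B(t)f'(t)\,dt = B(x)\bigl(f(n_0+1)-f(x)\bigr) + \sum_{n=n_0+1}^{N-1} B(n)\bigl(f(n+1)-f(n)\bigr) + B(y)\bigl(f(y)-f(N)\bigr).
\end{align*}

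The remaining step is summation by parts on the middle sum. I regroup it by $f(n)$ for $n=n_0+1,\ldots,N$. The coefficient of $f(n)$ is $B(n-1)-B(n) = -b_n$ for interior $n$. At the endpoints the boundary terms $-B(x)f(n_0+1)$ and $+B(y)f(N)$ combine in the same way, using $B(x)=B(n_0)$ and $B(y)=B(N)$. This yields
\begin{align*}
\int_x^y B(t)f'(t)\,dt = B(y)f(y) - B(x)f(x) - \sum_{n_0<n\le N} b_n f(n),
\end{align*}
which rearranges to the claimed identity.

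The only step needing care is this bookkeeping at the endpoints. I need $B(x)=B(n_0)$ even though $x$ need not be an integer, and I need the sum to run exactly over $x<n\le y$. The degenerate case with no integers in $(x,y]$ must also be handled, which is done above. I expect no analytic obstacle, since continuity of $f'$ on $[x,y]$ makes every integral a Riemann integral of a continuous function on each piece.
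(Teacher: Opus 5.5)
Your argument is correct and follows the standard proof of the cited Theorem 4.2 of Apostol; the paper gives no proof of its own, only the citation. You split $[x,y]$ at the integers, use that $B$ depends only on $\lfloor t\rfloor$ so each piece contributes a constant times a difference of $f$-values, and telescope using $B(n)-B(n-1)=b_n$ for $n\ge n_0+1\ge 1$; Apostol runs the same summation by parts starting from the sum rather than the integral. One cosmetic slip: in your displayed expression for the integral the boundary terms are $+B(x)f(n_0+1)$ and $-B(y)f(N)$, not the opposite signs you quote in the prose (those are the signs that arise when summing the left-hand side by parts), but your displayed formulas and conclusion are right.
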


\begin{proof}[Proof of Lemma~\ref{lem:mertens}]
    First, define
    \begin{align*}
        b_n &= 
        \begin{cases}
            \log n &\text{prime }n, n = a \bmod m,\\
            0 &\text{otherwise,}
        \end{cases}\\
        B(t) &= \vartheta(t; m, a),\\
        f(t) &= \frac{1}{t \log t}.
    \end{align*}
    Observe that for any $1 < x < y \le \infty$, $f$ is continuously differentiable on $[x,y]$, with 
    \begin{align}\label{eqn:derivative}
        f'(t) = - \frac{\log t + 1}{(t \log t)^2}
    \end{align}
    By Lemma~\ref{lem-grh-input}, we can write $B(t) = \vartheta(t; m, a) = t/\phi(m) + E(t)$, where $|E(t)| \le O(\sqrt{t} (\log t)^2)$.
    Then, applying Lemma~\ref{lem:abel-sum}, we get
    \begin{align*}
    \sum_{\substack{\text{primes } p \in [x,y]\\ p = a \bmod m}} \frac{1}{p} &= \sum_{x < n \le y} b_n \cdot f(n) \\
    &= B(y) f(y) - B(x) f(x) - \int_x^y B(t) f'(t) dt &\text{Lemma~\ref{lem:abel-sum}}\\
    &= \frac{1}{\phi(m)} \left(y f(y) - x f(x) - \int_x^y t f'(t) dt\right) + \xi,
    \end{align*}
    where the penultimate equality follows from writing $B(t) = t/\phi(m) + E(t)$, and calling $\xi = E(y) f(y) - E(x) f(x) - \int_x^y E(t) f'(t) dt$, and the last inequality follows from Claim~\ref{clm:error-term} and the following calculation.
    \begin{align*}
            \frac{1}{\phi(m)} \left(y f(y) - x f(x) - \int_x^y t f'(t) dt\right) 
            &= \frac{1}{\phi(m)}\left(\frac{1}{\log y}- \frac{1}{\log x} - \int_x^y t f'(t) dt\right) &\text{definition of $f$}\\
            &= \frac{1}{\phi(m)}\left(\frac{1}{\log y}- \frac{1}{\log x} - t f(t)|^y_x + \int_x^y f(t) dt\right) &\text{integration by parts}\\
            &= \frac{1}{\phi(m)}\left(\int_x^y f(t) dt\right) &\text{evaluating $tf(t)$ at $x, y$}\\
            &= \frac{1}{\phi(m)} \left( \log \log t | ^y_x\right)\\
            &= \frac{1}{\phi(m)} {\log \log y - \log \log x}.
        \end{align*}
    \begin{claim}\label{clm:error-term}
        \begin{align*}
            |\xi| := \left|E(y) f(y) - E(x) f(x) - \int_x^y E(t) f'(t) dt\right| \le O\left(\frac{\log x}{\sqrt{x}}\right).
        \end{align*}
    \end{claim}
    \begin{proof}
        By Lemma~\ref{lem-grh-input}, we know that $|E(x) f(x)| \le O(\sqrt{x} (\log x)^2 / (x \log x)) = O(\log x/\sqrt{x})$ and similarly for $|E(y) f(y)| \le O(\log y/\sqrt{y})$.
        Now, we bound the integral term:
        \begin{align*}
            \left|\int_x^y E(t) f'(t) dt\right| &\le O \left(\int_x^y \sqrt{t} (\log t)^2 \cdot \frac{\log t + 1}{(t \log t)^2} dt\right) &\text{by Lemma~\ref{lem-grh-input} and $t \ge 1$}\\
            &= O \left(\int_x^y \frac{\log t }{ t^{3/2}} dt\right)\\
            &= O\left( \frac{-2 \log t - 4}{\sqrt{t}} \Bigg|^y_x\right)\\
            &\le O\left(\frac{\log x}{\sqrt{x}}\right).
        \end{align*}
        The claim then follows by the triangle inequality.
    \end{proof}
\end{proof}

\end{document}